\documentclass[
longbibliography,
reprint,
amsmath,amssymb,
aps,
pra,
superscriptaddress,
nofootinbib
]{revtex4-2}

\usepackage{graphicx}
\usepackage{bm}
\usepackage{braket}
\usepackage[utf8]{inputenc}
\usepackage{color}
\usepackage{comment}
\usepackage{physics}
\usepackage{bbold}
\usepackage{amsmath, amssymb, amsthm}
\usepackage{mathtools}
\usepackage{graphicx}
\usepackage[normalem]{ulem}
\usepackage[colorlinks,linkcolor=blue,urlcolor=blue, citecolor=blue]{hyperref}
\usepackage{rotating}

\theoremstyle{plain}
\newtheorem{theorem}{Theorem}
\newtheorem{lemma}[theorem]{Lemma}

\newtheorem{assumption}{Assumption}
\theoremstyle{definition}
\newtheorem{definition}[theorem]{Definition}

\theoremstyle{remark}

\usepackage{algorithm}
\usepackage{algorithmicx}
\usepackage{algpseudocode}

\newcommand{\an}{\text{~and~}}
\newcommand{\thre}{\mathrm{th}}
\newcommand{\loc}{\mathrm{loc}}
\newcommand{\FT}{\mathrm{FT}}
\newcommand{\prep}{\mathrm{prep}}
\newcommand{\gat}{\mathrm{gate}}
\newcommand{\meas}{\mathrm{meas}}
\newcommand{\synd}{\mathrm{synd}}
\newcommand{\org}{\mathrm{org}}
\newcommand{\EC}{\mathrm{EC}}
\newcommand{\wait}{\mathrm{wait}}

\newcommand{\sync}{\mathrm{sync}}
\newcommand{\dec}{\mathrm{dec}}
\newcommand{\bw}{\mathrm{bw}}
\newcommand{\fw}{\mathrm{fw}}
\newcommand{\re}{\mathrm{re}}
\newcommand{\prece}{\mathrm{pre}}

\algnewcommand{\Initialize}[1]{
\State \textbf{initialize:}
\State \hspace*{\algorithmicindent}\parbox[t]{0.8\linewidth}{\raggedright{#1}}
}

\begin{document}

\begin{abstract}
A major challenge in fault-tolerant quantum computation (FTQC) is to reduce both space overhead---the large number of physical qubits per logical qubit---and time overhead---the long physical gate sequences per logical gate.
We prove that a protocol using non-vanishing-rate quantum low-density parity-check (LDPC) codes, combined with concatenated Steane codes, achieves constant space overhead and polylogarithmic time overhead, even when accounting for non-zero classical computation time.
This protocol offers an improvement over existing constant-space-overhead protocols, which have polynomial time overhead using quantum LDPC codes and quasi-polylogarithmic time overhead using concatenated quantum Hamming codes.
To ensure the completeness of this proof, we develop a technique called partial circuit reduction, which enables error analysis for the entire fault-tolerant circuit by examining smaller parts composed of a few gadgets.
With this technique, we resolve a previously unaddressed logical gap in the existing arguments and complete the proof of the threshold theorem for the constant-space-overhead protocol with quantum LDPC codes.
Our work highlights that the quantum-LDPC-code approach can realize FTQC with a negligibly small slowdown and a bounded overhead of physical qubits, similar to the code-concatenation approach, underscoring the importance of a comprehensive comparison of the future realizability of these two approaches.
\end{abstract}

\title{Polylog-time- and constant-space-overhead fault-tolerant quantum computation\\with quantum low-density parity-check codes}

\author{Shiro Tamiya}
\email{shiro.tamiya01@gmail.com}
\affiliation{Department of Applied Physics, Graduate School of Engineering, The University of Tokyo, 7-3-1 Hongo, Bunkyo-ku, Tokyo 113-8656, Japan}
\affiliation{Nanofiber Quantum Technologies, Inc., 1-22-3 Nishiwaseda, Shinjuku-ku, Tokyo 169-0051, Japan.}

\author{Masato Koashi}
\email{koashi@qi.t.u-tokyo.ac.jp}
\affiliation{Department of Applied Physics, Graduate School of Engineering, The University of Tokyo, 7-3-1 Hongo, Bunkyo-ku, Tokyo 113-8656, Japan}
\affiliation{Photon Science Center, Graduate School of Engineering, The University of Tokyo, 7-3-1 Hongo, Bunkyo-ku, Tokyo 113-8656, Japan}

\author{Hayata Yamasaki}
\email{hayata.yamasaki@gmail.com}
\affiliation{Department of Physics, Graduate School of Science, The University of Tokyo, 7-3-1 Hongo, Bunkyo-ku, Tokyo 113-0033, Japan}
\affiliation{Nanofiber Quantum Technologies, Inc., 1-22-3 Nishiwaseda, Shinjuku-ku, Tokyo 169-0051, Japan.}

\maketitle

\tableofcontents

\section{Introduction}

Quantum computation has promising potential for accelerating the solving of certain classes of computational problems compared to classical computation~\cite{Montanaro_2016,dalzell2023quantum}.
However, the inherent fragility of quantum systems poses a significant challenge in implementing quantum computations on quantum devices.
Fault-tolerant quantum computation (FTQC) has been developed as a powerful framework to overcome this obstacle, enabling reliable computation even in the presence of errors~\cite{gottesman2010introduction, nielsen00}.
FTQC simulates an original quantum circuit using logical qubits encoded within a quantum error-correcting code, instead of directly using physical qubits as qubits in the original circuit.
This approach ensures accurate quantum computation even in the presence of errors that may accumulate as the size of the original circuit increases~\cite{gottesman2010introduction, nielsen00}.

Currently, two prominent schemes have been proposed for FTQC\@: one is a concatenated code scheme~\cite{aliferis2005quantum, gottesman2010introduction, Yamasaki_2024} and the other is a quantum low-density parity-check (LDPC) code scheme~\cite{1997RuMaS..52.1191K,Kitaev_2003,bravyi1998quantum,Litinski_2019,Gottesman2014Constant,Bombin2015Single,bravyi2024high,Fowler_2012,Breuckmann_2021, Tillich_2014, Leverrier2015QuantumExpander}. 
For both schemes, threshold theorems~\cite{1997RuMaS..52.1191K,aharonov1999faulttolerant, aliferis2005quantum,Knill_1998,reichardt2005faulttolerance,Terhal_2005,PhysRevA.78.052331,aliferis2006faulttolerant,aliferis2007accuracy,Preskill_1998} guarantee that the failure probability of the fault-tolerant simulation can be arbitrarily suppressed, provided that the physical error rate is below a certain threshold.
However, conventional FTQC schemes, such as those using surface codes~\cite{bravyi1998quantum, Litinski_2019} and concatenated Steane codes~\cite{aliferis2005quantum}, require a substantial increase in the number of physical qubits per logical qubit, which scales polylogarithmically with the size of the original quantum circuit~\cite{Fowler_2012,aliferis2005quantum}.
This poses a challenge, as the number of physical qubits in quantum devices is limited, making space overhead the primary obstacle to realizing FTQC\@.
In addition to space overhead, reducing time overhead, which is defined as the ratio of the physical time step in simulating an original circuit to the time step in the original circuit, is also important to retain the speedup of quantum computation.
The conventional FTQC schemes scale polylogarithmically in time overhead with the size of the original circuit~\cite{Fowler_2012,aliferis2005quantum}.
Reducing space and time overhead in FTQC is of great interest from both practical and theoretical perspectives.

One of the main questions in the field of FTQC is how short a time overhead we can achieve while simultaneously maintaining a constant space overhead.
In recent years, there have been significant advances in this problem.
Reference~\cite{Gottesman2014Constant} clarified the properties non-vanishing-rate quantum LDPC codes must retain to achieve FTQC with a constant space overhead combined with concatenated Steane codes.
Subsequently, Refs.~\cite{Fawzi_2018,grospellier:tel-03364419} showed that quantum expander codes~\cite{Leverrier2015QuantumExpander, Fawzi_2018_eff, Fawzi_2018} can be used as quantum LDPC codes for this protocol.
Although this protocol maintains a constant space overhead, it compromises the parallelism of gate operations by limiting the number of logical gates performed per time step. 
In these protocols, logical gates of non-vanishing-rate quantum LDPC codes are executed using gate teleportation with auxiliary states encoded in quantum LDPC codes~\cite{gottesman2010introduction,knill2005scalable,Knill_2005}.
Preparing these encoded auxiliary states in a fault-tolerant manner has been challenging without relying on the conventional protocol using concatenated Steane codes~\cite{aliferis2005quantum, gottesman2010introduction}.
However, the conventional protocol incurs a growing space overhead~\cite{aliferis2005quantum, gottesman2010introduction}, which undermines complete gate parallelism when we need to maintain a constant space overhead.
As a result, the existing analyses in~\cite{Gottesman2014Constant, grospellier:tel-03364419, Fawzi_2018} require sequential gate implementation, leading to a polynomial increase in the time overhead of this protocol.
More recently, Reference~\cite{Yamasaki_2024} resolved this bottleneck by developing a new constant-space-overhead protocol with complete gate parallelism based on concatenated quantum Hamming codes.
However, even with this fully parallel execution of gates, the protocol only achieves quasi-polylogarithmic time overhead.
Thus, it remains an open question whether it is possible to design an even faster constant-space-overhead protocol that achieves polylogarithmic time overhead, which is even shorter than the quasi-polylogarithmic time overhead and comparable to the time overhead of the conventional protocols with polylogarithmic space overhead.

From a more fundamental perspective, this question involves a trade-off relation between space and time overhead in FTQC, originally raised in~\cite{Gottesman2014Constant} and partially addressed in~\cite{Yamasaki_2024}.
On the one hand, there exist fault-tolerant protocols in which both the space overhead and the time overhead are polylogarithmic~\cite{Fowler_2012, aliferis2005quantum}. 
On the other hand, when space overhead is reduced to a constant, an ideal trade-off would be that the time overhead remains polylogarithmic, with only a higher degree polynomial, but existing constant-space-overhead protocols exhibit either the polynomial time overheads~\cite{Gottesman2014Constant,Fawzi_2018,grospellier:tel-03364419} or quasi-polylogarithmic time overheads~\cite{Yamasaki_2024}, resulting in apparently redundant time overhead. 
This issue highlights the challenge of understanding the trade-off between space and time overhead in FTQC\@.

Increasing the parallelism of logical gates in non-vanishing-rate LDPC codes offers a promising approach, yet achieving this using existing techniques remains challenging.
For instance, even with structured non-vanishing-rate quantum LDPC codes, applying logical gates without auxiliary qubits in a fault-tolerant way, which may enable complete gate parallelism, is difficult~\cite{Tillich_2014, Quintavalle_2023}.
Although hyperbolic surface codes~\cite{Breuckmann_2016} possess the potential for achieving complete gate parallelism~\cite{Lavasani_2019,breuckmann2022foldtransversal}, it remains challenging to use them for FTQC due to the lack of efficient decoding algorithms in the presence of noise on syndrome measurements~\cite{hastings2013decoding,Gottesman2014Constant,Lavasani_2019}.
Alternative approaches to performing logical gates may be code deformation~\cite{Bombin_2009} and lattice surgery~\cite{Horsman_2012}.
Code deformation transforms one code into another, where the resulting codes complement each other in terms of the logical gates that can be implemented fault-tolerantly.
While code deformation techniques have been proposed for non-vanishing-rate quantum LDPC codes~\cite{Krishna_2021}, there is no guarantee that the gate implementation time step is shorter than that of gate teleportation.
Lattice surgery performs logical gates by measuring multiqubit logical Pauli operators.
A lattice surgery technique for a general class of quantum LDPC codes has also been proposed in~\cite{Cohen_2022}. Nevertheless, this technique requires a large number of auxiliary qubits for parallel operation on all encoded qubits and thus cannot achieve sufficient gate parallelism while maintaining constant space overhead.

In this work, we prove that a polylogarithmic time overhead can be achieved with quantum LDPC codes while maintaining a constant space overhead, thus eliminating the trade-off between space overhead and time overhead. To achieve this, we analyze a hybrid fault-tolerant protocol that combines concatenated Steane codes~\cite{aliferis2005quantum} and non-vanishing-rate quantum LDPC codes with an efficient decoding algorithm, specifically quantum expander codes~\cite{Leverrier2015QuantumExpander}, as discussed in~\cite{Gottesman2014Constant,Fawzi_2018,grospellier:tel-03364419}. In this hybrid protocol, non-vanishing-rate quantum LDPC codes are used as registers to store and protect logical qubits, while concatenated codes are employed to prepare auxiliary states encoded in the non-vanishing-rate quantum LDPC code, facilitating the implementation of logical gates on these logical qubits via gate teleportation~\cite{gottesman2010introduction,knill2005scalable,Knill_2005}.
Our analysis enables the polylogarithmic time overhead by increasing gate parallelism in the gate operations using gate teleportation, compared to previous analyses~\cite{Gottesman2014Constant,Fawzi_2018,grospellier:tel-03364419}, while still maintaining a constant space overhead. This improves upon the polynomial time overhead presented in earlier works~\cite{Gottesman2014Constant, Fawzi_2018, grospellier:tel-03364419}.

The contribution of our analysis to the theory of FTQC is to refine and fully complete the analysis of the threshold theorem for this protocol, advancing over~\cite{Gottesman2014Constant,Fawzi_2018,grospellier:tel-03364419}.
The existing analyses of threshold theorems for protocols with quantum LDPC codes~\cite{Gottesman2014Constant, Fawzi_2018, grospellier:tel-03364419} assume that classical computations are performed instantaneously on arbitrarily large scales, and prior to our work, the nonzero runtime of classical computations in implementing FTQC was explicitly taken into account only in the analysis of the threshold theorem for concatenated codes~\cite{Yamasaki_2024}.
By contrast, we prove that the existence of a threshold and the aforementioned scaling of time overhead can be established even when accounting for the runtime of classical computations during the execution of FTQC, e.g., those for decoding.
To achieve this, we employ the constant-time decoding algorithm in Ref.~\cite{grospellier:tel-03364419} and bound the runtime of all classical computations required to perform the FTQC\@.
Our analysis shows that even when considering the runtime of classical computations, our constant-space-overhead protocol exhibits a polylogarithmic time overhead.

To analyze the fault tolerance of our hybrid protocol, we identify the fault-tolerance conditions for gadgets of quantum LDPC codes.
Unlike the fault-tolerance conditions for protocols of concatenated codes~\cite{aliferis2005quantum, Yamasaki_2024}, our conditions for quantum LDPC codes do not strictly require non-propagation of errors.
In the cases of concatenated codes, transversality, i.e., a property that gadgets are implemented by a tensor product of gates acting individually on each physical qubit of each code block, has been crucial for avoiding the propagation of errors in the code block.
By contrast, for the quantum LDPC codes, our fault-tolerance conditions require that gadgets should have a constant depth, but errors can propagate via physical operations to a constant number of physical qubits in each code block, which we believe will also be useful for analyzing other fault-tolerant protocols with quantum LDPC codes to establish the threshold theorem.

Moreover, our analysis newly addresses and fixes a previously overlooked problem in the existing analyses~\cite{Gottesman2014Constant,Fawzi_2018, grospellier:tel-03364419} of the threshold theorem for the quantum LDPC codes.
In the analyses of FTQC, it is conventional to consider the local stochastic Pauli error model, where the correlated errors may occur across the entire fault-tolerant quantum circuit, as in~\cite{Gottesman2014Constant,Fawzi_2018,grospellier:tel-03364419}.
For each code block of the quantum LDPC codes, the existing analyses in~\cite{Gottesman2014Constant,Fawzi_2018,grospellier:tel-03364419} argue that the physical error rate of the codeword is suppressed after noisy syndrome measurements followed by quantum error correction using decoding algorithms, but this codeword state is on specific code blocks, i.e., only in a small part of the entire fault-tolerant circuit on which the local stochastic Pauli error model is defined.
This argument overlooks the correlations between this part of the circuit and the rest of the circuit surrounding this part, leaving the overall proof of the threshold theorem incomplete.
To address this issue, we introduce a new method, called partial circuit reduction.
This method enables error analysis of the entire fault-tolerant circuit through the examination of individual gadgets on the code blocks of quantum LDPC codes.
Our approach allows us to leverage the existing results from the decoding algorithms~\cite{Gottesman2014Constant,Fawzi_2018, grospellier:tel-03364419} as a black box, so that we fully complete the proof of the threshold theorem for the constant-space-overhead protocol with the quantum LDPC codes.
Furthermore, by combining this method with theoretical advances of the decoding algorithms in~\cite{Fawzi_2018,grospellier:tel-03364419}, we show that it is indeed possible to achieve higher parallelization of the logical gates compared to the existing analyses~\cite{Gottesman2014Constant, Fawzi_2018, grospellier:tel-03364419}, resulting in a polylogarithmic time overhead.

These results indicate a promising potential for low-overhead FTQC using the hybrid approach that combines non-vanishing-rate quantum LDPC codes and concatenated codes, in addition to relying solely on the code-concatenation approach as proposed in~\cite{Yamasaki_2024, yoshida2024concatenate}.
Our findings contribute to a fundamental understanding of the minimal overhead and the space-time trade-off required to realize FTQC within the framework of quantum mechanics.
Furthermore, with regard to practical implementation, our results clarify the competing time overheads of these two constant-space-overhead approaches, highlighting the importance of a comprehensive investigation into the physical realizability of both approaches.

The rest of this paper is organized as follows.
In Sec.~\ref{sec:preliminaries}, we present preliminaries for this work.
In Sec.~\ref{sec:setting}, we describe the setting of FTQC\@.
In Sec.~\ref{sec: Description of polylog-time constant-space overhead protocol}, we present the hybrid protocol using non-vanishing-rate quantum LDPC codes and concatenated Steane codes to achieve polylogarithmic-time- and constant-space-overhead FTQC\@.
In Sec.~\ref{sec:gadgets}, we introduce conditions of fault-tolerant gadgets for quantum LDPC codes and construct gadgets for our protocol.
In Sec.~\ref{sec: threshold theorem based on quantum LDPC codes}, we prove the threshold theorem for this protocol and bound its time and space overhead; in particular, in Sec.~\ref{sec: parial circuit reduction}, we introduce the technique of partial circuit reduction, which is central for establishing the threshold theorem.
Finally, in Sec.~\ref{sec:conclusion}, we conclude our work.
In Appendix~\ref{appendix: fault-tolerant protocol for open quantum circuits}, we also provide a rigorous analysis of the fault-tolerant protocol using concatenated Steane codes for implementing open quantum circuits (which do not terminate with measurements), which is used in our hybrid protocol to prepare encoded states of non-vanishing-rate quantum LDPC codes.

\section{Preliminaries\label{sec:preliminaries}}

In this section, we present preliminaries used in our paper.
In Sec.~\ref{sec: Pauli group and Clifford group}, we start with introducing the Pauli group and the Clifford group, along with their binary representations.
In Sec.~\ref{sec: Concatenated Steane codes and non-vanishing-rate quantum LDPC codes}, we introduce the quantum error-correcting codes used in our fault-tolerant protocol.

\subsection{Pauli group and Clifford group\label{sec: Pauli group and Clifford group}}

We present the basics of the Pauli group and the Clifford group.
Their binary representations will be used throughout our fault-tolerant protocol.

We recall the definition of the Pauli group as follows.
Let $X,Y$, and $Z$ be the Pauli operators defined as
\begin{equation}
X\coloneqq\begin{bmatrix}
  0 & 1  \\
  1 & 0  \\
\end{bmatrix},\quad
Z\coloneqq\begin{bmatrix}
  1 & 0  \\
  0 & -1  \\
\end{bmatrix},\quad
Y\coloneqq\mathrm{i}XZ=\begin{bmatrix}
  0 & -\mathrm{i}  \\
  \mathrm{i} & 0  \\
\end{bmatrix}.
\end{equation}
We also let $I$ denote an identity operator with size $2\times 2$, which we may also consider one of the Pauli operators for brevity of presentation.
When we need to specify the dimension of the identity operator, the identity operator of size $2^n\times 2^n$ is denoted by $I_n$.
The Pauli group on $n$ qubits, denoted by $\mathcal{P}_n$, consists of $2^n\times 2^n$ matrices in the form of 
\begin{equation}
    P=\alpha\bigotimes_{i=0}^{n-1}P_i,
\label{eq: example of Pauli operator}
\end{equation}
where $\alpha\in\{\pm 1,\pm \mathrm{i}\}$, $P_i\in\{I,X,Y,Z\}$, and $\mathrm{i}^2=-1$.
Also, let $\langle \mathrm{i}I_n\rangle$ be the center of the Pauli group $\mathcal{P}_n$, generated by $\mathrm{i}I_n$.
The projective Pauli group is defined as the Pauli group where global phases are ignored, i.e.,
\begin{equation}
    \Tilde{\mathcal{P}}_n\coloneqq \mathcal{P}_n/\langle \mathrm{i}I_n\rangle.
\end{equation}
Each representative of $\tilde{\mathcal{P}}_n$ is selected by decomposing a Pauli gate $P'\in\mathcal{P}_n$ as 
\begin{equation}
    P'=\beta\bigotimes_{i=0}^{n-1}P_i^{\prime},
\end{equation}
where $P_i^{\prime}\in\{I,X,Z,ZX\}$ and fixing $\beta=1$.

The eigenvectors of the Pauli-$Z$ operator serve as a particular basis for a qubit, which is referred to as the computational basis.
This basis is denoted by $\{\ket{0}, \ket{1}\}$, where $\ket{0}$ and $\ket{1}$ are the eigenstates of the Pauli-$Z$ operator, corresponding to the eigenvalues $+1$ and $-1$, respectively.
As a special case of projective measurement, we often use the measurement of the Pauli operator $P\in \mathcal{P}_n$ whose projective operators $\{\Pi_m\}_m$ associated with the measurement outcomes $m\in\{0,1\}$ are defined as
\begin{equation}
    \Pi_0\coloneqq(I_{n}+P)/2 \text{,~~} \Pi_{1}\coloneqq(I_{n}-P)/2.
\end{equation}
For a given Pauli operator $P \in \mathcal{{\mathcal{P}}}_n$, the weight of $P$, denoted by $|P|$, is defined as the number of qubits on which $P$ acts non-trivially, i.e.,
\begin{equation}
    |P|\coloneqq\#\{i\in\{1,\ldots,n\}\colon P_i\neq I\},
\end{equation}
where $P_i$ is given by~\eqref{eq: example of Pauli operator}.

The Clifford group is defined as the group whose elements map Pauli operators to Pauli operators under conjugation.
Let $\mathbb{U}(d)$ be the unitary group on the set $\mathbb{C}^d$ of complex vectors. 
The Clifford group on $n$ qubits is the normalizer of the $n$-qubit Pauli group,
\begin{equation}
    \mathcal{C}_n\coloneqq \{C\in \mathbb{U}(2^n) \mid C P C^{\dagger} \in \mathcal{P}_n, \forall P \in \mathcal{P}_n\}.
\end{equation}
where $C^{\dagger}$ represents the adjoint of $C$.
The elements of the Clifford group are called Clifford operators.
Furthermore, the projective Clifford group is defined as
\begin{equation}
    \Tilde{\mathcal{C}}_n\coloneqq \mathcal{C}_n/\mathbb{U}(1),
\end{equation}
where our analysis does not need to specify the representative of $\tilde{\mathcal{C}}_n$ while the representative of $\tilde{\mathcal{C}}_n/\tilde{\mathcal{P}}_n$ will be specified later in~\eqref{def: Symplectic matrix representation of Clifford operators}.

The symplectic representation of Pauli operators allows us to perform several calculations on Pauli operators using a binary vector.
The definition of the symplectic representation of Pauli operators is as follows.
\begin{definition}(Symplectic representation of Pauli operators)

Let $P=\bigotimes_{i=1}^{n}P_i\in\Tilde{\mathcal{P}}_n$ be a Pauli operator.
The mapping $\phi\colon \Tilde{\mathcal{P}}_n\to \mathbb{F}_2^{2n}$ provides the symplectic representation of $P$, where $\mathbb{F}_2$ is the finite field of order $2$ representing a bit.
In this representation, $P$ is mapped to a pair of row vectors $x, z\in\mathbb{F}_2^{n}$, denoted by $\phi(P)\coloneqq[x,z]\in\mathbb{F}^{2n}_2$, according to the following rules:
\begin{alignat}{3}
&\text{if}\; P_i=I, &&\;x_i=0 \; &&\text{and} \; z_i=0,\\
&\text{if}\; P_i={X}, &&\;x_i=1 \; &&\text{and} \; z_i=0,\\
&\text{if}\; P_i={ZX}, &&\;x_i=1 \; &&\text{and} \; z_i=1,\\
&\text{if}\; P_i={Z}, &&\;x_i=0 \; &&\text{and} \; z_i=1.
\end{alignat}
\end{definition}
\noindent Since $\Tilde{\mathcal{P}}_n\cong\mathbb{F}_2^{2n}$, when we ignore the global phase of the Pauli operator, the multiplication in two Pauli operators $P_1,P_2\in\Tilde{\mathcal{P}}_n$ can be calculated with their symplectic representation as
\begin{equation}
    \phi(P_1)\oplus\phi(P_2),
\label{eq: multiplication of Pauli operators}
\end{equation}
where $\oplus$ denotes the bitwise exclusive OR (XOR).
Moreover, the multiplication in $P_1,P_2\in\Tilde{\mathcal{P}}_n$ can also be expressed as
\begin{equation}
    P_1 P_2={(-1)}^{xz'^{\top}+x'z^{\top}}P_2 P_1,
\label{eq: commutant with symplectic rep}
\end{equation}
where $\phi(P_1)=[x,z]\in\mathbb{F}_2^{2n}$, $\phi(P_2)=[x',z']\in\mathbb{F}_2^{2n}$, and $z^{\top}$ denotes the transpose of $z$.
Thus, the commutator between two Pauli operators corresponds to $xz'^{\top}+x'z^{\top}$ in~\eqref{eq: commutant with symplectic rep}, which is defined as the symplectic inner product.

The conjugation of a Pauli operator in $\Tilde{\mathcal{P}}_n$ by a Clifford operator in $\Tilde{\mathcal{C}}_n$ is carried out in a way that maintains the symplectic inner product.
From Refs.~\cite{de_Beaudrap_2013,Rengaswamy_2018}, we have $\Tilde{\mathcal{C}}_n/\mathcal{\Tilde{P}}_n\cong\mathrm{Sp}(2n,\mathbb{F}_2)$, where the equivalence relation is defined by conjugation, and $\mathrm{Sp}(2n,\mathbb{F}_2)$ represents the group of $2n\times 2n$ symplectic matrices over $\mathbb{F}_2$.
In addition, the corresponding $\Gamma\in\mathrm{Sp}(2n,\mathbb{F}_2)$ can be identified by mapping $[e_i,0]\mapsto[x,z]$ and $[0,e_j]\mapsto[x',z']$, where $e_i$ is the standard basis vector of $\mathbb{F}_2^{n}$ that has an entry $1$ in the $i$-th column and $0$ otherwise~\cite{Rengaswamy_2018}.
These facts provide the symplectic matrix $\gamma(C)\in\mathbb{F}_2^{2n\times 2n}$ of a representative of a Clifford operator $C\in\Tilde{\mathcal{C}}_n/\Tilde{\mathcal{P}}_n$~\cite{Rengaswamy_2018} as
    \begin{widetext}
    \begin{equation}
    \label{def: Symplectic matrix representation of Clifford operators}
        \gamma(C)\coloneqq
        \begin{bmatrix}
            & & & & &\\
            & & & & &\\
            \phi(C X_n^{(1)} C^{\dagger})^{\top}&\cdots& \phi(C X_n^{(n)} C^{\dagger})^{\top}& \phi(C Z_n^{(1)} C^{\dagger})^{\top}&\cdots& \phi(C Z_n^{(n)} C^{\dagger})^{\top}\\
            & & & & &\\
            & & & & &\\
        \end{bmatrix},
    \end{equation}
    \end{widetext}
where $X_n^{(i)}(Z_n^{(i)})\in\Tilde{\mathcal{P}}$ represents an $n$-qubit Pauli operator that acts as $X(Z)$ on the $i$-th qubit, and as $I$ otherwise.
\noindent The conjugation of a Pauli operator $P\in\Tilde{\mathcal{P}}_n$ by a Clifford operator $C\in\Tilde{\mathcal{C}}_n/\Tilde{\mathcal{P}}_n$ can be calculated by multiplying the matrix $\gamma(C)\in\mathbb{F}_2^{2n\times 2n}$ with the vector $\phi(P)\in\mathbb{F}_2^{2n}$ from the right, i.e.,
\begin{equation}
    \phi(P)\gamma(C).
\label{eq: matrix-vector multiplication}
\end{equation}

The group $\Tilde{\mathcal{C}}_n/\Tilde{\mathcal{P}}_n$ is generated by the $H$ operator, $S$ operator, and either CNOT or CZ operator defined as
\begin{align}
H&\coloneqq\frac{1}{\sqrt{2}}\begin{bmatrix}
  1 & 1  \\
  1 & -1  \\
\end{bmatrix},\\
S&\coloneqq\begin{bmatrix}
  e^{-\mathrm{i}\frac{\pi}{4}} & 0  \\
  0 & e^{\mathrm{i}\frac{\pi}{4}}  \\
\end{bmatrix}=R_Z\left(\frac{\pi}{2}\right),\\
\mathrm{CNOT}&\coloneqq \begin{bmatrix}
  1 & 0 & 0 & 0  \\
  0 & 1 & 0 & 0 \\
  0 & 0 & 0 & 1 \\
  0 & 0 & 1 & 0 \\
\end{bmatrix},\\
\mathrm{CZ}&\coloneqq \begin{bmatrix}
  1 & 0 & 0 & 0  \\
  0 & 1 & 0 & 0 \\
  0 & 0 & 1 & 0\\
  0 & 0 & 0 & -1 \\
\end{bmatrix}=(I\otimes H)\mathrm{CNOT}(I\otimes H),
\end{align}
where $R_Z(\theta)\coloneqq e^{-\mathrm{i}\frac{\pi}{2}\theta Z}=\begin{bmatrix}
  e^{-\mathrm{i}\frac{\pi}{2}\theta} & 0  \\
  0 & e^{\mathrm{i}\frac{\pi}{2}\theta}  \\
\end{bmatrix}$.
Each representative of $\Tilde{\mathcal{C}}_n/\Tilde{\mathcal{P}}_n$ is selected by decomposing a Clifford operator $C\in \mathcal{C}_n$ into a sequence of $H$, $S$, and CNOT operators followed by a Pauli operator $P\in\mathcal{P}_n$, and removing the Pauli operator $P$.

\subsection{Quantum codes\label{sec: Concatenated Steane codes and non-vanishing-rate quantum LDPC codes}}
In this section, we present the basics of stabilizer codes that will be used in this paper.
In Sec.~\ref{sec: stabilizer codes}, we summarize stabilizer codes and Calderbank-Shor-Steane (CSS) codes.
In Sec.~\ref{sec:Concatenated Steane codes}, we present the Steane code and its code concatenation.
In Sec.~\ref{sec:Quantum LDPC codes}, we present quantum LDPC codes.

\subsubsection{Stabilizer codes and Calderbank-Shor-Steane (CSS) codes\label{sec: stabilizer codes}}
A quantum error-correcting code $\mathcal{Q}$ with $N$ physical qubits is a subspace $\mathcal{Q}$ of a $2^N$-dimensional Hilbert space, i.e., $\mathcal{Q} \subseteq (\mathbb{C}^2)^{\otimes N}$, where $\mathbb{C}^2=\mathrm{span}\{\ket{0},\ket{1}\}$ represents a qubit.
We may refer to $N$ as a code size.
We consider a stabilizer code, which is specified by its stabilizer $\mathcal{S}\subset \mathcal{P}_N$, which is an Abelian subgroup of $\mathcal{P}_N$ satisfying $-I\notin \mathcal{S}$.
The centralizer $C(\mathcal{S})$ consists of all Pauli operators that commute with all elements in $\mathcal{S}$.
If $\mathcal{S}$ is generated by $N-K$ independent elements, then the corresponding space $\mathcal{Q}(\mathcal{S})$ is a $2^K$-dimensional subspace that is invariant under the action of $\mathcal{S}$.
In this case, the subspace is isomorphic to a space of $K$ qubits, which are called the \textit{logical qubits}.
Logical operators are elements of the set $C(\mathcal{S})\setminus \mathcal{S}$.
For each $k\in[1,\ldots,K]$, one can choose the associated logical $X$ and $Z$ operators acting on the $k$-th logical qubit in $C(\mathcal{S})\setminus\mathcal{S}$ that obey the Pauli commutation relation.
We refer to the subspace $\mathcal{Q}(\mathcal{S})$ as the \textit{code space} and a state of logical qubits as a \textit{codeword}.
The minimum number of physical qubits upon which any non-trivial logical operator acts is referred to as the \textit{distance}.
A stabilizer code $\mathcal{Q}(\mathcal{S})$ with the parameters of the number of physical qubits $N$, the number of logical qubits $K$, and distance $D$ is denoted by an $[[N, K, D]]$ code.
Furthermore, the rate $R$ of the code $\mathcal{Q}(\mathcal{S})$ is defined as $R\coloneqq K/N$.

A Calderbank-Shor-Steane (CSS) code~\cite{Calderbank_1996,Steane_1996} is a stabilizer code that can be constructed from a pair of classical linear codes.
A classical linear code with a block length $N$ and dimension $K$ is a linear subspace $C$ of a vector space $\mathbb{F}_2^{N}$.
A linear code $C$ is defined as the kernel of an $M\times N$ parity-check matrix $H$, i.e., $C=\{x\in \mathbb{F}_2^{N}: Hx^{\top}=0\}$, where $M\geq N-K$ holds equality when $H$ is full rank, and $x$ is a row vector representing a codeword.
In addition, a linear subspace of a vector space $\mathbb{F}_2^{N}$ spanned by the set of vectors that are orthogonal to all codewords in $C$ is known as the dual code $C^{\perp}$ of $C$, defined as $C^{\perp}\coloneqq \{d\in\mathbb{F}_2^{n}\colon d\oplus c=0, \forall c\in C\}$.
Given a pair of classical linear codes $C_X=\ker{H_X}$ and $C_Z=\ker{H_Z}$ with a block length $N$ satisfying $C_Z^{\perp} \subseteq C_X$, we can define a parity-check matrix $H\in \mathbb{F}_2^{M\times 2N}$ of the CSS code as
\begin{equation}
    H = \begin{bmatrix}
    H_X & 0 \\
    0 & H_Z
\end{bmatrix},
\label{eq: parity-check matrix of the CSS code}
\end{equation}
where $M=M_X+M_Z$.
The parity-check matrix $H\in\mathbb{F}_2^{M\times 2N}$ of the CSS code gives the stabilizer group $\mathcal{S}$ of the code generated by $\{g_i\}_i$ such that
\begin{equation}
\begin{split}
    H_X^{\top}&=[\phi(g_1),\ldots,\phi(g_{M_X})],\\
    H_Z^{\top}&=[\phi(g_{M_X+1}),\ldots,\phi(g_{M})].
\end{split}
\label{eq: relation between parity-check matrices and generators}
\end{equation}
From the construction in~\eqref{eq: parity-check matrix of the CSS code}, the stabilizer generators of the CSS code can be classified into $X$-type generators of
\begin{equation}
    g_{m}^{X}\in\left\{\bigotimes_i P_i \colon P_i\in\{I,X\}\right\}\; \mathrm{for}\;m^\prime\in \{1,\ldots,M_X\},
\end{equation}
and $Z$-type generators of
\begin{equation}
    g_{m^\prime}^{Z}\in\left\{\bigotimes_i P_i \colon P_i\in\{I,Z\}\right\}\; \mathrm{for}\;m\in \{M_X+1,\ldots,M\}.
\end{equation}
The condition that these generators commute with each other is equivalent to $H_Z H_X^{\top}=0$, which is guaranteed by the requirement of $C_Z^{\perp} \subseteq C_X$.
The code space of the CSS codes $\mathcal{Q}$ is
\begin{equation}
    \mathrm{span}\left\{\sum_{y\in C_Z^{\perp}}\ket{x\oplus y}: x\in C_X\right\},
\end{equation}
the dimension of a CSS code is $K=K_X+K_Z-N$, where $K_X$ and $K_Z$ are the dimension of $C_X$ and $C_Z$, respectively, and the distance of the code is $D=\min\{D_X, D_Z\}$, where $D_X=\min_{x\in C_X\backslash C_Z^{\perp}}|x|$, $D_Z=\min_{x\in C_Z\backslash C_X^{\perp}}|x|$, and $|x|$ is the Hamming weight of $x$.

\subsubsection{Concatenated Steane codes\label{sec:Concatenated Steane codes}}
A $[[7,1,3]]$ Steane's $7$-qubit code~\cite{Steane_1996}, or simply the Steane code, is a CSS code whose parity-check matrices $H_X$ and $H_Z$ are both $[7,4,3]$ Hamming code~\cite{Hamming1950}, i.e.,
\begin{equation}
    H_X=H_Z=\begin{bmatrix}
  1 & 0 & 1 & 0 & 1 & 0 & 1 \\
  0 & 1 & 1 & 0 & 0 & 1 & 1 \\
  0 & 0 & 0 & 1 & 1 & 1 & 1 \\
\end{bmatrix},
\label{eq: parity-check matrix for [[7,4,1]]-Hamming code}
\end{equation}
which satisfy $H_Z H_X^{\top}=0$.
The stabilizer generators of the Steane code are given by
\begin{equation}
\begin{split}
    g^X_{1}=&X\otimes I\otimes X\otimes I \otimes X\otimes I\otimes X,\\
    g^X_{2}=&I\otimes X\otimes X\otimes I\otimes I\otimes X\otimes X,\\
    g^X_{3}=&I\otimes I\otimes I\otimes X\otimes X\otimes X\otimes X,\\
    g^Z_{1}=&Z\otimes I\otimes Z\otimes I \otimes Z\otimes I\otimes Z,\\
    g^Z_{2}=&I\otimes Z\otimes Z\otimes I\otimes I\otimes Z\otimes Z,\\
    g^Z_{3}=&I\otimes I\otimes I\otimes Z\otimes Z\otimes Z\otimes Z,\\\\
\end{split}
\label{eq: stabilizer generator for Steane code.}
\end{equation}
The logical $X$ and $Z$ operators of the Steane code acting on the logical qubit are described by the following operator acting on the $7$ physical qubits, respectively,
\begin{equation}
\begin{split}
    &I\otimes I\otimes I\otimes I\otimes X\otimes X \otimes X\\
    &I\otimes I\otimes I\otimes I\otimes Z\otimes Z \otimes Z.
\end{split}
\label{eq: logical opeartor for Steane code}
\end{equation}
A logical gate is called \textit{transversal} if it can be implemented by a tensor product of gates acting individually on each physical qubit.
Transversality is useful for FTQC since the logical gate can be performed in a single time step without additional qubits and, more importantly, they inherently prevent error propagation occurring in a code block.
The feature of the Steane code is that the logical Clifford gates $H$, $S$, and CNOT can be executed transversally by applying $H^{\otimes 7}$, $(S^{\dagger})^{\otimes 7}$, and CNOT$^{\otimes 7}$, respectively, as well as the logical $Z$ and $X$.

A decoding algorithm for the Steane code is as follows.
The CSS code corrects errors using two types of stabilizer generators, $X$-type generators and $Z$-type generators, independently.
The $X$-type generators are used for correcting phase-flip ($Z$) errors and the $Z$-type generators are used for bit-flip ($X$) errors, respectively.
In the following, we will explain error correction using $X$-type generators.
By definition, the $X$-type generator is derived from the parity-check matrix $H_Z$ of the $[7,4,3]$ Hamming code in~\eqref{eq: parity-check matrix for [[7,4,1]]-Hamming code}, and thus we can use the same decoding algorithm for the $[7,4,1]$ Hamming code.
Suppose that a single-qubit $Z$ error occurs, and the syndrome bits $\sigma(E) = (s_1, s_2, s_3)\in\mathbb{F}_2^{3}$ are obtained, where the $i$-th element corresponds to the measurement outcome $m\in\{0,1\}$ of $g_i^{X}$ in~\eqref{eq: stabilizer generator for Steane code.}.
Due to the properties of the Hamming code, the syndrome bits indicate the position of the qubit where the error occurred, given by
\begin{equation}
    \Tilde{i}=\sum_{i=1}^{3} s_i 2^{i-1},
\label{eq: decoding algorithm for Steane code}
\end{equation}
which gives the $Z$ operator on the $\Tilde{i}$-th qubit as the recovering operation.

The concatenated Steane codes~\cite{knill1996concatenated,aliferis2005quantum} are constructed by recursively encoding each physical qubit of a Steane code into a logical qubit of the Steane code.
Starting with a single qubit at level $L$, this recursive encoding process is repeated for each level $l\in\{L,\ldots,1\}$, resulting in the level-$L$ concatenated Steane code, denoted by $\mathcal{Q}^{(L)}$, with parameters,
\begin{equation}
[[N=7^L,K=1,D=3^L]].
\end{equation}
We refer to $L$ as the concatenation level.

\subsubsection{Quantum LDPC codes\label{sec:Quantum LDPC codes}}
A family of CSS codes with parameters $[[N_i, K_i, D_i]]$ indexed by an integer $i$  ($N_i\rightarrow\infty$) is said to be an $(r,c)$ quantum low-density parity-check (LDPC) code if the parity-check matrices of the CSS codes in the family have at most $r=O(1)$ non-zero elements in each row and at most $c=O(1)$ non-zero elements in each column as $i\rightarrow\infty$.
That is, the $X$-type and $Z$-type stabilizer generators of the quantum LDPC codes have only $r=O(1)$ weights, and only a constant number $c=O(1)$ of stabilizer generators act nontrivially on each physical qubit of the quantum LDPC codes. 
If the rate $R_i$ of a family of quantum LDPC codes converges to a finite positive value $R$ as $i\rightarrow\infty$,
\begin{equation}
    \lim_{i\rightarrow \infty}{R_i}=R>0,
\end{equation}
a code in the family is referred to as a \textit{non-vanishing-rate} quantum LDPC code.
Conversely, if the rate $R$ converges to zero, a code in the family is referred to as a \textit{vanishing-rate} quantum LDPC code.

Vanishing-rate quantum LDPC codes, such as surface codes~\cite{Kitaev_2003,bravyi1998quantum,Litinski_2019,Fowler_2012,Vasmer_2019,Dennis_2002} and color codes~\cite{Bombin_2006,Bombin_2007,Kubica_2015,bombin2015gauge}, are capable of implementing most or all logical Clifford gates transversally~\cite{Kubica_2015,bombin2015gauge,Vasmer_2019,Dennis_2002,Bombin_2011}.
However, a drawback of these codes is that the number of physical qubits required to protect a single logical qubit diverges asymptotically as the code size grows.
In contrast, non-vanishing-rate quantum LDPC codes are advantageous since such codes can be used for constant-space-overhead FTQC, where the rate of the code converges to a non-zero value~\cite{Fawzi_2018, Gottesman2014Constant,grospellier:tel-03364419}.
Although implementing logical Clifford gates transversally with non-vanishing-rate LDPC codes is challenging~\cite{Krishna_2021,Quintavalle2023partitioningqubits, Yamasaki_2024, yoshida2024concatenate}, these codes can implement logical gates via gate teleportation if the required auxiliary encoded states can be prepared in a fault-tolerant way.

\section{Setting of fault-tolerant quantum computation}
\label{sec:setting}
In this section, we present the setting of fault-tolerant quantum computation in our work.
Quantum computation is represented by quantum circuits, and we consider a sequence $\{C_n^{\mathrm{org}}\}_n$ of original circuits specified by an integer $n$.
Each original circuit is composed of the following quantum operations: $\ket{0}$-state preparation, $Z$-basis measurement, Clifford gates ($H$, $S$, and $\mathrm{CNOT}$), non-Clifford gates ($T\coloneqq R_Z(\pi/4)$ and $T^{\dagger}$), and a wait operation $I$.
We consider original circuits that start with the $\ket{0}$-state preparations and end with the $Z$-basis measurements, without any measurements in the middle of the circuits~\cite{nielsen00}.
The unitary part of the original circuits sandwiched in between is composed of Clifford gates ($H$, $S$, and CNOT), non-Clifford gates ($T\coloneqq R_Z(\pi/4)$ and $T^{\dagger}$), and a wait operation $I$~\cite{nielsen00}.
These circuits, which output measurement outcomes, are referred to as \textit{closed} quantum circuits, distinguished from \textit{open} quantum circuits that do not end with measurements and output a quantum state.
Note that we consider closed circuits as original circuits, even if not explicitly stated.

The quantum operations in each original circuit can be executed simultaneously on all qubits, under the condition that at most one operation acts on each qubit in the circuit at any given time step.
The total number of the time steps of $C_n^\org$ is referred to as the \textit{depth}, denoted by $D(n)$, and the number of qubits involved at any time step is referred to as the \textit{width}, denoted by $W(n)$.
The size (number of locations) of $C_n^{\mathrm{org}}$, denoted by $|C_n^\org|$.
Note that since there are no measurements in the middle of the circuits, the width of an original circuit remains unchanged throughout the circuit.
In addition, we assume that the original circuit satisfies 
\begin{align}
\label{eq:assumption_W}
W(n)&\to\infty,\\
D(n)&=O(\mathrm{poly}(W(n))),
\end{align} 
as $n\to\infty$; this leads to
\begin{align}
    |C_n^\org|= W(n)D(n)=O(\mathrm{poly}(W(n))).
    \label{eq:assumption_circuit_size}
\end{align}

The goal of FTQC is to simulate an original circuit $C^{\mathrm{org}}_n$, i.e., to sample from a probability distribution that is close to that of $C^{\mathrm{org}}_n$ within $\varepsilon$ in the total variation distance for given $\varepsilon>0$.
To achieve this, we provide a fault-tolerant protocol that explicitly constructs a physical circuit $C^{\mathrm{FT}}_n$ with fault tolerance (referred to as a \textit{fault-tolerant circuit}) to be executed on quantum devices that may suffer from noise.
When executing a physical circuit during computation, noise may impair the results of the computation, which the FTQC aims to overcome.
We assume that the physical circuit satisfies the following conditions, following the convention of the previous works~\cite{aliferis2005quantum,aliferis2007accuracy,Gottesman2014Constant,Fawzi_2018, Yamasaki_2024}.
\begin{assumption}[Physical circuit]\label{assump: physical circuit}
A physical circuit satisfies the following conditions.
\begin{enumerate}
    \item Set of physical operations:
    
    A physical circuit is composed of the following set of physical operations: $\ket{0}$-state preparation, $\ket{T}$-state preparation, $Z$-basis measurement, Pauli gates $X$, $Y$, and $Z$, Clifford gates $H$, $S$, $S^{\dagger}$, and $\mathrm{CNOT}$, and the wait operation $I$, where $\ket{T}\coloneqq\frac{1}{\sqrt{2}}\qty(\mathrm{e}^{-\mathrm{i}\pi/8}\ket{0}+\mathrm{e}^{\mathrm{i}\pi/8}\ket{1})$.
    Each physical operation in a physical circuit is called a physical location; i.e., the physical circuit may be considered as a set of physical locations.

    \item No geometrical constraint on CNOT gates:
    
    We assume that the CNOT gate can be applied to an arbitrary pair of physical qubits at a single time step.
    This setting is motivated by physical platforms allowing long-range interactions such as neutral atom systems~\cite{sunami2024scalablenetworkingneutralatomqubits,Bluvstein_2023, xu_constant-overhead_2024}, trapped-ion systems~\cite{PhysRevX.13.041052,PhysRevX.11.041058,egan2021faulttolerant}, and photonic systems~\cite{Bourassa2021blueprintscalable,litinski2022active,yamasaki2020polylogoverhead}.
    
    \item Parallel physical operations:
    
    The physical operations can be performed simultaneously on all qubits in a single time step as long as each qubit is involved in at most one physical operation at any given time step.
    The depth of a physical circuit is determined by these time steps.

    \item Allocation of qubits and bits:
    
    We allocate physical qubits via the $\ket{0}$-state preparations and deallocate them through the $Z$-basis measurements. 
    Bits are allocated in a classical register to store measurement outcomes obtained from the $Z$-basis measurements.
    If no longer used in the rest of the computation, the bits are deallocated from the classical register.
    Tracing out qubits is considered to be the $Z$-basis measurements followed by such deallocation to forget measurement outcomes.
    The number of physical qubits at any given time is those that have been allocated previously and have not yet been deallocated.

    \item Noiseless classical computation with non-negligible runtime:
    
    For simplicity in our analysis, we assume that classical computation can be performed without faults.
    Additionally, during the classical computation, we assume that the wait operations are performed on all allocated physical qubits in a physical circuit.
    The depth of the physical circuit performing the wait operations is determined by the runtime of the classical computation. 
    Specifically, this depth is limited to a value less than or equal to a constant multiple of the runtime of the classical computation.
\end{enumerate}
\end{assumption}

We define time steps of the circuits, as shown in Fig.~\ref{fig:circuit-time}.
In a depth-$T$ circuit, for each $t\in\{0,1,\ldots,T\}$, operations at the $(t+1)$-th depth is placed between the time steps $t$ and $t+1$.
The time step $t=0$ is before the first operations, and $t=T$ is after the last operations.

When analyzing the impact of noise on FTQC with quantum LDPC codes, it is conventional to consider the \textit{local stochastic Pauli error model} on a fault-tolerant circuit $C^{\mathrm{FT}}_n$~\cite{Gottesman2014Constant, Fawzi_2018, grospellier:tel-03364419}, which is also used in this paper, as defined in Def.~\ref{def: Local stochastic error model on physical circuit} below.
This error model has a stochastic property, where a Pauli error is applied to a quantum state with probability $p$, and the state remains unchanged with probability $1-p$.
It also satisfies a local property, where the probability of an error occurring at any given set of locations decreases exponentially in terms of the size of the set. 
However, it does not impose any other constraints; in particular, errors may be correlated in both time and space in our setting.

\begin{definition}[Local stochastic Pauli error model on fault-tolerant circuit]
\label{def: Local stochastic error model on physical circuit}
    Let $C^{\mathrm{FT}}$ be a fault-tolerant circuit.
    For simplicity, the set of all the physical locations of the fault-tolerant circuit is also denoted by $C^{\mathrm{FT}}$.
    A set of faulty locations is given by a random variable $F\subseteq C^{\mathrm{FT}}$ of locations of the circuit, where any Pauli errors may occur.
    We say that $C^{\mathrm{FT}}$ experiences a local stochastic Pauli error model with error parameters $\{p_i\}_{i\in C^{\mathrm{FT}}}$ if the following conditions hold.
    \begin{enumerate}
        \item Each physical location $i\in C^{\mathrm{FT}}$ has an error parameter $p_i\in(0,1]$ such that for all $A\subseteq C^{\mathrm{FT}}$, the probability that $F$ contains $A$ satisfies
    \begin{equation}
        \mathbb{P}[F\supseteq A]\leq \prod_{i\in A} p_i.
    \label{eq: local stochastic error model on locations}
    \end{equation}
    Note that in the case of $A=\varnothing$, it holds that $\prod_{i \in \varnothing } p_i=1$, even without explicitly stating in this paper.

    \item The faulty operations in $F$ may suffer from errors represented by the assignment of arbitrary Pauli operators, which are applied to the state at each time step (between the locations) in the following way.
    \begin{itemize}
        \item If a location in $F$ is a $\ket{0}$-state preparation, any Pauli operator may be applied to the qubit after the $\ket{0}$-state preparation is performed.
        \item If a location in $F$ is a gate operation (i.e., Pauli gates, Clifford gates, non-Clifford gates, and a wait operation), then after the gate operation is performed, any Pauli operators may be applied to the qubit(s) on which the gate operation has acted.
        \item If a location in $F$ is a $Z$-basis measurement operation, a bit-flip (or identity) operation may be applied to the measurement outcome.
    \end{itemize}
    The physical locations in $C^{\mathrm{FT}}\backslash F$ behave in the same way as the case without faults. 
    \end{enumerate}
\end{definition}

We refer to a physical circuit that suffers from local stochastic Pauli errors as a \textit{faulty} physical circuit.
An example of a faulty physical circuit is shown in Fig.~\ref{fig:circuit-time}.
In the following, we assume that a fault-tolerant circuit $C_n^{\mathrm{FT}}$ experiences the local stochastic Pauli error model, where the error parameter at each location is assumed to be upper bounded by the same parameter
$p_{\mathrm{loc}}\in(0,1]$, i.e., $p_i\leq p_{\mathrm{loc}}$ for all $i\in C_n^{\mathrm{FT}}$.

\begin{assumption}
\label{assumption: fault-tolerant circuit experiences the local stochastic Pauli error model}
For a constant $p_\mathrm{loc}\in(0,1]$, a fault-tolerant circuit $C_n^{\mathrm{FT}}$ experiences the local stochastic Pauli error model with error parameters such that
    \begin{align}
        p_i\leq p_{\mathrm{loc}}
    \end{align}
    for all locations $i\in C_n^{\mathrm{FT}}$.
\end{assumption}

\begin{figure}
    \centering
    \includegraphics[width=3.1in]{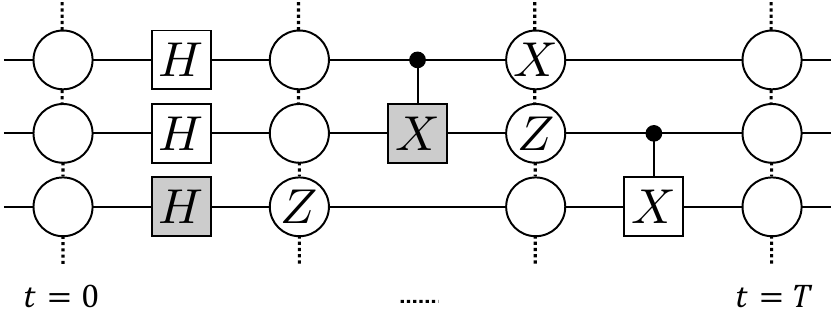}
    \caption{An example of a faulty physical circuit. The operations shaded in grey represent the faulty locations. The circles after the faulty locations represent Pauli errors on wires at a given time step.}
    \label{fig:circuit-time}
\end{figure}

A fault-tolerant protocol provides a fault-tolerant physical circuit $C_n^{\mathrm{FT}}$ to simulate an original circuit $C_n^{\mathrm{org}}$ by replacing qubits in an original circuit with logical qubits of a quantum error-correcting code. This process requires using multiple physical qubits per logical qubit for redundancy. 
At the same time, operations in an original circuit are replaced by logical operations acting on the logical qubits.
Implementing these logical operations in a fault-tolerant circuit requires additional time steps.
This procedure incurs an overhead with respect to space and time.
Let $W_{\mathrm{FT}}(n,\varepsilon)$ denote the width and $D_{\mathrm{FT}}(n,\varepsilon)$ denote the depth of the fault-tolerant circuit $C_n^{\mathrm{FT}}$.
Then, the space overhead is defined as the ratio of the width $W_{\mathrm{FT}}(n,\varepsilon)$ of the fault-tolerant circuit to the width $W(n)$ of the corresponding original circuit,
\begin{equation}
\frac{W_{\mathrm{FT}}(n,\varepsilon)}{W(n)}.
\end{equation}
On the other hand, the time overhead is defined as the ratio of the depth $D_{\mathrm{FT}}(n,\varepsilon)$ of the fault-tolerant circuit, taking into account wait operations for the runtime of classical computation, to the depth $D(n)$ of the original circuit,
\begin{equation}
\frac{D_{\mathrm{FT}}(n,\varepsilon)}{D(n)}.
\end{equation}
For any fixed target error $\varepsilon>0$, the fault-tolerant
protocol is said to achieve a polylogarithmic time overhead if the time overhead of the fault-tolerant circuit is
\begin{align}
\label{eq:time_overhead_definition}
    \frac{D_{\mathrm{FT}}(n,\varepsilon)}{D(n)}&=O\left(\mathrm{polylog}\left(\frac{|C_n^\mathrm{org}|}{\varepsilon}\right)\right)\quad\text{as $n\rightarrow\infty$},
\end{align}
where $|C_n^\mathrm{org}|$ in~\eqref{eq:assumption_circuit_size} is the size of the original circuit.
The fault-tolerant protocol is said to achieve a constant space overhead if
\begin{equation}
\label{eq:space_overhead_definition}
    \frac{W_{\mathrm{FT}}(n,\varepsilon)}{W(n)}=O(1)\quad\text{as $n\to\infty$}.
\end{equation}
Note that in our analysis, we fix $\varepsilon$ and analyze asymptotic scaling of the space and time overheads as $n \to \infty$; however, the same analysis also holds when $\varepsilon$ decays as $\varepsilon(n)=O\qty(\frac{1}{\mathrm{poly}(W(n))})$.
To account for this, we will explicitly include the scaling of $\varepsilon$ in our bounds of the overheads as in~\eqref{eq:time_overhead_definition} and ~\eqref{eq:space_overhead_definition}.

\section{Hybrid fault-tolerant protocol of quantum LDPC codes and concatenated codes\label{sec: Description of polylog-time constant-space overhead protocol}}

In this section, we describe our fault-tolerant protocol that achieves polylogarithmic time and constant space overhead, using non-vanishing-rate quantum LDPC codes and concatenated codes in combination.
In Sec.~\ref{sec: Compilation of ideal quantum circuit into fault-tolerant quantum circuit},
we explain the compilation procedure from an original circuit into a fault-tolerant circuit.
In Sec.~\ref{sec: Construction of abbreviations}, we describe the constructions of abbreviations for applying Clifford gates, and $T$- and $T^{\dagger}$-gates used in this compilation while each gadget will be shown in Sec.~\ref{sec:gadgets}.
Note that this protocol will be proven to achieve polylogarithmic time and constant space overhead with a non-zero threshold value in Sec.~\ref{sec: threshold theorem based on quantum LDPC codes}.

\subsection{Compilation of original quantum circuit into fault-tolerant quantum circuit \label{sec: Compilation of ideal quantum circuit into fault-tolerant quantum circuit}}

We describe the procedure of our fault-tolerant protocol, which compiles an original circuit $C_n^{\mathrm{org}}$ with width $W(n)$, depth $D(n)$, and a total number of locations $|C_n^{\mathrm{org}}|=W(n)D(n)$, where $|C_n^{\mathrm{org}}|\rightarrow\infty$ as $i\rightarrow\infty$, into a fault-tolerant circuit that achieves a target error for fault-tolerant simulation
\begin{equation}
    \varepsilon>0.
\label{eq: target error of fault-tolerant simulation}
\end{equation}
We use a family of non-vanishing-rate CSS $(r,c)$ LDPC codes
$\{\mathcal{Q}_i\}_i$
in combination with the protocol with the concatenated Steane codes to simulate open circuits in Appendix~\ref{appendix: fault-tolerant protocol for open quantum circuits}.
Here, assume each $\mathcal{Q}_i$ has parameters
\begin{equation}
    [[N_i, K_i=\Theta(N_i), D_i=\Theta(N_i^{\gamma})]],
\end{equation}
where $N_i$ represents the number of phyiscal qubits, $K_i$ represents the number of logical qubits, $D_i$ is the code distance, and $\gamma>0$ is a constant.
We will choose $N_i$ depending on $\varepsilon$ and $n$ so as to achieve a polylogarithmic time and constant space overhead.
The detailed assumptions on the family $\{\mathcal{Q}_i\}_i$ used in our protocol are provided in Assumption \ref{assump: non-vanishing-rate quantum LDPC with efficient decoding algorithm}.
In the following, for simplicity, we will consider a specific code $\mathcal{Q}$ in $\{\mathcal{Q}_i\}_i$ with 
\begin{equation}
    \label{eq: code parameters of quantum LDPC codes}
    [[N=N_i, K=K_i, D=D_i]],
\end{equation}
omitting the index $i$ unless specifically stated.

The procedure of our fault-tolerant protocol with quantum LDPC codes for compiling an original quantum circuit to a fault-tolerant circuit is as follows, which is also shown in Fig.~\ref{fig: compilation procedure for ideal closed circuits}.
The compilation procedure consists of two steps: compiling the original circuit into an intermediate circuit using the quantum LDPC code $\mathcal{Q}$ and then compiling the intermediate circuit into a fault-tolerant circuit using the quantum LDPC code $\mathcal{Q}$ and the concatenated Steane code in combination.
In the compilation to the intermediate circuit, the qubits of the original circuit are grouped into registers, with each register containing at most $K$ qubits, where $K$ is the number of logical qubits of the quantum LDPC code $\mathcal{Q}$ in \eqref{eq: code parameters of quantum LDPC codes}.
Elementary operations for the quantum LDPC code are specified as $\ket{0}^{\otimes K}$-state preparation, Clifford-state preparations, magic-state preparations, Pauli-gate operations, a CNOT-gate operation, a $Z_K$-measurement operation, a Bell-measurement operation, and a wait operation.
These elementary operations are defined as operations acting collectively on the qubits in the registers.

For simplicity of presenting our protocol, we also define abbreviations, which are combinations of elementary operations for implementing Clifford and non-Clifford gates via the gate teleportation protocol~\cite{gottesman2010introduction,knill2005scalable,Knill_2005}.
In particular, we will introduce two-register Clifford-gate abbreviations to implement Clifford gates on two registers and $U_T$-gate abbreviations to implement non-Clifford gates on a register.
The intermediate circuits are described using these abbreviations and some of the elementary operations for  $\mathcal{Q}$, where each abbreviation is identified with the corresponding combination of elementary operations.
The construction of the abbreviations will be clarified in Sec.~\ref{sec: Construction of abbreviations}.

For each elementary operation for $\mathcal{Q}$ in the intermediate circuit, we construct a gadget, which is a physical circuit intended to perform the corresponding logical operation acting on logical qubits in a code block of $\mathcal{Q}$.
Also, we construct an error-correction (EC) gadget for the quantum LDPC code intended to carry out quantum error correction on a code block of $\mathcal{Q}$.
The gadgets must be carefully designed to satisfy the fault-tolerance conditions, which we will introduce in Sec.~\ref{sec: Conditions of fault-tolerant gadgets on quantum LDPC codes}.
In addition, the code $\mathcal{Q}$ must have an efficient decoding algorithm for the protocol to have a threshold, even when taking into account nonzero runtime of classical computation for the decoding.
We will also formally define the requirements for an efficient decoding algorithm that the code $\mathcal{Q}$ should fulfill in Sec.~\ref{sec: Conditions of fault-tolerant gadgets on quantum LDPC codes}.
The construction of the gadgets will be detailed in Sec.~\ref{sec: Construction of fault-tolerant gadgets and abbreviations(qLDPC)}.

In the fault-tolerant circuit, each elementary operation in the intermediate circuit is replaced with the corresponding gadget, and an EC gadget acting on each code block is inserted after each gadget(except after the gadgets of the $Z_K$-measurement and the Bell-measurement operations; see Fig~\ref{fig: compilation procedure for ideal closed circuits}).
Each part of the physical fault-tolerant circuit composed of a gadget of each elementary operation for $\mathcal{Q}$ followed by the EC gadgets is referred to as a \textit{rectangle}.
The replacement of each elementary operation in the intermediate circuit with the corresponding rectangle provides a fault-tolerant circuit for simulating the original circuit.
In the following, we will describe these two compilation steps in more detail.

\subsubsection{Compilation from original circuit to intermediate circuit \label{sec:Compilation from original circuit to intermediate circuit}}
As the first step of compilation, the original circuit is compiled into an intermediate circuit that acts on registers of the qubits.
First, we introduce elementary operations and abbreviations used in our protocol, along with their corresponding diagrams.
In the diagrams shown below, a dashed input line changing to a solid output line indicates the allocation of a register, while a solid line changing to a dashed line indicates the deallocation. 
If both the input and output lines are dashed, the corresponding register is used as a workspace for performing elementary operations or abbreviations.
A double output line represents bits allocated by an elementary operation.

The elementary operations are represented as follows.

\begin{itemize}
    \item $\ket{0}^{\otimes K}$-state preparation
\begin{align}
    \includegraphics{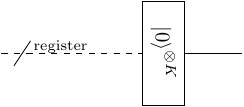}\label{Fig: 0-state-preparation-ldpc}
\end{align}
The $\ket{0}^{\otimes K}$-state preparation allocates a single register that is prepared in the state $\ket{0}^{\otimes K}$.

\item Clifford-state preparations
\begin{align}
    \includegraphics{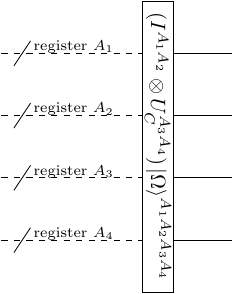}\label{Fig: clifford-state-preparation-ldpc}
\end{align}
A Clifford-state preparation allocates four registers $A_1,A_2,A_3,A_4$ in the state
\begin{equation}
    (I^{A_1 A_2}\otimes U_{C}^{A_3 A_4})\ket{\Omega}^{A_1 A_2 A_3 A_4},
\label{eq: Clifford state in the definition of elementary operation}
\end{equation}
where 
\begin{equation}
    \ket{\Omega}^{A_1 A_2 A_3 A_4}\coloneqq \ket{\Phi}^{A_1 A_3}\otimes\ket{\Phi}^{A_2 A_4},
\label{eq: maximally entangled state between 4 registers}
\end{equation}
$\ket{\Phi}^{A_i A_{i'}}$, which is a maximally entangled state between the registers $A_i$ and $A_{i'}$, is given by
\begin{equation}
    \ket{\Phi}^{A_i A_{i'}}=\frac{1}{\sqrt{2^K}}\sum_{m=0}^{2^K-1}\ket{m}^{A_i}\otimes\ket{m}^{A_{i'}},
\end{equation}
$I^{A_1 A_2}$ is an identity operator acting on the qubits in the two registers $A_1$ and $A_2$, and $U_{C}^{A_3 A_4}\in \Tilde{\mathcal{C}}_{2K}/\Tilde{\mathcal{P}}_{2K}$ is an arbitrary Clifford unitary acting on the qubits in the two registers $A_3$ and $A_4$.

\item Magic-state preparations
\begin{align}
    \includegraphics{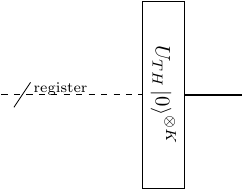}\label{Fig: magic-state-preparation-ldpc}
\end{align}
A magic-state preparation allocates a single register in the state of
\begin{equation}
    U_{TH}\ket{0}^{\otimes K},
\label{eq: magic state in the definition of elementary operation}
\end{equation}
where $U_{TH}$ is a tensor product of any combinations of $TH$ and $I$ on each of the $K$ qubit.
Note that $TH$ acting on $\ket{0}$ yields a conventional magic state $\ket{T}=TH\ket{0}=\frac{1}{\sqrt{2}}\qty(\mathrm{e}^{-\mathrm{i}\pi/8}\ket{0}+\mathrm{e}^{\mathrm{i}\pi/8}\ket{1})$ for the $T$ gate.

\item CNOT-gate operation
\begin{align}
    \includegraphics{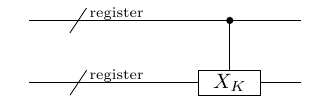}\label{Fig: cnot-gate-ldpc}
\end{align}
The CNOT-gate operation applies the $\text{CNOT}^{\otimes K}$ gates acting on $2K$ qubits in the two registers.
Here, a controlled-$X_{K}$ gate represents $\text{CNOT}^{\otimes K}$ gates.
The $K$ target qubits are all contained in a single register.

\item $Z_K$-measurement operation
\begin{align}
    \includegraphics{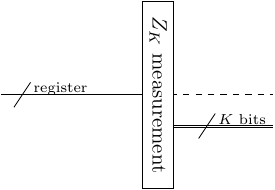}\label{Fig: Z-basis measurements-ldpc.}
\end{align}
The $Z_K$-measurement operation performs collective $Z$-basis measurements on $K$ qubits contained in a register, deallocates the register, and outputs a $K$-bit string of the measurement outcomes.
Here, $Z_K$ denotes a label representing these collective $Z$-basis measurements on $K$ qubits.

\item Bell-measurement operation
\begin{align}
    \includegraphics{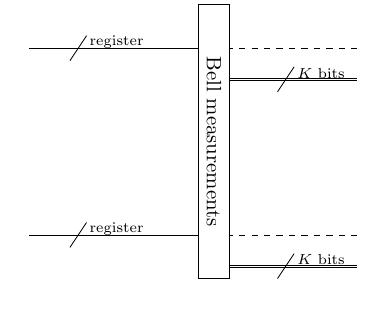}\label{Fig: bell-measurement-ldpc}
\end{align}
The Bell-measurement operation performs the Bell measurements on $K$ pairs of qubits of an $X\otimes X$ operator and a $Z\otimes Z$ operator, where each pair is shared by the two registers. It deallocates these registers and outputs a $2K$-bit string of the measurement outcomes.
We assume that the upper register outputs the $K$-bit string representing the measurement outcomes of the $X\otimes X$ operators, and the lower register outputs the $K$-bit string representing the measurement outcomes of the $Z\otimes Z$ operators.

\item Pauli-gate operations

\begin{align}
    \includegraphics{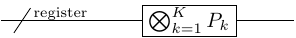}\label{Fig: pauli-gate-ldpc}
\end{align}
A Pauli-gate operation performs a tensor product of arbitrary Pauli gates
\begin{equation}
    \bigotimes_{k=1}^{K} P_k\in\Tilde{\mathcal{P}}_K,
\label{eq: Pauli-gate operations for quantum LDPC codes}
\end{equation}
where $P_k\in\{I, X,Y,Z\}$ is a single-qubit Pauli operator acting on the $k$-th qubit in a register.

\item Wait operation

\begin{align}
    \includegraphics{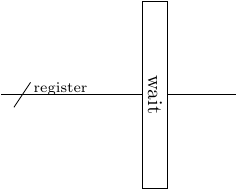}\label{Fig: wait-ldpc}
\end{align} or simply
\begin{align}
    \includegraphics{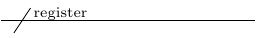}\label{Fig: wait operation}
\end{align}
The wait operation performs the $I$ gate on a register, which is regarded as a special case of the Pauli-gate operation.
\end{itemize}

In addition, the abbreviations, composed of these elementary operations, are represented as follows.
The constructions of the abbreviations will be given in Sec.~\ref{sec: Construction of abbreviations}.
\begin{itemize}
\item Two-register Clifford-gate abbreviations
\begin{align}
    \includegraphics{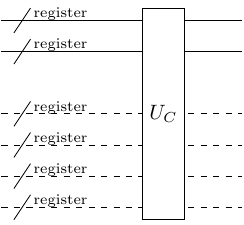}\label{Fig: two-register-clifford-abbreviation-ldpc.}
\end{align}
A two-register Clifford-gate abbreviation applies an arbitrary Clifford unitary $U_C\in \Tilde{\mathcal{C}}_{2K}/\Tilde{\mathcal{P}}_{2K}$ to $2K$ qubits in the two registers represented by the solid lines.
The four registers represented by dashed lines are used as a workspace.
    
\item $U_T$-gate abbreviations
\begin{align}
    \includegraphics{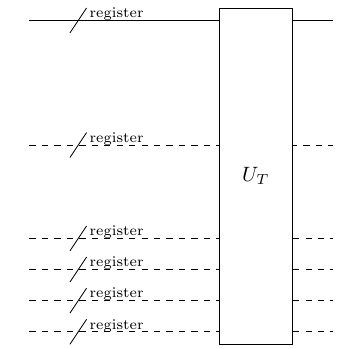}\label{Fig: magic-abbreviation-ldpc.}
\end{align}
A $U_T$-gate abbreviation applies a $U_{T}$ gate to $K$ qubits in the register represented by the solid line, where $U_{T}$ is the tensor product of any combination of $T$, $T^{\dagger}$, and $I$.
The five registers represented by dashed lines are used as a workspace.
\end{itemize}

Some of the elementary operations may take as input a bitstring that is determined during the execution of quantum computation, and the specification of the elementary operation is determined on the fly by this input.
Such elementary operations are called \textit{on-demand} elementary operations.
We also introduce on-demand abbreviations in the same way.
In the following, we list the on-demand elementary operations and the on-demand abbreviations used in our protocol.
\begin{itemize}
    \item On-demand Pauli-gate operations
    
    An on-demand Pauli-gate operation, which appears in Sec.~\ref{sec: Two-register Clifford-gate abbreviation} and Sec.~\ref{sec: U_T-gate abbreviation}, is used for the correction operation required for gate teleportation in the two-register Clifford-gate abbreviation in \eqref{Fig: two-register-clifford-abbreviation-ldpc.} and in the $U_T$-gate abbreviation in \eqref{Fig: magic-abbreviation-ldpc.}.
    The on-demand Pauli-gate operation receives a $2K$-bit string representing the symplectic representation $\phi(P)\in\mathbb{F}_2^{2K}$, where $P\in\Tilde{\mathcal{P}}_K$ is a Pauli operator,
    \begin{equation}
        P=\bigotimes_{k=1}^{K}P_k \in\Tilde{\mathcal{P}}_K
    \label{eq: on-demand Pauli gate operations}
    \end{equation}
    to be applied to $K$ qubits in a register.
    
    \item On-demand two-register Clifford-gate abbreviations
    
An on-demand two-register Clifford-gate abbreviation, which will be presented in Sec~\ref{sec: Two-register Clifford-gate abbreviation}, is used for a correction operation in gate teleportation for the $U_T$-gate abbreviation in \eqref{Fig: magic-abbreviation-ldpc.}.

An on-demand two-register Clifford-gate operation applies a Clifford unitary $U_C \in \Tilde{\mathcal{C}}_{2K}/\Tilde{\mathcal{P}}_{2K}$ on $2K$ qubits across two registers. This gate is expressed as a tensor product of $2K$ single-qubit Clifford gates,
\begin{equation}
U_C = \bigotimes_{k=1}^{K} (C_k \otimes I_k),
\label{eq: on-demand Clifford operation}
\end{equation}
where $C_k \in \{I, S\}$ acts on the $k$-th qubit of the first register, and $I_k$ acts on the $k$-th qubit of the second register. 
The operation takes a $K$-bit string as input: if the $k$-th bit of the string is $1$, then $C_k = S$; if the $k$-th bit is $0$, then $C_k = I$.

    \item On-demand Clifford-state preparation
    An on-demand Clifford-state preparation is  used for preparing a Clifford state in the form of 
    \begin{equation}
        (I\otimes U_C)\ket{\Omega},
    \label{eq: on-demand state preparation}
    \end{equation}
    as described in \eqref{eq: Clifford state in the definition of elementary operation}, where $U_C$ takes the form given in \eqref{eq: on-demand Clifford operation}.
     The on-demand Clifford-state preparation is invoked by an on-demand two-register Clifford-gate abbreviation used for a correction operation for gate teleportation in the $U_T$-gate abbreviation in \eqref{Fig: magic-abbreviation-ldpc.}.
    The on-demand Clifford-state preparation receives a $K$-bit string as input to specify a Clifford unitary $U_C$ for the correction in \eqref{eq: on-demand Clifford operation}.
\end{itemize}

In the compilation prior to starting the execution of quantum computation, elementary operations (and abbreviations) that are not on-demand, i.e., do not require classical input, are referred to as scheduled operations (and abbreviations), which are distinguished from on-demand ones.
All scheduled elementary operations and abbreviations are determined during compilation, independently of on-demand operations, and remain unchanged during the execution of quantum computation.
This reduces the time overhead of waiting for the classical computations during execution.

\begin{figure*}[t]
    \includegraphics[width=\textwidth]{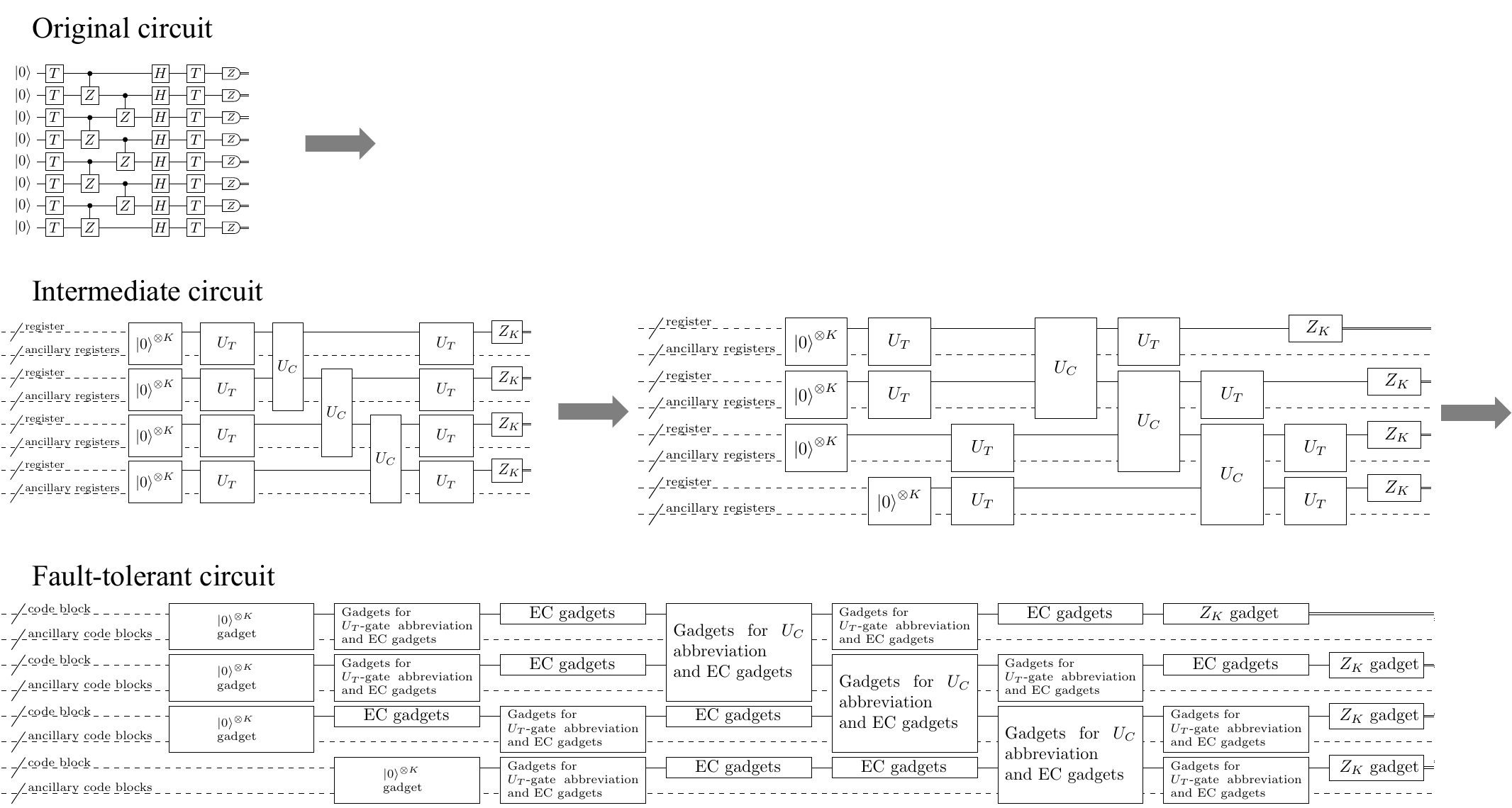}
    \caption{
    Compilation procedure of our fault-tolerant protocol. 
    First, we compile an original circuit into an intermediate circuit that consists only of intermediate operations acting on qubits in registers.
    In this intermediate circuit, a Clifford gate $U_C$ acting on qubits contained in two registers and a $U_T$ gate, which is a tensor product of $T$, $T^{\dagger}$, and $I$, acting on qubits in a single register are abbreviations of a sequence of elementary operations implemented via gate teleportation.
    Then, we limit the number of intermediate operations at each time step to $\Theta(W(n)/\mathrm{polylog}(|C_n^{\mathrm{org}}|))$ so as to achieve constant space overhead, where $W(n)$ is the width of the original circuit and $|C_n^{\mathrm{org}}|$ is the number of locations in the original circuit.
    Finally, we compile the intermediate circuit with reduced parallelism into a fault-tolerant circuit by replacing each elementary operation with the corresponding rectangle.}
    \label{fig: compilation procedure for ideal closed circuits}
\end{figure*}

In constructing an intermediate circuit, we require it should be composed only of the two-register Clifford-gate abbreviations~\eqref{Fig: two-register-clifford-abbreviation-ldpc.}, the $U_T$-gate abbreviations~\eqref{Fig: magic-abbreviation-ldpc.}, the $\ket{0}^{\otimes K}$-state preparation~\eqref{Fig: 0-state-preparation-ldpc}, the $Z_K$-measurement operation~\eqref{Fig: Z-basis measurements-ldpc.}, and the wait operation~\eqref{Fig: wait-ldpc}.
For simplicity, we refer to these abbreviations and elementary operations as intermediate operations.

We compile the original circuit into the intermediate circuit using the following procedure (see the middle of Fig.~\ref{fig: compilation procedure for ideal closed circuits}).
First, $W(n)$ qubits in the original circuit are divided into
\begin{equation}
    \label{eq: kappa in compilation}\kappa(n,\varepsilon)\coloneqq\left\lceil \frac{W(n)}{K}\right\rceil
\end{equation}
registers, where each register contains at most $K$ qubits, and $\lceil{}\cdots{}\rceil$ represents the ceiling function.
For each of these registers, we also allow the intermediate circuit to use five auxiliary registers that can be allocated and used as workspace for the abbreviations.

Next, we replace the operations in the original circuit with the corresponding intermediate operations (see the middle left of Fig.~\ref{fig: compilation procedure for ideal closed circuits}).
The $\ket{0}$-state preparations at the beginning of the original circuit are replaced with $\ket{0}^{\otimes K}$-state preparations.
We replace the $Z$-basis measurements at the end of the original circuit with $Z_K$-measurement operations.
To replace the operations in the middle of the circuit, we first decompose each single-depth part of the original circuit, located between the $\ket{0}$-state preparations and the $Z$-basis measurements, into two single-depth circuits: one composed of $T$, $T^{\dagger}$, and $I$ gates, and the other composed of Clifford gates.
We then replace each single-depth circuit described by $T$, $T^{\dagger}$ and $I$ gates with $U_T$-gate abbreviations, and each described by Clifford gates with two-register Clifford-gate abbreviations.
Note that a single use of a two-register Clifford-gate abbreviation can indeed replace an arbitrarily long sequence of Clifford gates acting on qubits in the same pair of registers of the intermediate circuit, but for simplicity of presentation, we here use it to replace only a one-depth part of Clifford gates, leaving such an optimization for future work.
In this replacement, if a one-depth part of the original circuit contains multiple Clifford gates acting on qubits in different pairs of registers, the corresponding part of the intermediate circuit may require multiple two-register Clifford-gate abbreviations placed in series, as will be described in Sec.~\ref{sec: threshold theorem based on quantum LDPC codes}.

After replacing operations in the original circuit with intermediate operations, we limit the number of non-trivial intermediate operations (i.e., operations other than the wait operations) that can be applied in parallel at a single time step, and insert wait operations for all registers where a non-trivial intermediate operation is not applied (see the middle right of Fig.~\ref{fig: compilation procedure for ideal closed circuits}).
Note that completely parallel execution of $Z_K$-measurement operation does not ruin the constant-space overhead FTQC, but for the simplicity of the analysis we perform the operation in a sequential way.
The upper bound of the number of non-trivial intermediate operations applied in parallel at a single time step is denoted by a parallelization parameter
\begin{equation}
\tau(n,\varepsilon)\in\{1,2,\ldots, W(n)\},
\label{eq: the maximum number of non-trivial intermediate operations L(n,varepsilon)}
\end{equation}
which will turn out to be able to be chosen as~\eqref{eq: tau in threshold theorem} later in our analysis; in particular, progressing beyond the existing analyses~\cite{Gottesman2014Constant, Fawzi_2018, grospellier:tel-03364419} with $\tau(n,\varepsilon)=\Theta(W(n)/\mathrm{poly}(|C_n^{\mathrm{org}}|/\varepsilon))$, we will show that an increased parallelization $\tau(n,\varepsilon)=\Theta(W(n)/\mathrm{polylog}(|C_n^{\mathrm{org}}|/\varepsilon))$ is possible to prove that polylogarithmic time overhead is achievable.

\subsubsection{Compilation from intermediate circuit to fault-tolerant circuit \label{subsubsec: Compilation from intermediate circuit to fault-tolerant circuit}}
In the next step of the compilation, we compile the intermediate circuit into a fault-tolerant circuit.
In this step, all elementary operations in the intermediate circuit are replaced with the gadgets (with EC gadgets inserted to form rectangles).

For each elementary operation, we define a gadget satisfying the fault-tolerant conditions that we will introduce in Sec.\ref{sec: Conditions of fault-tolerant gadgets on quantum LDPC codes}, where the constructions are described in Sec.~\ref{sec: Construction of fault-tolerant gadgets and abbreviations(qLDPC)}.
We may use the same diagrams as the elementary operations on quantum LDPC codes to represent the corresponding gadgets.
We here introduce EC gadgets on quantum LDPC codes are represented as follows.
\begin{itemize}
    \item EC gadget
\begin{align}
    \includegraphics[width=0.25\textwidth]{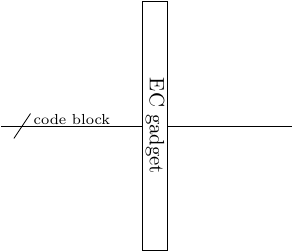}\label{Fig: EC gadget for quantum LDPC codes}
\end{align}
An EC gadget performs quantum error correction on a code block $\mathcal{Q}$ consisting of $N$ physical qubits.
The construction of the EC gadget will be described in \ref{sec: error-correction gadget}.
The EC gadget for quantum LDPC codes does not require auxiliary code blocks, unlike the EC gadget for concatenated Steane codes shown in Appendix~\ref{appendix: fault-tolerant protocol for open quantum circuits}.
\end{itemize}

Each gadget consists of a \textit{quantum} part and a \textit{classical} part.
The quantum part is described by physical operations that need to be applied to physical qubits, whereas the classical part involves the necessary classical computations for executing the gadget.
During the execution of the classical part, the wait operations act on physical qubits in the same way as the quantum part.

\begin{figure*}[t]
    \centering
    \includegraphics[width=\textwidth]{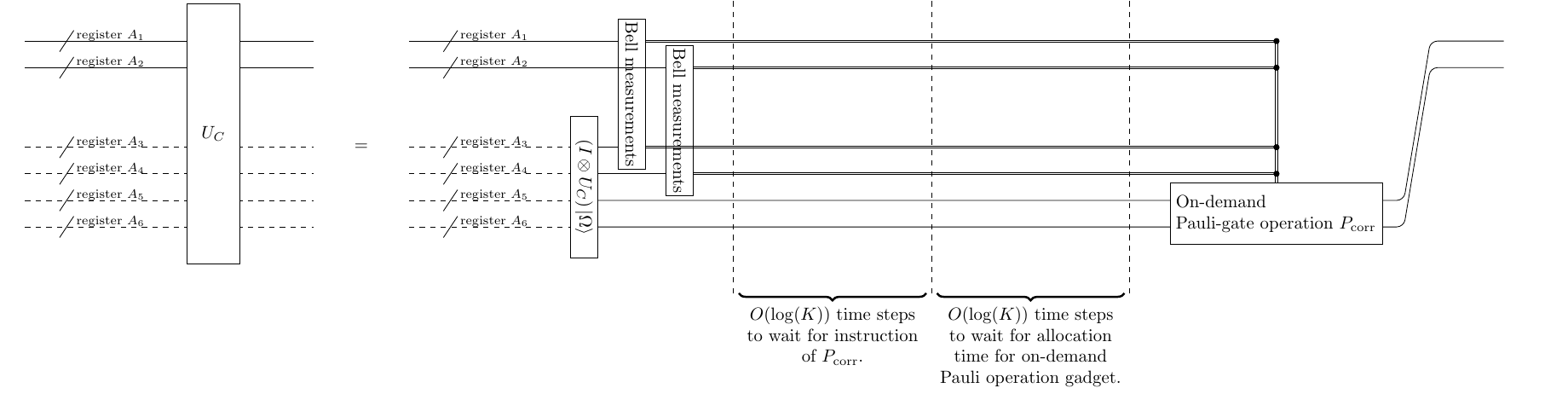}
    \caption{
    The construction of the two-register Clifford-gate abbreviation in~\eqref{Fig: two-register-clifford-abbreviation-ldpc.} performs a Clifford gate $U_C$ applied to the two registers $A_1$ and $A_2$ through gate teleportation.
First, a Clifford-state preparation in \eqref{Fig: clifford-state-preparation-ldpc} prepares the state $(I\otimes U_{C})\ket{\Omega}$ in auxiliary registers $A_3$, $A_4$, $A_5$, and $A_6$.
Next, Bell-measurement operations in~\eqref{Fig: bell-measurement-ldpc} are performed on two pairs of registers, $A_1, A_3$ and $A_2, A_4$. Following the Bell-measurement operations, wait operations with depth $O(\log(K))$ are inserted. These wait operations provide time for calculating the Pauli correction operation $P_{\mathrm{corr}}\in\Tilde{\mathcal{P}}_{2K}$ based on the measurement outcomes, which is necessary for performing gate teleportation. Furthermore, additional wait operations with depth $O(\log(K))$ are inserted to allocate time for the classical computation required by the on-demand Pauli-gate gadget to determine the physical Pauli operations that implement the logical Pauli operation corresponding to $P_{\mathrm{corr}}$ during the execution of the quantum computation.
The on-demand version reduces the wait time required to obtain the instruction for $P_{\mathrm{corr}}$ to $O(1)$.}
    \label{Fig: two-register-clifford-abbreviation.}
\end{figure*}
Compilation from the intermediate circuit to a fault-tolerant circuits is performed by the following procedure (see the bottom of Fig.~\ref{fig: compilation procedure for ideal closed circuits}).
First, for a given time step of the intermediate circuit, the abbreviations belonging to the time step are replaced with a sequence of elementary operations, while elementary operations remain unchanged. 
Since the depths of the intermediate operations may differ, the differences in depth that arise when expanding intermediate operations into a sequence of elementary operations are synchronized using wait elementary operations.
Next, sequences of elementary operations belonging to the same depth in the intermediate circuit is compiled to fault-tolerant circuits. 
This compilation process involves replacing each elementary operation with the corresponding rectangle; however, since the physical depths of different gadgets may differ, synchronization is required.
Assuming the depth of an EC gadget is a fixed constant $d$, wait physical operations are added so that the depth of all gadgets becomes an odd multiple of $d$, which is greater than or equal to the physical depth of the original gadgets.
With the EC gadget, rectangles are defined as follows to ensure synchronized depth alignment across gadgets. A gate rectangle, for the CNOT-gate operation in \eqref{Fig: cnot-gate-ldpc}, Pauli-gate operations in \eqref{Fig: pauli-gate-ldpc}, and the wait operation in \eqref{Fig: wait-ldpc}, is defined as a gate gadget, with wait operations added as needed to achieve a depth of $d$, followed by an EC gadget acting on each of the code blocks. 
Preparation rectangles, for the $\ket{0}^{\otimes K}$-state preparation in \eqref{Fig: 0-state-preparation-ldpc}, the Clifford-state preparations in \eqref{Fig: clifford-state-preparation-ldpc}, and the magic-state preparations in \eqref{Fig: magic-state-preparation-ldpc}, are similarly padded to reach an odd multiple of 
$d$, followed by an EC gadget acting on the code blocks.
Measurement rectangles for the $Z_K$-measurement operation in \eqref{Fig: Z-basis measurements-ldpc.} and the Bell-measurement operation in \eqref{Fig: bell-measurement-ldpc} do not require padding with wait operations within the gadget itself. 
However, for synchronization, measurement rectangles are treated as having a depth that is an odd multiple of $d$ to account for any necessary classical computations after physical measurement operations are completed.
Using these rectangles, we replace the sequence of elementary operations as follows.
First, all elementary operations, excluding wait operations, are replaced with their corresponding rectangles. Then, if the corresponding elementary operations originally assigned to the same depth have differing physical depths, these gadgets are padded with wait rectangles to achieve synchronized depth. Finally, wait elementary operations are replaced with wait rectangles of depth $2d$ (comprising a depth $d$ wait operation and a depth $d$ EC gadget) to align with the rectangles of the corresponding elementary operations within the same time step. 
Applying this procedure for all the time steps of the intermediate circuit results in the fault-tolerant circuit.

\subsection{Constructions of abbreviations\label{sec: Construction of abbreviations}}
In the following, we explain the constructions of abbreviations.
In Sec.~\ref{sec: Two-register Clifford-gate abbreviation}, we present the construction of two-register Clifford-gate abbreviations in~\eqref{Fig: two-register-clifford-abbreviation-ldpc.}.
In Sec.~\ref{sec: U_T-gate abbreviation}, we present the construction of $U_T$-gate abbreviations in~\eqref{Fig: magic-abbreviation-ldpc.}.
We also check that the abbreviations have the depth
\begin{align}
O(\log K),
\end{align}
which is dominated by the runtime of classical computation in the two-register Clifford-gate abbreviations.

\subsubsection{Two-register Clifford-gate abbreviations \label{sec: Two-register Clifford-gate abbreviation}}

A two-register Clifford-gate abbreviation is used to apply a Clifford unitary $U_C\in \Tilde{\mathcal{C}}_{2K}/\Tilde{\mathcal{P}}_{2K}$ acting on $2K$ qubits contained in two registers.
The construction of the abbreviation is shown in Fig.~\ref{Fig: two-register-clifford-abbreviation.}.

The abbreviation is based on the gate teleportation protocol~\cite{gottesman2010introduction,knill2005scalable,Knill_2005}.
In the beginning, we have two registers $A_1$ and $A_2$ on which we want to perform $U_C$, and the four auxiliary registers $A_3, A_4, A_5, A_6$ in a state
\begin{equation}
\begin{split}
    \ket{\Psi_{U_C}}^{A_3 A_4 A_5 A_6} = (I^{A_3 A_4}\otimes U_{C}^{A_5 A_6})\ket{\Omega}^{A_3 A_4 A_5 A_6},
\end{split}
\end{equation}
which is prepared by the Clifford-state preparation in~\eqref{Fig: clifford-state-preparation-ldpc}, where $\ket{\Omega}^{A_3 A_4 A_5 A_6}$ is the maximally entangled state between $A_3$, $A_5$ and $A_4$, $A_6$ as in~\eqref{eq: maximally entangled state between 4 registers}.
After preparing the auxiliary state, we perform the Bell-measurement operation in~\eqref{Fig: bell-measurement-ldpc} on two pairs of registers $A_1, A_3$ and $A_2, A_4$.
These Bell measurements output a pair of $2K$-bit outcomes as
\begin{equation}
    (x^{A_1 A_3},z^{A_1 A_3})\in\mathbb{F}_2^{2K} \text{~and~} (x^{A_2 A_4},z^{A_2 A_4})\in\mathbb{F}_2^{2K},
\label{eq: a pair of K-bit strings of Bell-measurement in two-register Clifford-gate abbreviation}
\end{equation}
where the $K$-bit string $x$ represents the measurement outcomes of $X\otimes X$, and the $K$-bit string $z$ represents measurement outcomes of $Z\otimes Z$.

From the pair of $2K$-bit outcomes, the correction operation $P_{\mathrm{corr}}\in\mathcal{\Tilde{P}}_{2K}$ for gate teleportation is classically computed during the execution of FTQC, which is in the form of
\begin{equation}
    P_{\mathrm{corr}}\coloneqq U_C^{A_5 A_6}\left( \bigotimes_{k=1}^{K}P_{k}^{A_5}\otimes \bigotimes_{l=1}^{K}P_{l}^{A_6}\right) \left(U_C^{A_5 A_6}\right)^{\dagger},
\label{eq: Pauli operator for correction operation in two-register Clifford-gate abbreviation}
\end{equation}
where $P_k^{A_j}\in\{I,X,Y,Z\}$ with $j\in\{5,6\}$ is a Pauli operator acting on the $k$-th qubit in the register $A_j$.
The correction operation in~\eqref{eq: Pauli operator for correction operation in two-register Clifford-gate abbreviation} can be calculated via multiplication of the symplectic matrix $\gamma(U_C^{A_5A_6})\in\mathbb{F}_2^{4K\times 4K}$ from the right of the row vector of the symplectic representation of $\phi\left(\left(\bigotimes_{k=1}^{K}P_{k}^{A_5}\otimes \bigotimes_{l=1}^{K}P_{l}^{A_6}\right)\right)\in\mathbb{F}_2^{4K}$ as 
\begin{equation}
    \phi(P_{\mathrm{corr}})= \phi\left(\left(\bigotimes_{k=1}^{K}P_{k}^{A_5}\otimes \bigotimes_{l=1}^{K}P_{l}^{A_6}\right)\right)\gamma(U_C^{A_5A_6}).
\label{eq: multiplication for calculating correction operation}
\end{equation}
The resulting $4K$-dimensional row vector in \eqref{eq: multiplication for calculating correction operation} is used as an input bitstring to specify the on-demand Pauli-gate operation in~\eqref{eq: on-demand Pauli gate operations} for the correction operation.
Using $O(K^2)$ parallel processes, this classical computation can be performed within runtime
\begin{equation}
    O(\log{K}).
\label{eq: runtime of calculating the Pauli correction operation}
\end{equation}
The wait operation in~\eqref{Fig: wait-ldpc} is performed during this classical computation.

Furthermore, $O(\log K)$ wait operations follow after the classical computation.
Although no classical computations are performed during these wait operations, the allocated time is used by an on-demand Pauli gate gadget during the execution of the quantum computation to receive the $4K$-bit string in \eqref{eq: multiplication for calculating correction operation} and calculate the physical Pauli operations needed to implement the logical Pauli operation corresponding to $\phi(P_{\mathrm{corr}})$.
The details of this process will be explained in Sec~\ref{sec: Pauli-gate gadgets}.
Finally, the Pauli-gate operation \eqref{eq: Pauli operator for correction operation in two-register Clifford-gate abbreviation} for the correction operation is applied to the registers $A_5,A_6$ completing the abbreviation.

As a result, the depth of the abbreviation is bounded by
\begin{equation}
    O(\log{K}).
\end{equation}

\begin{figure*}[t]
    \centering
    \includegraphics[width=\textwidth]{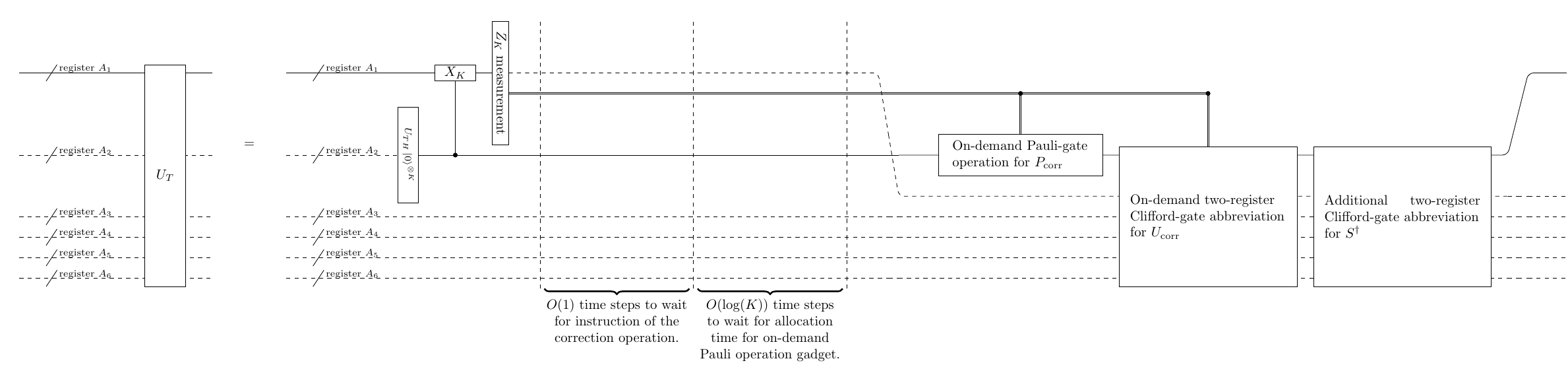}
    \caption{The construction of the $U_{T}$-gate abbreviation in~\eqref{Fig: magic-abbreviation-ldpc.} to perform a $U_{T}$ gate on the register $A_1$, where $U_T$ is a tensor product of $T$, $T^{\dagger}$, and $I$.
    First, a magic-state preparation in~\eqref{Fig: magic-state-preparation-ldpc} prepares a state $U_{TH}\ket{0}^{\otimes K}$ in the auxiliary register $A_2$.
    Next, the CNOT-gate operation in \eqref{Fig: cnot-gate-ldpc} is performed on the registers $A_1$ and $A_2$, followed by the $Z_K$-measurement operation in \eqref{Fig: Z-basis measurements-ldpc.} on the register $A_1$.
    After the $Z_K$ measurement operation, a wait operation with depth $O(1)$ is inserted. This wait time allows for the generation of an input bitstring used to determine both the Pauli correction operation $P_{\mathrm{corr}}$ and the Clifford correction operation $U_{\mathrm{corr}} \in \Tilde{\mathcal{C}}_{2K}/\Tilde{\mathcal{P}}_{2K}$ based on the measurement outcomes, which are necessary for gate teleportation.
    Additionally, wait operations with a depth of $O(\log(K))$ are inserted to provide sufficient time for the classical computation required by the on-demand Pauli-gate gadget to calculate the physical Pauli operations that implement the logical Pauli operation corresponding to $P_{\mathrm{corr}}$ during the quantum computation.
    Then, the Pauli correction operation is applied using an on-demand Pauli-gate operation, and the Clifford correction operation $U_{\mathrm{corr}}$ is applied using an on-demand two-register Clifford-gate abbreviation. At this point, $T$ has been applied to the qubit where $T^{\dagger}$ is intended to be performed. Thus, the additional Clifford gates $S^{\dagger}$ must be applied to each qubit where $T^{\dagger}$ is intended, using a two-register Clifford-gate abbreviation in \eqref{Fig: two-register-clifford-abbreviation-ldpc.} after the Clifford correction operation.
    }
    \label{Fig: t-gate-abbreviation-ldpc.}
\end{figure*}

Next, we explain the on-demand version of the two-register Clifford abbreviation.
A on-demand two-register Clifford abbreviation is used to apply a Clifford unitary $U_C \in \Tilde{\mathcal{C}}_{2K}/\Tilde{\mathcal{P}}_{2K}$ on $2K$ qubits across two registers on demand, where $U_C$ is expressed as a tensor product of $2K$ single-qubit Clifford gates, 
\begin{equation} 
U_C = \bigotimes_{k=1}^{K} (C_k \otimes I_k), \label{eq: on-demand Clifford operation U_C} 
\end{equation} 
where $C_k \in \{I, S\}$ acts on the $k$-th qubit of the first register, and $I_k$ acts on the $k$-th qubit of the second register. 
The operation receives a $K$-bit string as input, which is then fed into the on-demand Clifford-state preparation.

One feature of the on-demand version is the classical computation required to determine the Pauli correction operation for gate teleportation.
Since $U_C$ is expressed as a tensor product of single-qubit Clifford gates, the computations in \eqref{eq: multiplication for calculating correction operation} can be parallelized across qubit. 
With $O(K)$ parallel processes, this classical computation can be executed within runtime $O(1)$.

Since the other procedures are the same as in the two-register Clifford gate abbreviation, the depth is similarly bounded by 
\begin{equation} 
O(\log(K)). 
\end{equation}

\subsubsection{$U_T$-gate abbreviation \label{sec: U_T-gate abbreviation}}

A $U_T$-gate abbreviation is used to apply the $U_{T}$ gate, which is a tensor product of $T$, $T^{\dagger}$ and $I$ acting on the qubits in a register.
The construction of the $U_T$-gate abbreviation is shown in Fig.~\ref{Fig: t-gate-abbreviation-ldpc.}.

The $U_T$-gate abbreviation is based on the gate teleportation protocol~\cite{gottesman2010introduction,knill2005scalable,Knill_2005}. 
At the beginning of the protocol, we have a register $A_1$ containing the qubits on which we want to perform $U_{TH}$, and an auxiliary register $A_2$ in a state
\begin{equation}
\ket{\Psi_{U_{TH}}}= U_{TH}\ket{0}^{\otimes K},
\end{equation}
prepared by the magic-state preparation in~\eqref{Fig: magic-state-preparation-ldpc}. 
After preparing the auxiliary register $A_2$, we perform the CNOT-gate operation in \eqref{Fig: cnot-gate-ldpc}, followed by the $Z_K$-measurement operation in \eqref{Fig: Z-basis measurements-ldpc.}. This $Z_K$-measurement operation outputs a $K$-bit string as
\begin{equation}
z\in\mathbb{F}_2^{K}.
\label{eq: measurement outcomes of Z-basis measurement in the magic-gate abbreviation}
\end{equation}
For $i \in [1, \ldots, K]$, if the measurement outcome $z_i=1$, the correction operation $SX$ is applied to the $i$-th qubit in the register $A_2$. 

To implement these correction operations, we use the on-demand Pauli-gate operation in \eqref{eq: on-demand Pauli gate operations} to apply
\begin{equation}
P_{\mathrm{corr}}=\bigotimes_{k=1}^{K}P_k,
\label{eq: on-demand Pauli-gate operation in U_T gate abbreviation}
\end{equation}
where $P_k \in \{I, X\}$, and the on-demand two-register Clifford-gate operation as explained in Sec.~\ref{sec: Two-register Clifford-gate abbreviation} to apply
\begin{equation}
U_{\mathrm{corr}}=\bigotimes_{k=1}^{K}(C_{k}\otimes I_k),
\label{eq: on-demand Clifford operation in U_T gate abbreviation}
\end{equation}
where $C_k \in \{I, S\}$ acts on the $k$-th qubit in the register $A_2$ and the identity operator $I_k$ acts on the $k$-th qubit in the register $A_1$. 
Using $O(K)$ parallel processors, the runtime of classical computation to generate the inputs for the on-demand Pauli-gate operation and the two-register Clifford-gate operation can be bounded by $O(1)$.
After this classical computation, as with the two-register Clifford gate abbreviation, additional wait operations with depth $O(\log(K))$ are inserted to allocate time for the classical computation required by the on-demand Pauli-gate gadget.

At this point, $T$ has been applied to the qubit where $T^{\dagger}$ is intended to be performed. 
Thus, the additional Clifford gates $S^{\dagger}$ must be applied to each qubit to which $T^{\dagger}$ is intended to be applied.
The additional Clifford gates are applied using the two-register Clifford-gate abbreviation, acting non-trivially on the register $A_2$.

As a result, the depth of the $U_T$-gate abbreviations is bounded by
\begin{equation}
    O(\log(K)).
\end{equation}

\section{Fault-tolerant gadgets on quantum LDPC codes\label{sec:gadgets}}

In this section, we introduce conditions and constructions of fault-tolerant gadgets on quantum LDPC codes.
In Sec.~\ref{sec: Conditions of fault-tolerant gadgets on quantum LDPC codes}, we specify the fault-tolerance conditions of gadgets for quantum LDPC codes.
Then, in Sec.~\ref{sec: Construction of fault-tolerant gadgets and abbreviations(qLDPC)}, we present the constructions of the gadgets for quantum LDPC codes that satisfy the fault-tolerance conditions.

\subsection{Conditions of fault-tolerant gadgets for quantum LDPC codes\label{sec: Conditions of fault-tolerant gadgets on quantum LDPC codes}}

In this section, we define the fault-tolerance conditions of gadgets for quantum LDPC codes.
Before we provide the definition, we introduce a property of a decoding algorithm that is required for constructing our protocol.

The decoding algorithm used for quantum LDPC codes needs to consider cases where the syndrome bits may be erroneous due to noise in measuring the syndrome bits of the physical circuit.
In our setting, we consider a CSS LDPC code obtained from a pair of classical linear codes, $C_X=\ker{H_X}$ and $C_Z=\ker{H_Z}$ satisfying $C_Z^{\perp} \subseteq C_X$, where $H_X\in\mathbb{F}_2^{M_Z\times N}$ and $H_Z\in \mathbb{F}_2^{M_Z\times N}$ are parity-check matrices. 
Let $e=\phi(E)\in\mathbb{F}_2^{2N}$ be a symplectic representation of a Pauli error $E\in\mathcal{\Tilde{P}}_N$ on physical qubits, and $\Delta=(\Delta_X, \Delta_Z)\in\mathbb{F}_2^{M}$ ($M=M_Z+M_X$) be a bitstring representing the locations of errors on syndrome bits.
The ideal syndrome bits 
\begin{equation}
    \sigma =(\sigma_X,\sigma_Z)\in\mathbb{F}_2^{M}
\label{eq: ideal syndrome bits}
\end{equation}
for the error $e\in\mathbb{F}_2^{2N}$ is given by
\begin{equation}
        \sigma_X=H_Ze_X \quad\mathrm{and} \quad \sigma_Z=H_Xe_Z.
\end{equation}
However, the ideal syndrome bits $\sigma\in\mathbb{F}_2^{M}$ can be corrupted by syndrome errors $\Delta\in\mathbb{F}_2^{M}$, resulting in erroneous syndrome bits 
\begin{equation}
    \Tilde{\sigma}=(\Tilde{\sigma}_X, \Tilde{\sigma}_Z)\in\mathbb{F}_2^{M}
\label{eq: noisy syndrome}
\end{equation}
which is given by
\begin{equation}
    \Tilde{\sigma}_X=\sigma_X \oplus \Delta_X\quad\mathrm{and}\quad\Tilde{\sigma}_Z=\sigma_Z\oplus \Delta_Z.
\end{equation}
Using the erroneous syndrome $\Tilde{\sigma}$, the decoding algorithm aims to calculate the recovery operation $R\in \mathcal{P}_N$ with the symplectic representation $r=\phi(R)$.

In the analysis of the decoding algorithm, we use the local stochastic Pauli error model for the pair $(e,\Delta)$ representing the locations of errors on physical qubits and syndrome bits, which is defined as follows.
\begin{definition}[Local stochastic Pauli error model on physical qubits and syndrome bits]\label{def: Local stochastic Pauli error model on physical qubits and syndrome bits}
    Let $V$ be the set of physical qubits and $W_{X(Z)}$ be the set of $X(Z)$-type stabilizer generators.
    Let $e=(e_X, e_Z)\in\mathbb{F}_{2}^{2N}$ be a bitstring representing the locations of $X$- and $Z$-type Pauli errors, $\Delta=(\Delta_X,\Delta_Z)\in\mathbb{F}_2^{M}$ be a bitstring representing the location of errors of syndrome bits for correcting $X$ and $Z$ errors.
Let $\mathrm{supp}(e_{X(Z)}) \subseteq V$ be the set of indices, i.e., locations, where $e_{X(Z)}$ has a value of $1$, and $\mathrm{supp}(\Delta_{X(Z)}) \subseteq W_{Z(X)}$ be the set of indices where $\Delta_{X(Z)}$ has a value of $1$.
    We say that errors $(e, \Delta)$ follow the local stochastic Pauli error model with error parameters $(p_{\mathrm{phys}}, p_{\mathrm{synd}})$ if for all subsets $S\subseteq V$ of physical qubits, $T_X\subseteq W_X$ of syndrome bits for $X$, and $T_Z\subseteq W_Z$ of syndrome bits for $Z$, the following relations are satisfied.
    \begin{align}
        \mathbb{P}\left[\mathrm{supp}(e_X)\supseteq S~\mathrm{and}~\mathrm{supp}(\Delta_X)\supseteq T_X \right]&\leq p_{\mathrm{phys}}^{|S|}p_{\mathrm{synd}}^{|T_X|},\\
        \mathbb{P}\left[\mathrm{supp}(e_Z)\supseteq S~\mathrm{and}~\mathrm{supp}(\Delta_Z)\supseteq T_Z \right]&\leq p_{\mathrm{phys}}^{|S|}p_{\mathrm{synd}}^{|T_Z|}.
    \label{eq: local stochastic error model on physical qubits and syndrome bits}
    \end{align}
    In particular, when the errors on the syndrome bits are not considered, we say that errors follow a local stochastic Pauli error model with parameter $p_{\mathrm{phys}}$ if for all subsets $S\subseteq V$ of physical qubits, the errors satisfy
    \begin{align}
        \mathbb{P}\left[\mathrm{supp}(e_X)\supseteq S \right]\leq p_{\mathrm{phys}}^{|S|},\\
        \mathbb{P}\left[\mathrm{supp}(e_Z)\supseteq S \right]\leq p_{\mathrm{phys}}^{|S|}.
    \label{eq: local stochastic error model on physical qubits}
    \end{align}
\end{definition}

Quantum expander codes~\cite{Fawzi_2018,Leverrier_2015} and quantum Tanner codes~\cite{Leverrier2022Tanner}, which are CSS LDPC codes, each have their own decoding algorithms~\cite{Fawzi_2018,grospellier:tel-03364419,Gu2024SingleShot} such that when $\Delta\neq 0$, they can deduce a recovery operation to suppress the residual error $e^{\mathrm{re}}\in\mathbb{F}_2^{2N}$ on physical qubits,
\begin{equation}
    e^{\mathrm{re}}\coloneqq e\oplus r.
\end{equation}
The decoding algorithms for both codes may tolerate measurement errors in syndrome bits due to the single-shot property, but as discussed below, the proof of the threshold theorem requires further properties on bounds of error suppression. 
As we will clarify below, the decoding algorithm for the quantum expander codes~\cite{Leverrier_2015,Fawzi_2018_eff,Fawzi_2018,grospellier:tel-03364419} has all the required properties while that for the quantum Tanner codes~\cite{leverrier2022decoding,Leverrier2023Efficient,Gu2024SingleShot} currently lacks the proof of having some of the properties.

A decoding algorithm with the single-shot property is crucial for feasibility of correcting errors using erroneous syndrome bits $\Tilde{\sigma}$ that are extracted by a noisy physical circuit.
These erroneous syndrome bits are obtained from a \textit{single} round of syndrome measurements for each stabilizer generator~\cite{grospellier:tel-03364419, Fawzi_2018, Gu2024SingleShot,Campbell_2019,Bombin2015Single, Kubica_2022,Brown2016}. 
Using these noisy syndrome bits, existing single-shot decoding algorithms for quantum expander codes and quantum Tanner codes~\cite{grospellier:tel-03364419, Fawzi_2018, Gu2024SingleShot} run in an iterative way; i.e., they repeat $T$ internal loops to output a final estimate of a recovery operation $r\coloneqq r^{(T)}$. 
Each of the $T$ loops can be executed with runtime $O(1)$, using $O(N)$ parallel processes. 
For each internal loop $t\in [1, \ldots, T]$, the algorithm computes a temporally deduced recovery operation $r^{(t)} \in \mathbb{F}_2^{2N}$. 
The decoding algorithm requires that for all $t$, the support of the residual error, derived by using $r^{(t)} \in \mathbb{F}_2^{2N}$, should be less than or equal to the code distance $D$.
Formally, the definition of the single-shot decoding algorithm for non-vanishing-rate quantum LDPC codes is as follows.

\begin{definition}[Single-shot decoding algorithm (Definition~3.3 of~\cite{Gu2024SingleShot})]
\label{def: single-shot decoding algorithm with a fixed weight}
    Let $\{\mathcal{Q}_i\}_i$ be a family of CSS LDPC codes where $\mathcal{Q}_i$ is an $[[N_i,K_i,D_i]]$ code with $N_i\rightarrow\infty$ for $i\rightarrow\infty$.
    Let $e=(e_X, e_Z)\in\mathbb{F}_{2}^{2N}$ be a bitstring representing the locations of Pauli errors, $\Delta=(\Delta_X,\Delta_Z)\in\mathbb{F}_2^{M}$ be a bitstring representing the locations of errors on syndrome bits, and $\Tilde{\sigma}=(\Tilde{\sigma}_X, \Tilde{\sigma}_Z)\in\mathbb{F}_2^{M}$ be the erroneous syndrome bits.
    For $P=\{X,Z\}$, a decoding algorithm returns the recovery operation $r^{(t)}_{P}\in\mathbb{F}_2^{2N}$ at each internal loop $t\in\{1,\ldots, T\}$.
    After the $T$ iterations, the final output of the recovery operation is denoted by $r_P\coloneqq r^{(T)}_P$.
    We write
    \begin{equation}
        u_P\coloneqq e_P\oplus r_P.
    \end{equation}
    A decoding algorithm for $\{\mathcal{Q}_i\}_i$ is said to be single-shot if the following property holds for sufficiently small constants $a$, $b$, and $c$:
    if we have
    \begin{equation}
        |u_P|\leq a|e_P|_\mathrm{R}+b|\Delta_P|\leq c D_i,
    \end{equation}
    then given the erroneous syndrome bits $\tilde{\sigma}_P\in\mathbb{F}_2^{M_P}$, the algorithm after the $T$ iterations can find a recovering operation $r_P\in\mathbb{F}_2^{N}$ satisfying
    \begin{equation}
        |e_P+r_P|_\mathrm{R}\leq\alpha|e_P|+\beta |\Delta_P|,
    \end{equation}
    with 
    \begin{equation}
        \alpha=\exp({-\Omega(T)}),\quad\beta=O(1)\quad\text{as $T\to\infty$}.
    \end{equation}
    Moreover, each internal loop of the decoding algorithm can be performed within runtime $O(1)$ by using $O(N)$ parallel processes.
    Here $|\cdot|_\mathrm{R}$ denotes the stabilizer-reduced weight defined to be the minimum weight of a stabilizer-equivalent error to $e_P\in\mathbb{F}_2^{N}$, i.e.,
    \begin{equation}
        |e_P|_\mathrm{R}\coloneqq\min_{e'\in C^{\perp}}|e_P+e'|.
    \end{equation}
\label{def: single-shot decoding algorithm}
\end{definition}

In the analysis of a decoding algorithm, we consider the local stochastic Pauli error model on the physical qubits and the syndrome bits.
Under this model, the weight of errors typically scales linearly in the number of physical qubits $N$ contained in a code $\mathcal{Q}$.
It appears that a code with a linear distance $D=\Theta(N)$ would be useful for correcting errors of the linear weight. 
However, even if one uses quantum LDPC codes with sublinear distances, there are cases under the local stochastic error model where errors can still be corrected with a high probability when the error parameter is sufficiently small~\cite{Kovalev_2013,Gottesman2014Constant,grospellier:tel-03364419, Fawzi_2018}. 
This is feasible because, at sufficiently low error rates, typical linear-weight errors occurring in quantum LDPC codes tend to form small, separate clusters of errors that are independently correctable by using the decoding algorithms.

In the following, we define two decoding algorithms that quantum LDPC codes used in our protocol should have: the single-shot decoding algorithm with thresholds for reducing residual errors in Def.~\ref{def: single-shot decoding algorithm with thresholds to suppress residual error} and the decoding algorithm with threshold for correcting errors in Def.~\ref{def: Logarithmic-time decoding algorithm with threshold}.
These algorithms are designed to correct local stochastic Pauli errors on physical qubits and syndrome bits with high probability if an error parameter is below a certain threshold value.
The family of quantum expander codes with $D=\Theta(\sqrt{N})$ is the only known family of non-vanishing-rate quantum LDPC codes satisfying the requirements in Defs.~\ref{def: single-shot decoding algorithm with thresholds to suppress residual error} and \ref{def: Logarithmic-time decoding algorithm with threshold}. 
Specifically, the small-set-flip decoding algorithm for the quantum expander codes can work as both the single-shot decoding algorithm for reducing errors with thresholds in Def.~\ref{def: single-shot decoding algorithm with thresholds to suppress residual error} and the decoding algorithm for correcting errors in Def.~\ref{def: Logarithmic-time decoding algorithm with threshold}~\cite{grospellier:tel-03364419, Fawzi_2018} (in particular, see Theorems 5.30 and 5.36 in~\cite{grospellier:tel-03364419})
More recently, Ref.~\cite{Gu2024SingleShot} showed that the family of quantum Tanner codes with $D=\Theta(N)$ also has a single-shot decoding algorithm in the sense of Def.~\ref{def: single-shot decoding algorithm}, but there is no proof that bounds the error parameter of a residual error under the local stochastic error model as in~\eqref{eq: residual error rate after error correction}.
Building on this direction, Ref.~\cite{christandl2024faulttolerantquantuminputoutput} extends the analysis to provide an error bound for error correction with quantum Tanner codes within the EC gadget.
However, Ref.~\cite{christandl2024faulttolerantquantuminputoutput} does not provide a complete error analysis for all gadgets but focuses on analyzing specific parts of the fault-tolerant protocol.
In contrast, our contribution lies in performing a comprehensive analysis for all gadgets, thereby completing the proof of the threshold theorem.

Both decoding algorithms are performed iteratively to reduce or correct errors.
It is important to note that the runtime of each loop does not grow as $N\rightarrow\infty$ when parallel processes are used, and thus, the runtime of the classical computation for decoding depends only on the number of iterations within each algorithm.
These decoding algorithms work in different situations and play different roles in our fault-tolerant protocol.
Specifically, the single-shot decoding algorithm for reducing errors in Def.~\ref{def: single-shot decoding algorithm with thresholds to suppress residual error} is capable of keeping the residual error small even if the syndrome bits are noisy; thus, we will utilize this algorithm in the EC gadget to keep the residual error small in the code block.
On the other hand, the decoding algorithm for correcting errors in Def.~\ref{def: Logarithmic-time decoding algorithm with threshold} is capable of recovering the logical state, but this requires that the syndrome bit errors should be absent; thus, we will use this algorithm in the $Z_K$-measurement gadget and the Bell-measurement gadget to obtain the measurement outcomes of the logical qubits in $\mathcal{Q}$, as will be explained later in Secs.~\ref{sec: $Z$-basis measurements} and~\ref{sec: Bell-measurement gadget}, respectively. 
Since measurement errors are absent in this case, the single-shot property is not required for this algorithm.
To summarize, we define the single-shot decoding algorithm for reducing errors and the (not necessarily single-shot) decoding algorithm for correcting errors as follows.

\begin{definition}[Single-shot decoding algorithm with threshold for reducing errors\label{def: single-shot decoding algorithm with thresholds to suppress residual error}]
Let $\{\mathcal{Q}_i\}_i$ be a family of CSS LDPC codes, where $\mathcal{Q}_i$ is an $[[N_i, K_i, D_i]]$ code with $N_i \rightarrow \infty$ as $i \rightarrow \infty$.
Let $e = (e_X, e_Z) \in \mathbb{F}_2^{2N}$ be a random variable representing the locations of the physical qubits with errors, and $\Delta = (\Delta_X, \Delta_Z) \in \mathbb{F}_2^{M}$ be a random variable representing the locations of the syndrome bits with errors, which are local stochastic with parameters $p_{\mathrm{phys}}\in (0,1]$ and $p_{\mathrm{synd}}\in (0,1]$, i.e.,
\begin{align}
&\forall S \subseteq V, ~\forall T_X \subseteq W_X, \text{~and~} \forall T_Z \subseteq W_Z, \nonumber\\
& \mathbb{P}\left[\mathrm{supp}(e_X)\supseteq S~\mathrm{and}~\mathrm{supp}(\Delta_X)\supseteq T_X \right]\leq p_{\mathrm{phys}}^{|S|}p_{\mathrm{synd}}^{|T_X|},\\
        &\mathbb{P}\left[\mathrm{supp}(e_Z)\supseteq S~\mathrm{and}~\mathrm{supp}(\Delta_Z)\supseteq T_Z \right]\leq p_{\mathrm{phys}}^{|S|}p_{\mathrm{synd}}^{|T_Z|}.
\label{eq: local stochastic error model on physical qubits and syndrome bits}
\end{align}
We say that a decoding algorithm for $\{\mathcal{Q}_i\}_i$ is a single-shot decoding algorithm with threshold for reducing errors if there exists a threshold $p_1^\thre > 0$ satisfying the following properties:
if $p_\mathrm{phys} < p_1^\thre$ and $p_{\mathrm{synd}} < p_1^\thre$, then, for each $P\in\{X,Z\}$, the decoding algorithm returns a bitstring $r_P\in\mathbb{F}_2^{N}$ after $T$ internal loops such that a set of the locations of residual errors
\begin{align}
    e'_P\coloneqq e_P\oplus r_P
\end{align}
satisfies
\begin{align}
&\forall S' \subseteq V,\nonumber\\
&\mathbb{P}[m(e_P, \Delta_P) = 0 \text{~and~} \mathrm{supp}(e'_P) \supseteq S'] \leq \qty(p_1^\thre)^{c(T)|S'|},
\label{eq: residual error rate after error correction}
\end{align}
where $c(T)>1$ (called an error suppression parameter) is a monotonically increasing.

Here, $m \colon \mathbb{F}_2^N\times\mathbb{F}_2^M\rightarrow \{0, 1\}$ is a flag indicating if the decoding succeeds for $e_P$ and $\Delta_P$, and the failure probability of decoding is bounded by
\begin{equation}
\mathbb{P}[m(e_P, \Delta_P) = 1] \leq \delta_1(D_i),
\end{equation}
with a parameter
\begin{equation}
\delta_1(D_i) = \exp(-\Omega(D_i)).
\end{equation}
Each internal loop should be executable with runtime $O(1)$, using $O(N_i)$ parallel processes.
If $m(e_P, \Delta_P) = 1$, we say that the single-shot decoding algorithm fails.
We refer to $\delta_1$ as the failure probability of the single-shot decoding algorithm.
\end{definition}

\begin{definition}[Decoding algorithm with threshold for correcting errors\label{def: Logarithmic-time decoding algorithm with threshold}]
    Let $\{\mathcal{Q}_i\}_i$ be a family of CSS LDPC codes, where $\mathcal{Q}_i$ is an $[[N_i,K_i,D_i]]$ code with $N_i\rightarrow\infty$ as $i\rightarrow\infty$.
    Let $e=(e_X, e_Z)\in\mathbb{F}_2^{2N}$ be a bitstring representing the locations of the physical qubit, which are local stochastic with parameters $p_{\mathrm{phys}}\in (0,1]$, i.e.,
\begin{align}
\forall S \subseteq V,\quad
&\mathbb{P}\left[\mathrm{supp}(e_X)\supseteq S \right]\leq p_{\mathrm{phys}}^{|S|},\\
&\mathbb{P}\left[\mathrm{supp}(e_Z)\supseteq S\right]\leq p_{\mathrm{phys}}^{|S|}.
\end{align}
We say a decoding algorithm is a decoding algorithm with threshold for correcting errors if there exists a threshold $p_2^\thre>0$ satisfying the following properties.
If $p_\mathrm{phys}<p_2^\thre$, then for each $P\in\{X,Z\}$, the decoding algorithm returns a bitstring $r_P\in\mathbb{F}_2^{N}$ 
after $T=O(\log{N_i})$ internal loops such that
\begin{equation}
e_P\oplus r_P\in C_P^{\perp}
\end{equation}
with probability at least
\begin{equation}
        1-\delta_2(D_i),
    \end{equation}
    where 
    \begin{equation}
        \delta_2(D_i)=\exp(-\Omega(D_i)).
    \end{equation}
    Each internal loop should be able to be executed with runtime $O(1)$, using $O(N_i)$ parallel processes. 
        We say $\delta_2$ is the failure probability of the decoding algorithm for correcting errors.
    \end{definition}

Furthermore, the family of CSS LDPC codes with decoding algorithms in Defs.~\ref{def: single-shot decoding algorithm with thresholds to suppress residual error} and \ref{def: Logarithmic-time decoding algorithm with threshold} should have a constrained difference between adjacent code block size $N_i$ and $N_i$, ensuring that $N_i$ does not increase too rapidly between successive indices $i$.
This condition, which is also introduced in Ref.~\cite{Gottesman2014Constant}, can be expressed as
\begin{align}
    0<N_i-N_{i-1}\leq N_{i-1}^\beta,\quad\text{for a constant $\beta>0$}.
\end{align}
This contributes to selecting an appropriate code block size for FTQC.
Note that the above inequality holds for any $\beta>1/2$ in quantum expander codes~\cite{Leverrier2015QuantumExpander}.  

To summarize, we make the following assumption about families of non-vanishing-rate quantum LDPC codes that can be used in our protocol.
\begin{assumption}
    A family of non-vanishing-rate CSS LDPC codes $\{\mathcal{Q}_i\}_i$ used in our protocol has both the single-shot decoding algorithm for reducing errors with thresholds in Def.~\ref{def: single-shot decoding algorithm with thresholds to suppress residual error} and the decoding algorithm for correcting errors with threshold in Def.~\ref{def: Logarithmic-time decoding algorithm with threshold},  where $\mathcal{Q}_i$ is an $[[N_i, K_i, D_i]]$ code with monotonically increasing $N_i$ satisfying $N_i\rightarrow\infty$, $K_i=\Theta(N_i)$, and $D_i=\Theta(N_i^{\gamma})$ $(0<\gamma\leq 1)$ as $i\rightarrow\infty$. In addition, $N_i$ does not increase too rapidly between successive indices $i$, i.e., $0<N_i-N_{i-1}\leq N_{i-1}^\beta$ for a constant $\beta> 0$.
\label{assump: non-vanishing-rate quantum LDPC with efficient decoding algorithm}
\end{assumption}

In order for the fault-tolerant protocol with a family $\{\mathcal{Q}_i\}_i$ of quantum LDPC codes to exhibit fault tolerance, gadgets must be designed so that errors caused by faults in a gadget do not propagate excessively even when the code block size $N_i$ of $\mathcal{Q}_i$ increases as $i\rightarrow\infty$.
In addition, a gadget does not contain any faults, its logical action should be the intended one.
In the following, we formally define the fault-tolerance conditions that each type of gadget in our protocol must satisfy.

Since our protocol aims for small space and time overheads, the width and the depth of each gadget are also crutial.
Thus, we include size requirements in the conditions.
Instead of specifying requirements for a set of gadgets, we state them for each gadget individually, using universal constants ($d_0$, $c_W$, and $c_D$) that apply to all gadgets.
To prove the scaling of overheads, it suffices to ensure the existence of such constants, as formally stated in the following lemma.

\begin{lemma}
\label{lemma: fault-tolerance condition}
There exist positive constants $d_0$, $c_W$, and $c_D$ such that every gadget satisfies either Def.~\ref{def: fault-tolerance condition of gate, measurement for qLDPC codes}, Def.~\ref{def: fault-tolerance conditions of the state preparation gadgets for quantum LDPC codes}, or Def.~\ref{def: fault-tolerance conditions of the EC gadgets for quantum LDPC codes}.
\end{lemma}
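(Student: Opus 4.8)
The plan is to prove the lemma by a finite case analysis over the distinct gadget types appearing in the protocol. Since the intermediate circuit is built from finitely many kinds of elementary operations, the fault-tolerant circuit contains only finitely many kinds of gadgets: the $\ket{0}^{\otimes K}$-state, Clifford-state, and magic-state preparation gadgets; the Pauli-gate, CNOT-gate, and wait gadgets; the $Z_K$-measurement and Bell-measurement gadgets; and the EC gadget. I would assign each gadget to its category---preparation gadgets to Def.~\ref{def: fault-tolerance conditions of the state preparation gadgets for quantum LDPC codes}, gate and measurement gadgets to Def.~\ref{def: fault-tolerance condition of gate, measurement for qLDPC codes}, and the EC gadget to Def.~\ref{def: fault-tolerance conditions of the EC gadgets for quantum LDPC codes}---and verify, gadget by gadget, the three ingredients shared by these definitions: fault-free logical correctness, bounded propagation of faults to a constant number of physical qubits per code block, and the width/depth size bounds. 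Each verification yields gadget-specific constants, and the universal $d_0$, $c_W$, $c_D$ are then obtained as the maxima of these finitely many constants.

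For the gate, wait, measurement, and EC gadgets I would read off all three ingredients directly from their constructions in Sec.~\ref{sec: Construction of fault-tolerant gadgets and abbreviations(qLDPC)}. These gadgets are constant-depth physical circuits whose quantum parts consist of transversal Pauli and CNOT operations, destructive $Z$-basis or Bell measurements, and, in the EC gadget, a single round of syndrome extraction. Because $\mathcal{Q}$ is an $(r,c)$ LDPC code with $r,c=O(1)$, each stabilizer check couples only a constant number of qubits, so a single physical fault in any of these constant-depth circuits spreads to only a constant number of physical qubits in each code block; this gives the required error-locality. The widths are $O(N)$ and the quantum-part depths are bounded multiples of the EC-gadget depth $d$ (the EC and measurement gadgets allocating only $O(1)$ and $O(\log N)$ decoding loops respectively, each of runtime $O(1)$, with the latter's loops running after the destructive measurements and hence adding no qubit-wait locations inside the gadget). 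Fault-free correctness is the elementary statement that each such gadget realizes its intended logical operation, which holds by the constructions together with the single-shot decoding of Def.~\ref{def: single-shot decoding algorithm with thresholds to suppress residual error} in the EC gadget and the error-correcting decoding of Def.~\ref{def: Logarithmic-time decoding algorithm with threshold} in the measurement gadgets.

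The main obstacle is the state-preparation gadgets, which prepare the encoded Clifford and magic auxiliary states of $\mathcal{Q}$ through the concatenated-Steane-code sub-protocol analyzed in Appendix~\ref{appendix: fault-tolerant protocol for open quantum circuits}. Here the three ingredients cannot be read off as directly, and I would instead transfer the fault-tolerance guarantees of the open-circuit concatenated-code protocol into the statements that the prepared state, viewed as a codeword of $\mathcal{Q}$ with a residual error, is logically correct in the fault-free case and carries only an appropriately local error on the $N$ physical qubits in the faulty case. The size bound is equally delicate, because the concatenated-code preparation incurs a polylogarithmic space overhead; I would therefore verify that the width of a single preparation gadget still matches the form allowed by Def.~\ref{def: fault-tolerance conditions of the state preparation gadgets for quantum LDPC codes} with a universal $c_W$, and that its depth fits the allowed multiple of $d$. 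Once every gadget has been checked, setting $d_0$, $c_W$, and $c_D$ to the maxima of the respective gadget-specific constants over the finite gadget list produces a single triple of positive constants valid for all gadgets, completing the proof.
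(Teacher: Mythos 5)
Your overall skeleton---enumerating the finitely many gadget types, verifying each against its assigned definition, and taking $d_0$, $c_W$, $c_D$ as the maxima of the gadget-specific constants---is exactly the paper's proof, and your verifications for the gate, measurement, and EC gadgets match the constructions in Sec.~\ref{sec: Construction of fault-tolerant gadgets and abbreviations(qLDPC)}: constant-depth transversal circuits of width $O(N)$, syndrome-extraction depth bounded via the $(r,c)$-LDPC property, and a number of decoding loops $T$ chosen independent of $N$ so that the second part of the EC gadget has $N$-independent depth.

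However, your handling of the state-preparation gadgets contains a genuine gap. Def.~\ref{def: fault-tolerance conditions of the state preparation gadgets for quantum LDPC codes} imposes its width bound $c_W N$, depth bound $c_D N$, and fault-free correctness on the \emph{original open circuit} $C^\org$, not on the compiled gadget $\tilde{C}(C^\org,L)$; the paper verifies exactly these three conditions for $C_N^{\ket{0}}$, $C_N^{U_C}$, and $C_N^{TH}$ via the stabilizer encoding circuits (using the constant-round normal form of Ref.~\cite{Maslov_2018}, which gives two-qubit depth linear in $N$). You instead propose to show that the compiled gadget's width ``matches the form allowed by the definition with a universal $c_W$'' and that its depth ``fits the allowed multiple of $d$.'' Both claims are false: with $L$ chosen as in~\eqref{eq: L in threshold theorem}, the compiled gadget has width $\tilde{W}^\prep(n,\varepsilon)=\Theta(N\log^{\gamma_1}(|C_n^\org|/\varepsilon))$ and depth $\Theta(\log^{\alpha+\gamma_2}(|C_n^\org|/\varepsilon))$, so no universal constant bounds either quantity by $c_W N$ or by a fixed multiple of $d$ as $n\to\infty$. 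Likewise, the ``appropriately local error in the faulty case'' that you want to transfer from the concatenated-code sub-protocol is not part of Lemma~\ref{lemma: fault-tolerance condition} at all: that guarantee is the $\delta_3$-reducibility of Theorem~\ref{Theorem: level-reduction for the circuit that outputs a quantum state}, which enters only later, in the proof of Lemma~\ref{lemma: circuit-reducibility of rectangles}. (The same conflation appears, harmlessly, in your substituting ``bounded propagation of faults'' for the definitions' literal conditions of constant depth and at-most-two-qubit operations, of which the propagation bound is merely a consequence used downstream.) Restricting your state-preparation checks to $C^\org$, as the definition is written, repairs the argument and recovers the paper's proof.
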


The definition of fault tolerance of gate gadgets for the CNOT-gate operation in~\eqref{Fig: cnot-gate-ldpc} and on-demand Pauli-gate gadget in~\eqref{Fig: pauli-gate-ldpc}, measurement gadgets for the $Z_K$-measurement operation in~\eqref{Fig: Z-basis measurements-ldpc.} and the Bell-measurement operation in~\eqref{Fig: bell-measurement-ldpc} is stated as follows.

\begin{definition}[Fault-tolerance conditions of gate and measurement gadgets for quantum LDPC codes]
\label{def: fault-tolerance condition of gate, measurement for qLDPC codes}
    Let $C$ be a physical circuit of a gate or measurement gadget for an $[[N,K,D]]$ code $\mathcal{Q}$.
    For given constants $d_0$ and $c_W$,  we say that the gadget $C$ is fault-tolerant if the following conditions hold.
    \begin{enumerate}
            \item The width of $C$ is no greater than $c_W N$.
        \item The depth of $C$ is no greater than $d_0$.

        \item Each physical operation in $C$ acts on at most two qubits.  

        \item If $C$ has no faults and the input state to $C$ is a codeword of $\mathcal{Q}$, the action of $C$ is exactly the intended logical operation or logical measurement.
       
    \end{enumerate}
\end{definition}

In contrast to the gate or measurement gadgets, 
the fault-tolerance property of the state-preparation gadgets relies on  concatenated Steane codes. 
In Appendix~\ref{appendix: fault-tolerant protocol for open quantum circuits}, we provide an explicit construction for converting an original open circuit $C$ into a fault-tolerant one $\tilde{C}(C, L)$ using level-$L$ concatenated codes, together with Theorem~\ref{Theorem: level-reduction for the circuit that outputs a quantum state} describing its properties.
The definition of fault tolerance of state-preparation gadgets for the $\ket{0}^{\otimes K}$-state preparation in~\eqref{Fig: 0-state-preparation-ldpc}, the Clifford-state preparations in~\eqref{Fig: clifford-state-preparation-ldpc}, and the magic-state preparations in~\eqref{Fig: magic-state-preparation-ldpc} is stated as follows.
\begin{definition}[Fault-tolerance conditions of the state preparation gadgets for quantum LDPC codes]
\label{def: fault-tolerance conditions of the state preparation gadgets for quantum LDPC codes}
    Let $C$ be a physical circuit of a state-preparation gadget for an $[[N,K,D]]$ code $\mathcal{Q}$.
    For given constants $c_W$ and $c_D$, we say that the gadget $C=\tilde{C}(C^\org, L)$ for an integer $L$ and for an original open circuit $C^\org$ is fault-tolerant if $C^\org$ satisfies the following conditions.
       \begin{enumerate}

        \item The width of $C^\org$ is no greater than $c_W N$.

        \item The depth of $C^\org$ is no greater than $c_D N$.

        \item If $C^\org$ suffers from no faults, $C^\org$ exactly prepares the intended logical state.
       
    \end{enumerate}
   \end{definition}

The fault-tolerance of the EC gadgets depends on both the structure of syndrome measurement circuit and the decoding algorithm.
The definition of fault tolerance of the EC gadgets is stated as follows.
\begin{definition}[Fault-tolerance conditions of the EC gadgets for quantum LDPC codes]
\label{def: fault-tolerance conditions of the EC gadgets for quantum LDPC codes}
    Let $C$ be a physical circuit of an EC gadget for an $[[N,K,D]]$ code $\mathcal{Q}$.
    For given constants $d_0$ and $c_W$, we say that the gadget $C$ is fault-tolerant if  
    $C$ is divided into the first part $C_1$ and the second part $C_2$  such that 
    the following conditions hold.
       \begin{enumerate}
       
        \item All the syndrome bits have been obtained at the end of $C_1$.
       
        \item The depth of $C_1$ is no greater than $d_0$.

        \item Each physical operation in $C_1$ acts on at most two qubits.  
        
        \item In $C_2$, a recovery operation is applied according to the single-shot decoding algorithm in Def.~\ref{def: single-shot decoding algorithm with thresholds to suppress residual error} with $T$ internal loops, using the syndrome obtained in $C_1$.
        
        \item The depth of $C_2$ may grow as $T$ increases but is independent of $N$.
                
        \item Each physical operation in $C_2$ acts only on a single qubit.
                
        \item The width of $C$ is no greater than $c_W N$. 

    \end{enumerate}
   \end{definition}

In the next subsection, we explicitly present all gadgets used in our fault-tolerant protocol and verify that they satisfy the properties defined in Defs. \ref{def: fault-tolerance condition of gate, measurement for qLDPC codes}--\ref{def: fault-tolerance conditions of the EC gadgets for quantum LDPC codes}.
We analyze the scaling of 
the depth and the width for each type of gadget, determining the specific coefficients associated with each type. 
Since the number of types is finite, Lemma~\ref{lemma: fault-tolerance condition} follows immediately by choosing the maximum coefficients across all types. 
The gadget properties established in Lemma~\ref{lemma: fault-tolerance condition} and Defs.~\ref{def: fault-tolerance condition of gate, measurement for qLDPC codes}--\ref{def: fault-tolerance conditions of the EC gadgets for quantum LDPC codes} will be used to prove the threshold theorem and to estimate the overheads of our protocol in Sec.~\ref{sec: threshold theorem based on quantum LDPC codes}.

\subsection{Construction of fault-tolerant gadgets \label{sec: Construction of fault-tolerant gadgets and abbreviations(qLDPC)}}
In this section, we present the construction of the gadgets and check their fault-tolerance conditions in Defs.~\ref{def: fault-tolerance condition of gate, measurement for qLDPC codes},~\ref{def: fault-tolerance conditions of the state preparation gadgets for quantum LDPC codes}, and~\ref{def: fault-tolerance conditions of the EC gadgets for quantum LDPC codes}.

\begin{figure*}[t]
    \centering
    \includegraphics[width=\textwidth]{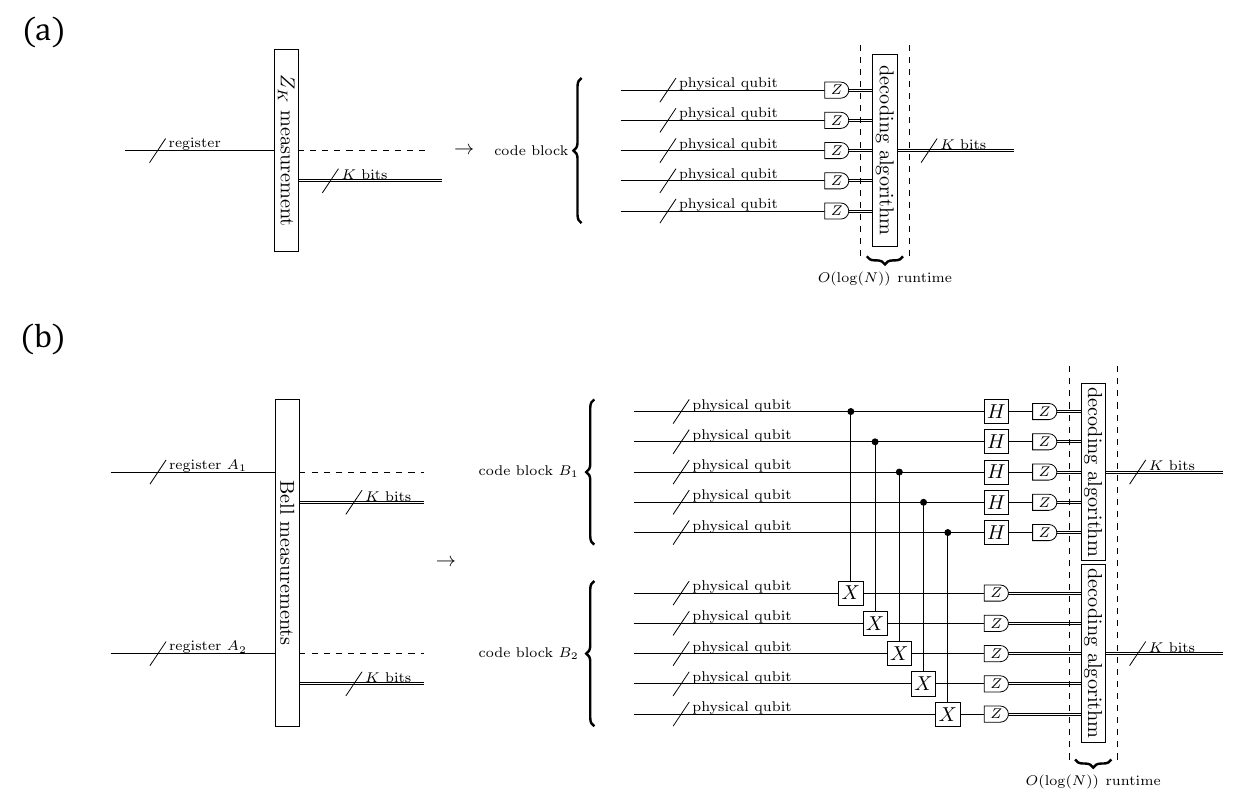}
    \caption{(a) The physical circuit of the gadget for the $Z_K$-measurement operation~\eqref{Fig: Z-basis measurements-ldpc.}. The gadget is implemented by $Z$-basis measurements on $N$ qubits.
    Based on the measurement outcomes, we perform classical computation for recovering measurement outcomes of $K$-bit string, where $i$-th bit corresponds to the measurement outcome of the logical $Z_i$ operator of $\mathcal{Q}$ in \eqref{eq: code parameters of quantum LDPC codes}.
    The runtime of the classical computation is bounded by $O(\log{N})$.
    (b) The physical circuit of the gadget for the Bell-measurement operation is shown in~\eqref{Fig: bell-measurement-ldpc}.
    The gadget is implemented by transversal CNOT gates, transversal $H$ gates on the code block $B_1$, followed by transversal $Z$-basis measurements on $B_1$ and $B_2$.
    Based on the measurement outcomes, we perform classical computation to recover the measurement outcomes of the logical $X \otimes X$ and $Z \otimes Z$ operators, where the runtime is bounded by $O(\log{N})$.}
    \label{fig: destructive-measurement-ldpc}
\end{figure*}

\subsubsection{$Z_K$-measurement gadget \label{sec: $Z$-basis measurements}}
The construction of a $Z_K$-measurement gadget is shown in Fig.~\ref{fig: destructive-measurement-ldpc}~(a).
The gadget is implemented by transversal $Z$-basis measurements, followed by a classical computation for performing the decoding algorithm in Def.~\ref{def: Logarithmic-time decoding algorithm with threshold}.
The transversal measurements give $N$-bit measurement outcomes 
\begin{equation}
    z\in \mathbb{F}_2^{N},
\label{eq: physical measurement outcome of Z-basis measurement operation}
\end{equation}
where $z_i$ corresponds to the measurement outcome of the $Z$-basis measurement on the $i$-th physical qubit and experiences errors.

From $z\in \mathbb{F}_2^{N}$, we need to estimate the $K$-bit logical measurement outcomes 
\begin{equation}
    \bar{z}\in\mathbb{F}_2^{K},
\label{eq: logical measurement outcome of Z-basis measurement operation}
\end{equation}
where $\bar{z}_i$ corresponds to the measurement outcome of the logical $\Bar{Z}_i$ operator on the $i$-th logical qubit.
This can be done in the following steps:
\begin{enumerate}
    \item From $z\in \mathbb{F}_2^{N}$, we calculate the syndrome bits 
    \begin{equation}
        \sigma_X\in\mathbb{F}_2^{M_Z}
    \label{eq: syndrome bits of Z-basis measurement operation}
    \end{equation}
    of the $Z$-type generators by using the relation given by,
    \begin{equation}
        (\sigma_X)_i\coloneqq\bigoplus_{j\in\mathrm{supp}(g^{Z}_i)}z_j,
    \end{equation}
    where $g^{Z}_i$ is the $i$-th $Z$-type stabilizer generator, and $\mathrm{supp}(g^{Z}_i)$ represents the set of indices of the physical qubits on which $g^{Z}_i$ acts nontrivially.
    Here, the syndrome bits $\sigma_X$ in \eqref{eq: syndrome bits of Z-basis measurement operation} are calculated classically, and thus there are no syndrome bit errors.
    Using $M_Z=O(N)$ parallel processes, we can compute the syndrome bits $\sigma_X$ in \eqref{eq: syndrome bits of Z-basis measurement operation} within runtime $O(1)$ due to $|g_i^{Z}|=O(1)$ as $N\rightarrow\infty$, since $\mathcal{Q}$ in \eqref{eq: code parameters of quantum LDPC codes} is a CSS LDPC code.
    \item We perform the decoding algorithm in Def.~\ref{def: Logarithmic-time decoding algorithm with threshold} that takes as input the syndrome bits $\sigma_X\in \mathbb{F}_2^{M_Z}$ and outputs a recovery operation $r_X\in \mathbb{F}_2^{N}$ for correcting $X$-type errors. 
    Then, the corrected bitstring $\tilde{z}\in\mathbb{F}_2^{N}$ is obtained as follows: for all $i\in[1,\ldots,N]$,
    \begin{equation}
        \tilde{z}_i\coloneqq z_i\oplus (r_X)_i.
    \label{eq: calculating correct bits}
    \end{equation}
    Since there are no syndrome bit errors, the decoding algorithm can correct the data error.
    A decoding algorithm for correcting errors with a threshold in Def.~\ref{def: Logarithmic-time decoding algorithm with threshold} can be performed within runtime $O(\log{N})$, and calculating~\eqref{eq: calculating correct bits} for all $i$ can also be performed within runtime $O(1)$ using $O(N)$ parallel processes.
    \item  We calculate the measurement outcomes $\bar{z}\in\mathbb{F}_2^{K}$ of the logical operators as follows: for all $i\in[1,\ldots,K]$,
    \begin{equation}
        \bar{z}_i\coloneqq \bigoplus_{j\in\mathrm{supp}(\Bar{Z}_i)}\Tilde{z}_j,
    \label{eq: calculating the logical measurement outcomes}
    \end{equation}
    where $\Bar{Z}_i$ is the logical-$Z$ operator acting on the $i$-th logical qubit, and $\mathrm{supp}(\Bar{Z}_i)$ denotes the set of indices of the physical qubits on which $\Bar{Z}_i$ acts nontrivially.
    Using $DK=O(N^2)$ parallel processes, we can compute~\eqref{eq: calculating the logical measurement outcomes} for all $i$ within runtime $O(\log{N})$ since $|\Bar{Z}_i|=\Theta(D)=\Theta(N^{\gamma})$ for $0<\gamma\leq 1$ as $N\rightarrow\infty$.
\end{enumerate}
Therefore, using $O(N^2)$ parallel processes, the classical part of the gadget can be performed within runtime
\begin{equation}
    O(\log{N}).
\label{eq: classical part of the Z basis measurement}
\end{equation}

As a result, the width of the quantum part of the gadget is $N$ and the depth of the quantum part is $1$.
Thus, the fault-tolerance conditions in Def.~\ref{def: fault-tolerance condition of gate, measurement for qLDPC codes} are satisfied.
The runtime of the classical part of the gadget is bounded by
\begin{equation}
    O(\log{N}).
\end{equation}

\subsubsection{Bell-measurement gadget \label{sec: Bell-measurement gadget}}
The construction of a Bell-measurement gadget is shown in Fig.~\ref{fig: destructive-measurement-ldpc}~(b).
The gadget is implemented by transversal CNOT gates between the controlled block $B_1$ and the target block $B_2$, transversal $H$ gates on $B_1$, followed by transversal $Z$-basis measurements on $B_1$ and $B_2$.
The transversal measurements on $B_1$ and $B_2$ yield a pair of $N$-bit strings of measurement outcomes, respectively, as
\begin{equation}
(x, z) \in \mathbb{F}_2^{2N},
\label{eq: physical measurement outcome of XX and ZZ measurement operation}
\end{equation}
where $x\in \mathbb{F}_2^{N}$ represents the outcomes from $B_1$ and $z\in \mathbb{F}_2^{N}$ represents those from $B_2$.
Here, $x_i$ and $z_i$ correspond to the measurement outcome of the $X_{i}\otimes X_{i}$ and $Z_{i}\otimes Z_{i}$ operators, respectively, on $N$ pairs of physical qubits.

From $(x,z)$ in~\eqref{eq: physical measurement outcome of XX and ZZ measurement operation}, we deduce a pair of $K$-bit strings of logical measurement outcomes as
\begin{equation}
    (\Bar{x},\Bar{z})\in\mathbb{F}_2^{2K},
\end{equation}
where $\Bar{x}_i$ and $\Bar{z}_i$ correspond to the measurement outcomes of the logical $\Bar{X}_i\otimes \Bar{X}_i$ and $\Bar{Z}_i\otimes \Bar{Z}_i$ operators, respectively, on $K$ pairs of logical qubits.
To obtain the $K$-bit outcomes $\bar{z}\in \mathbb{F}_2^{K}$ from the erroneous measurement outcomes $z\in\mathbb{F}_2^{N}$, we follow the same procedure as described for the $Z_K$-measurement gadget.
Since $\mathcal{Q}$ in~\eqref{eq: code parameters of quantum LDPC codes} is a CSS code, the $K$-bit outcomes $\bar{x}\in \mathbb{F}_2^{K}$ can also be estimated by replacing the $Z$-basis with the $X$-basis and performing the same procedure.

As a result, the width of the quantum part in the gadget is $2N$,
and the depth of the quantum part is $3$.
Thus, the fault-tolerance conditions in Def.~\ref{def: fault-tolerance condition of gate, measurement for qLDPC codes} are satisfied.
The same analysis as in the $Z$-measurements gadget bounds the runtime of the classical part of the Bell-measurement gadget as 
\begin{equation}
    O(\log{N}).
\end{equation}

\subsubsection{On-demand Pauli-gate gadgets\label{sec: Pauli-gate gadgets}}

An on-demand Pauli-gate gadget is intended to perform a Pauli gate
\begin{equation}
P=\bigotimes_{k=1}^{K}{P}_k\in\Tilde{\mathcal{P}}_K,
\label{eq: Pauli operator}
\end{equation}
where $P_k\in\{I,X,Y,Z\}$ is a Pauli operator acting on the $k$-th qubit, as a logical operation acting on logical qubits in a code block as
\begin{equation}
    \bar{P}=\bigotimes_{k=1}^{K}\Bar{P}_k\in\tilde{P}_N,
\end{equation}
where $\Bar{P}_k$ is the logical operator of $P_k$ in \eqref{eq: Pauli operator}.
Each $\Bar{P}_k$ is expressed as a tensor product of $N$-qubit Pauli operators acting on physical qubits, i.e.,
\begin{equation}
    \Bar{P}_k=\bigotimes_{n=1}^{N}P'_{n,k},
\end{equation}
where $P'_{n,k}\in\{I,X,Y,Z\}$ is a Pauli operator acting on the $n$-th physical qubit.
Thus, $\Bar{P}$ can be written as a tensor product of $N$-qubit Pauli operators, i.e.,
\begin{equation}
    \Bar{P}=\bigotimes_{k=1}^{K}\Bar{P}_k=\bigotimes_{k=1}^{K}\left(\bigotimes_{n=1}^{N}P'_{n,k}\right) \in\Tilde{\mathcal{P}}_N.
\label{eq: Pauli-gate operation gadget}
\end{equation}
The construction of on-demand Pauli-gate gadgets is shown in Fig.~\ref{Fig: Pauli-gate gadget.}.
The width of the gadget is $N$, and the depth of the gadget is $1$.
Thus, the fault-tolerance conditions in Def.~\ref{def: fault-tolerance condition of gate, measurement for qLDPC codes} are satisfied.

\begin{figure*}[t]
    \centering
    \includegraphics[width=\textwidth]{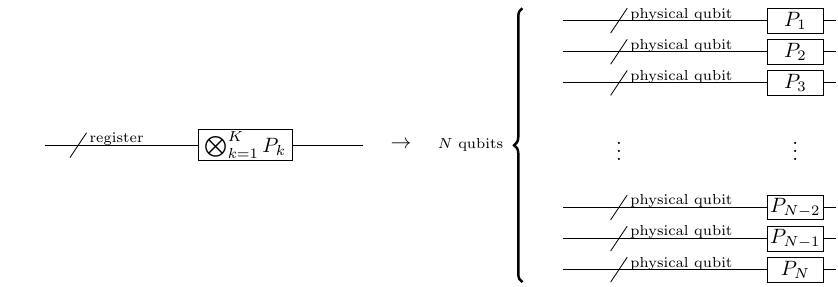}
    \caption{A physical circuit of the gadget of the on-demand Pauli gate operation~\eqref{eq: on-demand Pauli gate operations}.
    The gadget is implemented by an $n$-qubit Pauli operator acting on physical qubits in \eqref{eq: Pauli-gate operation gadget}.}
    \label{Fig: Pauli-gate gadget.}
\end{figure*}

By the constructions of the two-register Clifford-gate abbreviations and $U_T$-gate abbreviations, as explained in Sec.~\ref{sec: Two-register Clifford-gate abbreviation} and \ref{sec: U_T-gate abbreviation}, respectively, wait operations with depth $O(\log{K}) = O(\log{N})$ are always allocated before each on-demand Pauli-gate operation. 
This time is used for the classical computation required to determine the physical Pauli gates that implement the logical Pauli operation. 
The classical computation begins with the input of a $2K$-bit string, representing the symplectic form of the $K$-qubit Pauli operator $P \in \Tilde{\mathcal{P}}_K$ in \eqref{eq: Pauli operator}. The output is a $2N$-bit string representing the symplectic form of $\Bar{P} \in \Tilde{\mathcal{P}}_N$, as described in \eqref{eq: Pauli-gate operation gadget}. 
The task of computing \eqref{eq: Pauli-gate operation gadget} corresponds to summing these $K$ bitstrings according to \eqref{eq: multiplication of Pauli operators}. 
Each $n$-th bit for $n \in\{1, \ldots, N\}$ can be calculated simultaneously using $O(N)$ parallel processes. 
The sums of the $n$-th positions of the $K$ bitstrings are then computed using $O(K)$ parallel processes. 
Thus, with a total of $O(NK) = O(N^2)$ parallel processes, the computation can be completed within runtime $O(\log{K}) = O(\log{N})$,
which enables the generation of the required $2N$-bit string for this gadget.

\subsubsection{CNOT-gate gadget\label{sec: CNOT-gate gadget}}
The CNOT-gate gadget is designed to perform the logical CNOT gate between logical qubits in different code blocks $B_1, B_2$.
The construction of a CNOT-gate gadget is shown in Fig.~\ref{Fig: cnot gadget.}.
Since $\mathcal{Q}$ is a CSS code, the gadget is implemented by transversal CNOT-gate operations.
\begin{figure*}[t]
    \centering
    \includegraphics[width=\textwidth]{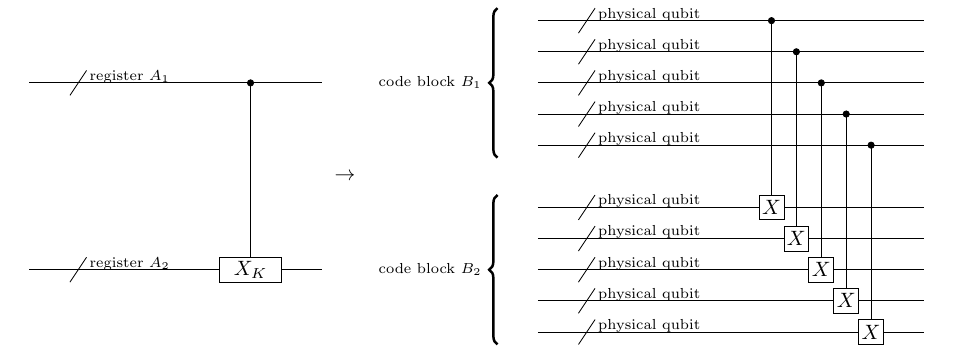}
    \caption{The physical circuit of the gadget for the CNOT-gate operation~\eqref{Fig: cnot-gate-ldpc}. The gadget is implemented by transversal CNOT gates, where the code block $B_1$ contains control qubits and the code block $B_2$ contains target qubits.}
    \label{Fig: cnot gadget.}
\end{figure*}
As a result, the depth of the quantum part of the gadget is $1$
and the width is $2N$.
Thus, the fault-tolerance conditions in Def.~\ref{def: fault-tolerance condition of gate, measurement for qLDPC codes} are satisfied.

\subsubsection{$\ket{0}^{\otimes K}$-state preparation gadget \label{sec: 0^K-state preparation gadget}}
The $\ket{0}^{\otimes K}$-state preparation gadget is designed to simulate an original open circuit $C_{N}^{\ket{0}}$ to prepare a logical state $\ket{\overline{0}}^{\otimes K}$ of $\mathcal{Q}$.
To construct $C_{N}^{\ket{0}}$, we start by considering an open circuit that is composed of state preparation of $\ket{0}^{\otimes N}$ followed by a circuit $U_{\mathrm{encode}}$ that encodes an arbitrary $K$-qubit state $\ket{\psi}$ into a logical state $\ket{\overline{\psi}}$ of $\mathcal{Q}$ 
as
\begin{equation}
    U_{\mathrm{encode}}\left(\ket{\psi}\otimes\ket{0}^{\otimes(N-K)}\right)=\ket{\overline{\psi}},
\label{eq: encoding circuit for stabilizer codes}
\end{equation} 
where we choose $\ket{\psi}=\ket{0}^{\otimes K}$, as shown in Fig.~\ref{Fig: clifford_state_original.}~(a).
For stabilizer codes, we can obtain such encoding circuit $U_{\mathrm{encode}}$ described by a stabilizer circuit~\cite{Cleve_encoding}.
Furthermore, any $n$-qubit stabilizer circuit has an equivalent $n$-qubit stabilizer circuit comprising $7$ rounds, each using only one type of Clifford gate~\cite{Maslov_2018}.
The sequence of gates follows the order: CNOT, CZ, $S$, $H$, $S$, CZ, CNOT, with a two-qubit gate depth of $14n - 4$.
Using this technique, we rewrite the circuit in \eqref{eq: encoding circuit for stabilizer codes} in such a way that it can be described in terms of the elementary operations defined in Appendix~\ref{appendix: fault-tolerant protocol for open quantum circuits}, finally yielding $C_{N}^{\ket{0}}$.
Let $W(C_{N}^{\ket{0}})$ and $D(C_{N}^{\ket{0}})$ denote the width and the depth of $C_{N}^{\ket{0}}$, respectively.
Under this construction, the width and the depth of $C_{N}^{\ket{0}}$ are given by, respectively,
\begin{align}
   &W(C_{N}^{\ket{0}})=N,
\label{width of 0-state-preparation in ldpc codes}\\
   &D(C_{N}^{\ket{0}})\leq c_D N,
\label{depth of 0-state-preparation in ldpc codes}
\end{align} 
where $c_D$ is a positive constant independent of $N$.

Thus, the gadget $\Tilde{C}_{N}^{\ket{0}}=\Tilde{C}(C_{N}^{\ket{0}}, L)$ satisfies the fault-tolerance conditions in Def.~\ref{def: fault-tolerance conditions of the state preparation gadgets for quantum LDPC codes}, where $C_{N}^{\ket{0}}$ is the original circuit and $L$ is a concatenation level, and the compilation procedure is detailed in Appendix~\ref{appendix: fault-tolerant protocol for open quantum circuits}.
 
\subsubsection{Clifford-state preparation gadget \label{sec: Clifford-state preparation gadget}}

\begin{figure*}[t]
    \centering
    \includegraphics[width=.9\textwidth]{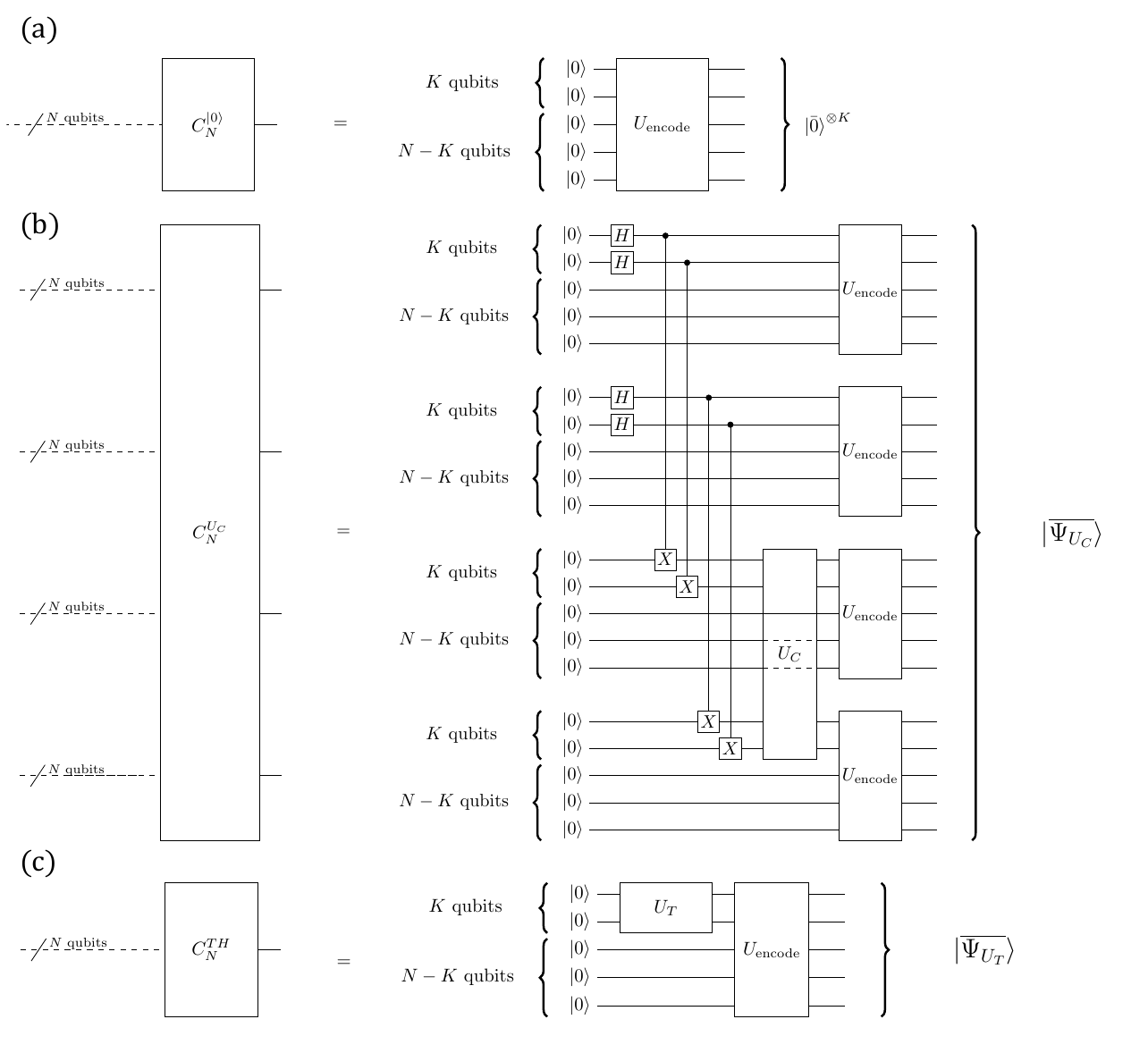}
    \caption{An original open circuit $C_N^{\ket{0}}$ is shown in (a).
    Here, $C_N^{\ket{0}}$ is equivalent to an open circuit to prepare logical $\ket{0}^{\otimes K}$ state of $\mathcal{Q}$ that consists of the preparation of $\ket{0}^{\otimes N}$, followed by an encoding circuit $U_{\mathrm{encode}}$ to encode an arbitrary $K$-qubit state $\ket{\psi}$ into a logical state $\ket{\Bar{\psi}}$ of $\mathcal{Q}$.
    An original open circuit $C_N^{U_C}$ is shown in (b).
    Here, $C_N^{U_C}$ is equivalent to an open circuit to prepare a logical state of $\ket{\Psi_{U_C}}=(I\otimes U_C)\ket{\Omega}$, where $\ket{\Omega}$ is the maximally entangled state in \eqref{eq: maximally entangled state between 4 registers}.
    An original open circuit $C_N^{TH}$ is shown in (c).
    Here, $C_N^{TH}$ is equivalent to an open circuit to prepare a logical state of $\ket{\Psi_{U_{T}}}=U_{TH}\ket{0}^{\otimes K}$ of $\mathcal{Q}$, where $U_{TH}$ is a tensor product of $TH$ and $I$ gate, and $TH$ is applied to the qubits to which $T$ gate or $T^{\dagger}$ gate will be applied using the $U_T$-gate abbreviation in \eqref{Fig: magic-abbreviation-ldpc.}.
    A physical circuit of the state-preparation gadgets are obtained from the protocol for simulating open circuits, as explained in Appendix~\ref{appendix: fault-tolerant protocol for open quantum circuits}.}
    \label{Fig: clifford_state_original.}
\end{figure*}

The Clifford-state preparation gadgets are designed to simulate an original open circuit ${C}_N^{U_C}$ to prepare a logical state $\ket{\overline{\Psi_{U_C}}}^{B_1 B_2 B_3 B_4}$ of four code blocks $B_1, B_2, B_3$, and $B_4$, each encoded in $\mathcal{Q}$, where
\begin{equation}
\begin{split}
\ket{\Psi_{U_C}} = (I\otimes U_{C})\ket{\Omega},
\end{split}
\label{eq: the state generated by clifford-state preparaion}
\end{equation}
$\ket{\Omega}$ is the maximally entangled state, and $U_C\in\Tilde{{\mathcal{C}}}_{2K}/\Tilde{{\mathcal{P}}}_{2K}$ is a Clifford unitary in \eqref{Fig: clifford-state-preparation-ldpc}.

To construct $C_{N}^{U_C}$, we start with an open circuit that prepares the state $\ket{\Psi_{U_C}}$ and then applies parallel encoding circuits $U_{\mathrm{encode}}$ in~\eqref{eq: encoding circuit for stabilizer codes} to prepare $\ket{\overline{\Psi_{U_C}}}$ across the code blocks, as shown in Fig.~\ref{Fig: clifford_state_original.}~(b). 
As with the $\ket{0}^{\otimes K}$-state preparation gadget in Sec.~\ref{sec: 0^K-state preparation gadget}, using the technique in Ref.~\cite{Maslov_2018}, we rewrite this open circuit in such a way that it can be described in terms of the elementary operations defined in Appendix~\ref{appendix: fault-tolerant protocol for open quantum circuits}, thereby yielding ${C}_N^{U_C}$.
Under this construction, the width and depth of $C_{N}^{U_C}$ are given by, respectively, 
\begin{equation} 
W({C}_N^{U_C}) = 4N, 
\label{width of Clifford-state-preparation in ldpc codes}
\end{equation} and 
\begin{equation} 
D({C}_N^{U_C}) \leq c_D N, 
\label{depth of Clifford-state-preparation in ldpc codes}
\end{equation} 
where $c_D\geq 1$ is a positive constant.

Thus, the gadget $\Tilde{C}_{N}^{U_C}=\Tilde{C}(C_{N}^{U_C}, L)$ satisfies the fault-tolerance conditions in Def.~\ref{def: fault-tolerance conditions of the state preparation gadgets for quantum LDPC codes}, where $C_{N}^{U_C}$ is the original circuit and $L$ is a concatenation level, and the compilation procedure is detailed in Appendix~\ref{appendix: fault-tolerant protocol for open quantum circuits}.
\begin{figure*}
    \centering
    \includegraphics[width=1.\textwidth]{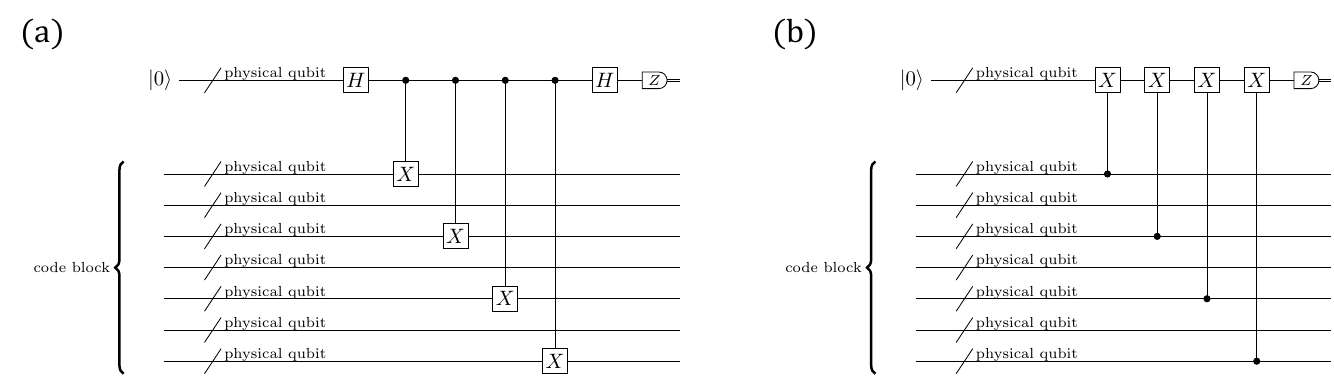}
    \caption{The circuit for measuring an $X$-type stabilizer generator is shown in (a), where the target qubits of the CNOT gates correspond to the support of the $X$-type stabilizer generator, and the circuit for measuring a $Z$-type stabilizer generator is shown in (b), where the control qubits of the CNOT gates correspond to the support of the $Z$-type stabilizer generator.
For each generator, a single auxiliary physical qubit is prepared, while the $N$ physical qubits correspond to the physical qubits in $\mathcal{Q}$~\eqref{eq: code parameters of quantum LDPC codes}.
The syndrome measurement circuit is a circuit for measuring all stabilizer generators, where each generator is measured using the corresponding circuit shown in this figure.}
    \label{fig: syndrome-measurement-circuit-ldpc}
\end{figure*}

\subsubsection{Magic-state preparation gadget\label{sec: Magic-state preparation gadget}}

The magic-state preparation gadgets are designed to simulate an original open circuit $C_N^{TH}$ to prepare a logical state $\ket{\overline{\Psi_{U_{TH}}}}$ of $\mathcal{Q}$, where
\begin{equation}
\begin{split}
    \ket{\Psi_{U_{TH}}}= U_{TH}\ket{0}^{\otimes K}.
\end{split}
\label{eq: state created by the magic state preparation gadget}
\end{equation}
Here, ${U_{TH}}$ is a tensor product of $TH$ gate and the $I$ gate, where $TH$ is applied to the logical qubits to which $T$ gate or $T^{\dagger}$ gate will be applied using the $U_T$-gate abbreviation explained in Sec.~\ref{sec: U_T-gate abbreviation}.

To construct $C_N^{TH}$, we start with an open circuit that prepares the state $\ket{\Psi_{U_{TH}}}$, followed by the encoding circuit $U_{\mathrm{encode}}$ to obtain the encoded state $\ket{\overline{\Psi_{U_{TH}}}}$, as shown in Fig.~\ref{Fig: clifford_state_original.}~(c).
By Applying the same approach to $U_{\mathrm{encode}}$ as used for the $\ket{0}^{\otimes K}$-state preparation gadget and the Clifford-state preparation gadget, we obtain $C_N^{TH}$, which is equivalent to the open circuit.
Since the original circuit $W$ consists of depth-$2$ circuit of $TH$, followed by stabilizer circuit $U_{\mathrm{encode}}$, the width and the depth of $C_N^{TH}$ are given by, respectively,
\begin{align}
    &W(C_{N}^{TH})\leq c_WN,
\label{width of magic-state-preparation in ldpc codes}\\
    &D(C_{N}^{TH})\leq c_D N,
\label{depth of magic-state-preparation in ldpc codes}
\end{align} 
where $c_W\geq 1$ and $c_D\geq 1$ are constants independent of $N$.

Thus, the gadget $\Tilde{C}_{N}^{TH}=\Tilde{C}(C_{N}^{TH}, L)$ satisfies the fault-tolerance conditions in Def.~\ref{def: fault-tolerance conditions of the state preparation gadgets for quantum LDPC codes}, where $C_{N}^{TH}$ is the original circuit and $L$ is a concatenation level, and the compilation procedure is detailed in Appendix~\ref{appendix: fault-tolerant protocol for open quantum circuits}.

\subsubsection{Error-correction gadget \label{sec: error-correction gadget}}

\begin{figure*}
    \centering
    \includegraphics[width=1.\textwidth]{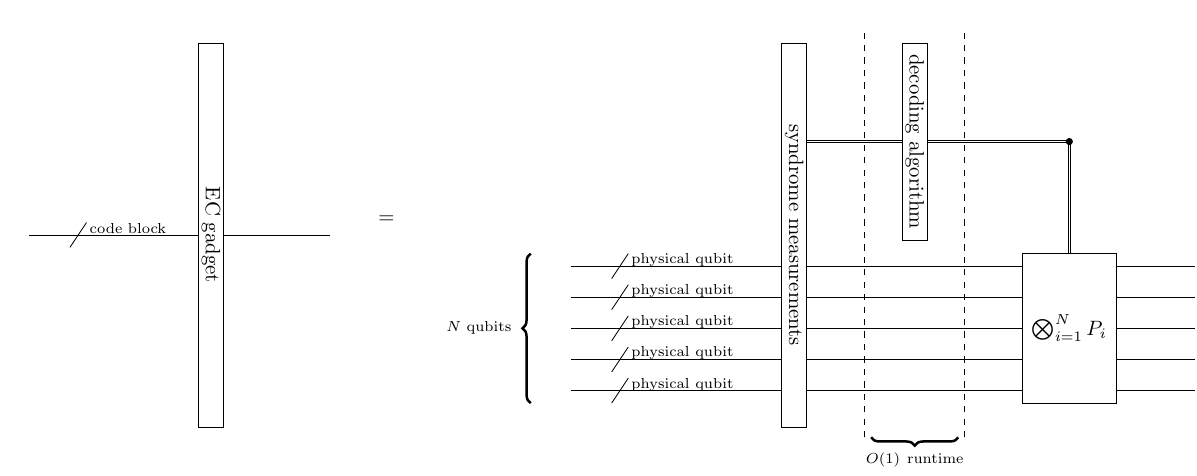}
    \caption{A physical circuit of the EC gadget is shown. 
    The gadget is implemented by the syndrome measurement circuit, followed by classical computation for deducing a recovery operation $\bigotimes_{i=1}^{N}P_i\in\Tilde{\mathcal{P}}_N$ using a single-shot decoding algorithm, as described in Def.~\ref{def: single-shot decoding algorithm with thresholds to suppress residual error}. Finally, the deduced recovery operation is applied.}
    \label{fig: ec-gadget-ldpc}
\end{figure*}
The EC gadget is designed to keep residual errors in a code block small using the single-shot decoding algorithm in Def.~\ref{def: single-shot decoding algorithm with thresholds to suppress residual error}.
The construction of an EC gadget is shown in Figs.~\ref{fig: syndrome-measurement-circuit-ldpc} and~\ref{fig: ec-gadget-ldpc}.
For each code block, $N-K=O(N)$ auxiliary physical qubits are used to perform the syndrome measurement, using the circuits in Fig.~\ref{fig: syndrome-measurement-circuit-ldpc} to measure a specific stabilizer generator.
Each auxiliary physical qubit is assigned to a corresponding stabilizer generator.
The measurements of all generators make up the syndrome measurement circuit of the EC gadget as shown in Fig.~\ref{fig: ec-gadget-ldpc}.

The measurements of the generators provide erroneous syndrome bits
\begin{equation}
    \Tilde{\sigma}=(\Tilde{\sigma}_X,\Tilde{\sigma}_Z)\in\mathbb{F}_2^{N-K},
\label{eq: recovered syndrome bits}
\end{equation}
as in~\eqref{eq: noisy syndrome}.
Note that Ref.~\cite{Gottesman2014Constant} uses Shor's error correction~\cite{shor1997faulttolerant}, which utilizes a cat state consisting of multiple qubits for each stabilizer generator, where the number of qubits in the cat state corresponds to the weights of the stabilizer generator.
However, the EC gadgets only need to satisfy our fault-tolerance conditions in Def.~\ref{def: fault-tolerance condition of gate, measurement for qLDPC codes}; therefore, a single auxiliary physical qubit per stabilizer generator is sufficient.
Using the erroneous syndrome bits $\Tilde{\sigma}$, the single-shot decoding algorithm in Def.~\ref{def: single-shot decoding algorithm with thresholds to suppress residual error} with $T$ internal loops deduces a recovery operation, and then the recovery operation is applied to physical qubits in the EC gadget to keep a residual error on the physical qubits small.

Let us divide the EC gadget into the syndrome measurement circuit part and the remaining part.
For the syndrome measurement circuit part, recall that each code block is encoded with a non-vanishing-rate CSS $(r,c)$ LDPC code $\mathcal{Q}$.
Measurements are performed in parallel for generators acting on disjoint physical qubits, and the depth of each quantum circuit to measure a single stabilizer generator is bounded by $(r+4)$; thus, the depth the syndrome measurement, denoted by $d^\synd$, is a constant bounded by
\begin{equation}
    d^\synd \leq (r+4)rc.
\label{eq: the depth of the quantum part of the EC gadget}
\end{equation}

For the remaining part, we choose an integer $T$ satisfying 
\begin{align}
\label{eq: T in EC gadgets}
    c(T)\geq \frac{\log \lambda}{\log p_\dec^\thre},
\end{align}
where $c(T)$ is the error suppression parameter, which is a monotonically increasing function with respect to $T$ and $p_\mathrm{dec}^\mathrm{th}\coloneqq \min\{p_1^\thre, p_2^\thre\}>0$ where $p_1^\thre$ and $p_2^\thre$ are the threshold constants of the decoding algorithm in Defs.~\ref{def: single-shot decoding algorithm with thresholds to suppress residual error} and \ref{def: Logarithmic-time decoding algorithm with threshold}.
With this choice, $T$ does not depend on $N$.
Here, $\lambda$ is a positive constant satisfying, with some constant $d_0\geq d^\synd$, 
\begin{align}
    \lambda<\qty(\frac{p_\dec^\thre}{ 2^{2d_0}})^{2^{2d_0}}.
\end{align}
The width of the gadget is
\begin{equation}
    N+M=N+(N-K)=2N-K,
\end{equation}
where $M=N-K$ is the number of auxiliary physical qubits for measuring stabilizer generators.
Therefore, the EC gadget is fault-tolerant in the sense of Def.~\ref{def: fault-tolerance conditions of the EC gadgets for quantum LDPC codes}.

\section{Threshold theorem and overhead analysis\label{sec: threshold theorem based on quantum LDPC codes}}
In this section, we prove the threshold theorem of our protocol.
\begin{theorem} (Threshold theorem for polylog-time- and constant-space-overhead fault-tolerant quantum computation).\label{theorem: threshold theorem for polylog-time- and constant-space-overhead fault-tolerant quantum computation}
Let $\qty{C^{\mathrm{org}}_n}_n$ be a sequence of original circuits, where each circuit $C_n^{\mathrm{org}}$ is specified by an integer $n$ and has width $W(n)\rightarrow \infty$ and depth $D(n)=O(\mathrm{poly}(W(n)))$ as $n\to\infty$.
Suppose that an original circuit $C^{\mathrm{org}}_n$ is compiled into a fault-tolerant circuit $C^{\mathrm{FT}}_n$ as described in Sec.~\ref{sec: Compilation of ideal quantum circuit into fault-tolerant quantum circuit}, and $C^{\mathrm{FT}}_n$ experiences a local stochastic Pauli error model with error parameters $p_i\leq p_\loc \text{ ~for all locations ~} i\in C_n^\FT$ satisfying Assumption~\ref{assumption: fault-tolerant circuit experiences the local stochastic Pauli error model}.

Suppose there exists a family of CSS LDPC codes $\{\mathcal{Q}_i\}$, indexed by an integer $i$, satisfying Assumption~\ref{assump: non-vanishing-rate quantum LDPC with efficient decoding algorithm}, i.e., the family has a single-shot decoding algorithm for reducing errors in Def.~\ref{def: single-shot decoding algorithm with thresholds to suppress residual error}, and a decoding algorithm for correcting errors in Def.~\ref{def: Logarithmic-time decoding algorithm with threshold}.
Furthermore, assume there exist constants $\beta\leq 1$ and $0<\gamma\leq 1$ such that $\mathcal{Q}_i$ has parameters 
\begin{equation}
    [[N_i,K_i=\Theta(N_i),D_i=\Theta(N_i^{\gamma})]],
\label{eq: parameters of quantum LDPC codes}
\end{equation}
where $N_i$ is monotonically increasing with $N_i\rightarrow\infty$ as $i\rightarrow\infty$, and 
\begin{equation}
\label{eq: beta in threshold theorem}
    0<N_i-N_{i-1}\leq N_{i-1}^\beta.
\end{equation}
The gadgets are constructed as described in Sec.~\ref{sec: Construction of fault-tolerant gadgets and abbreviations(qLDPC)}, i.e., state-preparation, gate, measurement, and EC gadgets satisfy the fault-tolerance conditions in Defs.~\ref{def: fault-tolerance condition of gate, measurement for qLDPC codes},~\ref{def: fault-tolerance conditions of the state preparation gadgets for quantum LDPC codes}, and \ref{def: fault-tolerance conditions of the EC gadgets for quantum LDPC codes}.

Then, there exists a threshold $q_{\mathrm{loc}}^{\mathrm{th}}>0$ such that if $p_\mathrm{loc}<q_{\mathrm{loc}}^{\mathrm{th}}$,
the following statement holds: for any target error $\varepsilon>0$,
there exists a sequence of fault-tolerant circuits $\qty{C^{\mathrm{FT}}_n}_n$ with width $W_{\mathrm{FT}}(n,\varepsilon)$ and depth $D_{\mathrm{FT}}(n,\varepsilon)$ achieving the constant space overhead and the polylogarithmic time overhead,
\begin{align}
    \frac{W_{\mathrm{FT}}(n,\varepsilon)}{W(n)}&=O(1),\\
    \frac{D_{\mathrm{FT}}(n,\varepsilon)}{D(n)}&=O\left(\mathrm{polylog}\left(\frac{|C^{\mathrm{org}}_n|}{\varepsilon}\right)\right),
\end{align}
as $n\rightarrow\infty$,
and $C^{\mathrm{FT}}_n$ outputs a bitstring sampled from the probability distribution that is close to that of $C^{\mathrm{org}}_n$ within $\varepsilon$ in total variation distance.
\end{theorem}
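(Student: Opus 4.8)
The plan is to split the argument into a \emph{correctness} stage, establishing that $C_n^{\FT}$ reproduces the output distribution of $C_n^{\org}$ to within $\varepsilon$ once $p_\loc$ is below a constant, and an \emph{overhead} stage bounding $W_\FT$ and $D_\FT$. First I would fix, for each $(n,\varepsilon)$, a code $\mathcal{Q}=\mathcal{Q}_i$ from the family of Assumption~\ref{assump: non-vanishing-rate quantum LDPC with efficient decoding algorithm} with $N=N_i=\Theta(\mathrm{polylog}(|C_n^{\org}|/\varepsilon))$; the growth constraint~\eqref{eq: beta in threshold theorem} guarantees that such an $N_i$ exists without overshooting, and $D=\Theta(N^\gamma)$ then forces the per-block logical error $\exp(-\Omega(N^\gamma))$ below $\varepsilon/|C_n^{\org}|$. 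I would likewise fix the concatenation level $L$ of the Steane-code state-preparation gadgets so that, by the open-circuit protocol of Appendix~\ref{appendix: fault-tolerant protocol for open quantum circuits} and Theorem~\ref{Theorem: level-reduction for the circuit that outputs a quantum state}, each encoded auxiliary state carries only local stochastic residual error of a controlled parameter. The remaining task is a composition argument: I define, for each rectangle, the event that it is \emph{correct} (the intended logical operation is realized and the outgoing residual error on every block is stabilizer-reducible to weight below $cD$), and reduce global correctness to controlling this event block-by-block.

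The heart of the correctness stage is \emph{partial circuit reduction}, which closes the gap in~\cite{Gottesman2014Constant,Fawzi_2018,grospellier:tel-03364419}. The tension is that Def.~\ref{def: single-shot decoding algorithm with thresholds to suppress residual error} and Def.~\ref{def: Logarithmic-time decoding algorithm with threshold} concern a \emph{single} block experiencing the local stochastic model of Def.~\ref{def: Local stochastic Pauli error model on physical qubits and syndrome bits} on its physical qubits and syndrome bits, whereas Assumption~\ref{assumption: fault-tolerant circuit experiences the local stochastic Pauli error model} places the model on all of $C_n^{\FT}$ with correlations coupling any block to the rest of the circuit. The key lemma I would prove is that restricting attention to the physical qubits and syndrome-extraction ancillas of one EC gadget, together with its preceding gate or preparation gadget, induces on those qubits an error distribution that is \emph{itself} local stochastic with parameters $(p_{\mathrm{phys}},p_{\mathrm{synd}})$ bounded by a constant multiple of a power of $p_\loc$, \emph{uniformly} over the discarded remainder and its correlations. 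The constant-depth bound $d_0$ and two-qubit locality in Def.~\ref{def: fault-tolerance condition of gate, measurement for qLDPC codes} and Def.~\ref{def: fault-tolerance conditions of the EC gadgets for quantum LDPC codes} are what make this work: a single faulty location spreads to at most $2^{d_0}$ physical qubits, so $k$ faults yield support of size $O(k)$, and the bound~\eqref{eq: local stochastic error model on locations} on faulty locations transfers to a bound of the form $\qty(2^{2d_0}p_\loc)^{\Omega(|S|)}$ on the induced errors, matching the regime of the $\lambda$ and $T$ fixed in the EC-gadget construction~\eqref{eq: T in EC gadgets}.

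Given block-level control, the threshold $q_\loc^\thre$ is obtained by requiring the induced parameters to stay below $p_\dec^\thre=\min\{p_1^\thre,p_2^\thre\}$, so that both decoding definitions apply: the single-shot algorithm of Def.~\ref{def: single-shot decoding algorithm with thresholds to suppress residual error} keeps the residual error local stochastic with the improved parameter after each EC gadget, while the correcting algorithm of Def.~\ref{def: Logarithmic-time decoding algorithm with threshold} produces correct logical outcomes at the $Z_K$- and Bell-measurement gadgets. Composing these guarantees along the circuit — the wait-rectangle padding and EC insertion of Sec.~\ref{subsubsec: Compilation from intermediate circuit to fault-tolerant circuit} refresh every block by an EC gadget before its next operation — and taking a union bound over the $O(|C_n^{\org}|)$ rectangles, the probability that any rectangle is incorrect is at most $O(|C_n^{\org}|)\exp(-\Omega(N^\gamma))$, which is below $\varepsilon$ by the choice of $N$; when no rectangle fails, the final logical outcomes coincide with those of an ideal run of $C_n^{\org}$, yielding the total-variation bound.

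For the overheads, the space count is $\kappa(n,\varepsilon)=\lceil W(n)/K\rceil$ registers of $N$ physical qubits with $K=\Theta(N)$ logical qubits, a constant number of auxiliary registers per register, plus Steane-encoded preparation workspace of $7^L=\mathrm{polylog}$ qubits applied only to the $O(\tau)$ registers active at a time; since $K=\Theta(N)$, this gives $W_\FT=O(W(n))$ and hence constant space overhead. For time, each gate, measurement, and EC rectangle has constant depth by Lemma~\ref{lemma: fault-tolerance condition}, each preparation and classical-decoding step costs $O(\mathrm{polylog})$ depth (decoding runs in $O(\log N)$ loops with $O(N)$ processors, the abbreviations in $O(\log K)$), and limiting the parallelism to $\tau=\Theta(W(n)/\mathrm{polylog})$ multiplies the original depth by $W(n)/\tau=\mathrm{polylog}$, giving $D_\FT=O(D(n)\,\mathrm{polylog}(|C_n^{\org}|/\varepsilon))$. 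I expect the main obstacle to be the partial-circuit-reduction lemma of the second paragraph: proving that the marginal error model on one block's qubits and ancillas remains local stochastic with a bounded parameter, uniformly over the unbounded surrounding circuit, is exactly where the prior arguments were incomplete, and it is where the constant-depth and bounded-spread gadget conditions must be invoked most carefully.
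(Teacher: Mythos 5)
Your proposal follows essentially the same route as the paper's proof: the same parameter choices (a code with $N=\Theta(\mathrm{polylog}(|C_n^{\org}|/\varepsilon))$ so that $\exp(-\Omega(N^{\gamma}))$ beats the union bound, and a Steane concatenation level $L$ fixed via the open-circuit theorem of Appendix~\ref{appendix: fault-tolerant protocol for open quantum circuits}), the same partial-circuit-reduction strategy whose crux is precisely the paper's extension rule (Lemma~\ref{lemma: extension rule}), guaranteeing that the block-level replacement of error parameters is consistent with arbitrarily correlated faults on the untouched remainder, the same rectangle-by-rectangle composition with a union bound over $\delta$-failures, and the same space/time accounting via $\kappa$, $\tau$, and the preparation-gadget width and depth. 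Two wording-level caveats: the paper's per-rectangle invariant is parameter-based (each rectangle reduces to an ideal gadget followed by faulty waits with parameters $\lambda$ and $dp_\loc$, the residual never vanishing), so your auxiliary ``correctness event'' phrased as residual weight below $cD$ should be replaced by the local-stochastic-$\lambda$ invariant you also correctly state, and your depth count omits the $(K+1)$-edge-coloring serialization of Clifford layers from Sec.~\ref{sec: threshold theorem based on quantum LDPC codes}, which adds a further polylogarithmic factor without changing the conclusion.
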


First, for a given $n$ and a fixed target error $\varepsilon$, we choose the parameters of our protocol as follows.
We choose a constant $\alpha$ such that
\begin{equation}
\label{eq: constant relation 1 in threshold theorem}
    \alpha\gamma>1.
\end{equation}

For any positive constant $c_N>0$, we define the lower bound $N_{\min}$ for the code distance as
\begin{align}
    &N_{{\min}}\coloneqq c_N\log^{\alpha}\qty(\frac{|C_n^{\org}|}{\varepsilon}).
\end{align}
We then choose the minimum integer satisfying 
\begin{equation}
    i_{\min}\coloneqq\min\{i\colon \mathcal{Q}_i \text{~with~} N_i\geq N_{\mathrm{min}}\}.
\end{equation}
From the family of codes $\{\mathcal{Q}_i\}_i$ with $[[N_i, K_i, D_i]]$, choose a code used to construct the gadgets as
\begin{equation}
\label{eq: code parameters in threshold theorem}
    [[N(n,\varepsilon)=N_{i_{\min}},K(n,\varepsilon)=K_{i_{\min}},D(n,\varepsilon)=D_{i_{\min}}]].
\end{equation}
Due to~\eqref{eq: beta in threshold theorem}, we have
\begin{align}
    N_{\min}\leq N_{i_{\min}}=N(n,\varepsilon)
    &\leq N_{i_{\min}-1}+\qty(N_{i_{\min}-1})^\beta \\
    &\leq 2N_{i_{\min}-1}\\
    &\leq 2c_N N_{\min},
    \label{eq: Nupper in threshold theorem}
\end{align}
which leads to
\begin{align}
    &N(n,\varepsilon)=\Theta\qty(\log^{\alpha}\qty(\frac{|C_n^{\org}|}{\varepsilon})).\label{eq: Ntheta in threshold theorem}
\end{align}
Due to \eqref{eq: code parameters in threshold theorem}, for sufficiently large $n$, we have
\begin{align}
    &K(n,\varepsilon)=\Theta\qty(\log^{\alpha}\qty(\frac{|C_n^{\org}|}{\varepsilon})),\label{eq: Ktheta in threshold theorem}\\
    &D(n,\varepsilon)=\Theta\qty(\log^{\alpha\gamma}\qty(\frac{|C_n^{\org}|}{\varepsilon})).\label{eq: Dtheta in threshold theorem}
\end{align}

For state preparation gadgets, we consider an open circuit $C_N$ used to prepare logical states of $\mathcal{Q}$.
From Lemma~\ref{lemma: fault-tolerance condition}, there exists positive constants $c_{W}$ and $c_{D}$, such that
\begin{align}
    W(C_N)&\leq c_{W}N,\label{eq: W(C_N) in threshold theorem}\\
    D(C_N)&\leq c_{D}N\label{eq: D(C_N) in threshold theorem},
\end{align}
which leads to 
\begin{align}
    |C_N|=W(C_N)D(C_N)\leq c_{W}c_{D}N\label{eq: |C_N| in threshold theorem}.
\end{align}
Let us choose an integer $L$ representing the concatenation level as 
\begin{align}
\label{eq: L in threshold theorem}
    L(n,\varepsilon)=\left\lceil \log_2 \log_{p^\thre_\loc/p_\loc}\qty(\frac{|C_n^\org|}{\varepsilon})^3\right\rceil,
\end{align}
where $p_\loc^\thre\in(0,1]$ is a constant independent of $n$ and $\varepsilon$, and $p_\loc\leq p_\loc^\thre$.
With this $L$, from Theorem~\ref{Theorem: level-reduction for the circuit that outputs a quantum state}, there exist positive constants $c_1, c_2, \gamma_1$, and $\gamma_2$, such that the width $W(\tilde{C}(C,L))$ and the depth $D(\tilde{C}(C,L))$ of the state preparation gadget $\tilde{C}(C,L)$ are bounded by 
\begin{align}
    \Tilde{W}^\prep(n,\varepsilon)\coloneqq W(\tilde{C}(C,L))&\leq c_1 W(C_N)2^{\gamma_1 L},\\
    \Tilde{D}^\prep(n,\varepsilon)\coloneqq D(\tilde{C}(C,L))&\leq c_2 D(C_N)2^{\gamma_2 L},
\end{align}
From \eqref{eq: Nupper in threshold theorem}, \eqref{eq: W(C_N) in threshold theorem}, \eqref{eq: D(C_N) in threshold theorem}, and \eqref{eq: L in threshold theorem}, there exists positive constants $c_{\Tilde{W}}$ and $c_{\Tilde{D}}$ such that
\begin{align}
     \Tilde{W}^\prep(n,\varepsilon)\leq c_{\tilde{W}}\log^{\alpha+\gamma_1}\qty(\frac{|C_n^\org|}{\varepsilon}),\label{eq: Wprep in threshold theorem}\\
     \Tilde{D}^\prep(n,\varepsilon)\leq c_{\tilde{D}}\log^{\alpha+\gamma_2}\qty(\frac{|C_n^\org|}{\varepsilon}).\label{eq: Dprep in threshold theorem}
\end{align}

Fix the parameters of the EC gadgets.
From Lemma~\ref{lemma: fault-tolerance condition}, there exists a positive constant $d_0$ such that the depth of each gate gadget, measurement gadget, and syndrome measurement circuit in the EC gadget is bounded by $d_0$.
Let $p_{\mathrm{dec}}^{\thre}\coloneqq \min\{p_1^\thre,p_2^\thre\}>0$, where $p_1^\thre$ and $p_2^\thre$ are the threshold values of the decoding algorithm in Defs.~\ref{def: single-shot decoding algorithm with thresholds to suppress residual error} and \ref{def: Logarithmic-time decoding algorithm with threshold}, respectively.
Choose an integer $T$ representing the runtime of classical computation of the single-shot decoding algorithm, such that it satisfies 
\begin{align}
\label{eq: c(T) in threshold theorem}
    c(T)\geq \frac{\log \lambda}{\log p_\dec^\thre},
\end{align}
where $c(T)$ is an error suppression parameter, and $\lambda$ is a positive constant satisfying
\begin{align}
\label{eq: lambda in threshold theorem}
    \lambda<\qty(\frac{p_\dec^\thre}{  2^{2d_0}})^{2^{2d_0}},
\end{align}
The choice of $T$ also determines the depth of the second part of the EC gadget (see Def.~\ref{def: fault-tolerance conditions of the EC gadgets for quantum LDPC codes}), which is independent of $n$ and $\varepsilon$. 
Thus, the entire depth 
\begin{equation}
\label{eq: d in threshold theorem}
d~(>d_0)
\end{equation}
of the EC gadget is also independent of $n$ and $\varepsilon$ as was assumed in the compiling procedure in Sec.~\ref{sec: Compilation of ideal quantum circuit into fault-tolerant quantum circuit}.

Finally, the number of registers $\kappa(n,\varepsilon)$ in \eqref{eq: kappa in compilation} is determined as
\begin{equation}
    \kappa(n,\varepsilon)\coloneqq \left\lceil\frac{W(n)}{K(n,\varepsilon)}\right\rceil.
\label{eq: kappa in threshold theorem}
\end{equation}
Choose the number of non-trivial intermediate operations $\tau(n,\varepsilon)$ in \eqref{eq: the maximum number of non-trivial intermediate operations L(n,varepsilon)} that we apply in a one-depth part of an intermediate circuit as
\begin{align}
\tau(n,\varepsilon)&\coloneqq\left\lceil \frac{W(n)}{\tilde{W}^\prep(n,\varepsilon )}\right\rceil.
\label{eq: tau in threshold theorem}
\end{align}

As a result, for a given sufficiently large $n$ and the target error $\varepsilon$, with this choice of the parameters of our fault-tolerant protocol, i.e., $[[N(n,\varepsilon),K(n,\varepsilon), D(n,\varepsilon)]]$, $L(n,\varepsilon)$, $T$, $\kappa(n,\varepsilon)$, $\tau(n,\varepsilon)$, we can obtain the fault-tolerant circuit $C_n^\FT$ of $C_n^\org$ as described in Sec.~\ref{sec: Compilation of ideal quantum circuit into fault-tolerant quantum circuit}.
In the rest of this section, we will show the threshold theorem and analyze the space and time overheads of our protocol under these parameters.

\subsection{Analysis of constant space overhead and polylogarithmic time overhead}

We show that a sequence of fault-tolerant circuit $\{C_n^\mathrm{FT}\}_n$ compiled from a corresponding sequence of original circuits $\{C_n^\org\}_n$ achieves a polylogarithmic time and a constant space overhead (recall Sec.~\ref{sec:setting} for the relevant definitions).

To this end, we begin by analyzing the compilation of an original circuit into an intermediate circuit as described in Sec.~\ref{sec:Compilation from original circuit to intermediate circuit}, and count the depth of the resulting intermediate circuit. 
Subsequently, we count the depth of a fault-tolerant circuit obtained from the intermediate circuit, following the compilation procedure presented in Sec.~\ref{subsubsec: Compilation from intermediate circuit to fault-tolerant circuit}. 
We then enumerate the number of rectangles in the fault-tolerant circuit and bound the error probability of the fault-tolerant simulation, specifically the total probability that accurate simulation of at least one of the rectangles fails.
Finally, we show that our protocol achieves a constant space overhead and a polylogarithmic time overhead.

In the compilation from the original circuit to the intermediate circuit, we replace operations in each one-depth part of the original circuit with intermediate operations.
The single-depth part of the $\ket{0}$-state preparations at the first time step of the original circuit is replaced with a single-depth part of $\ket{0}^{\otimes K}$-state preparations in the intermediate circuit, and the single-depth part of the $Z$-basis measurements before the final time step is replaced with a single-depth part of the $Z_K$ measurement operations.
The single-depth part of the middle of the original circuit is decomposed into two single-depth circuits: one composed of $T$, $T^{\dagger}$, and $I$ gates, and the other composed of Clifford gates.
For the single-depth circuit described by $T$, $T^{\dagger}$, and $I$ gates, we can replace it with a single-depth circuit of $U_T$-gate abbreviations.
However, the single-depth circuit composed of Clifford gates requires a different treatment. While a single use of the two-register Clifford-gate abbreviation allows the simultaneous application of multiple Clifford gates to logical qubits in a single pair of registers,
we need multiple two-register Clifford-gate abbreviations to implement Clifford gates on logical qubits in different pairs of registers.

To bound the depth of the intermediate circuit required for performing all combinations of the two-qubit Clifford gates in a one-depth part of Clifford gates of $C^{\mathrm{org}}_n$, we calculate combinations of two-register Clifford-gate abbreviations that can be executed simultaneously, using a solution of the following edge coloring problem.  
We consider an undirected graph $G = (V, E)$, where each vertex in $V$ corresponds to one of the registers, and each edge in $E$ connects vertices representing registers on which a two-qubit Clifford gate acts in a single-depth part of the Clifford layer (if the gate acts on the same register, the edge forms a self-loop). 
To simplify the analysis, we additionally define a graph $G'$ by removing all self-loops from $G$. 
In the graph $G'$, when each pair of adjacent edges is assigned a different color such that no two neighboring edges have the same color, the two-qubit Clifford gates assigned to each color are implemented as one-depth executable two-register Clifford-gate abbreviations in the intermediate circuit. 
Since each register has $K$ qubits, a qubit in a register can be connected to at most $K$ qubits in other registers by a two-qubit Clifford gate. 
In this case, $G'$ is a simple graph and the degree of $G'$ is at most $K$; hence, $G'$ is $(K+1)$-edge colorable~\cite{Vizing1964,Graph_Theory_Bondy}.
Note that the $(K+1)$-edge coloring of $G'$ can be found by classical computation within a polynomial time in terms of the number of vertices and edges~\cite{Arjomandi1982}.
When performing the two-register Clifford-gate abbreviation, if a vertex has both a self-loop and edges connecting to other vertices, color the edge corresponding to the self-loop with the same color as the first two-register Clifford gate abbreviation executed by the vertex with the self-loop. 
If a vertex has only a self-loop, color it with the color of the edge executed in the first time step.
Thus, any possible combination of Clifford gates in each one-depth part of the original circuit can be performed using two-register Clifford-gate abbreviations of depth up to $\left(K+1\right)$ in the intermediate circuit.
As a result, a single-depth part of the original circuit has the depth at most
\begin{equation}
    K(n,\varepsilon)+2.
\end{equation}
This depth is primarily dominated by the replacement of a single-depth part of the middle to $U_T$-gate abbreviations, followed by two-register Clifford abbreviations.

Next, we restrict the number of non-trivial intermediate operations $\tau(n,\varepsilon)$ that we apply in a one-depth part of the intermediate circuit to~\eqref{eq: tau in threshold theorem} in order to reduce the gate parallelism and achieve a constant space overhead.
As a result, upon our compilation procedure, the intermediate circuit compiled from a single-depth part of the original circuit has depth at most
\begin{align}
    &\qty(K(n,\varepsilon)+2)\times \frac{\kappa(n,\varepsilon)}{\tau(n,\varepsilon)}.
    \label{eq: the depth of intermediate circuit compiled from a one-depth part of the original circuit}
\end{align}

In the compilation from the intermediate circuit to the fault-tolerant circuit, we replace all the elementary operations contained in the intermediate operation with the corresponding rectangle, and for the sake of synchronization, the depth of each gadget can extend by up to a depth of $2d$, where $d$ in \eqref{eq: d in threshold theorem} is the depth of an EC gadget.
Since the state-preparation gadgets have the depth $\tilde{D}^\prep$ in \eqref{eq: Dprep in threshold theorem}, which is the largest depth among all the gadgets due to Lemma~\ref{lemma: fault-tolerance condition}, the depth of a physical circuit compiled from a one-depth part of the intermediate circuit is bounded by
\begin{align}
&\Tilde{D}^{\mathrm{prep}}(n,\varepsilon)+3d.
\label{eq: depth of a physical circuit compiled from a one-depth part of the intermediate circuit}
\end{align}
Thus, for sufficiently large $n$, the fault-tolerant circuit compiled from the original circuit has depth $D_{\mathrm{FT}}(n,\varepsilon)$ bounded by
\begin{align}
    &D_{\mathrm{FT}}(n,\varepsilon)\nonumber\\
    &=D(n)\times \left((K(n,\varepsilon)+2)\times \frac{\kappa(n,\varepsilon)}{\tau(n,\varepsilon)}\right)\times\nonumber\\
    &\qquad\qty(\Tilde{D}^{\mathrm{prep}}(n,\varepsilon)+3d)\\
    &\leq D(n)\times \qty(K(n,\varepsilon)+2)\times \qty(\frac{W(n)}{K(n,\varepsilon)}+1)\times\nonumber\\
    &\qquad \frac{\tilde{W}^\prep(n,\varepsilon)}{W(n)}\times \qty(\tilde{D}^\prep(n,\varepsilon)+3d),
\label{eq: depth of the fault-tolerant circuit}
\end{align}
where we used~\eqref{eq: kappa in threshold theorem} and~\eqref{eq: tau in threshold theorem}.

Next, we consider the width of the fault-tolerant circuit.
Since the width of the state-preparation gadget is given by \eqref{eq: Wprep in threshold theorem}, due to Lemma~\ref{lemma: fault-tolerance condition}, the width of state-preparation gadgets is larger than gate, measurement, and EC gadgets.
Thus, the width of a fault-tolerant circuit compiled from non-trivial intermediate operation is bounded by 
\begin{align}
\Tilde{W}^{\mathrm{prep}}(n,\varepsilon)\times \tau(n,\varepsilon).\label{eq: the width of a physical circuit compiled from a single non-trivial intermediate operation}
\end{align}
From Lemma~\ref{lemma: fault-tolerance condition}, the width of an EC gadget is $c_W N(n,\varepsilon )$.
Thus, the width of a fault-tolerant circuit compiled from a trivial intermediate operation is bounded by 
\begin{align}
\label{eq: the width of a physical circuit compiled from a single trivial intermediate operation}
     c_W N(n,\varepsilon )\times \qty(\kappa(n,\varepsilon)-\tau(n,\varepsilon)).
\end{align}
As a result, the width of the fault-tolerant circuit can be bounded by
\begin{align}
    &W_{\mathrm{FT}}(n,\varepsilon)\nonumber\\
    &=\Tilde{W}^{\mathrm{prep}}(n,\varepsilon)\times \tau(n,\varepsilon)+\nonumber\\
    &\hspace{1cm} c_W N(n,\varepsilon )\times \qty(\kappa(n,\varepsilon)-\tau(n,\varepsilon))\\
    &=\Tilde{W}^{\mathrm{prep}}(n,\varepsilon)\times\left\lceil \frac{W(n)}{\tilde{W}^\prep(n,\varepsilon )}\right\rceil+\nonumber\\
    &\hspace{1cm} c_W N(n,\varepsilon )\times \qty(\left\lceil\frac{W(n)}{K(n,\varepsilon)}\right\rceil-\left\lceil \frac{W(n)}{\tilde{W}^\prep(n,\varepsilon )}\right\rceil)\\
    &\leq\Tilde{W}^{\mathrm{prep}}(n,\varepsilon)\times\qty(\frac{W(n)}{\tilde{W}^\prep(n,\varepsilon )}+1)+\nonumber\\
    &\hspace{1cm} c_W N(n,\varepsilon )\times \qty(\frac{W(n)}{K(n,\varepsilon)}+1), 
\end{align}
where we used~\eqref{eq: kappa in threshold theorem} and~\eqref{eq: tau in threshold theorem}.
Therefore, the space overhead of our protocol is bounded by
\begin{align}
    &\frac{W_{\mathrm{FT}}(n,\varepsilon)}{W(n)}\nonumber\\
    &\leq 1+\frac{\tilde{W}^\prep(n,\varepsilon )}{W(n)}+c_W\frac{N(n,\varepsilon)}{K(n,\varepsilon)}+c_W\frac{N(n,\varepsilon)}{W(n)}\\
    &=O\qty(1+\frac{\log^{\alpha+\gamma_1}\qty(\frac{|C_n^\org|}{\varepsilon})}{W(n)})\\
    &=O(1),
\end{align}
where the second line follows from~\eqref{eq: Ntheta in threshold theorem},~\eqref{eq: Ktheta in threshold theorem},~\eqref{eq: Wprep in threshold theorem}, and the last line from~\eqref{eq:assumption_W} and~\eqref{eq:assumption_circuit_size}\footnote{To achieve the constant space overhead, it is important to guarantee that the required width of the circuit should grow sufficiently fast compared to the required depth and the target error since the size of code blocks needs to grow for error suppression.}.
Thus, our fault-tolerant protocol achieves the constant space overhead.

On the other hand, from \eqref{eq: depth of the fault-tolerant circuit}, the time overhead of our protocol is bounded by
\begin{align}
&\frac{D_{\mathrm{FT}}(n,\varepsilon)}{D(n)}\nonumber\\
&\leq \qty(K(n,\varepsilon)+2)\times \qty(\frac{W(n)}{K(n,\varepsilon)}+1)\times\nonumber\\
&\qquad \frac{\tilde{W}^\prep(n,\varepsilon)}{W(n)}\times \qty(\tilde{D}^\prep(n,\varepsilon)+3d)\\
\label{eq: DFT in threshold theorem without order}
&=O\left(\tilde{W}^\prep(n,\varepsilon)\tilde{D}^\prep(n,\varepsilon)\times\right.\nonumber\\
&\quad\left.\qty(1+\frac{1}{K(n,\varepsilon)}+\frac{K(n,\varepsilon)}{W(n)}+\frac{1}{\tilde{D}^\prep(n,\varepsilon)})\right)\\
&=O\qty(\log^{2\alpha+\gamma_1+\gamma_2}\qty(\frac{|C_n^\org|}{\varepsilon})),
\label{eq: DFT in threshold theorem}
\end{align}
where we used~\eqref{eq:assumption_W},~\eqref{eq:assumption_circuit_size},~\eqref{eq: Ktheta in threshold theorem},~\eqref{eq: Wprep in threshold theorem}, and~\eqref{eq: Dprep in threshold theorem}.
Thus, our fault-tolerant protocol achieves the polylogarithmic time overhead.

\subsection{Partial circuit reduction}
\label{sec: parial circuit reduction}
Our error analysis of fault-tolerant simulations is carried out by propagating Pauli errors in each part of a faulty physical circuit toward the end of the part so that the faulty part is replaced with a non-faulty circuit. 
To perform this procedure consistently, it is crucial to define a local stochastic Pauli error model for the partially replaced fault-tolerant circuit so that its output probability distribution matches that of the entire fault-tolerant circuit. 
In this paper, we refer to the replacement procedure of a part of the fault-tolerant circuit while ensuring consistency with the rest as \textit{partial circuit reduction}.
The argument on this consistency was missing in the existing analyses of the threshold theorem of the constant-space-overhead protocol with quantum LDPC codes~\cite{Gottesman2014Constant,Fawzi_2018, grospellier:tel-03364419}, but with the techniques introduced here, we will complete the proof of the threshold theorem.

Before describing the partial circuit reduction, we clarify the problems that were not addressed in the existing analyses of the threshold theorem of the constant-space-overhead protocol with quantum LDPC codes~\cite{Gottesman2014Constant,Fawzi_2018, grospellier:tel-03364419}. 
In particular, in the following, we will explain two issues that motivate our development of circuit reducibility through partial circuit reduction.

The first problem arises from the assumption on decoding algorithms of the quantum LDPC codes.
The decoding algorithms are required to reduce and correct errors only for each single code block, as presented by Assumption~\ref{assump: non-vanishing-rate quantum LDPC with efficient decoding algorithm} in Sec.~\ref{sec: Conditions of fault-tolerant gadgets on quantum LDPC codes}. 
However, the assumption of FTQC is that the entire fault-tolerant circuit, rather than each single code block, experiences the local stochastic error model, as shown by Assumption~\ref{assumption: fault-tolerant circuit experiences the local stochastic Pauli error model} in Sec.~\ref{sec:setting}. 
To complete the proof of the threshold theorem, it is necessary to show that even if the decoding algorithm is applied locally to the code block, the set of error parameters in the local stochastic Pauli error model of the entire fault-tolerant circuit should be updated to another one so that the error rate of the part of the circuit in the code block remains consistent with the rest of the circuit.
To address this issue, we will show Lemmas~\ref{lemma: transitive rule} and \ref{lemma: extension rule} below, which make it possible to use the results on the decoding algorithms in~\cite{Fawzi_2018, grospellier:tel-03364419} as a black box in the proof of the threshold theorem. 
Specifically, these lemmas show that if a local stochastic Pauli error model defined on a part of the physical circuit is replaced with another local stochastic Pauli error model defined on this part with different error parameters.
These lemmas allow us to proceed with partial circuit reduction by considering and replacing a local stochastic Pauli error model defined for each rectangle of interest, without needing to analyze the entire fault-tolerant circuit at once.

The second challenge is that the decoding algorithm for the quantum LDPC codes may fail to reduce and correct errors with a nonzero probability~\cite{Fawzi_2018, grospellier:tel-03364419}.
In Refs.~\cite{Fawzi_2018, grospellier:tel-03364419}, it has been shown that in a low-error regime, errors are clustered in a correcatable way with a high probability, and then the decoding algorithm can successfully reduce and correct such errors; in this case, the error rate after the decoding is bounded with a theoretical guarantee.
However, with a nonzero probability, the errors may not be clustered in this way, and then, it is no longer possible to provide this guarantee. 
In the procedure of partial circuit reduction, we need to take into account the possibility of this failure, which complicates the replacement procedure of error parameters of the local stochastic Pauli error model in the partial circuit reduction.
To address this issue, we will consider partial circuit reduction with failure probability.

For our analysis, we define the conditions under which a physical circuit $C$ that follows a certain local stochastic Pauli error model can be transformed into another physical circuit $C'$ that follows a different local stochastic Pauli error model while maintaining the same output distribution.

First, we introduce the terminology for a part of the entire fault-tolerant circuit to be replaced in a partial circuit reduction.
\begin{definition}[Incomplete circuits]
\label{def: incomplete circuits}
An incomplete circuit is defined as a circuit that may not start with state preparations and may not end with measurements (but may accept quantum inputs and outputs, and also classical inputs and outputs representing measurement outcomes).
The set of all locations of an incomplete circuit $C$ is also denoted by $C$.
To define the equivalence between $C$ and $C'$,
we represent classical inputs and outputs for $C$ and $C'$ as diagonal quantum states.
This representation allows us to treat both circuits as implementing completely positive and trace-preserving (CPTP) maps.
We say that $C$ and $C'$ are equivalent if $C$ and $C'$ implement the same CPTP map.
\end{definition}

Next, we introduce the definition of circuit reducibility that allows us to apply the partial circuit reduction to an incomplete circuit, i.e., part of the entire fault-tolerant circuit.
For a set $C$, we write the power set of $C$ as $2^C$, which is the set of all possible subsets of $C$.
As explained above, the decoding algorithm for quantum LDPC codes may fail to guarantee the reduction to a local stochastic Pauli error model with a small probability.
To deal with this failure, we introduce a flag $m$, i.e., a binary-output function of the set of (faulty) locations indicating the case where the local stochastic Pauli error model cannot be defined;
in the case of $m=1$, the circuit may output an arbitrary probability distribution.

\begin{definition}[Circuit reducibility]
\label{def: delta-reducibility}
We say that a pair $\qty(C, \{p_i\}_{i \in C})$ of an incomplete circuit $C$ and a set $\{p_i\}_{i \in C}$ of error parameters for $C$ is $\delta$-reducible to another pair $\qty(C',\{q_j\}_{j \in C'})$ of an incomplete circuit $C'$ and a set $\{q_j\}_{j \in C'}$ of error parameters for $C'$ 
if there exist maps $\Gamma \colon 2^{C} \to 2^{C'}$ and $m\colon 2^C\rightarrow \{0,1\}$ satisfying the following conditions.
\begin{enumerate}
    \item \label{cond: delta-reducibility 1} For any probability distribution of $F\in 2^C$ satisfying
        \begin{align}
        \label{eq:circuit_reduction_2_1}
           \forall A\subseteq C,~~ \mathbb{P}[F\supseteq A]&\leq\prod_{i \in A} p_i,
        \end{align}
        it holds that
        \begin{align}
            \mathbb{P}[m(F)=0]\geq 1-\delta.
        \label{eq:circuit_reduction_2_3}
     \end{align}
     and
        \begin{align}
        \hspace{0.8cm}
        \label{eq:circuit_reduction_2_2}
           &\forall A'\subseteq C',\nonumber\\
            &\mathbb{P}[m(F)=0\text{~and~}\Gamma(F)\supseteq A']&\leq\prod_{j \in A'} q_j.
        \end{align}
    \item \label{cond: delta-reducibility 2} 
    For any set $F \subseteq C$ of faulty locations in $C$ with 
    $m(F)=0$,
     and any assignment of Pauli errors in $F$, there exists an assignment of Pauli errors in the set $\Gamma(F) \subseteq C'$ of faulty locations in $C'$ such that $C$ with the errors in $F$ and $C'$ with the errors in $\Gamma(F)$ are equivalent.
\end{enumerate}
\end{definition}

Note that the use of the single-shot decoding algorithm for reducing errors in Def.~\ref{def: single-shot decoding algorithm with thresholds to suppress residual error} for error suppression in EC gadgets implies the circuit reducibility, which will be used in the proof of the existence of a threshold in Sec.~\ref{sec: Proof of existence of a threshold}.
Here, the case of $m = 1$ corresponds to the failure of the decoding algorithm, and $\delta$ corresponds to the failure probability of the decoding algorithm in Def.~\ref{def: delta-reducibility}.
The circuit reducibility is also used for analyzing the failure probability of the state-preparation gadget in Def.~\ref{def: fault-tolerance conditions of the state preparation gadgets for quantum LDPC codes}.

Here, we introduce two lemmas, which are convenient for finding implications regarding circuit reducibility. 
\begin{lemma}[Transitive rule]
\label{lemma: transitive rule}
   Suppose that  $(C, \{p_i\}_{i \in C})$  is $\delta$-reducible to $(C',\{q_j\}_{j \in C'})$
   and 
     $(C',\{q_j\}_{j \in C'})$  is $\delta'$-reducible to $(C'',\{r_k\}_{k \in C''})$,
     where $\delta, \delta'\ge 0$ and $\delta+\delta'\le 1$.
     Then, 
      $(C, \{p_i\}_{i \in C})$  is $(\delta+\delta')$-reducible to $(C'',\{r_k\}_{k \in C''})$.
\end{lemma}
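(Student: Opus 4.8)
The plan is to construct the witnessing maps for the composed reduction directly from the two given reductions and then verify Conditions~\ref{cond: delta-reducibility 1} and~\ref{cond: delta-reducibility 2} of Definition~\ref{def: delta-reducibility}. Let $\Gamma\colon 2^C\to 2^{C'}$ and $m\colon 2^C\to\{0,1\}$ witness the $\delta$-reducibility of $(C,\{p_i\})$ to $(C',\{q_j\})$, and let $\Gamma'\colon 2^{C'}\to 2^{C''}$ and $m'\colon 2^{C'}\to\{0,1\}$ witness the $\delta'$-reducibility of $(C',\{q_j\})$ to $(C'',\{r_k\})$. I would define the composed maps by
\begin{equation}
\Gamma''\coloneqq \Gamma'\circ\Gamma,\qquad m''(F)\coloneqq \max\{m(F),\,m'(\Gamma(F))\},
\end{equation}
so that $m''(F)=0$ holds exactly when both $m(F)=0$ and $m'(\Gamma(F))=0$. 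The hypothesis $\delta+\delta'\le 1$ simply guarantees that the target failure probability remains a legitimate one, so the conclusion is well-posed.

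Condition~\ref{cond: delta-reducibility 2} then follows by chaining the two equivalences. For any $F$ with $m''(F)=0$ and any Pauli assignment on $F$, the relation $m(F)=0$ lets the first reduction produce an equivalent assignment on $\Gamma(F)$; since also $m'(\Gamma(F))=0$, the second reduction produces an equivalent assignment on $\Gamma'(\Gamma(F))=\Gamma''(F)$. Because equivalence of CPTP maps is transitive, $C$ with errors on $F$ and $C''$ with errors on $\Gamma''(F)$ implement the same map.

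The only genuine subtlety lies in Condition~\ref{cond: delta-reducibility 1}, since the law of $\Gamma(F)$ need not be local stochastic with parameters $\{q_j\}$: only its restriction to the event $\{m(F)=0\}$ is controlled by~\eqref{eq:circuit_reduction_2_2}. To apply the second reduction as a black box, I would introduce the truncated random variable
\begin{equation}
\hat{F}'\coloneqq\begin{cases}\Gamma(F)&\text{if }m(F)=0,\\ \varnothing&\text{if }m(F)=1.\end{cases}
\end{equation}
For any nonempty $A'\subseteq C'$ one has $\mathbb{P}[\hat{F}'\supseteq A']=\mathbb{P}[m(F)=0\text{~and~}\Gamma(F)\supseteq A']\le\prod_{j\in A'}q_j$ by~\eqref{eq:circuit_reduction_2_2}, while the empty-set case is trivial; hence $\hat{F}'$ is local stochastic with parameters $\{q_j\}$. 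Applying the second reduction to $\hat{F}'$ yields $\mathbb{P}[m'(\hat{F}')=1]\le\delta'$ together with $\mathbb{P}[m'(\hat{F}')=0\text{~and~}\Gamma'(\hat{F}')\supseteq A'']\le\prod_{k\in A''}r_k$ for every $A''\subseteq C''$.

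Finally I would translate these bounds back to $m''$ and $\Gamma''$. On $\{m(F)=0\}$ we have $\hat{F}'=\Gamma(F)$, so $m'(\hat{F}')=m'(\Gamma(F))$ and $\Gamma'(\hat{F}')=\Gamma''(F)$; consequently $\{m''(F)=0\}=\{m(F)=0\text{~and~}m'(\hat{F}')=0\}$, and on this event $\Gamma''(F)=\Gamma'(\hat{F}')$. A union bound then gives
\begin{equation}
\mathbb{P}[m''(F)=1]\le\mathbb{P}[m(F)=1]+\mathbb{P}[m(F)=0\text{~and~}m'(\hat{F}')=1]\le\delta+\delta',
\end{equation}
which establishes~\eqref{eq:circuit_reduction_2_3} for the composed reduction, while the inclusion $\{m''(F)=0,\ \Gamma''(F)\supseteq A''\}\subseteq\{m'(\hat{F}')=0,\ \Gamma'(\hat{F}')\supseteq A''\}$ combined with the bound above yields~\eqref{eq:circuit_reduction_2_2} with parameters $\{r_k\}$. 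I expect the construction of $\hat{F}'$ and the verification that it is local stochastic to be the main conceptual hurdle; once that truncation is in place, the remaining steps are just a union bound and monotonicity of probability.
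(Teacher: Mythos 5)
Your proposal is correct and follows essentially the same route as the paper's proof: the same composed map $\Gamma'\circ\Gamma$, the same combined flag ($m''(F)=0$ iff $m(F)=0$ and $m'(\Gamma(F))=0$), and crucially the same truncation trick of setting the intermediate fault set to $\varnothing$ on the failure event so that it becomes local stochastic with parameters $\{q_j\}$, followed by the union bound and the event inclusion. The only cosmetic differences are that the paper separately dispatches the trivial case $\delta+\delta'=1$ and derives $m(\varnothing)=0$ (which, as in your version, is not actually needed for this lemma), so your argument is complete as written.
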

\begin{proof}
If $\delta+\delta'=1$, the lemma is trivially true. Assume
\begin{align}
\label{eq:assumption_delta}
    \delta+\delta'<1.
\end{align}

Suppose that a random variable $F\in 2^C$ satisfies
\begin{align}
\forall A\subseteq C,\quad \mathbb{P}[F\supseteq A]&\leq\prod_{i \in A} p_i.
\label{eq: trans1}
\end{align}
        By definition of circuit reducibility in Def.~\ref{def: delta-reducibility}, there exist maps $\Gamma_1: 2^C \to 2^{C'}$ and
        $m_1: 2^C \to \{0,1\}$ satisfying 
        \begin{align}
            \mathbb{P}[m_1(F)=0]&\geq 1-\delta
        \label{eq: trans2}
        \end{align}
        and 
        \begin{align}
            \forall A'\subseteq C',\quad 
            \mathbb{P}[m_1(F)=0\text{~and~}\Gamma_1(F)\supseteq A']&\leq\prod_{j \in A'} q_j.
        \end{align} 
        Under~\eqref{eq:assumption_delta}, i.e., under $1-\delta>0$, the condition~\eqref{eq:circuit_reduction_2_3} in the definition of circuit reducibility implies that, for a trivial probability distribution $\mathbb{P}[F=\varnothing]=1$ in~\eqref{eq:circuit_reduction_2_1}, it necessarily holds that
        \begin{align}
            0<\mathbb{P}[m_1(F)=0]=1.
        \end{align}
        Thus, under~\eqref{eq:assumption_delta}, the map $m_1$ must satisfy
        \begin{align}
            m_1(\varnothing)=0.
        \end{align}

        Define a random variable $F'\in 2^{C'}$ by
        \begin{align}
            F'\coloneqq\begin{cases}
                \Gamma_1(F)&\text{if $m_1(F)=0$},\\
                \varnothing&\text{if $m_1(F)=1$}.
            \end{cases}
        \end{align}
        Then, for any $A'\neq \varnothing$, we have
        \begin{align}
        \mathbb{P}[F'\supseteq A']
        =\mathbb{P}[m_1(F)=0\text{~and~}\Gamma_1(F)\supseteq A'].
        \end{align}
        On the other hand, it trivially holds that
        \begin{align}
         \mathbb{P}[F'\supseteq \varnothing]=\prod_{j \in \varnothing} q_j=1.   
        \end{align}
        Hence, we have
        \begin{align}
            \forall A'\subseteq C',~~ 
            \mathbb{P}[F'\supseteq A']&\leq\prod_{j \in A'} q_j.
        \end{align}
        Then, by definition of circuit reducibility in Def.~\ref{def: delta-reducibility}, there exist maps $\Gamma_2: 2^{C'} \to 2^{C''}$ and
        $m_2: 2^{C'} \to \{0,1\}$ satisfying 
          \begin{align}
            &\mathbb{P}[m_2(F')=0]\geq 1-\delta',\\
            &\forall A''\subseteq C'',\quad
            \mathbb{P}[m_2(F')=0\text{~and~}\Gamma_2(F')\supseteq A'']\leq\prod_{k \in A''} r_k.
        \end{align}

        We construct a map
        \begin{align}
            \Gamma\coloneqq \Gamma_2 \circ \Gamma_1,
        \end{align}
        and also define a map $m: 2^C \to \{0,1\}$
        by
        \begin{align}
            m(F)=\begin{cases}
                0&\text{if $m_1(F)=0$ and $m_2(\Gamma_1(F))=0$},\\
                1&\text{otherwise}.
            \end{cases}
        \end{align}
        Then, we see that, due to the union bound,
        \begin{align}
            &\mathbb{P}[m(F)=0] \nonumber\\
            &=\mathbb{P}[m_1(F)=0\text{~and~}m_2(\Gamma_1(F))=0] \nonumber\\
            &=\mathbb{P}[m_1(F)=0\text{~and~}m_2(F')=0]\\
            &\geq 1-\delta-\delta',
        \end{align}
        For any $A''\subseteq C''$, we also have
        \begin{align}
            &\mathbb{P}[m(F)=0\text{~and~}\Gamma(F)\supseteq A''] \nonumber\\
            &=\mathbb{P}[m_1(F)=0\text{~and~}m_2(\Gamma_1(F))=0\text{~and~}\nonumber\\
            &\qquad\Gamma_2 \circ \Gamma_1(F)\supseteq A''] \\
            &=\mathbb{P}[m_1(F)=0\text{~and~}m_2(F')=0\text{~and~}\Gamma_2(F')\supseteq A''] \\
            &\leq \mathbb{P}[m_2(F')=0\text{~and~}\Gamma_2(F')\supseteq A''].
        \end{align}
        We thus have 
        \begin{align}
            \forall A''\subseteq C'',~~ 
            \mathbb{P}[m(F)=0\text{~and~}\Gamma(F)\supseteq A'']&\leq\prod_{k \in A''} r_k,
        \end{align}
        which shows that $\Gamma$ and  $m$ satisfy condition 1 of Def.~\ref{def: delta-reducibility}. It is easy to confirm that they satisfy condition 2 of Def.~\ref{def: delta-reducibility}. 
\end{proof}
\begin{lemma}[Extension rule]
\label{lemma: extension rule}
   If $(C, \{p_i\}_{i \in C})$  is $\delta$-reducible to $(C',\{q_j\}_{j \in C'})$, then for any $(C_0,\{r_k\}_{k \in C_0})$,
   $(C\cup C_0, \{p_i\}_{i \in C}\cup \{r_k\}_{k \in C_0})$ is $\delta$-reducible to $(C'\cup C_0,\{q_j\}_{j \in C'}\cup \{r_k\}_{k \in C_0})$.
\end{lemma}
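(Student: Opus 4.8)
The plan is to let the faults in $C_0$ pass through untouched while applying the given reduction to the part of the fault set living in $C$. I assume, as the notation $\{p_i\}_{i\in C}\cup\{r_k\}_{k\in C_0}$ suggests, that $C_0$ is disjoint from both $C$ and $C'$, so that each location carries a single parameter. Writing any $F\subseteq C\cup C_0$ as $F=F_C\sqcup F_0$ with $F_C=F\cap C$ and $F_0=F\cap C_0$, I take the maps $\Gamma,m$ supplied by the hypothesis that $(C,\{p_i\})$ is $\delta$-reducible to $(C',\{q_j\})$ and define
\[
\hat\Gamma(F):=\Gamma(F_C)\cup F_0,\qquad \hat m(F):=m(F_C).
\]
Condition~\ref{cond: delta-reducibility 2} of Def.~\ref{def: delta-reducibility} is then immediate: when $\hat m(F)=0$, i.e. $m(F_C)=0$, the original reduction gives a Pauli assignment on $\Gamma(F_C)$ making the $C$-part equivalent to the $C'$-part as a CPTP map, the faults on $F_0\subseteq C_0$ are copied verbatim, and since $C\cup C_0$ and $C'\cup C_0$ share the common subcircuit $C_0$, the equivalence of the $C$/$C'$ parts composes with the identical $C_0$ to yield equivalence of the full circuits.

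I would first dispatch the easy half of Condition~\ref{cond: delta-reducibility 1}. For any distribution of $F$ that is local stochastic with $\{p_i\}\cup\{r_k\}$, the marginal $F_C$ obeys $\mathbb{P}[F_C\supseteq A]=\mathbb{P}[F\supseteq A]\le\prod_{i\in A}p_i$ for every $A\subseteq C$, because $A\subseteq C$ makes $F\cap C\supseteq A$ equivalent to $F\supseteq A$. Applying the hypothesis (whose maps are universal, not distribution-dependent) to this marginal gives $\mathbb{P}[\hat m(F)=0]=\mathbb{P}[m(F_C)=0]\ge 1-\delta$ at once.

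The main obstacle is the second half of Condition~\ref{cond: delta-reducibility 1}. Fix $A'\subseteq C'\cup C_0$, set $A'_1=A'\cap C'$, $A'_0=A'\cap C_0$, and $c:=\prod_{k\in A'_0}r_k$; using disjointness, $\hat\Gamma(F)\supseteq A'$ is equivalent to $\Gamma(F_C)\supseteq A'_1$ together with $F_0\supseteq A'_0$, so the target is $P:=\mathbb{P}[m(F_C)=0,\Gamma(F_C)\supseteq A'_1,F_0\supseteq A'_0]\le c\prod_{j\in A'_1}q_j$. The difficulty is that the local stochastic model permits arbitrary correlation between $F_C$ and $F_0$, so I cannot factor $P$; a simple example of perfectly correlated single-location circuits shows the factor $c$ is genuinely needed and tight, so the naive bound that discards $F_0$ is too weak. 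My plan to overcome this is a rescaling argument. Consider the sub-probability measure $\omega(S):=\mathbb{P}[F_C=S,F_0\supseteq A'_0]$ on $2^C$; then $\omega(\{S\supseteq A\})=\mathbb{P}[F\supseteq A\cup A'_0]\le c\prod_{i\in A}p_i$ for $A\neq\varnothing$, with total mass $\omega(2^C)\le c$. Since every error parameter lies in $(0,1]$ we have $c>0$, so I may define a genuine probability distribution $\nu$ by $\nu(S)=\omega(S)/c$ for $S\neq\varnothing$ and placing the leftover mass $1-\omega(2^C)/c\ge 0$ on $\varnothing$. The division by $c$ cancels the extra factor for $A\neq\varnothing$ and the $A=\varnothing$ constraint is trivial, so $\nu$ is local stochastic with $\{p_i\}$; the hypothesis then yields $\sum_{S\in B}\nu(S)\le\prod_{j\in A'_1}q_j$ for $B=\{S:m(S)=0,\Gamma(S)\supseteq A'_1\}$. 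Because mass was added only at $\varnothing$, one checks $\sum_{S\in B}\nu(S)\ge\tfrac1c\sum_{S\in B}\omega(S)=P/c$ in both cases $\varnothing\in B$ and $\varnothing\notin B$, which rearranges to $P\le c\prod_{j\in A'_1}q_j$, exactly the required bound.

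Finally, I would note the role of this lemma: the extension rule is what lets partial circuit reduction operate on a single rectangle while leaving the surrounding circuit $C_0$ intact, and together with the transitive rule (Lemma~\ref{lemma: transitive rule}) it permits chaining reductions rectangle by rectangle. I expect the write-up to be short except for the rescaling step, and I would state the disjointness of $C_0$ from $C$ and $C'$ explicitly at the outset so that $\hat\Gamma$ and the union of parameter sets are unambiguous.
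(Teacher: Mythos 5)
Your proof is correct, and at its core it is the same argument as the paper's: your rescaled distribution $\nu$ (mass $\omega(S)/c$ away from $\varnothing$, deficit parked at $\varnothing$) is exactly the conditional law, given $F_0\supseteq A'_0$, of the paper's diluted variable $\tilde{F}$, which is built there from an independent coin $a$ with bias $\lambda=\mathbb{P}[F_0\supseteq A'_0]/c$. Your packaging is mildly cleaner in two respects: applying the hypothesis once to $\nu$ and observing $\nu(B)\geq\omega(B)/c$ uniformly in the two cases $\varnothing\in B$ and $\varnothing\notin B$ lets you skip the paper's separate treatment of the sub-case $A'\subseteq\Gamma(\varnothing)$, and you never need the facts $m(\varnothing)=0$ and $\prod_{j\in A'}q_j=1$ for $A'\subseteq\Gamma(\varnothing)$, which the paper must first extract from the hypothesis under the preliminary assumption $\delta<1$ (so you also avoid the trivial-case split on $\delta$). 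One repair is needed, however: your justification that $c=\prod_{k\in A'_0}r_k>0$ ``since every error parameter lies in $(0,1]$'' does not hold in the paper's own applications of this lemma, where $C_0$ routinely carries zero error parameters (e.g., $q_j=0$ on already-reduced preparation and gate locations); the paper instead dispatches this degenerate case up front by noting that $\mathbb{P}[F_0\supseteq A_0]=0$ makes the target inequality trivial. The fix in your framework is one line---if $c=0$ then $P\leq\mathbb{P}[F\supseteq A'_0]\leq c=0$---after which your rescaling argument goes through verbatim.
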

\begin{proof}
Since the case $\delta=1$ is trivially true, in the following, we assume
\begin{align}
\label{eq:assumption_delta_extension}
    \delta<1.
\end{align}
For convenience, we write
\begin{align}
    f(A)&\coloneqq\prod_{i \in A} p_i\quad\text{for $A\in C$},\\
    f'(A')&\coloneqq\prod_{j \in A'} q_j\quad\text{for $A'\in C'$},\\
    f_0(A_0)&\coloneqq\prod_{k\in A_0} r_k\quad\text{for $A_0\in C_0$}.
\end{align} Note that 
$f(\varnothing)=f'(\varnothing)=f_0(\varnothing)=1$.
By definition of circuit reducibility in Def.~\ref{def: delta-reducibility}, there exist maps $\Gamma: 2^C \to 2^{C'}$ and $m: 2^C \to \{0,1\}$ such that for any random variable $\tilde{F}$ satisfying
        \begin{align}
         \label{eq:extension11}
        \forall A\subseteq C,\quad \mathbb{P}[\tilde{F}\supseteq A]&\leq f(A),
        \end{align}
        it holds that
        \begin{align}
        \label{eq:m_F_delta_condition}
            \mathbb{P}[m(\tilde{F})=0]&\geq 1-\delta
        \end{align}
        and
        \begin{align}
         \label{eq:extension6}
            \forall A'\subseteq C',\quad 
            \mathbb{P}[m(\tilde{F})=0\text{~and~}\Gamma(\tilde{F})\supseteq A']&\leq f'(A').
        \end{align}
        Under~\eqref{eq:assumption_delta_extension}, i.e., under $1-\delta>0$,~\eqref{eq:m_F_delta_condition} implies that, for a trivial probability distribution $\mathbb{P}[\tilde{F}=\varnothing]=1$ in~\eqref{eq:extension11}, it necessarily holds that
        \begin{align}
            0<\mathbb{P}[m(\tilde{F})=0]=1.
        \end{align}
        Thus, under~\eqref{eq:assumption_delta_extension}, the map $m$ must satisfy
        \begin{align}
        \label{eq:tilde_m_0}
            m(\varnothing)=0,
        \end{align}
        and~\eqref{eq:extension6} implies that
\begin{align}
\label{eq:extension7}
     f'(A')=1  \text{~for~}  A'  \subseteq \Gamma(\varnothing) 
\end{align}

Suppose that random variables $F\in 2^C$ and $F_0\in 2^{C_0}$ satisfy
\begin{align}
 \label{eq:extension3}
 &\forall A\subseteq C, \forall A_0\subseteq C_0,\nonumber\\
 &\mathbb{P}[F\supseteq A\text{~and~}F_0\supseteq A_0]
 \leq f(A) f_0(A_0).
\end{align}
The case with $A_0=\varnothing$ implies that the random variable $F$ satisfies~\eqref{eq:extension11} and thus
 \begin{align}
 \label{eq:extension4}
\mathbb{P}[m(F)=0]
 \geq 1-\delta.
\end{align}
Therefore, if we can further show that 
\begin{align}
\label{eq:extension1}
&\mathbb{P}[m(F)=0\text{~and~}\Gamma(F)\supseteq A'\text{~and~}F_0\supseteq A_0]\nonumber\\
&\leq f'(A') f_0(A_0)
\end{align}
holds for all $A'\subseteq C'$ and $A_0\subseteq C_0$, then it will imply that condition 1 of Def.~\ref{def: delta-reducibility} is met.
Since condition 2 is obviously met, the lemma will be proved.
In the following, we will prove~\eqref{eq:extension1}.

Since $\mathbb{P}[F_0\supseteq A_0]=0$ trivially implies~\eqref{eq:extension1}, we henceforth assume that $\mathbb{P}[F_0\supseteq A_0]>0$. 
The case with $A=\varnothing$ in~\eqref{eq:extension3}, i.e.,
\begin{align}
 \label{eq:extension5}
 \mathbb{P}[F_0\supseteq A_0]
 \leq f_0(A_0),
\end{align}
 then implies $f_0(A_0)>0$ and allows us to define
 \begin{align}
           \lambda\coloneqq \frac{\mathbb{P}[F_0\supseteq A_0]}{f_0(A_0)},
 \end{align}
 which satisfies $0<\lambda \leq 1$. 
If $A' \subseteq \Gamma(\varnothing) $, then~\eqref{eq:extension7} and~\eqref{eq:extension5} trivially leads to~\eqref{eq:extension1} as
\begin{align}
&\mathbb{P}[m(F)=0\text{~and~}\Gamma(F)\supseteq A'\text{~and~}F_0\supseteq A_0]\nonumber\\
&\leq\mathbb{P}[F_0\supseteq A_0]\\
&\leq f_0(A_0)\\
&=f'(A') f_0(A_0);
\end{align}
thus, we henceforth assume that
\begin{align}
\label{eq:extension8}
A'  \nsubseteq \Gamma(\varnothing).
\end{align}
 
Let us introduce an auxiliary random variable $a\in \{0,1\}$ that is independent of $(F, F_0)$ and 
satisfies
\begin{align}
    \mathbb{P}[a=1]=\lambda.
\end{align}
Define a random variable $\tilde{F}\in 2^{C}$ by
\begin{align}
    \tilde{F}\coloneqq\begin{cases}
        F &\text{if  $a=1$},\\
        \varnothing&\text{if $a=0$}.
    \end{cases}
\end{align}
Due to~\eqref{eq:tilde_m_0}, we have
\begin{align}
\label{eq:tilde_M_F_tilde_F}
    \mathbb{P}[m(F)=0]\leq\mathbb{P}[m(\tilde{F})=0].
\end{align}
For any $A\subseteq C$ with $A\neq \varnothing$, we have
\begin{align}
&\mathbb{P}[\tilde{F}\supseteq A|F_0\supseteq A_0] \nonumber\\
&= \lambda\mathbb{P}[F\supseteq A|F_0\supseteq A_0] \\
&=\frac{\mathbb{P}[F\supseteq A\text{~and~}F_0\supseteq A_0]}{f_0(A_0)}. 
\end{align}
Using~\eqref{eq:extension3}, we have
\begin{align}
\mathbb{P}[\tilde{F}\supseteq A|F_0\supseteq A_0] \le f(A)
\end{align}
which is also true for $A= \varnothing$. This implies that conditioned on $F_0\supseteq A_0$, 
the random variable $\tilde{F}$ satisfies  (\ref{eq:extension11}) and thus
        \begin{align}
         \label{eq:extension9}
            &\forall A'\subseteq C',\nonumber\\
            &\mathbb{P}[m(\tilde{F})=0\text{~and~}\Gamma(\tilde{F})\supseteq A'|F_0\supseteq A_0]\leq f'(A').
        \end{align}
Combined with~\eqref{eq:extension8},~\eqref{eq:tilde_M_F_tilde_F}, and~\eqref{eq:extension9}, we have
\begin{align}
& \mathbb{P}[m(F)=0\text{~and~}\Gamma(F)\supseteq A'\text{~and~}F_0\supseteq A_0] \nonumber \\
&\leq
\mathbb{P}[m(\tilde{F})=0\text{~and~}\Gamma(F)\supseteq A'\text{~and~}F_0\supseteq A_0] \\
&=\lambda^{-1} \mathbb{P}[m(\tilde{F})=0\text{~and~}\Gamma(\tilde{F})\supseteq A'\text{~and~}F_0\supseteq A_0]
 \\
&= f_0(A_0) \mathbb{P}[m(\tilde{F})=0\text{~and~}\Gamma(\tilde{F})\supseteq A' | F_0\supseteq A_0] 
 \\
&\leq  f'(A') f_0(A_0),
\end{align}
which proves~\eqref{eq:extension1}.
\end{proof}

In addition, we introduce the following lemma, which provides more explicit conditions for the circuit reducibility in Def.~\ref{def: delta-reducibility}.
This is useful for confirming the reducibility by examining error propagation in the relevant circuit.

\begin{lemma}[Condition for reducibility]
\label{lemma: condition for reducibility}
     Consider a pair $\qty(C, \{p_i\}_{i \in C})$ for which $C$ is partitioned into two disjoint subsets $C_1$ and $C_2$, and
\begin{align}
    \forall i \in C_1, \quad p_i &\le p^{(1)}\in[0,1], \\
    \forall i \in C_2, \quad p_i &\le p^{(2)}\in(0,1]
\end{align} holds. The pair $\qty(C, \{p_i\}_{i \in C})$ is reducible (i.e., $\delta$-reducible with $\delta=0$) to another pair $\qty(C',\{q_j\}_{j \in C'})$ if there exists a map $\Gamma \colon 2^{C} \to 2^{C'}$ and positive integers $c_{\rm{fw}}$, $c_{\rm{bw}}^{(1)}$, and $c_{\rm{bw}}^{(2)}$ satisfying the following conditions.
    \begin{enumerate}
        \item \label{cond1 reducibility} It holds that
        \begin{align}
            \Gamma(A)&= \bigcup_{i\in A} \Gamma(\{i\}),\\
            \Gamma(\varnothing)&=\varnothing.
        \end{align}

        \item \label{cond2 reducibility}For all $i \in C$,
        \begin{align}
            |\Gamma(\{i\})|\le c_{\rm{fw}}.
        \end{align}

        \item \label{cond3 reducibility}For all $j \in C'$,
        \begin{align}
            |\{ i\in C_1 \colon  j\in\Gamma(\{i\})  \}| &\le c_{\rm{bw}}^{(1)},\\
            |\{ i\in C_2 \colon  j\in\Gamma(\{i\})  \}| &\le c_{\rm{bw}}^{(2)}.
        \end{align}
        
        \item \label{cond4 reducibility}Define
        \begin{align}
            C_\mathrm{faulty}^{\prime}\coloneqq\bigcup_{i\in C:p_i\neq 0}\Gamma(\{i\}).
        \end{align}
        For all $j \in C'_\mathrm{faulty}$,
        \begin{align}
        \label{eq:q_j_condition}
            q_j \ge  2^{c_{\bw}^{(2)}}\left(1 + \frac{p^{(1)}}{p^{(2)}}\right)^{c_\bw^{(1)}} \qty(p^{(2)})^{1/c_{\fw}}\nonumber\\+2^{c_{\bw}^{(1)}}\qty(p^{(1)})^{1/c_{\fw}};
        \end{align}
        and for all $j \in C'\setminus C'_\mathrm{faulty}$,
        \begin{align}
            q_j=0.
        \end{align}
        
        \item \label{cond5 reducibility}\label{cond: cond: equivalence of incomplete circuits in assumption}
        For any set $A \subseteq C$ of faulty locations in $C$ and any assignment of Pauli errors in $A$, there exists an assignment of Pauli errors in the set $\Gamma(A) \subseteq C'$ such that $C$ with the errors in $A$ and $C'$ with the errors in $\Gamma(A)$ are equivalent.
    \end{enumerate}
\end{lemma}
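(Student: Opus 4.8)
\emph{Plan.} Because the target failure parameter is $\delta=0$, the flag map can be taken identically zero, $m\equiv 0$. With this choice, condition~\ref{cond: delta-reducibility 2} of Def.~\ref{def: delta-reducibility} is supplied verbatim by hypothesis~\ref{cond5 reducibility} (it holds for every $A\subseteq C$, hence for every $F$ with $m(F)=0$), while condition~\ref{cond: delta-reducibility 1} collapses to the single probabilistic statement
\begin{equation}
\forall A'\subseteq C',\quad \mathbb{P}[\Gamma(F)\supseteq A']\le \prod_{j\in A'}q_j,
\end{equation}
required for every random variable $F\in 2^C$ that is local stochastic with parameters $\{p_i\}$ as in~\eqref{eq:circuit_reduction_2_1}. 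This is the only thing left to establish, and the whole argument is a combinatorial estimate of the left-hand side.

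The first step is to rewrite the monotone event $\{\Gamma(F)\supseteq A'\}$ as a union over covers. Using the additivity of $\Gamma$ from condition~\ref{cond1 reducibility}, $\Gamma(F)\supseteq A'$ holds iff for each $j\in A'$ there is some $i\in F$ with $j\in\Gamma(\{i\})$; selecting one such representative per $j$ exhibits a choice function $\phi\in\Phi\coloneqq\prod_{j\in A'}\Gamma^{-1}(\{j\})$, where $\Gamma^{-1}(\{j\})\coloneqq\{i\in C:\ j\in\Gamma(\{i\})\}$, whose image lies in $F$. Hence $\{\Gamma(F)\supseteq A'\}\subseteq\bigcup_{\phi\in\Phi}\{F\supseteq\mathrm{image}(\phi)\}$, and the union bound together with the local stochastic hypothesis gives $\mathbb{P}[\Gamma(F)\supseteq A']\le\sum_{\phi\in\Phi}\prod_{i\in\mathrm{image}(\phi)}p_i$ (the set $\Phi$ is finite since $C'$ is finite and the backward degrees are bounded by condition~\ref{cond3 reducibility}).

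The crux is the second step: converting the product over the image into a product indexed by $A'$ so that the sum over $\Phi$ factorizes. For fixed $\phi$, each image point $i$ is hit by $k_i\coloneqq|\phi^{-1}(i)|$ targets, and since $\phi^{-1}(i)\subseteq\Gamma(\{i\})\cap A'$ the forward bound $|\Gamma(\{i\})|\le c_{\fw}$ of condition~\ref{cond2 reducibility} gives $k_i\le c_{\fw}$. As $0\le p_i\le 1$ and $k_i/c_{\fw}\le 1$, one has $p_i\le p_i^{k_i/c_{\fw}}$, so $\prod_{i\in\mathrm{image}(\phi)}p_i\le\prod_{j\in A'}p_{\phi(j)}^{1/c_{\fw}}$. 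Summing over $\Phi$ and applying the distributive law yields the factorized estimate
\begin{equation}
\mathbb{P}[\Gamma(F)\supseteq A']\le\prod_{j\in A'}Q_j,\qquad Q_j\coloneqq\sum_{i\in\Gamma^{-1}(\{j\})}p_i^{1/c_{\fw}}.
\end{equation}

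The final step is to show $Q_j\le q_j$ termwise, whence $\prod_{j}Q_j\le\prod_{j}q_j$ by nonnegativity. Splitting $\Gamma^{-1}(\{j\})$ along the partition $C_1\sqcup C_2$ and invoking the backward bounds of condition~\ref{cond3 reducibility} gives $Q_j\le c_{\bw}^{(1)}(p^{(1)})^{1/c_{\fw}}+c_{\bw}^{(2)}(p^{(2)})^{1/c_{\fw}}$. For $j\notin C'_{\mathrm{faulty}}$ every $i\in\Gamma^{-1}(\{j\})$ has $p_i=0$, so $Q_j=0=q_j$; for $j\in C'_{\mathrm{faulty}}$ the elementary inequalities $c\le 2^{c}$ for positive integers $c$ and $(1+p^{(1)}/p^{(2)})^{c_{\bw}^{(1)}}\ge 1$ show this bound is dominated by the $q_j$ displayed in condition~\ref{cond4 reducibility}. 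In fact this proves a slightly sharper estimate than required, so the stated form of $q_j$ is only a convenient sufficient bound. The main obstacle is precisely the second step: a single faulty location may propagate to as many as $c_{\fw}$ locations of $C'$, so the weight $p_i$ must be distributed across all targets it covers; the representative choice function combined with the exponent split $p_i\le p_i^{k_i/c_{\fw}}$ is the device that makes the sum over covers factorize, and getting this bookkeeping right—rather than crudely union-bounding over subsets of each $\Gamma^{-1}(\{j\})$—is what keeps the bound clean.
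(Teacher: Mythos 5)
Your proof is correct, and it takes a genuinely different route from the paper's. The paper back-propagates: it defines the preimage sets $\Lambda_1(A')$ and $\Lambda_2(A')$, observes via the forward bound that $\Gamma(F)\supseteq A'$ forces $|F\cap(\Lambda_1(A')\cup\Lambda_2(A'))|\geq\lceil |A'|/c_{\fw}\rceil$, splits this event according to whether $|F\cap\Lambda_1(A')|$ exceeds that threshold, and bounds each case by summing crudely over \emph{all} subsets of $\Lambda_1(A')$ and $\Lambda_2(A')$ --- this is where the prefactors $2^{c_{\bw}^{(1)}}$, $2^{c_{\bw}^{(2)}}$ and, after a binomial-theorem evaluation and a division by $\qty(p^{(2)})^{|\Lambda_1(A')|}$, the cross term $\left(1+p^{(1)}/p^{(2)}\right)^{c_\bw^{(1)}}$ arise; the hypothesis $p^{(2)}>0$ is consumed precisely at that division. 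You instead union-bound over covering choice functions $\phi\in\prod_{j\in A'}\Gamma^{-1}(\{j\})$ and redistribute each weight across its targets via $p_i\le p_i^{k_i/c_{\fw}}$ (valid since $p_i\in[0,1]$ and $k_i=|\phi^{-1}(i)|\le c_{\fw}$ by the forward bound), after which the distributive law factorizes the sum exactly into $\prod_{j\in A'}Q_j$ with $Q_j=\sum_{i\in\Gamma^{-1}(\{j\})}p_i^{1/c_{\fw}}$. Your per-target estimate $Q_j\le c_{\bw}^{(1)}\qty(p^{(1)})^{1/c_{\fw}}+c_{\bw}^{(2)}\qty(p^{(2)})^{1/c_{\fw}}$ is strictly sharper than the stated $q_j$ (linear rather than exponential backward-degree prefactors, no cross factor), and since the lemma only posits $q_j\geq$ the displayed expression, your termwise domination via $c\le 2^{c}$ and $\left(1+p^{(1)}/p^{(2)}\right)^{c_\bw^{(1)}}\ge 1$ correctly closes the gap. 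Two further points in your favor: your argument never divides by $p^{(2)}$, so the partition $C=C_1\cup C_2$ becomes mere bookkeeping and the estimate would survive $p^{(2)}=0$, whereas in the paper the partition drives the case split; and your handling of the degenerate cases ($m\equiv 0$ for $\delta=0$, invoking condition~\ref{cond5 reducibility} for part 2 of Def.~\ref{def: delta-reducibility}; $Q_j=0=q_j$ for $j\in C'\setminus C'_\mathrm{faulty}$) is exactly what the definition requires. What the paper's route buys is only that its case analysis produces the stated two-term form of $q_j$ directly --- the form subsequently plugged into the threshold equations \eqref{eq:q_prep_th}--\eqref{eq:q_meas_th} --- while your route derives a stronger inequality and then relaxes it to the stated one.
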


\begin{proof}
Define maps $\Lambda_1: 2^{C'} \to 2^{C_1}$ and $\Lambda_2: 2^{C'} \to 2^{C_2}$ by
\begin{align}
  &\Lambda_1(A')\coloneqq \{ i\in C_1 \colon  \Gamma(\{i\}) \cap A' \neq \varnothing \},\\
  &\Lambda_2(A')\coloneqq \{ i\in C_2 \colon  \Gamma(\{i\}) \cap A' \neq \varnothing \}.
\end{align}
Using the condition~\ref{cond3 reducibility}, we have 
\begin{align}
\label{eq:lambda_1_A_bound}
  &|\Lambda_1(A')|= \left| \bigcup_{j \in A'} \{ i\in C_1 \colon  \Gamma(\{i\}) \cap \{j\} \neq \varnothing \} \right|
   \le c_{\rm{bw}} ^{(1)}|A'|,\\
\label{eq:lambda_2_A_bound}
  &  |\Lambda_2(A')|= \left| \bigcup_{j \in A'} \{ i\in C_2 \colon  \Gamma(\{i\}) \cap \{j\} \neq \varnothing \} \right|
   \le c_{\rm{bw}}^{(2)} |A'|.
\end{align}
Suppose that a random variable $F\in 2^C$ satisfies
\begin{align}
    &\forall A_1\subseteq C_1\an A_2\subseteq C_2,\nonumber\\
    &\mathbb{P}[F\cap C_1\supseteq A_1\an F\cap C_2\supseteq A_2]\leq\qty(p^{(1)})^{|A_1|} \qty(p^{(2)})^{|A_2|},
    \label{eq:assumption_on_F}
\end{align}
where $0^0=1$ when $p^{(1)}=0$ and $|A_1|=0$.
Our goal is to show that 
\begin{align}
    \text{$\forall A'\subseteq C'$,\quad}\mathbb{P}[\Gamma(F)\supseteq A']\leq \prod_{j\in A'}q_j
\end{align}
for $|A'|\geq 1$.
In particular, if $A'\not\subseteq C_\mathrm{faulty}'$, then $\mathbb{P}[\Gamma(F)\supseteq A']=0$; thus, the goal reduces to showing
\begin{align}
\label{eq:goal_of_reducibility}
    \text{$\forall A'\subseteq C_\mathrm{faulty}'$,\quad}\mathbb{P}[\Gamma(F)\supseteq A']\leq \prod_{j\in A'}q_j
\end{align}
for $|A'|\geq 1$ and $q_j$ satisfying~\eqref{eq:q_j_condition}.
In the following analysis, we consider $A'\subseteq C_\mathrm{faulty}'$.

If $\Gamma(F)\supseteq A'$, defining $\Lambda(A') \coloneqq \Lambda_1(A') \cup \Lambda_2(A')$, we obtain from condition~\ref{cond1 reducibility}
\begin{align}
A'&=\Gamma(F)\cap A'\\
&=\bigcup_{i\in F } \Gamma(\{i\})  \cap A'\\
&= \bigcup_{i\in F\cap \Lambda(A') } \Gamma(\{i\})  \cap A'.
\end{align}
Thus, using condition~\ref{cond2 reducibility}, we have
\begin{align}
&|A'|\le \sum_{i\in F\cap \Lambda(A') } |\Gamma(\{i\})| \le c_{\fw} |F\cap \Lambda(A') |.
\end{align}
We write
\begin{align}
\label{eq:k_0_bound}
    k_0\coloneqq\left\lceil \frac{|A'|}{c_\fw}\right\rceil\geq\frac{|A'|}{c_\fw}.
\end{align}
Then, it holds that
\begin{align}
    \mathbb{P}[\Gamma(F) \supseteq A']
    &\leq 
\mathbb{P}\left[ c_{\fw}|F\cap \Lambda(A') | \ge |A'|\right]\\
&=
\mathbb{P}\left[ |F\cap \Lambda(A') | \ge k_0\right]
\end{align}
To bound this probability, we consider two cases based on the size of $|F\cap \Lambda_1(A')|$, i.e.,
\begin{align}
&\mathbb{P}\left[ |F\cap \Lambda(A') | \ge k_0\right]\nonumber\\
&\leq \mathbb{P}[|F\cap \Lambda_1(A')|\geq k_0]+\nonumber\\
&~~~~ \mathbb{P}[|F\cap \Lambda(A')|\geq k_0 \an |F\cap \Lambda_1(A')|\leq k_0].
\end{align}
In the following, we will bound each term on the right-hand side.

For the first term, it hold that 
\begin{align}
&\mathbb{P}[|F \cap \Lambda_1(A')| \ge k_0]\nonumber\\
&=\sum_{\substack{H\subseteq \Lambda_1(A') : \\  |H| \ge k_0}} 
 \mathbb{P}[F\cap \Lambda_1(A')=H]  \\
 &\le \sum_{\substack{H\subseteq \Lambda_1(A') : \\  |H| \ge k_0}} 
\qty(p^{(1)})^{k_0}\\  
&\le \sum_{H\subseteq \Lambda_1(A') } 
\qty(p^{(1)})^{k_0}  
\\
&= 2^{|\Lambda_1(A')|} \qty(p^{(1)})^{k_0},
\end{align}
where we used~\eqref{eq:assumption_on_F}. 
Due to~\eqref{eq:lambda_1_A_bound}, this becomes
\begin{align}
\mathbb{P}[|F \cap \Lambda_1(A')| \geq k_0] \leq \left(2^{c_\bw^{(1)}}\qty(p^{(1)})^{1/c_{\fw}}\right)^{|A'|},
\end{align}
where we used~\eqref{eq:k_0_bound} and $p^{(1)} \leq 1$.

For the second term, we have 
\begin{align}
&\mathbb{P}[|F \cap \Lambda(A')| \ge k_0  \text{~and~} |F \cap \Lambda_1(A')| \le k_0]\nonumber\\
&=\sum_{\substack{H\subseteq \Lambda(A') : \\  |H| \ge k_0 \\  |H \cap \Lambda_1(A')| \le k_0 }} 
 \mathbb{P}[F\cap \Lambda(A')=H].
\end{align}
We further decompose $H$ on the right-hand side into disjoint sets $H_1$ and $H_2$, i.e., $H_1 = H \cap \Lambda_1(A')$ and $H_2 = H \cap \Lambda_2(A')$, so $H = H_1 \cup H_2$. Then we have
\begin{align}
&\mathbb{P}[|F \cap \Lambda(A')| \ge k_0  \text{~and~} |F \cap \Lambda_1(A')| \le k_0]\nonumber\\
&=\sum_{\substack{H_1\subseteq \Lambda_1(A') : \\   |H_1| \le k_0 }} 
\sum_{\substack{H_2\subseteq \Lambda_2(A') : \\  |H_1| +  |H_2| \ge k_0  }} 
  \mathbb{P}[F\cap \Lambda(A')=H_1\cup H_2] \\
  &=\sum_{r=0}^{k_0} \sum_{\substack{H_1\subseteq \Lambda_1(A') : \\   |H_1| = r }} 
\sum_{\substack{H_2\subseteq \Lambda_2(A') : \\  |H_2| \ge k_0-r  }} 
  \mathbb{P}[F\cap \Lambda(A')=H_1\cup H_2] \\
   &\le \sum_{r=0}^{k_0} \sum_{\substack{H_1\subseteq \Lambda_1(A') : \\   |H_1| = r }} 
\sum_{\substack{H_2\subseteq \Lambda_2(A') : \\  |H_2| \ge k_0-r  }} 
   \qty(p^{(1)})^{r} \qty(p^{(2)})^{k_0-r}
   \label{eq: koas111301}
    \\
    &\le \sum_{r=0}^{k_0} \sum_{\substack{H_1\subseteq \Lambda_1(A') : \\   |H_1| = r }} 
\sum_{H_2\subseteq \Lambda_2(A')  }
   \qty(p^{(1)})^{r} \qty(p^{(2)})^{k_0-r} \\
   &= \sum_{r=0}^{k_0} \binom{|\Lambda_1(A')|}{r} 2^{|\Lambda_2(A')|}   \qty(p^{(1)})^{r} \qty(p^{(2)})^{k_0-r} \\
  &\le 2^{|\Lambda_2(A')|}   \sum_{r=0}^{|\Lambda_1(A')|} \binom{|\Lambda_1(A')|}{r} \qty(p^{(1)})^{r} \qty(p^{(2)})^{k_0-r} \\
  &= 2^{|\Lambda_2(A')|} \qty(p^{(2)})^{k_0-|\Lambda_1(A')|} \qty(p^{(1)}+ p^{(2)})^{|\Lambda_1(A')|} \\
  \label{eq:division_p_2}
  &= 2^{|\Lambda_2(A')|} \qty(p^{(2)})^{k_0} \left(1 + \frac{p^{(1)}}{p^{(2)}}\right)^{|\Lambda_1(A')|}.
\end{align}
where we used~\eqref{eq:assumption_on_F} for the inequality~\eqref{eq: koas111301}, and~\eqref{eq:division_p_2} follows from $p^{(2)}>0$.
Using the bounds~\eqref{eq:lambda_1_A_bound},~\eqref{eq:lambda_2_A_bound}, and~\eqref{eq:k_0_bound}, and $p^{(2)} \leq 1$, we have
\begin{align}
&\mathbb{P}[|F \cap \Lambda(A')| \ge k_0 \text{ and } |F \cap \Lambda_1(A')| \le k_0]\\
&\leq \left(2^{c_{\bw}^{(2)}}\left(1 + \frac{p^{(1)}}{p^{(2)}}\right)^{c_{\text{bw}}^{(1)}} \qty(p^{(2)})^{1/c_{\text{fw}}}\right)^{|A'|}.
\end{align}

Adding the bounds from the two terms, and noting that $x^n + y^n \leq (x + y)^n$ for $x, y \geq 0$ and $n \geq 1$, we obtain
\begin{align}
    &\mathbb{P}[\Gamma(F) \supseteq A'] \\
    &\leq \mathbb{P}[|F\cap \Lambda_1(A')|\geq k_0]+\nonumber\\
    &~~~~ \mathbb{P}[|F\cap \Lambda(A')|\geq k_0 \an |F\cap \Lambda_1(A')|\leq k_0]\\
    &\leq \left(2^{c_{\bw}^{(1)}}\qty(p^{(1)})^{1/c_{\fw}} + 2^{c_{\bw}^{(2)}}\left(1 + \frac{p^{(1)}}{p^{(2)}}\right)^{c_\bw^{(1)}} \qty(p^{(2)})^{1/c_{\fw}}\right)^{|A'|} \\
    & \le \prod_{j \in A'} q_j
\end{align}
from condition~\ref{cond4 reducibility}
with $A'\subseteq C_\mathrm{faulty}'$ and $|A'|\ge 1$, which proves~\eqref{eq:goal_of_reducibility}.

Combined with condition~\ref{cond5 reducibility}, all the requisites for the $0$-reducibility in Def.~\ref{def: delta-reducibility} are satisfied.
\end{proof}

These lemmas show that to analyze the error bounds after the partial circuit reduction, it suffices to consider the error parameters of the local stochastic Pauli error model defined on the part of the circuit to be replaced, rather than that defined on the entire circuit.
Our error analysis for the fault-tolerant simulation is conducted by propagating all Pauli errors that occur in each part of the circuit to the end of the part so that the fault-tolerant circuit is iteratively rewritten into another equivalent circuit where a new local stochastic Pauli error model is defined by replacing each part.
After completing this rewriting process for all parts of the fault-tolerant circuit, we will obtain a circuit with the same output distribution as the original circuit, as will be shown in the next section.

\subsection{Proof of existence of threshold}
\label{sec: Proof of existence of a threshold}

In this section, we prove the existence of a positive threshold $q_{\mathrm{loc}}^{\mathrm{th}}$ such that if $p_{\mathrm{loc}}$ is below this threshold, the propagation of errors during the execution of rectangles is sufficiently suppressed. 
To this end, we establish a circuit reducibility rule for each rectangle involving a gadget for an elementary operation and EC gadgets, such that it is reduced to a non-faulty operation followed by a constant-depth circuit of faulty wait operations that are responsible for all the Pauli errors in the rectangle.
Based on these rules of circuit reducibility of rectangles, as shown in the diagrams and lemmas below, we can perform a partial circuit reduction of each rectangle, from the start to the end of a fault-tolerant circuit in an inductive manner, and find an upper bound on the failure probability of a fault-tolerant simulation.

We first introduce constants and functions that we use in the subsequent proofs. 
We define the threshold $q_{\mathrm{loc}}^{\mathrm{th}}$ and related constants explicitly as follows.
We define a constant $q^\thre_\prep>0$ as a unique solution of the equation
\begin{align}
\label{eq:q_prep_th}
  2^{(d_0+1) 2^{d_0}}\qty((2d+M)q_{\mathrm{prep}}^{\mathrm{th}})^{1/2^{d_0}} = p^\mathrm{dec}_\mathrm{th}.
\end{align}
Also, due to~\eqref{eq: lambda in threshold theorem}, the equations
\begin{align} 
&2^{2^{2d_0}}\qty(1+\frac{d q_\gat^\thre}{\lambda})^{2^{2d_0+1}(d_0+1)}{\lambda}^{1/2^{2d_0}}\nonumber \\
        &~~~~~~~~~+2^{2^{2d_0+1}(d_0+1)}(dq_\gat^\thre)^{1/2^{2d_0}}=p_\dec^\thre,\label{eq:q_gate_th}\\
&2^{2^{d_0}}\qty(1+\frac{d q_\meas^\thre}{\lambda})^{2^{d_0+1}(d_0+1)}{\lambda}^{1/2^{d_0}}\nonumber \\
        &~~~~~~~~~+2^{2^{d_0+1}(d_0+1)}(dq_\meas^\thre)^{1/2^{d_0}}=p_\mathrm{dec}^{\mathrm{th}}\label{eq:q_meas_th}
\end{align}
have unique solutions, respectively, which define  $q_\gat^\thre>0$ and $q_\meas^\thre>0$.
Furthermore, let
\begin{align}
    p_\loc^\thre>0
\end{align}
denote the threshold constant of the protocol for an open quantum circuit, as shown in Appendix~\ref{Theorem: level-reduction for the circuit that outputs a quantum state} of Appendix~\ref{appendix: fault-tolerant protocol for open quantum circuits}.
We then define the threshold of our protocol as 
\begin{equation}
\label{eq: q_th in threshold theorem}
    q_{\mathrm{loc}}^{\mathrm{th}}\coloneqq\min\qty{q_{\mathrm{prep}}^{\mathrm{th}}, q_{\mathrm{gate}}^{\mathrm{th}},q_{\mathrm{meas}}^{\mathrm{th}}, p_\loc^\thre}.
\end{equation}

To use a $\delta$-reducible relation for each rectangle of the circuit, we also define $\delta$ as a function of 
$(n, \varepsilon)$ as
\begin{equation}
\label{eq: delta in threshold theorem}
    \delta(n, \varepsilon)
    \coloneqq
    \qty(\frac{\varepsilon}{|C_n^\org|})^2.
\end{equation}
We will use several properties for $\delta(n, \varepsilon)$ in the subsequent proofs. 
Let $c_D$ and $c_W$ be the constants given by Lemma~\ref{lemma: fault-tolerance condition}.
Recall that we have chosen a family of parameters $[[N(n,\varepsilon),K(n,\varepsilon), D(n,\varepsilon)]]$, $L(n,\varepsilon)$, and $T$ as in~\eqref{eq: code parameters in threshold theorem},~\eqref{eq: L in threshold theorem}, and~\eqref{eq: c(T) in threshold theorem}.
From Defs.~\ref{def: single-shot decoding algorithm with thresholds to suppress residual error} and~\ref{def: Logarithmic-time decoding algorithm with threshold} with~\eqref{eq: Dtheta in threshold theorem},  
we see that, for sufficiently large $n$, there exists a positive constant $c_\mathrm{dec}$
such that $D(n, \varepsilon)$ satisfies
\begin{align}
    &\delta_1(D)\leq \exp(-c_\mathrm{dec}D), \\
    &\delta_2(D)\leq \exp(-c_\mathrm{dec}D).
\end{align}
Combined with \eqref{eq: constant relation 1 in threshold theorem}, \eqref{eq: Dtheta in threshold theorem} and (\ref{eq: delta in threshold theorem}), we see that for sufficiently large $n$, 
the quantities $\delta(n, \varepsilon)$ and $D(n, \varepsilon)$ 
satisfy 
\begin{align} 
\label{eq: delta1}
 \delta_1(D) \le \frac{\delta}{c_W+1}, \\
 \label{eq: delta2}
 \delta_2(D) \le \frac{\delta}{c_W+1}.
\end{align}
From \eqref{eq: Ntheta in threshold theorem}, \eqref{eq: L in threshold theorem}, and (\ref{eq: delta in threshold theorem}), we also see that for sufficiently large $n$, 
the quantities $\delta(n, \varepsilon)$, $N(n, \varepsilon)$, and $L(n, \varepsilon)$ 
satisfy 
\begin{align} 
\label{eq: delta3}
\delta_3(N, L)=c_3 c_D c_W N^2 {p}^{\mathrm{th}}_{\mathrm{loc}}\left(\frac{p_{\mathrm{loc}}}{{p}^{\mathrm{th}}_{\mathrm{loc}}}\right)^{2^L}
\le \frac{\delta}{c_W+1},
\end{align}
where $c_3$ is given by Theorem~\ref{Theorem: level-reduction for the circuit that outputs a quantum state} in Appendix~\ref{appendix: fault-tolerant protocol for open quantum circuits}.

For convenience, we present the circuit reducibility of rectangles with diagrams as follows.
We use thin and thick lines to distinguish unencoded qubits and qubits that form a code block. 
The two types of lines are often connected by an ideal encoder that maps an input state of a $K$-qubit register to the corresponding logical state encoded on $K$ logical qubits of a code block of $\mathcal{Q}$, which is depicted by
    \begin{align}
    \includegraphics[width=0.3\textwidth]{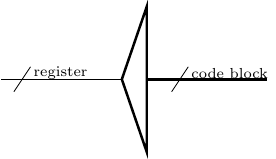}\label{fig: ideal encoder}
    \end{align}
An operation on unencoded qubits is represented by a box surrounded by thin lines and is always assumed to be ideal with no faults (the error parameters for all the locations are zero).
    On the other hand, a gadget corresponding to an elementary operation is represented by a box surrounded by thick lines and with a number underneath, which indicates that the error parameters for all the locations are the same and bounded by this number.
    A sequence of boxes shaded in gray represents a rectangle, while white boxes represent wait operations to represent Pauli errors.
    Note that, for simplicity, multiple-depth wait operations (such as those that appear in gadget synchronization) are represented by single-depth wait operations with increased error parameters.
    This is valid since $s$-depth wait operations with error parameter $p_\loc$ are obviously reducible to a single depth wait with error parameter $dp_\loc$ when $s\le d$.
    A single depth of faulty wait operations to represent Pauli errors in a rectangle is shown with error parameters bounded by $p$ as follows.
    \begin{align}
    \includegraphics[width=0.3\textwidth]{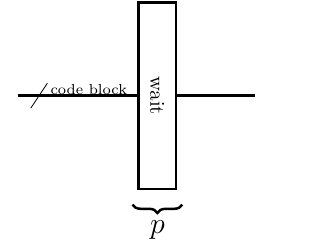}\label{fig: wait operation in threshold theorem}.
    \end{align}
    In this way, a diagram composed of lines, boxes, and encoders represents a pair $(C, \{p_i\}_{i\in C})$ of a circuit and the set of error parameters.
    
    We now provide all the reducibility rules used in the proof as diagrams.
    Each statement of circuit reducibility is represented by two such depictions of circuit-parameter pairs connected by an arrow with $\delta$ on top of it, as shown in~\eqref{eq: 0-state-preparation in threshold thorem} to~\eqref{eq: Bell measurement in threshold thorem} below.
    The proofs and the requisites and relations of the parameters will follow immediately.

\begin{lemma}[Circuit reducibility of rectangles]
\label{lemma: circuit-reducibility of rectangles}
Suppose that the parameter $\lambda$ is chosen to satisfy \eqref{eq: lambda in threshold theorem}.
Consider gadgets satisfying Lemma~\ref{lemma: fault-tolerance condition} and Defs.~\ref{def: fault-tolerance condition of gate, measurement for qLDPC codes}--\ref{def: fault-tolerance conditions of the EC gadgets for quantum LDPC codes} with parameters $N$, $D$, $L$, $T$, $\delta_1$, $\delta_2$, and $\delta_3$ satisfying \eqref{eq: c(T) in threshold theorem}, \eqref {eq: delta1}, \eqref {eq: delta2}, and \eqref {eq: delta3}.
If $p_\loc\in \qty(0, q_{\mathrm{loc}}^{\mathrm{th}})$ for $q_{\mathrm{loc}}^{\mathrm{th}}$ in~\eqref{eq: q_th in threshold theorem}, the reducibility relations shown by the 
diagrams~\eqref{eq: 0-state-preparation in threshold thorem}--\eqref{eq: Bell measurement in threshold thorem} hold.
\end{lemma}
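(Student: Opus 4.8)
The plan is to establish each reducibility rule in \eqref{eq: 0-state-preparation in threshold thorem}--\eqref{eq: Bell measurement in threshold thorem} separately, grouping the rectangles by type---preparation, gate/wait, and measurement---and in each case decomposing the reduction into a \emph{decoding step} and a \emph{propagation step} that are then glued together by the transitive rule (Lemma~\ref{lemma: transitive rule}) and embedded into the full circuit by the extension rule (Lemma~\ref{lemma: extension rule}). The propagation step is always an instance of Lemma~\ref{lemma: condition for reducibility}: since every gadget has depth at most $d_0$ and is built from one- and two-qubit physical operations (Defs.~\ref{def: fault-tolerance condition of gate, measurement for qLDPC codes}--\ref{def: fault-tolerance conditions of the EC gadgets for quantum LDPC codes}), a single faulty location spreads to at most $c_\fw=2^{d_0}$ output wires, or $2^{2d_0}$ when a gate gadget is followed by the syndrome-measurement part of an EC gadget, and conversely each output wire receives errors from a bounded number of input locations, fixing $c_\bw^{(1)}$ and $c_\bw^{(2)}$. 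Feeding these combinatorial constants into condition~\ref{cond4 reducibility} of Lemma~\ref{lemma: condition for reducibility} produces exactly the right-hand sides of the defining equations \eqref{eq:q_prep_th}, \eqref{eq:q_gate_th}, and \eqref{eq:q_meas_th}, so that the wait parameters $q_\prep^\thre$, $q_\gat^\thre$, and $q_\meas^\thre$ are precisely the values for which the propagation step is exactly reducible.

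For the gate and wait rectangles I would first handle the EC gadget. Using Def.~\ref{def: fault-tolerance conditions of the EC gadgets for quantum LDPC codes}, the syndrome-extraction part $C_1$ has depth at most $d_0$, so its faults, together with the incoming residual errors, propagate into a local stochastic Pauli error on the $N$ physical qubits of the code block. Invoking the single-shot decoding algorithm (Def.~\ref{def: single-shot decoding algorithm with thresholds to suppress residual error}) with $T$ chosen as in \eqref{eq: c(T) in threshold theorem} and $\lambda$ as in \eqref{eq: lambda in threshold theorem}, I set the flag $m=1$ exactly on the decoding-failure event, so that conditioned on $m=0$ the residual error is local stochastic with parameter $\lambda$; this is a $\delta_1$-reducibility, and the bound $\delta_1\le\delta/(c_W+1)$ from \eqref{eq: delta1} keeps the failure budget controlled. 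The subsequent propagation step then treats the incoming waits as the class $C_1$ with parameter $p^{(1)}=dq_\gat^\thre$ and the post-decoding residual as the class $C_2$ with parameter $p^{(2)}=\lambda$, and Lemma~\ref{lemma: condition for reducibility} together with \eqref{eq:q_gate_th} collapses the whole rectangle into the ideal gate followed by faulty waits of parameter $q_\gat^\thre$. The transitive rule then yields the stated $\delta_1/(c_W+1)$-reducibility.

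The measurement rectangles ($Z_K$ and Bell) are treated the same way, except that, since they terminate the code block and carry no syndrome errors, I use the decoding algorithm for correcting errors (Def.~\ref{def: Logarithmic-time decoding algorithm with threshold}) in place of the single-shot decoder, giving a $\delta_2$-reducibility with $\delta_2\le\delta/(c_W+1)$ from \eqref{eq: delta2} and the wait parameter $q_\meas^\thre$ fixed by \eqref{eq:q_meas_th}. The preparation rectangles are the one case that is not purely internal to the LDPC code: here the gadget is produced by the concatenated-Steane protocol for open circuits, so the decoding step is supplied by Theorem~\ref{Theorem: level-reduction for the circuit that outputs a quantum state}, which certifies that, up to a failure event of probability $\delta_3\le\delta/(c_W+1)$ by \eqref{eq: delta3}, the gadget prepares the intended logical state with a local stochastic residual error; composing this with the trailing EC gadget and the propagation step governed by \eqref{eq:q_prep_th} yields $q_\prep^\thre$. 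In every case the hypothesis $p_\loc<q_\loc^\thre$ guarantees that the input error rates stay below the relevant decoding threshold $p_\dec^\thre$, which is what makes the decoding steps applicable.

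The main obstacle I anticipate lies in the decoding step rather than the bookkeeping of propagation constants: Defs.~\ref{def: single-shot decoding algorithm with thresholds to suppress residual error} and~\ref{def: Logarithmic-time decoding algorithm with threshold} guarantee error suppression only for errors supported on the physical qubits and syndrome bits of a \emph{single} code block, whereas the ambient error model of Def.~\ref{def: Local stochastic error model on physical circuit} lives on the \emph{entire} fault-tolerant circuit and may be correlated across blocks. Bridging this gap is exactly the purpose of the reducibility machinery: I must verify that after forward-propagating and marginalizing via the extension rule (Lemma~\ref{lemma: extension rule}), the error on the isolated code block inherits a genuine local stochastic model with the claimed parameters, so that the single-block decoding guarantee can be applied verbatim as a black box and then re-embedded consistently into the surrounding circuit. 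Getting the conditioning on the flag $m=0$ to interact correctly with the product-form tail bounds of Def.~\ref{def: delta-reducibility}---in particular ensuring that the events that decoding succeeds and that the residual error contains a given set factorize as required---is the delicate point that the transitive and extension rules are designed to discharge.
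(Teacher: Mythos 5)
Your proposal follows essentially the same route as the paper's proof: the same split into preparation, gate/wait, and measurement rectangles; the same propagation step via Lemma~\ref{lemma: condition for reducibility} with $c_\fw\in\{2^{d_0},2^{2d_0}\}$ and the backward constants feeding exactly the defining equations \eqref{eq:q_prep_th}, \eqref{eq:q_gate_th}, and \eqref{eq:q_meas_th}; the same decoding steps (the single-shot decoder of Def.~\ref{def: single-shot decoding algorithm with thresholds to suppress residual error} with $T$ and $\lambda$ fixed by \eqref{eq: c(T) in threshold theorem} and \eqref{eq: lambda in threshold theorem} for gate/wait rectangles, the error-correcting decoder of Def.~\ref{def: Logarithmic-time decoding algorithm with threshold} for measurements, and Theorem~\ref{Theorem: level-reduction for the circuit that outputs a quantum state} for preparations); and the same gluing by Lemmas~\ref{lemma: transitive rule} and~\ref{lemma: extension rule} with the failure budgets \eqref{eq: delta1}--\eqref{eq: delta3}. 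The only slip is cosmetic: the conclusion for gate rectangles should be $c_W\delta_1$-reducibility, with $c_W$ counting the code blocks per gadget, which is then bounded by $\delta$ via \eqref{eq: delta1}---not ``$\delta_1/(c_W+1)$-reducibility''.
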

\begin{widetext}
    \begin{enumerate}
        \item $\delta$-reduciblities of state-preparation rectangles
        \begin{itemize}
        \item  $\ket{0}^{\otimes K}$-state preparation rectangle\\
        \begin{align}
        \includegraphics[width=.9\textwidth]{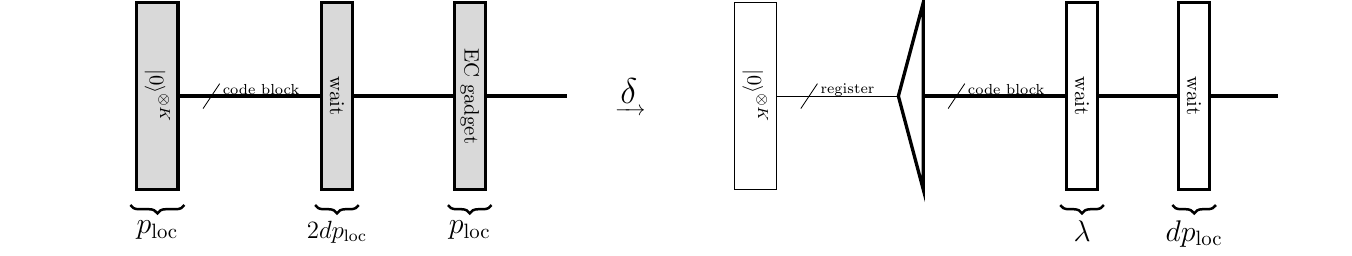}\label{eq: 0-state-preparation in threshold thorem},
        \end{align}
        \item  Clifford-state preparation rectangles\\
        \begin{align}
        \includegraphics[width=.9\textwidth]{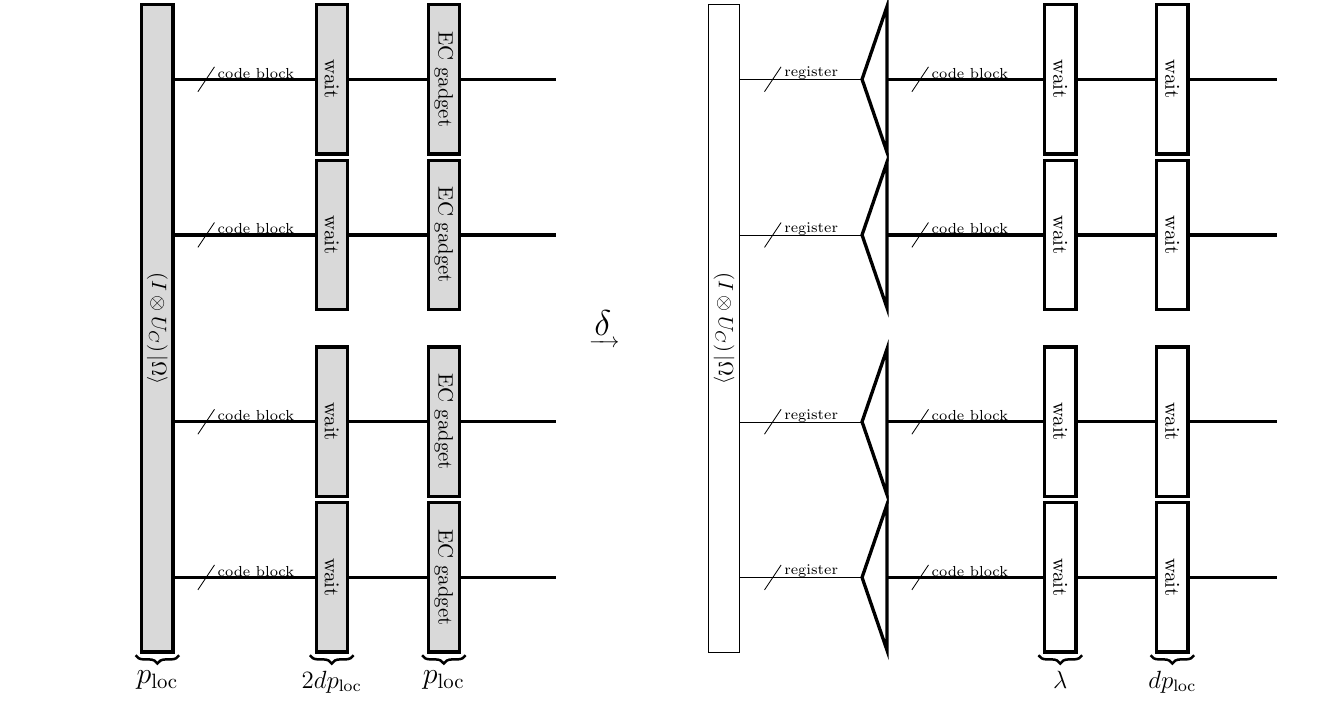}\label{eq: clifford-state-preparation in threshold thorem},
        \end{align}
        \item  magic-state preparation rectangles\\
        \begin{align}
        \includegraphics[width=.9\textwidth]{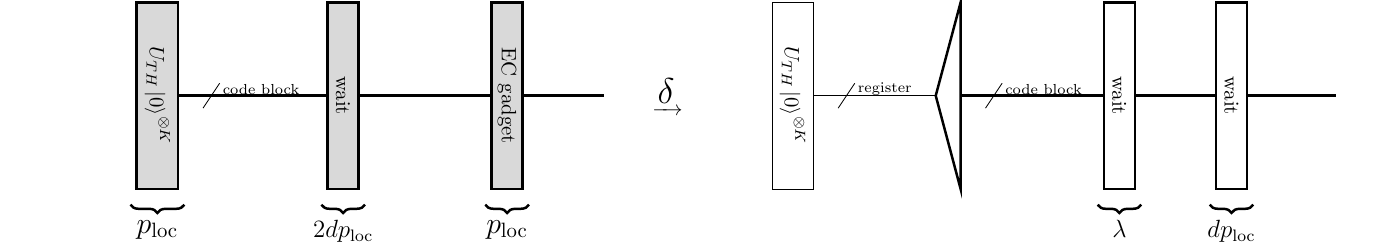}\label{eq: magic-state-preparation in threshold thorem},
        \end{align}
        \end{itemize}

        \item $\delta$-reduciblities of gate rectangles
        \begin{itemize}
        \item CNOT-gate rectangle\\
        \begin{align}
        \includegraphics[width=.85\textwidth]{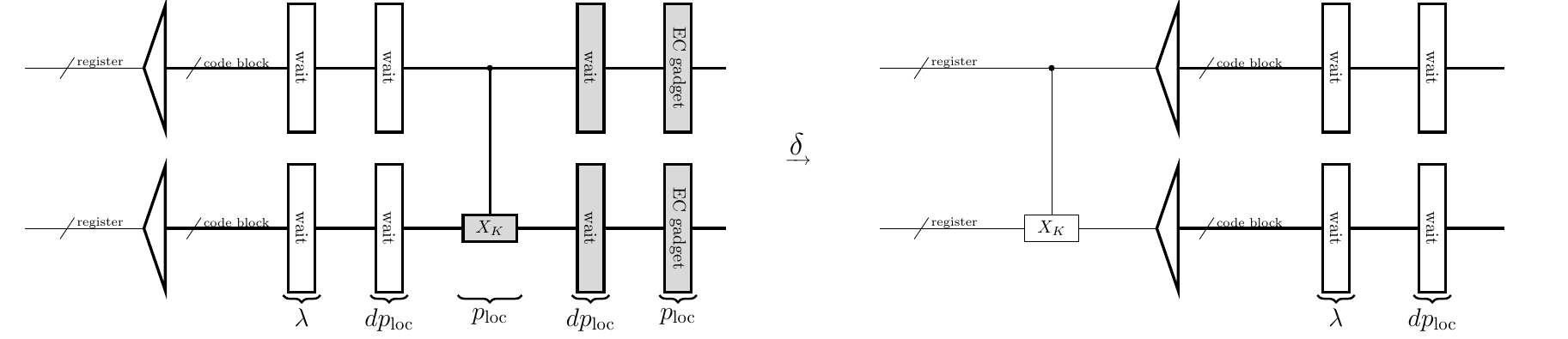}\label{eq: cnot-gate in threshold thorem},
        \end{align}
                \item Pauli-gate rectangles\\
        \begin{align}
        \includegraphics[width=.85\textwidth]{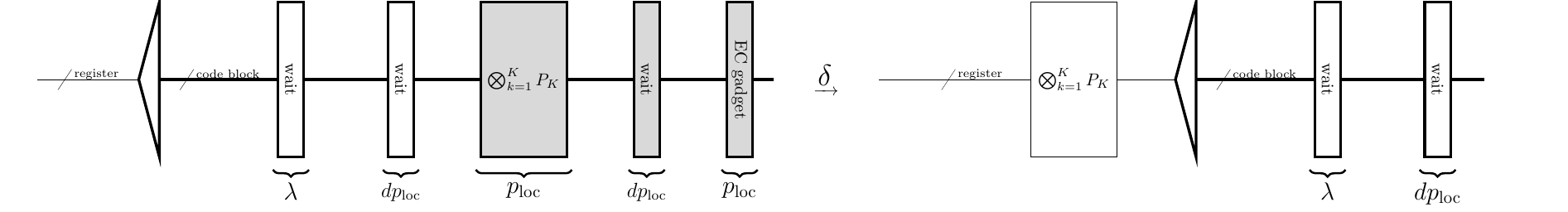}\label{eq: Pauli gate in threshold thorem},
        \end{align}
        \item wait rectangle\\
        \begin{align}
        \includegraphics[width=.85\textwidth]{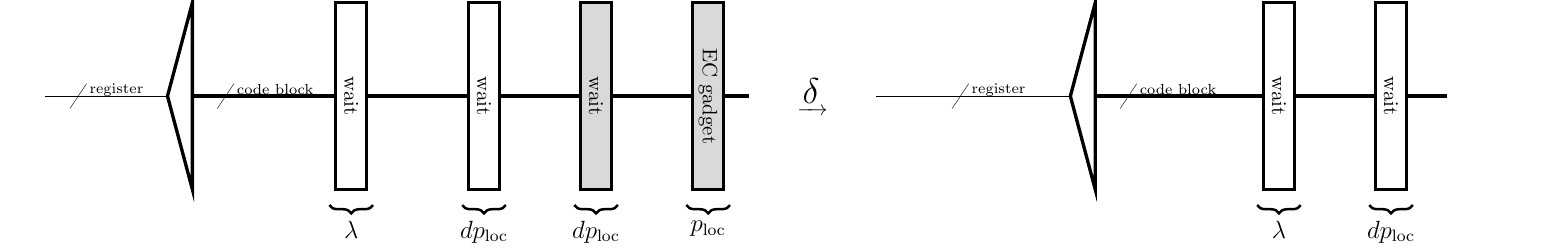}\label{eq: wait in threshold thorem},
        \end{align}        
        \end{itemize}

        \item  $\delta$-reduciblities of measurement rectangles
        \begin{itemize}
        \item $Z_K$-measurement rectangle\\ 
        \begin{align}
        \includegraphics[width=.8\textwidth]{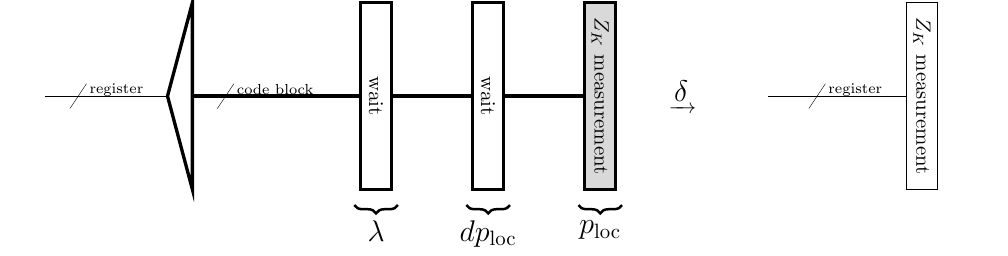}\label{eq: measurement in threshold thorem},
        \end{align}

        \item Bell-measurement rectangle\\ 
        \begin{align}
        \includegraphics[width=.8\textwidth]{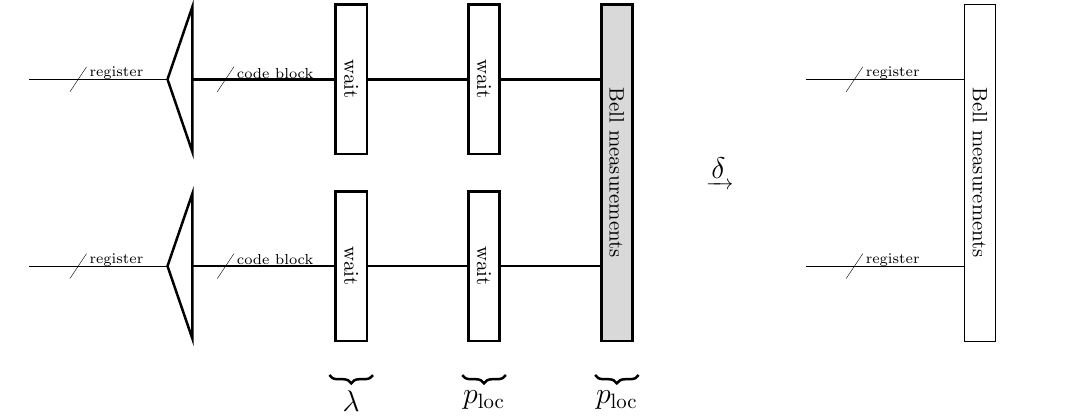}\label{eq: Bell measurement in threshold thorem}.
        \end{align}
    \end{itemize}
    \end{enumerate}
    \end{widetext}

First, we show that the circuit reducibilities of the state-preparation rectangles, i.e., the $\ket{0}^{\otimes K}$-state preparation rectangle in \eqref{eq: 0-state-preparation in threshold thorem}, Clifford-state preparation rectangles in \eqref{eq: clifford-state-preparation in threshold thorem}, and magic-state preparation rectangles in \eqref{eq: magic-state-preparation in threshold thorem}.

\begin{proof}[Proof of~\eqref{eq: 0-state-preparation in threshold thorem},~\eqref{eq: clifford-state-preparation in threshold thorem}, and~\eqref{eq: magic-state-preparation in threshold thorem}]
    Let $C^{\EC}$ denote an EC gadget with depth $d$.
    The EC gadget is an incomplete circuit composed of the syndrome measurement circuit $C^\synd$, followed by a circuit of wait locations during classical computation, and finally a Pauli recovery operation deduced by the decoding algorithm, i.e.,
     \begin{align}
        C^{\mathrm{EC}}\coloneqq C^{\mathrm{synd}}\cup C^{\mathrm{wait}}\cup C^{\mathrm{rec}},
    \end{align}
    where the depth of ${C}^{\mathrm{synd}}$ is bounded by a constant $d_0$ from Lemma~\ref{lemma: fault-tolerance condition}.
    Let $C$ denote an incomplete circuit composed of a state-preparation gadget $\tilde{C}^{\mathrm{prep}}$, followed by a single-depth circuit of wait operations $C^\wait_\sync$, an EC gadget $C^{\mathrm{EC}}$, i.e.,
    \begin{align}
        C\coloneqq \tilde{C}^{\mathrm{prep}}\cup C^{\mathrm{wait}}_\sync\cup C^{\mathrm{EC}},
    \end{align}
    where ${C}^{\mathrm{wait}}_\sync$ is inserted to synchronize the gadgets during the compilation process.
    Here $C$ has a set of error parameters such that
    \begin{align}
    p_i&=p_\loc&&\text{if $j \in C^\mathrm{prep}$},&&&\\
    p_i&\leq 2dp_\mathrm{loc}&&\text{if $j \in C^{\mathrm{wait}}_\sync$},&&&\\
    p_i&\leq p_\mathrm{loc}&&\text{if $j \in {C}^{\mathrm{EC}}$}.&&&
    \end{align}
    Let $C'$ denote an incomplete circuit composed of the original open circuit ${C}^{\mathrm{prep}}$ of $\tilde{C}^{\mathrm{prep}}$, followed by a single-depth circuit $C^\wait_\re$ of wait operations, and a single-depth circuit $C^\wait$ of wait operations, i.e.,
    \begin{align}
         C'\coloneqq {C}^{\mathrm{prep}}\cup {C}^{\mathrm{wait}}_\re \cup {C}^{\mathrm{wait}}.
    \end{align}
    Here, $C'$ has a set of error parameters such that 
    \begin{align}
    q_j&= 0&&\text{if $j \in {C}^{\mathrm{prep}}$},\label{eq:reduced error parameter q_j 2}\\
    q_j&\leq \lambda&&\text{if $j \in {C}^{\mathrm{wait}}_\re$},\label{eq:reduced error parameter q_j 1}\\
    q_j&\leq d p_{\mathrm{loc}}&&\text{if $j \in {C}^{\mathrm{wait}}$}.\label{eq:reduced error parameter q_j 1}
    \end{align}
    As shown in the diagrams, we will show the circuit reducibility from $\{C,\{p_i\}_{i\in C}\}$ to $\{C',\{q_j\}_{j\in C'}\}$.
    To this end, we consider circuit reducibility for $\tilde{C}^{\mathrm{prep}}$, $C^{\mathrm{synd}}$, and the decoding algorithm, individually.
    
    Using Theorem~\ref{Theorem: level-reduction for the circuit that outputs a quantum state} in Appendix~\ref{appendix: fault-tolerant protocol for open quantum circuits}, under the condition that
    \begin{align}
    \label{eq:threshold_1}
        p_{\mathrm{loc}} < p_{\mathrm{loc}}^{\mathrm{th}},
    \end{align}
    a pair $\qty(\tilde{C}^{\mathrm{prep}}, \{p_i\}_{i\in \tilde{C}^{\mathrm{prep}}})$, where the set of error parameters $\{p_i\}_i$ are given by
    \begin{align}
        p_i&\leq p_\mathrm{loc}&&\text{if $i \in  \tilde{C}^{\mathrm{prep}}$},
    \end{align}
    is $\delta_3$-reducible to $\qty(C^{\mathrm{prep}} \cup {C''}^{\mathrm{wait}}, \{p'_j\}_{j\in C^{\mathrm{prep}} \cup {C''}^{\mathrm{wait}}})$, where ${C''}^{\mathrm{wait}}$ is a single-depth circuit of wait locations, and $\{p'_j\}$ is the set of error parameters given by
    \begin{align}
    p'_j&=0&&\text{if $j \in C^\mathrm{prep}$},&&&\\
    p'_j&\leq Mp_\mathrm{loc}&&\text{if $j \in {C''}^{\mathrm{wait}}$}.&&&
    \end{align}
    Here, $M\geq 1$ is a constant.
    Since ${C''}^{\wait}\cup C_\sync^\wait$ consists only of wait operations, it is obviously reducible to ${C''}^{\wait}\cup C_\sync^\wait$ with the error parameter at each wait location in $C_\sync^\wait$ bounded by $(2d+M)p_\loc$, while the error parameter of each wait location in ${C''}^{\wait}$ is zero.
    Thus, from Lemmas~\ref{lemma: transitive rule} and \ref{lemma: extension rule}, $\{C,\{p_i\}_{i\in C}\}$ is $\delta_3$-reducible to an incomplete circuit
    \begin{equation}
        C''\coloneqq C^{\mathrm{prep}}\cup {C''}^{\mathrm{wait}}\cup {C}^{\mathrm{wait}}_\sync\cup C^{\mathrm{EC}},
    \end{equation}
    and the set of error parameters $\{p'_j\}_{j\in C''}$ given by
    \begin{align}
    p'_j&=0&&\text{if $j \in C^\mathrm{prep}\cup{C''}^{\mathrm{wait}}$},&&&\\
    p'_j&\leq (2d+M)p_\mathrm{loc}&&\text{if $j \in {C}^{\mathrm{wait}}_\sync$},&&&\\
    p'_j&\leq p_\mathrm{loc}&&\text{if $j \in {C}^{\mathrm{EC}}$},&&&
    \end{align}
    where
    
    Next, let us consider circuit reduction of an incomplete circuit of
    \begin{equation}
        {C'}^{\mathrm{synd}}\coloneqq  {C}^{\mathrm{wait}}_\sync\cup{C''}^{\mathrm{wait}}\cup C^{\mathrm{synd}},
    \end{equation}
    and the set of error parameters $\{p'_i\}_{i\in {C'}^{\synd}}$, where
    \begin{align}
    p'_j&=0 &&\text{if $i \in  {C}^{\mathrm{wait}}_\sync$}\\
    p'_i&\leq (2d+M)p_\mathrm{loc}&&\text{if $i \in{C''}^{\mathrm{wait}}\cup C^{\mathrm{synd}}$}.
    \end{align}
    Here, the error parameter at each location of $C^{\synd}$ is upper bounded by $p_\mathrm{loc}$, which is also upper bounded by $(2d+M)p_\mathrm{loc}$.

    To apply Lemma~\ref{lemma: condition for reducibility} for the circuit reduction of $C^{\prime \synd}$, we consider a partition of the set $C'^\synd$ into disjoint sets $C_1$ and $C_2$.
    Here, $C_1$ corresponds to ${C}^{\mathrm{wait}}_\sync$, and $C_2$ corresponds to ${C''}^{\mathrm{wait}}\cup C^{\mathrm{synd}}$.
    We consider error propagation from a given set of faulty locations in ${C'}^{\synd}$ to the set of faulty locations by propagating Pauli errors until the time step after the $Z$-basis measurements or the final time step of the wait locations in ${C'}^{\mathrm{synd}}$.
    Let us denote the non-faulty locations in the middle of ${C}^{\prime\synd}$ after this error propagation by ${C}^{\prime\synd}_{\mathrm{nonfaulty}}$, and the faulty locations by ${C}^{\prime\synd}_{\mathrm{nonfaulty}}$, i.e.,
    \begin{align}
        {{C}^{\prime\synd}}={C}^{\prime\synd}_\mathrm{nonfaulty}\cup {C}^{\prime\synd}_\mathrm{faulty}.
    \end{align}
    This propagation defines the map $\Gamma^{\synd}: 2^{ {C'}^{\synd}} \rightarrow 2^{{C}^{\prime\synd}_\mathrm{faulty}}$. 
    A single-qubit Pauli error in the middle of a physical circuit may propagate, after a single-depth part of the physical circuit, to at most two-qubit Pauli errors as operations in the physical circuit act on at most two qubits.
    A Pauli error occurring at any location in ${C}^{\prime\mathrm{synd}}$ may propagate to at most $2^{d_0}$ locations. 
    In contrast, a Pauli error in ${C''}^\wait$ does not propagate; thus, we have 
    \begin{align}
        |\Gamma^\synd(\{i\})|\leq c_\fw =2^{d_0} \text{~for all~} i\in {C'}^{\mathrm{synd}}.
    \end{align}
    Moreover, in the error propagation, each location in ${C}_\mathrm{faulty}^{\mathrm{synd}}$ may be affected by errors occurring in at most $(d_0+1)\times 2^{d_0}$ possible locations in ${C'}^{\mathrm{synd}}$.
    Thus we have for all $j \in {C}^{\prime\synd}_\mathrm{faulty}$
    \begin{align}
        &|\{ i\in C_\sync^\wait\colon  j\in\Gamma^\synd(\{i\})  \}| \le c_\bw^{(1)}=2^{d_0},&&\\
        &|\{ i\in {C''}^{\mathrm{wait}}\cup C^{\mathrm{synd}} \colon  j\in\Gamma^\synd(\{i\})  \}|\nonumber \\
        & \le c_\bw^{(2)}=(d_0+1) 2^{d_0}.&&
    \end{align}
Furthermore, under the partition, we have $p_i=p^{(1)}=0$ for all $i\in C^\wait_\sync$ and $p_i\leq p^{(2)} = (2d+M)p_\mathrm{loc}$ for all $i\in C''^\wait\cup C^\synd$.
    As a result, if we choose $p''_j$ for all $j\in {C}^{\prime\synd}_{\mathrm{faulty}}$ as
    \begin{align}
        p''_j &= 2^{c_{\bw}^{(2)}}\left(1 + \frac{p^{(1)}}{p^{(2)}}\right)^{c_\bw^{(1)}} \qty(p^{(2)})^{1/c_{\fw}}+2^{c_{\bw}^{(1)}}\qty(p^{(1)})^{1/c_{\fw}}\\
        &= 2^{(d_0+1) 2^{d_0}}\qty((2d+M)p_\mathrm{loc})^{1/2^{d_0}},
    \end{align}
    the map $\Gamma^{\synd}$ and the sets of error parameters $\{p'_i\}_{i\in {C'}^{\synd}}$ and $\{p''_j\}_{j\in {C}^{\prime\synd}_{\mathrm{faulty}}}$ satisfy all conditions in Lemma~\ref{lemma: condition for reducibility}.
    Thus, $\qty({C'}^{\synd}, \{p'_i\}_{i\in {C'}^{\synd}})$ is reducible to $\qty({C'}^{\synd}_\mathrm{faulty}, \{p''_j\}_{j\in {C'}^{\synd}_\mathrm{faulty}})$.
    
    Finally, we consider circuit reducibility of $C^{\prime\synd}_{\mathrm{faulty}}$ through the decoding algorithm.
    Let $V$ denote the set of physical qubits of a single code block in the EC gadget, and $W_X$ and $W_Z$ be the sets of syndrome bits for $X$-type and $Z$-type stabilizer generators, respectively, as in Def.~\ref{def: Local stochastic Pauli error model on physical qubits and syndrome bits}.
    The error parameters of the local stochastic Pauli error model on physical qubits and syndrome bits in Def.~\ref{def: Local stochastic Pauli error model on physical qubits and syndrome bits} are bounded by 
    \begin{align}
    p_{\mathrm{phys}}&\leq   2^{(d_0+1) 2^{d_0}}\qty((2d+M)p_\mathrm{loc})^{1/2^{d_0}},\\
    p_{\mathrm{synd}}&\leq 2^{(d_0+1) 2^{d_0}}\qty((2d+M)p_\mathrm{loc})^{1/2^{d_0}}.
    \end{align}
    In this case, if we have
    \begin{align}
    \label{eq: threshold ineq in state preparation}
    p_\mathrm{loc}<\frac{1}{2d+M}\qty(\frac{p_\mathrm{dec}^\mathrm{th}}{  2^{(d_0+1) 2^{d_0}}})^{2^{d_0}}= q^\thre_\prep
    \end{align}
    for $q^\thre_\prep$ in~\eqref{eq:q_prep_th}, then we obtain
    \begin{align} \label{eq:threshold_condition_1} 
      2^{(d_0+1) 2^{d_0}}\qty(  (2d+M)p_\mathrm{loc})^{1/2^{d_0}} < p^\mathrm{dec}_\mathrm{th},
    \end{align}
    so we can use the single-shot decoding algorithm in Def.~\ref{def: single-shot decoding algorithm with thresholds to suppress residual error} with a threshold of $p^{\thre}_{\mathrm{dec}}>0$.
    In our protocol, we choose $T$ in \eqref{eq: c(T) in threshold theorem} to satisfy
    \begin{align}
        c(T)=\frac{\log \lambda}{\log p_\mathrm{dec}^\mathrm{th}},
    \end{align}
    which guarantees that the error parameter on $V$ after a recovery operation is applied as
    \begin{align}
        p_{\mathrm{phys}}\leq \qty(p_\mathrm{dec}^\mathrm{th}) ^{c(T)}\leq \lambda.
    \end{align}
    
    If we associate $V$ with the set of wait locations and $W_X$ and $W_Z$ with $Z_K$-measurement locations, the map $\Gamma^{\mathrm{dec}}$ induced by the decoding algorithm in Def.~\ref{def: single-shot decoding algorithm with thresholds to suppress residual error} satisfies the conditions in Def.~\ref{def: delta-reducibility}.
    Thus, if $p_{\loc}<q_\mathrm{prep}^\mathrm{th}$, the incomplete circuit corresponding to each code block in $C^{\prime\synd}_{\mathrm{faulty}}$, where the error parameter of each location is bounded by $2^{(d_0+1) 2^{d_0}}\qty((2d+M)p_\mathrm{loc})^{1/2^{d_0}}$, is $\delta_1$-reducible to $C^\wait_\re$ with further reduced error parameters $q_j$ at each wait location $j$ is bounded by $\lambda$.
    Note that this argument holds even if the recovery operation is applied after $C^\mathrm{wait}$ in $C^{\mathrm{EC}}$.
    This is because the Pauli operator in $C^{\mathrm{rec}}$ can be back-propagated to $C^{\prime\synd}_{\mathrm{faulty}}$ through $C^{\mathrm{wait}}$, allowing us to rewrite the EC gadget as if $C^{\mathrm{rec}}$ were applied immediately after $C^{\prime\synd}_{\mathrm{faulty}}$.

    Since, by Lemma~\ref{lemma: fault-tolerance condition}, the width of the original circuit that the state-preparation gadget aims to simulate is bounded by $c_WN$ (with a constant $c_W$), the decoding algorithm is performed for at most $c_W$ code blocks.
    Therefore, if $p_{\loc}<q_\mathrm{prep}^\mathrm{th}$, then ${C}_\mathrm{faulty}^{\prime\mathrm{synd}}$ is $c_W\delta_1$-reducible to $C_\re^\wait$ with the error parameter at each wait location on a qubit upper bounded by $\lambda$.
    
    As a result, from Lemmas~\ref{lemma: transitive rule} and \ref{lemma: extension rule}, $\{C,\{p_i\}_{i\in C}\}$ is $\delta$-reducible to $\{C',\{q_j\}_{j\in C'}\}$ by removing the non-faulty part of ${C''}^{\mathrm{wait}}\cup C^{\prime\synd}_{\mathrm{nonfaulty}}$ from $C''$, i.e., the $\delta$-reducibilities of the diagrams in \eqref{eq: 0-state-preparation in threshold thorem}, \eqref{eq: clifford-state-preparation in threshold thorem}, and \eqref{eq: magic-state-preparation in threshold thorem} hold.
    Here, due to the additive bound from Lemma~\ref{lemma: transitive rule}, the parameter $\delta$ follows from
    \begin{align}
        c_W\delta_1(D)+\delta_3(N,L)&\leq (c_W+1)\times \frac{\delta}{c_W+1}\\
        &=\delta,
    \end{align}
    where we used \eqref{eq: delta1} and \eqref{eq: delta3}.
    Thus, we conclude the proof.
\end{proof}

Next, we show that the circuit reducibilities of the CNOT-gate rectangle in \eqref{eq: cnot-gate in threshold thorem}, Pauli rectangles in \eqref{eq: Pauli gate in threshold thorem}, and the wait rectangle in \eqref{eq: wait in threshold thorem}.

\begin{proof}[Proof of~\eqref{eq: cnot-gate in threshold thorem},~\eqref{eq: Pauli gate in threshold thorem}, and~\eqref{eq: wait in threshold thorem}]

    Let $C$ denote an incomplete circuit as
    \begin{align}
        C\coloneqq {C}^{\mathrm{wait}}_\re\cup{C}^{\mathrm{wait}}_\prece\cup C^{\mathrm{gate}}\cup {C}^{\mathrm{wait}}_\sync \cup C^{\mathrm{EC}},
    \end{align}
    where $C$ is composed of a single-depth circuit $C_\re^\wait$ of wait operations, followed by a single-depth circuit $C^{\mathrm{wait}}_\prece$ of wait operations in the preceding EC gadget, followed by a sinlge-depth CNOT-gate, Pauli-gate or wait gadget $C^{\mathrm{gate}}$, further followed by a single-depth circuit of wait locations $C_\sync^\wait$ for the synchronization, an depth-$d$ EC gadget $C^{\mathrm{EC}}$.
    Here, $C$ has a set of error parameters $\{p_i\}_{i\in C}$ such that
        \begin{align}
    p_i&\leq \lambda&&\text{if $i \in {C}^{\mathrm{wait}}_\re$},\label{eq:reduced error parameter q_j 2}\\
   p_i&\leq p_{\mathrm{loc}}&&\text{if $i \in C^\gat$ or $C^\EC $},\\
   p_i&\leq dp_{\mathrm{loc}}&&\text{if $i \in C^\wait_\prece$ or $ C^\wait_\sync $}.
    \end{align}
    Let $C'$ denote an incomplete circuit composed of ${C}^{\mathrm{gate}}$, followed by a single-depth circuit ${C}^{\prime\wait}_\re$, and a single depth circuit ${C}^{\mathrm{wait}}$, i.e.,
    \begin{align}
         C'\coloneqq {C}^{\mathrm{gate}}\cup {C}^{\prime\wait}_\re\cup {C}^{\mathrm{wait}}.
    \end{align}
    Here, $C'$ has a set of error parameters $\{q_j\}_{j\in C'}$ such that
        \begin{align}
    q_j&= 0&&\text{if $j \in {C}^{\mathrm{gate}}$},\label{eq: reduced error parameter in gate gadget 1}\\
    q_j&\leq \lambda&&\text{if $j \in {C}^{\prime\mathrm{wait}}_\re$},\label{eq: error parameter in gate gadget 2}\\
    q_j&\leq dp_{\mathrm{loc}}&&\text{if $j \in {C}^{\mathrm{wait}}$}.\label{eq: error parameter in gate gadget 3}\\
    \end{align}
    As shown in the diagrams, we will show the circuit reducibility from $\{C,\{p_i\}_{i\in C}\}$ to $\{C',\{q_j\}_{j\in C'}\}$.
    To this end, we first consider circuit reduction of an incomplete circuit of
    \begin{equation}
        C''\coloneqq {C}^{\mathrm{wait}}_\re\cup{C}^{\mathrm{wait}}_\prece\cup C^{\mathrm{gate}}\cup {C}^{\mathrm{wait}}_\sync\cup C^{\mathrm{synd}},
    \end{equation}
    and the set of error parameters $\{p_i\}_{i\in C''}$ 

    To apply Lemma~\ref{lemma: condition for reducibility} for the reduction of $C''$, we consider a partition of the set $C''$ into disjoint sets $C_1$ and $C_2$.
    Here, $C_1$ corresponds to $C''\setminus C_\re^\wait$, and $C_2$ corresponds to  $C_\re^\wait$.
    We consider error propagation from a given set of faulty locations in ${C''}$ to the set of locations until the time step after the $Z$-basis measurements or the final time step of the wait locations in ${C^\synd}$.
    Let us denote the non-faulty locations in the middle of $C''$ after this error propagation by $C''_{\mathrm{nonfaulty}}$, and the faulty locations by $C''_{\mathrm{nonfaulty}}$, i.e.,
    \begin{align}
        {C''}=C''_\mathrm{nonfaulty}\cup C''_\mathrm{faulty}.
    \end{align}
    This propagation define the map $\Gamma: 2^{ {C''}}\rightarrow 2^{C''_{\mathrm{nonfaulty}}}$.
    Similarly to the above proof for preparation gadgets, we have
    \begin{align}
        |\Gamma(\{i\})|\leq c'_\fw= 2^{2d_0} \text{~for all~} i\in {C'}^{\mathrm{synd}}.
    \end{align}
    Moreover, in the error propagation, each location in ${C}_\mathrm{faulty}^{\mathrm{synd}}$ may be affected by errors occurring in at most $2^{2d_0}$ possible locations from $C_\re^\wait$ and at most $2^{2d_0}\times 2(d_0+1)=2^{2d_0+1}(d_0+1)$ possible locations from ${C''}\setminus C_\re^\wait$; thus we have for all $j \in C''_{\mathrm{faulty}}$,
    \begin{align}
        &|\{ i\in {C''}\setminus{C}^\wait_\re \colon  j\in\Gamma^\synd(\{i\})  \}| \nonumber\\
        &\le c_\bw^{\prime (1)}= 2^{2d_0+1}(d_0+1),\\
        &|\{ i\in {C}^\wait_\re \colon  j\in\Gamma^\synd(\{i\})  \}| \le c_\bw^{\prime (2)}= 2^{2d_0}.
    \end{align}  
Furthermore, under the partition, we have $p_i=p^{(1)}= dp_\loc$ for all $i\in C''\setminus C^\wait_\re $ and $p_i\leq p^{(2)} =\lambda$ for all $i\in C^\wait_\re$.
    As a result, if we choose $p''_j$ for all $j\in C''_{\mathrm{faulty}}$ as
    \begin{align}
        p''_j&= 2^{c_{\bw}^{\prime (2)}}\left(1 + \frac{p^{(1)}}{p^{(2)}}\right)^{c_\bw^{\prime(1)}} \qty(p^{(2)})^{1/c'_{\fw}}+2^{c_{\bw}^{\prime(1)}}\qty(p^{(1)})^{1/c'_{\fw}}\\
        &=2^{2^{2d_0}}\qty(1+\frac{d p_\loc}{\lambda})^{2^{2d_0+1}(d_0+1)}{\lambda}^{1/2^{2d_0}}\nonumber \\
        &~~~~~~~~~+2^{2^{2d_0+1}(d_0+1)}(dp_\loc)^{1/2^{2d_0}},
    \end{align}
    the map $\Gamma$ and the sets of error parameters $\{p_i\}_{i\in {C''}}$ and $\{p''_j\}_{j\in C''_{\mathrm{faulty}}}$ satisfy all conditions in Lemma~\ref{lemma: condition for reducibility}.
    Thus, a pair $\qty(C'',\{p_i\}_{i\in C''})$ is reducible to $\{C''_{\mathrm{faulty}},\{p''_j\}_{j\in C''_{\mathrm{faulty}}}\}$.

    Next, we consider the circuit reducibility with the decoding algorithm.
    We can bound the error parameters of the local stochastic Pauli error model on physical qubits and syndrome bits in Def.~\ref{def: Local stochastic Pauli error model on physical qubits and syndrome bits} as
    \begin{align}
        p_{\mathrm{phys}}&\leq   2^{2^{2d_0}}\qty(1+\frac{d p_\loc}{\lambda})^{2^{2d_0+1}(d_0+1)}{\lambda}^{1/2^{2d_0}}\nonumber \\
        &~~~~~~~~~+2^{2^{2d_0+1}(d_0+1)}(dp_\loc)^{1/2^{2d_0}},\\
        p_{\mathrm{synd}}&\leq   2^{2^{2d_0}}\qty(1+\frac{d p_\loc}{\lambda})^{2^{2d_0+1}(d_0+1)}{\lambda}^{1/2^{2d_0}}\nonumber \\
        &~~~~~~~~~+2^{2^{2d_0+1}(d_0+1)}(dp_\loc)^{1/2^{2d_0}}.
    \end{align}
    In this case, if we have
    \begin{align}
        p_\mathrm{loc}<q_\gat^\thre
    \end{align}
    for $q_\gat^\thre$ in~\eqref{eq:q_gate_th},
    then we have
\begin{align}
\label{eq: lambda condition in gate rectangle}
      &2^{2^{2d_0}}\qty(1+\frac{d p_\loc}{\lambda})^{2^{2d_0+1}(d_0+1)}{\lambda}^{1/2^{2d_0}}\nonumber \\
        &~~~~~~~~~+2^{2^{2d_0+1}(d_0+1)}(dp_\loc)^{1/2^{2d_0}} < p_\dec^\thre,
\end{align}
so we can use the single-shot decoding algorithm in Def.~\ref{def: single-shot decoding algorithm with thresholds to suppress residual error} with a threshold of $p^{\thre}_{\mathrm{dec}}>0$.
Thus, if $p_\loc<p_\gat^\thre$, due to \eqref{eq: c(T) in threshold theorem}, the single-shot decoding algorithm in Def.~\ref{def: single-shot decoding algorithm with thresholds to suppress residual error} guarantees that the error parameter on physical qubits after a recovery operation is applied is bounded by
\begin{align}
\label{eq:threshold_condition_2}
    p_{\mathrm{phys}}< \qty(p_\mathrm{dec}^\mathrm{th}) ^{c(T)}\leq \lambda.
\end{align}

Since, by Lemma~\ref{lemma: fault-tolerance condition}, the width of the gate gadget is bounded by $c_WN$, the decoding algorithm is performed for at most $c_W$ code blocks.
Then, using the decoding algorithm in Def.~\ref{def: single-shot decoding algorithm with thresholds to suppress residual error}, $\qty(C''_{\mathrm{faulty}},\{p''_i\}_i)$
is $c_W\delta_1$-reducible to $\qty(C_\re^{\prime\wait},\{q_j\}_j)$, where $q_j\leq \lambda$ if $j$ in $C_\re^{\prime\wait}$, and the prefactor of $c_W\delta_1$ is due to the gadgets having at most $c_W$ code blocks.

As a result, from Lemmas~\ref{lemma: transitive rule} and \ref{lemma: extension rule} and by following the same argument as the above proof for the preparation gadgets, we see that $\qty(C,\{p_i\})_{i\in C}$ is $\delta$-reducible to $\qty(C',\{q_j\})_{j\in C'}$, i.e., the $\delta$-reducibilities of the diagrams in \eqref{eq: cnot-gate in threshold thorem}, \eqref{eq: Pauli gate in threshold thorem}, and \eqref{eq: wait in threshold thorem} hold due to 
\begin{equation}
    c_W\delta_1\leq \delta,
\end{equation}
where we used~\eqref{eq: delta1}.
\end{proof}

Finally, we show that the circuit reducibilities of the $Z_K$-measurement rectangle in \eqref{eq: measurement in threshold thorem} and the Bell measurement rectangle in \eqref{eq: Bell measurement in threshold thorem}.

\begin{proof}[Proof of~\eqref{eq: measurement in threshold thorem} and~\eqref{eq: Bell measurement in threshold thorem}]

Let $C$ denote an incomplete circuit composed of an at single-depth circuit ${C}^{\mathrm{wait}}_\re$, followed by a single-depth circuit $C^\wait$ of wait operations, followed by an at most depth-$d_0$ circuit in $Z_K$- or Bell-measurement gadget $C^{\mathrm{meas}}$, i.e.,
    \begin{align}
        C=C^\wait_\re\cup {C}^{\mathrm{wait}}\cup C^{\mathrm{meas}}.
    \end{align}
    Here, $C$ has the set of error parameters
        \begin{align}
    p_i&\leq \lambda&&\text{if $i \in {C}^{\mathrm{wait}}_\re$},\label{eq:reduced error parameter q_j 2}\\
   p_i&\leq p_\loc&&\text{if $i \in C^\meas$},\\
   p_i&\leq dp_{\mathrm{loc}}&&\text{if $i \in  C^\wait $}.
    \end{align}
    Let $C'$ denote an incomplete circuit composed only of $C^{\mathrm{meas}}$, i.e.,
    \begin{align}
        C'=C^{\mathrm{meas}}.
    \end{align}
    Here, $C'$ has the set of error parameters
        \begin{align}
    q_j&= 0&&\text{if $j \in C'$}.\label{eq:reduced error parameter q_j 2}
    \end{align}

To apply Lemma~\ref{lemma: condition for reducibility} for the reduction of $C$, we consider a partition of the set $C$ into disjoint sets $C_1$ and $C_2$.
    Here, $C_1$ corresponds to $C\setminus C_\re^\wait$, and $C_2$ corresponds to  $C_\re^\wait$.
    We consider the propagation of Pauli errors at the faulty locations in $C$ to the time step after the $Z$-basis measurements in ${C}^{\mathrm{meas}}$. 
    Let us denote the non-faulty locations in the middle of $C$ after this error propagation by $C_{\mathrm{nonfaulty}}$, and the faulty locations by $C_{\mathrm{faulty}}$, i.e.,
    \begin{align}
        C=C_\mathrm{nonfaulty}\cup C_\mathrm{faulty}.
    \end{align}
    By considering a map $\Gamma\colon 2^C\rightarrow 2^{C_\mathrm{faulty}}$, we have
    \begin{align}
        |\Gamma(\{i\})|\leq c''_\fw= 2^{d_0} \text{~for all~} i\in {C'}^{\mathrm{synd}}.
    \end{align}

    Moreover, in the error propagation, each location in ${C}_\mathrm{faulty}$ may be affected by errors occurring in at most $2^{d_0}$ possible locations from $C^\wait_\re$ and at most $2^{d_0+1}\times (d_0+1)$ possible locations from $C^\wait\cup C^\meas$; thus we have for all $j \in C_{\mathrm{faulty}}$,
    \begin{align}
        &|\{ i\in C^\wait\cup C^\meas \colon  j\in\Gamma(\{i\})  \}| \le c_\bw^{\prime\prime(1)} = 2^{d_0+1} (d_0+1),\\
        &|\{ i\in {C}^\wait_\re \colon  j\in\Gamma(\{i\})  \}| \le c_\bw^{\prime\prime(2)} =2^{d_0}.
    \end{align}    
Furthermore, under the partition, we have $p_i=p^{(1)}= dp_\loc$ for all $i\in C\setminus C^\wait_\re $ and $p_i\leq p^{(2)} =\lambda$ for all $i\in C^\wait_\re$.
    As a result, if we choose $p'_j$ for all $j\in C_{\mathrm{faulty}}$ as
    \begin{align}
        p'_j&= 2^{c_{\bw}^{\prime\prime(2)}}\left(1 + \frac{p^{(1)}}{p^{(2)}}\right)^{c_\bw^{\prime\prime(1)}} \qty(p^{(2)})^{1/c''_{\fw}}+2^{c_{\bw}^{\prime\prime(1)}}\qty(p^{(1)})^{1/c''_{\fw}}\\
        &=2^{2^{d_0}}\qty(1+\frac{d p_\loc}{\lambda})^{2^{d_0+1}(d_0+1)}{\lambda}^{1/2^{d_0}}\nonumber \\
        &~~~~~~~~~~~~~~~~~~~~~~~~+2^{2^{d_0+1}(d_0+1)}(dp_\loc)^{1/2^{d_0}},
    \end{align}
    the map $\Gamma$ and the sets of error parameters $\{p_i\}_{i\in {C}}$ and $\{p'_j\}_{j\in C_{\mathrm{faulty}}}$ satisfy all conditions in Lemma~\ref{lemma: condition for reducibility}.
    Thus, from Lemma~\ref{lemma: condition for reducibility}, a pair $\qty({C, \{p_i\}_i)})$ is reducible to $\qty( C_\mathrm{faulty}, \{p'_j\}_j)$, where 
    \begin{align}
    &p'_j\leq  2^{2^{d_0}}\qty(1+\frac{d p_\loc}{\lambda})^{2^{d_0+1}(d_0+1)}{\lambda}^{1/2^{d_0}}\nonumber \\
        &~~~~~~~~~~~~~~~~~~~~~~~~+2^{2^{d_0+1}(d_0+1)}(dp_\loc)^{1/2^{d_0}}\\
    &\text{for all $j\in C_{\mathrm{faulty}}$}
    \end{align}

    Since there are no syndrome bit errors occurred, we can bound the error parameters of the local stochastic Pauli error model on physical qubits in Def.~\ref{def: Local stochastic Pauli error model on physical qubits and syndrome bits} as
    \begin{align}
        p_{\mathrm{phys}}&\leq 2^{2^{d_0}}\qty(1+\frac{d p_\loc}{\lambda})^{2^{d_0+1}(d_0+1)}{\lambda}^{1/2^{d_0}}\nonumber \\
        &~~~~~~~~~~~~~~~~~~~~~~~~+2^{2^{d_0+1}(d_0+1)}(dp_\loc)^{1/2^{d_0}}.
    \end{align}
In this case, if we have
\begin{align}
    p_\mathrm{loc}<q_\meas^\thre
\end{align}
for $q_\meas^\thre$ in~\eqref{eq:q_meas_th},
then we obtain
\begin{align}
\label{eq: threshold ineq in meas}
      &2^{2^{d_0}}\qty(1+\frac{d p_\loc}{\lambda})^{2^{d_0+1}(d_0+1)}{\lambda}^{1/2^{d_0}}\nonumber \\
        &~~~~~~~~~+2^{2^{d_0+1}(d_0+1)}(dp_\loc)^{1/2^{d_0}}<p_\mathrm{dec}^{\mathrm{th}},
\end{align}
so we can use the decoding algorithm for correcting errors in Def.~\ref{def: Logarithmic-time decoding algorithm with threshold}.

Since, by Lemma~\ref{lemma: fault-tolerance condition}, the width of measurement gadget is bounded by $c_WN$, the decoding algorithm is performed for at most $c_W$ code blocks.
Then, if $p_\loc<q_\meas^\thre$, by using the decoding algorithm in Def.~\ref{def: Logarithmic-time decoding algorithm with threshold}, $\qty(C_{\mathrm{faulty}},\{p'_i\}_i)$ is $c_W\delta_2$-reducible to $\qty(C_{\mathrm{faulty}},\{q_j\}_j)$, where $q_j=0$, and the prefactor of $c_W\delta_2$ is due to the gadgets having at most $c_W$ code blocks.

Thus, from Lemmas~\ref{lemma: transitive rule} and \ref{lemma: extension rule},
we see that $(C,\{p_i\}_{i\in C})$ is $\delta$-reducible to $(C',\{q_j\}_{j\in C'})$, i.e.,
the $\delta$-reducibilities of the diagrams in \eqref{eq: measurement in threshold thorem} and \eqref{eq: Bell measurement in threshold thorem} hold due to 
\begin{equation}
    c_W\delta_2\leq \delta,
\end{equation}
where we used~\eqref{eq: delta2}.
\end{proof}

By sequentially applying the circuit reduction procedure to each rectangle in the fault-tolerant circuit, using Lemma~\ref{lemma: circuit-reducibility of rectangles}, along with Lemmas~\ref{lemma: transitive rule} and \ref{lemma: extension rule}, we replace each rectangle with its non-faulty version from the start to the end.
After all rectangles are replaced, a non-faulty version of the fault-tolerant circuit is obtained.
In the following, we show that the overall failure probability of these applications of the circuit reduction procedure is bounded by $\varepsilon$, which completes the proof of the threshold theorem for our protocol with the quantum LDPC codes.

Suppose that $p_\mathrm{loc}<q_\mathrm{loc}^\mathrm{th}$ for the threshold $q_\mathrm{loc}^\mathrm{th}$ in~\eqref{eq: q_th in threshold theorem}.
Since our protocol allocates five auxiliary registers for each register, the number of rectangles in the fault-tolerant circuit is bounded by
\begin{align}
    &6\kappa(n,\varepsilon)\times D_\FT(n,\varepsilon).
\label{eq: number of gadgets in the physical circuit}
\end{align}
where $\kappa(n,\varepsilon)$ and $D_\FT(n,\varepsilon)$ are bounded by~\eqref{eq: kappa in threshold theorem} and~\eqref{eq: depth of the fault-tolerant circuit}.
Then, from Lemma~\ref{lemma: circuit-reducibility of rectangles}, each rectangle is $\delta(n,\varepsilon)$-reducible to a non-faulty rectangle with $\delta(n,\varepsilon)$ given by \eqref{eq: delta in threshold theorem}.
Therefore, from Lemmas~\ref{lemma: transitive rule} and \ref{lemma: extension rule}, by performing the partial circuit reductions to all rectangles in ${C}_n^{\mathrm{FT}}$ from state-preparation gadgets to measurement gadgets sequentially, we obtain a parameter $\delta^{\FT}(n,\varepsilon)$ such that a pair $\qty({C}_n^{\mathrm{FT}}, \{p_i\}_{i\in C_n^{\FT}})$ with $p_i\leq p_{\loc}$ for all $i\in p_{\loc}$ is $\delta^{\FT}(n,\varepsilon)$-reducible to another pair $\qty({C}_n^{\prime\mathrm{FT}}, \{q_i\}_{i\in {C}_n^{\prime\FT}})$ with $q_j=0$ for all $j\in {C}_n^{\prime\FT}$.
By counting the number of rectangles in $C^{\mathrm{FT}}_n$, the additive bound in Lemma~\ref{lemma: transitive rule} provides $\delta^{\FT}(n,\varepsilon)$ as
\begin{align}
    &\delta^\FT(n,\varepsilon)\nonumber\\
    &\leq \delta(n,\varepsilon)\times \qty(6\kappa(n,\varepsilon)\times D_\FT(n,\varepsilon))\\
    &=O\qty(\qty(\frac{\varepsilon}{|C_n^\org|})^2 \frac{W(n)}{K(n)}D(n)\log^{2\alpha+\gamma_1+\gamma_2}\qty(\frac{|C_n^\org|}{\varepsilon}))\\
    &=O\qty(\varepsilon\times\qty(\frac{\varepsilon}{|C_n^\org|})\log^{\alpha+\gamma_1+\gamma_2}\qty(\frac{|C_n^\org|}{\varepsilon})),
\end{align}
where the second line follows from~\eqref{eq: kappa in threshold theorem},~\eqref{eq: depth of the fault-tolerant circuit},~\eqref{eq: DFT in threshold theorem}, and~\eqref{eq: delta in threshold theorem}, the third line from~\eqref{eq:assumption_circuit_size} and~\eqref{eq: Ktheta in threshold theorem}.
Thus, for every fixed $\varepsilon$, it holds for sufficiently large $n$ that
\begin{align}
    &\delta^\FT(n,\varepsilon)\leq \frac{\varepsilon}{2}.
\label{eq: failure probability of circuit reduction}
\end{align}

Let $q^{\mathrm{org}}$ be the output probability distribution of the original circuit and $q^{\mathrm{FT}}$ be the output probability distribution of the fault-tolerant circuit.
Then, we have
\begin{equation}
    q^{\mathrm{FT}}=\qty(1-\delta^\FT(n,\varepsilon))q^{\mathrm{org}}+\delta^\FT(n,\varepsilon) q^{\mathrm{fail}}.
\end{equation}
Therefore, the total variation distance between $q^{\mathrm{org}}$ and $q^{\mathrm{FT}}$ can be bounded as
\begin{align}
    &\sum_{x\in \mathbb{F}_2^{W(n)}}\left|q^{\mathrm{org}}(x)-q^{\mathrm{FT}}(x)\right|_1\nonumber\\
    &=\delta^\FT(n,\varepsilon) \sum_{x\in \mathbb{F}_2^{W(n)}}\left|q^{\mathrm{org}}(x)-q^{\mathrm{fail}}(x)\right|\\
    &\leq \varepsilon,
\label{eq:  failure probability total variation}
\end{align}
where we used~\eqref{eq: failure probability of circuit reduction} and $\sum_{x}\left|q^{\mathrm{org}}(x)-q^{\mathrm{fail}}(x)\right|\leq 2$.
As a result, $q_\loc^\thre$ in \eqref{eq: q_th in threshold theorem} defines the threshold of our protocol such that if $p_\loc\leq q_\loc^\thre$, the failure probability of a fault-tolerant simulation can be bounded by $\varepsilon$, which concludes the proof of Theorem~\ref{theorem: threshold theorem for polylog-time- and constant-space-overhead fault-tolerant quantum computation}.

\section{Conclusion}
\label{sec:conclusion}
In this work, we present a hybrid fault-tolerant protocol that combines concatenated codes for gate operations with a non-vanishing-rate quantum LDPC code (in particular, the quantum expander codes~\cite{Leverrier2015QuantumExpander, Fawzi_2018_eff, Fawzi_2018}) for quantum memory. 
Our protocol achieves polylogarithmic time overhead while maintaining constant space overhead.
This approach improves upon the time overhead of existing constant-space-overhead protocols, which exhibit either polynomial time overhead with non-vanishing-rate quantum LDPC codes~\cite{Gottesman2014Constant, Fawzi_2018, grospellier:tel-03364419}, as well as the quasi-polylogarithmic time overhead of the protocol with concatenated quantum Hamming codes~\cite{Yamasaki_2024}. 
This improvement is achieved by increasing the parallelism for executing logical gates from $O(W(n)/\mathrm{poly}(n))$ to $O(W(n)/\mathrm{polylog}(n))$, improving over Ref.~\cite{Gottesman2014Constant}, by incorporating recent developments in the analysis of decoding algorithms~\cite{Fawzi_2018, grospellier:tel-03364419}.
These results eliminate redundant spacetime tradeoffs present in the analyses of the existing constant-space-overhead FTQC protocols~\cite{Gottesman2014Constant, Fawzi_2018, grospellier:tel-03364419,Yamasaki_2024}; in particular, compared to the conventional fault-tolerant protocols with polylogarithmic space and time overheads, our protocol makes it possible to reduce the space overhead to constant without increasing the time overhead, up to the degree of the polylogarithmic factors.

Moreover, from the perspective of technical contriubution,
we introduce a new technique called partial circuit reduction to analyze errors in fault-tolerant circuits by locally analyzing the errors of each rectangle; with this technique, the analysis completes the proof of the threshold theorem for the constant-space-overhead protocol with quantum LDPC codes, closing a logical gap overlooked in the existing analyses~\cite{Gottesman2014Constant, Fawzi_2018, grospellier:tel-03364419} that did not explicitly argue the error model of the entire fault-tolerant circuit when analyzing each gadget. 
In addition, our proof of the threshold theorem works even when accounting for the runtime of classical computation during executing the fault-tolerant circuit, e.g., for decoding and gate teleportation, whereas existing analyses of protocols for quantum LDPC codes assume that such classical computation is instantaneous~\cite{Gottesman2014Constant, Fawzi_2018, grospellier:tel-03364419}.

These results highlight that the quantum LDPC code approach can achieve a low-time overhead FTQC while maintaining a constant space overhead, as well as the code-concatenation approaches~\cite{Yamasaki_2024, yoshida2024concatenate}, contributing to the fundamental understanding of low-overhead FTQC. 
Considering recent advances in experiments~\cite{Bluvstein_2023,xu_constant-overhead_2024,PhysRevX.13.041052,sunami2024scalablenetworkingneutralatomqubits,PhysRevLett.129.050504} and implementation methods of LDPC codes~\cite{bravyi2024high, delfosse2021bounds,Tremblay_2022,berthusen20242d, pattison2023hierarchical}, a comprehensive investigation is needed to compare the quantum LDPC codes approach and the code-concatenation approach for low-overhead FTQC\@.
Our results are fundamental for the future investigation of this issue to reveal which of the two approaches holds more promise for the future physical realization of FTQC.

\paragraph*{Note added}:
After the completion of our manuscript, a paper presenting a fault-tolerant protocol with constant space overhead and polylogarithmic time overhead~\cite{Nguyen2024} was uploaded, relying on a substantially different approach from ours. 
The primary contribution of our work, distinct from theirs, lies in the proof of the threshold theorem in Theorem~\ref{theorem: threshold theorem for polylog-time- and constant-space-overhead fault-tolerant quantum computation}, which takes into account the non-zero runtime of classical computation such as that of decoding algorithms as presented in Assumption~\ref{assump: non-vanishing-rate quantum LDPC with efficient decoding algorithm}. 
Furthermore, their work depends on unpublished assertions required for the existence of positive-rate quantum locally testable codes (QLTCs) with inverse-polylogarithmic relative distance, and inverse-polylogarithmic soundness (see private-communication citations in Theorems~6.9 and 6.11 (footnote~40) of Ref.~\cite{Nguyen2024}). 
In contrast, our work completes the proof of the threshold theorem with constant space overhead and polylogarithmic time overhead without relying on conjectures.

\begin{acknowledgements}
This work was supported by JST [Moonshot R\&D][Grant Number JPMJMS2061].
S.T.\ was supported by JSPS KAKENHI Grant Number 23KJ0521, and FoPM, WINGS Program, the University of Tokyo.
H.Y.\ acknowledges JST PRESTO Grant Number JPMJPR201A, JPMJPR23FC, JSPS KAKENHI Grant Number JP23K19970, and MEXT Quantum Leap Flagship Program (MEXT QLEAP) JPMXS0118069605, JPMXS0120351339\@.
The quantum circuits shown in this paper are drawn using \textsc{qpic} \cite{qpic}.

\end{acknowledgements}

\appendix

\section{Fault-tolerant protocol with concatenated Steane codes for open quantum circuits \label{appendix: fault-tolerant protocol for open quantum circuits}}
In this section, we explain a protocol based on concatenated Steane codes to simulate original open circuits that output a quantum state rather than measurement outcomes.
The protocol will be used to simulate the original open circuits to prepare logical states of a quantum LDPC code, and the encoded states are used to perform logical Clifford gates and logical $T$ and $T^{\dagger}$ gates acting on logical qubits in a quantum LDPC code via gate teleportation~\cite{gottesman2010introduction,knill2005scalable,Knill_2005}.
As discussed later in Sec.~\ref{sec: Construction of fault-tolerant gadgets and abbreviations(qLDPC)}, this protocol is not used in isolation.
It is integrated into the protocol with quantum LDPC codes in such a way that we use the concatenated Steane code to prepare encoded states of quantum LDPC codes in a fault-tolerant way.
The existing analysis~\cite{Gottesman2014Constant,Fawzi_2018,grospellier:tel-03364419} of the fault-tolerant protocol with quantum LDPC codes does not explicitly bound the error rate in preparing the encoded states, and one of our contributions here is to provide a thorough analysis including this part of the protocol to present a complete threshold theorem of the overall protocol.

\subsection{Compilation from original open circuit to fault-tolerant circuit \label{sec: Compilation from original open circuit to fault-tolerant circuit}}

We describe the procedure of our protocol to compile an original open circuit $C$ into a physical circuit that achieves the target failure probability for state preparation, $\delta_3 > 0$, in Def.~\ref{def: fault-tolerance conditions of the state preparation gadgets for quantum LDPC codes}.
To simulate original open circuits, we use a concatenated Steane code as described in Sec.~\ref{sec:Concatenated Steane codes} to simulate original open circuits, with code parameters
\begin{equation}
    [[N=7^L, K=1, D=3^L]],
\end{equation}
where $L$ represents the concatenation level.
For each original circuit $C$, the protocol selects a suitable value of $L$ to achieve the target failure probability $\delta_3$.
The protocol initially represents the original circuit as a level-$L$ circuit, which consists of elementary operations on level-$L$ qubits. 
Next, for $l\in\{L,\ldots,1\}$, the protocol recursively constructs a level-$(l-1)$ circuit from the level-$l$ circuit by decomposing it into a sequence of elementary operations acting on level-$(l-1)$ qubits.
At the base level, the level-$0$ circuit corresponds to a physical circuit.
Each elementary operation in a level-$l$ circuit is referred to as a location within that circuit.

For a level-$l$ circuit, we require that at most one elementary operation can act on a single level-$l$ qubit in a single time step.
The depth of a level-$l$ circuit is determined by the number of time steps.
A set of elementary operations includes $\ket{0}$-state preparation, $\ket{T}$-state preparation, $H$-, $S$-, $S^{\dagger}$-, CNOT-, and Pauli-gate operations, $Z$-basis measurement operation, and a wait operation, where $\ket{T}\coloneqq TH\ket{0}$.
For each elementary operation in the level-$l$ circuit, we define the level-$l$ gadget for the elementary operation, which is a level-$(l-1)$ circuit to carry out the logical elementary operation on logical qubits of the code blocks of the Steane code.
We also define a level-$l$ error-correction (EC) gadget, which is a level-$(l-1)$ circuit that performs error correction on a set of seven level-$(l-1)$ qubits forming the Steane code.

Moreover, we introduce a level-$l$ decoding gadget, which is a level-$(l-1)$ circuit to transform a logical state encoded in the Steane code consisting of the set of seven level-$(l-1)$ qubits to the same state of an unencoded level-$(l-1)$ qubit.

The elementary operations are represented in a diagram.
In the diagram, a change from a dashed input line to a solid output line represents the allocation of a level-$l$ qubit, while a change from a solid line to a dashed line represents deallocation. 
If both the input and output lines are dashed, the corresponding level-$l$ qubits are used as a workspace to perform an elementary operation.
A double output line represents a bit allocated by an elementary operation.

The elementary operations are represented as follows.
\begin{itemize}
    \item $\ket{0}$-state preparation
\begin{align}
    \includegraphics{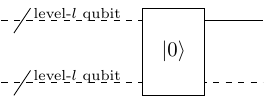}\label{Fig: 0-state-preparation}.
\end{align}
The $\ket{0}$-state preparation allocates a single level-$l$ qubit that is prepared in the state $\ket{0}$.

    \item $\ket{T}$-state preparation
\begin{align}
    \includegraphics{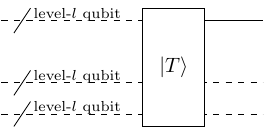}\label{Fig: magic-state-preparation}.
\end{align}
The $\ket{T}$-state preparation allocates a single level-$l$ qubit that is prepared in the state $\ket{T}$.

    \item $H$-gate operation
\begin{align}
    \includegraphics{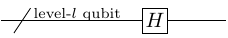}\label{Fig: h-gate}.
\end{align}
The $H$-gate operation applies the $H$ gate to a level-$l$ qubit.
    \item $S$-gate operation
\begin{align}
    \includegraphics{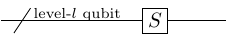}\label{Fig: s-gate}.
\end{align}
The $S$-gate operation applies the $S$ gate to a level-$l$ qubit.

    \item $S^{\dagger}$-gate operation
\begin{align}
    \includegraphics{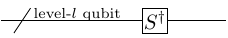}\label{Fig: s-dag-gate}.
\end{align}
The $S^{\dagger}$-gate operation applies the $S^{\dagger}$ gate to a level-$l$ qubit.

    \item CNOT-gate operation
\begin{align}
    \includegraphics{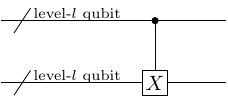}\label{Fig: cnot-gate}.
\end{align}
The CNOT-gate operation applies the CNOT gate between two level-$l$ qubits.

    \item Pauli-gate operation
\begin{align}
    \includegraphics{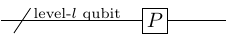}\label{Fig: pauli-gate}.
\end{align}
A Pauli-gate operation applies a Pauli gate $P\in\Tilde{\mathcal{P}}_1$ to a level-$l$ qubit.
\end{itemize}
\begin{itemize}
    \item $Z$-basis measurement operation
\begin{align}
    \includegraphics{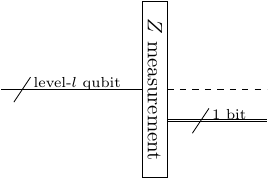}\label{Fig: z-basis-measurement}.
\end{align}
The $Z$-basis measurement operation performs the $Z$-basis measurement on a level-$l$ qubit.
It deallocates the level-$l$ qubit and outputs a $1$-bit measurement outcome.
    \item Wait operation
    \begin{align}
    \includegraphics{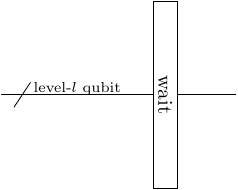}
    \label{Fig: wait operation box},
\end{align}
or simply
\begin{align}
    \includegraphics{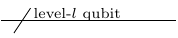}
    \label{Fig: wait operation}.
\end{align}
The wait operation applies the $I$ operation to a level-$l$ qubit, which is regarded as a special case of Pauli gates.
\end{itemize}

For the on-demand Clifford-state preparation in \eqref{eq: on-demand state preparation}, we need a on-demand Clifford operation.
\begin{itemize}
    \item On-demand Clifford operation
\begin{align}
    \includegraphics{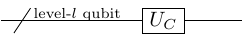}\label{Fig: uc-gate}.
\end{align}
    A on-demand Clifford operation is used for applying a Clifford unitary $U_C\in \{I, S, S^\dagger\}$ on demand and receives a $2$ bit string as input to specify a $U_C$.
\end{itemize}

In addition to these elementary operations and their corresponding gadgets, the EC gadget and the decoding gadget are represented as follows.
\begin{itemize}
    \item EC gadget
\begin{align}
    \includegraphics[width=0.35\textwidth]{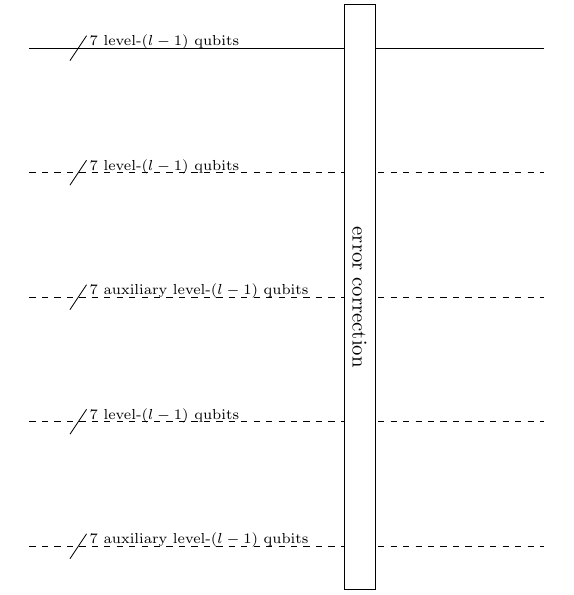}\label{Fig: EC gadget for concatenated code}.
\end{align}
An EC gadget performs quantum error correction on a Steane code consisting of 7 level-$(l-1)$ qubits, temporarily utilizing the other four sets of 7 level-$(l-1)$ qubits.

    \item 
    Decoding gadget
    \begin{align}
        \includegraphics[width=0.4\textwidth]{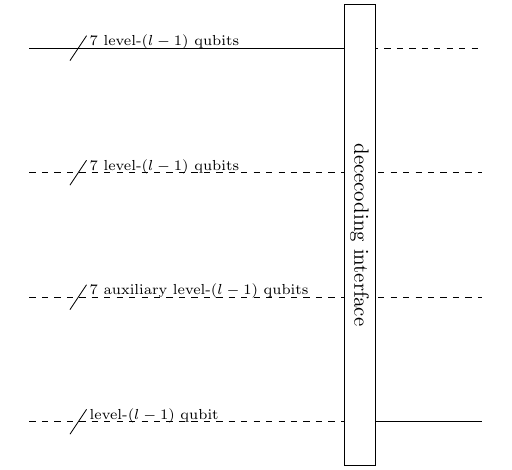}\label{Fig: dec_gadget},
    \end{align}
    or simply
     \begin{align}
        \includegraphics[width=0.4\textwidth]{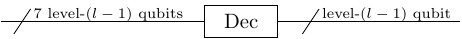}\label{Fig: dec_gadget_simple}.
    \end{align}
    A decoding gadget performs the decoding operation from an encoded level-$l$ qubit in a code block of the Steane code to an unencoded level-$(l-1)$ qubit.
\end{itemize}

\begin{figure*}[t]
    \includegraphics[width=\textwidth]{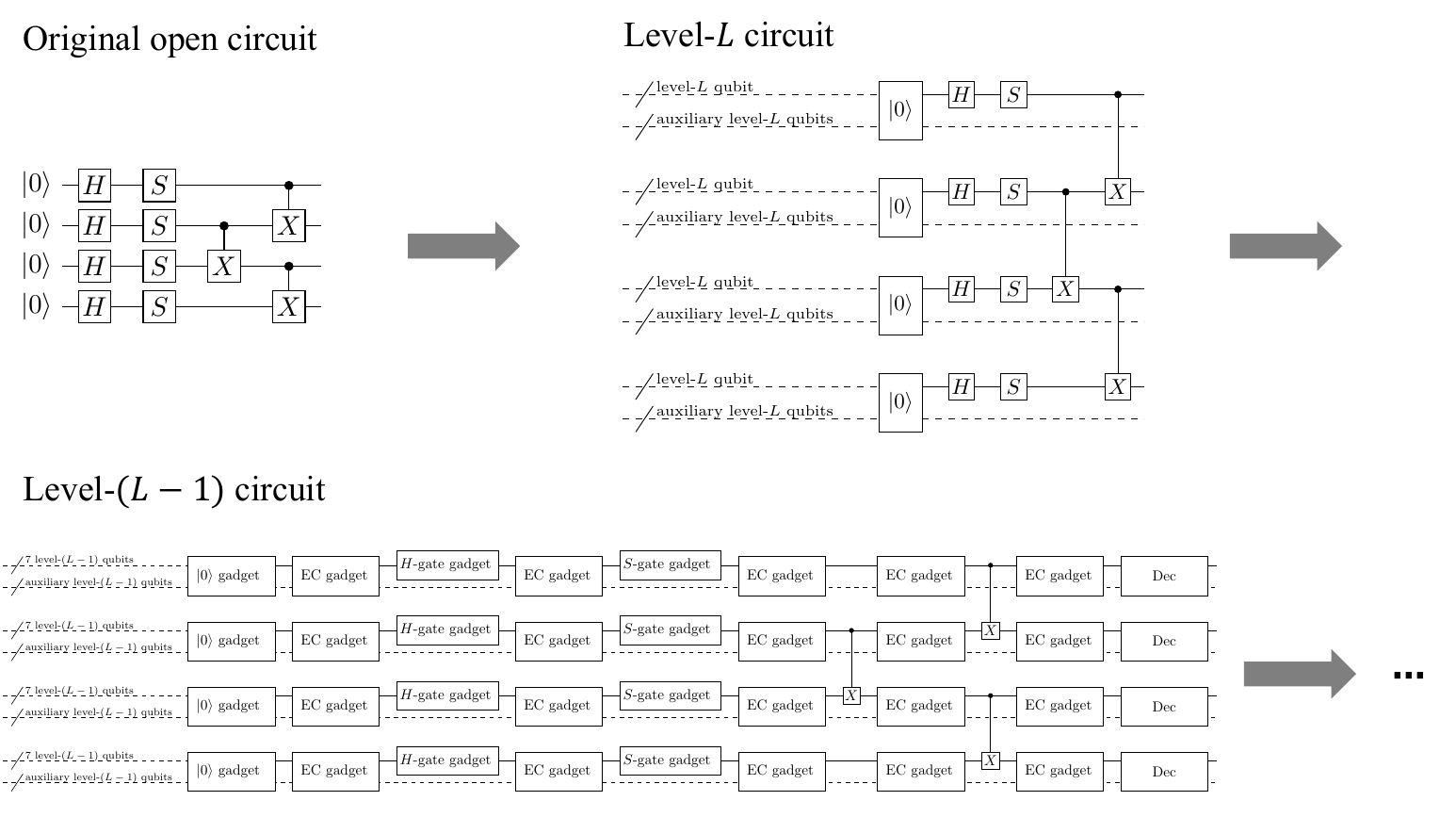}
    \caption{Compilation procedure of our protocol for original open circuits. First, we compile an original open circuit into a level-$L$ circuit that consists only of elementary operations acting on level-$L$ qubits. For each level $l\in\{L,\ldots,1\}$, we recursively compile the level-$l$ circuit into the level-$(l-1)$ circuit by replacing each elementary operation with the corresponding gadget, adding decoding gadgets in the end (denoted by Dec), and inserting EC gadgets between each pair of adjacent gadgets for operations and decoding. 
    Here, each output solid wire of the decoding gadget represents a single level-$(l-1)$ qubit, while each input solid wire represents seven level-$(l-1)$ qubits.
    In this way, we finally obtain a level-$0$ circuit, i.e., a physical circuit to simulate an original open circuit.}
    \label{fig: compilation procedure for ideal open circuits}
\end{figure*}
\noindent The gadgets are carefully designed to satisfy the fault-tolerance conditions, which are presented in Appendix~\ref{sec: Condition of fault tolerance}, and their constructions are shown in Appendix~\ref{sec: Construction of gadgets and abbreviation}.

The procedure for compiling an original open circuit to prepare a quantum state $\ket{\psi}$ circuit into a physical circuit is as follows, which we also show in Fig.~\ref{fig: compilation procedure for ideal open circuits}, is as follows.
Given an original open circuit $C$, we first compile the original circuit into a level-$L$ circuit, denoted by $C^{(L)}$, by replacing each operation in the original circuit with the corresponding elementary operation that acts on level-$L$ qubits.
This compilation can be performed without prior knowledge of the concatenation level $L$.
As for $T$ and $T^{\dagger}$ gates in the original circuit, we replace a $T$ gate with a sequence of elementary operations to perform $T$ gate on level-$L$ qubit via gate teleportation as shown in Fig.~\ref{Fig: t-steane-abbreviation-elementary.} and a $T^{\dagger}$ gate with an $S^{\dagger}$-gate operation, followed by the sequence, respectively.
\begin{figure}[t]
    \centering
    \includegraphics[width=0.45\textwidth]{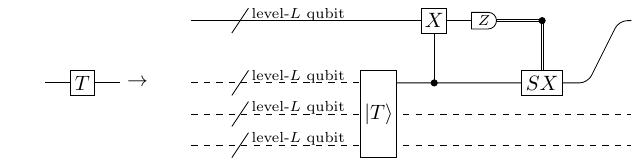}
    \caption{The replacement of the $T$ gate in an original open circuit into the sequence of elementary operations in the level-$L$ circuit for performing the $T$ gate through gate teleportation.}
    \label{Fig: t-steane-abbreviation-elementary.}
\end{figure}
In addition to each level-$L$ qubit, we allocate a constant number of auxiliary level-$L$ qubits.
Of these seven, three auxiliary level-$L$ qubits are used for elementary operations and $T$-gate abbreviations.
The other four registers are never explicitly used in the level-$L$ circuit and not shown in Fig.~\ref{fig: compilation procedure for ideal open circuits}, but these level-$L$ qubits are used to provide a workspace for EC gadgets that will appear in the level-$(L-1)$ circuit.
Next, for a given target failure probability $\delta_3$ and the number of locations in $C^{(L)}$, we determine the concatenation level $L$ such that the failure probability of the fault-tolerant simulation can be bounded by $\delta_3$.

For each level $l \in \{L, L-1,\ldots, 1\}$, we recursively compile the level-$l$ circuit into the corresponding level-$(l-1)$ circuit. 
For each level-$l$ qubit in the level-$l$ circuit, a set of seven level-$(l-1)$ qubits is allocated in the level-$(l-1)$ circuit, along with a constant number of auxiliary level-$(l-1)$ qubits per level-$l$ qubit.
As in the level-$L$ case, three of these seven auxiliary level-$(l-1)$ qubits are used for abbreviations and elementary operations in the level-$(l-1)$ circuit. 
The remaining four level-$(l-1)$ qubits are never explicitly used in the level-$(l-1)$ circuit, but are used as a workspace for EC gadgets.
Before compiling each level-$l$ circuit into the level-$(l-1)$ circuit, we expand the $T$-gate abbreviation into the corresponding elementary operations. 
Then, we replace each elementary operation with a combination of the corresponding gadget followed by an EC gadget acting on each of the seven level-$(l-1)$ qubits. This combination is referred to as a \textit{rectangle}.
Specifically, gate rectangles for the $H$-gate operation in \eqref{Fig: h-gate}, the $S$-gate operation in \eqref{Fig: s-gate}, the $S^{\dagger}$-gate operation in \eqref{Fig: s-dag-gate}, the CNOT-gate operation in \eqref{Fig: cnot-gate}, and the Pauli-gate operations in \eqref{Fig: pauli-gate} are defined as the corresponding gate gadget followed by an EC gadget on each of the seven level-$(l-1)$ qubits. State-preparation rectangles for the $\ket{0}$-state preparation in \eqref{Fig: 0-state-preparation} and the $\ket{T}$-state preparation in \eqref{Fig: magic-state-preparation} are similarly defined. The measurement rectangle for the $Z$-basis measurement operation in \eqref{Fig: z-basis-measurement} is defined as only the corresponding measurement gadget, without an EC gadget. 
For convenience in later analysis of threshold, we referred to a combination of a rectangle and an EC gadget preceding the rectangle as \textit{extended rectangle} or \textit{ExRec}.
For simplicity of analysis, EC gadgets are inserted synchronously, i.e., executed after all gadgets for elementary operations belonging to the same depth in the level-$l$ circuit have completed. 
Until the EC gadgets are completed, gadgets for the next depth in the level-$l$ circuit are not executed. Wait operations are incorporated after previously completed gadgets to ensure this synchronization.
Thus, under synchronization, rectangles may include wait operations between the gadget corresponding to the elementary operation and the EC gadget.

The level-$(l-1)$ circuit obtained at this stage outputs an encoded version of the quantum state $\ket{\psi}$, where each physical qubit of $\ket{\psi}$ is replaced by the 7 level-$(l-1)$ qubits forming the logical qubit of the Steane code. 
To obtain the unencoded state $\ket{\psi}$, we add the level-$l$ decoding gadget to each logical qubit at the end of the level-$(l-1)$ circuit $C^{(l-1)}$, as shown in Fig.~\ref{fig: compilation procedure for ideal open circuits}.
Thus, a level-$l$ circuit for $l\in\{L,\ldots, 1\}$ is compiled into a level-$(l-1)$ circuit composed of rectangles and the decoding gadgets instead of measurements at the end of closed circuits.
We write this level-$(l-1)$ circuit as $C^{(l-1)}$.
By applying this procedure recursively, we obtain the level-$0$ circuit $C^{(0)}$ on physical qubits.

\subsection{Conditions of fault-tolerant gadgets on concatenated Steane codes\label{sec: Condition of fault tolerance}}
In this section, we provide the fault-tolerance conditions for concatenated Steane codes~\cite{gottesman2010introduction, aliferis2005quantum}.
For the argument, we introduce the notion of an ideal decoder and $r$-filter~\cite{gottesman2010introduction, aliferis2005quantum}.
The definition of the ideal decoder is as follows.
\begin{definition}[Ideal decoder]
\label{def: ideal decoder}
An ideal decoder is a non-faulty operation that performs syndrome measurement, applies a recovery operation based on the decoding algorithm, and finally maps an encoded state of the Steane code to the corresponding unencoded state.
\end{definition}
The ideal decoder is represented in a diagram as
\begin{itemize}
\item Ideal decoder
\begin{align}
    \includegraphics{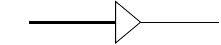}\label{Fig: ideal decoding operation},
\end{align}
\end{itemize}
where the bold line represents a logical qubit and the thin line represents an unencoded physical qubit.

The $r$-filter is a mathematical object used for analysis and does not correspond to any physical operation during quantum computation.
The $r$-filter is used to ensure that the weight of an error occurring on a codeword at a given time is not too large.
It is defined as follows:
\begin{definition}[$r$-filter]
\label{def: r-filter}
    An $r$-filter is a projector onto the subspace spanned by all states of the form $E\ket{\Bar{\psi}}$, where $E\in \mathcal{P}_N$ with a weight of at most $r$ and $\ket{\Bar{\psi}}\in\mathcal{Q}$ is a codeword. If the ideal syndrome bits of an encoded state were hypothetically obtained and the syndrome bits indicate a Pauli error with a weight of at most $r$, then the $r$-filter leaves the encoded state unchanged.
    Conversely, if the indicated error exceeds the weight $r$, the $r$-filter rejects the encoded state, terminating the computation.
\end{definition}
Reference~\cite{gottesman2010introduction} provides the definition for $[[N,1,2t+1]]$ codes where $0\leq r \leq t$.
However, for the Steane code with $t=1$, we can easily interpret the $r$-filter for $r=0,1$.
The $0$-filter is defined as a projector onto the code space, i.e.,
\begin{equation}
    \Pi_{\mathcal{Q}}\coloneqq \prod_i \frac{I+g_i}{2},
\end{equation}
where $\{g_i\}_i$ is a set of stabilizer generators as given in~\eqref{eq: stabilizer generator for Steane code.}.
The $0$-filter is illustrated in the diagram as,
\begin{itemize}
    \item $0$-filter
\begin{align}
  \nonumber\\&\quad\includegraphics{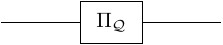}.
\end{align}
\end{itemize}
The $1$-filter for the Steane code becomes the identity operator $I$ acting on the whole Hilbert space because all the $X(Z)$-type errors that appear in the Steane code can be decomposed into logical $X(Z)$ errors plus weight-$1$ $X(Z)$ errors.
The $1$-filter is not shown in the diagram because it is trivial.

Then, the fault-tolerance conditions for preparation, gate, and measurement gadgets are presented below with diagrams using the ideal decoder in Def.~\ref{def: ideal decoder} and the $0$-filter in Def.~\ref{def: r-filter}.
In the diagrams, an operation surrounded by thick lines represents a faulty operation acting on a level-$l$ qubit, while an operation surrounded by thin lines represents a non-faulty operation acting on a level-$(l-1)$ qubit. The variable $s$ shown in the upper right of a gadget represents the number of faults in the gadget.
\begin{widetext}
The $\ket{0}$-state preparation gadget is fault-tolerant if it satisfies
\begin{align}
    &\text{Prep A: when $s=0$}\nonumber\\&\quad\includegraphics{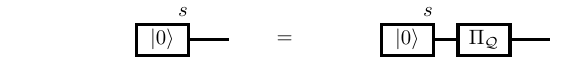}\label{eq: Prep A},\\
    &\text{Prep B: when $s=0,1$}\nonumber\\&\quad\includegraphics{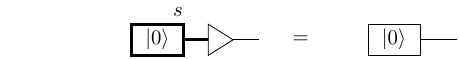}\label{eq: Prep B}.
\end{align}
Also, the $\ket{T}$-state preparation gadget is fault-tolerant if it satisfies
\begin{align}
    &\text{Prep A: when $s=0$}\nonumber\\&\quad\includegraphics{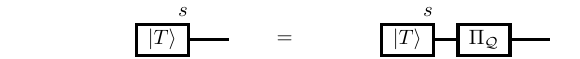}\label{eq: Prep A T},\\
    &\text{Prep B: when $s=0,1$}\nonumber\\&\quad\includegraphics{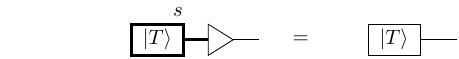}\label{eq: Prep B T}.
\end{align}
The Pauli-, $H$-, and $S$-gate gadgets are fault-tolerant if they satisfy the following conditions, with each gate denoted by $U$:
\begin{align}
    &\text{Gate A: when $s=0$}\nonumber\\&\quad\includegraphics{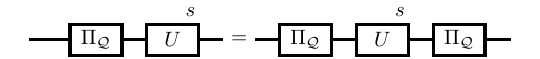},
    \label{eq: single-qubit gate A}\\
    &\text{Gate B: when $s=0$}\nonumber\\&\quad\includegraphics{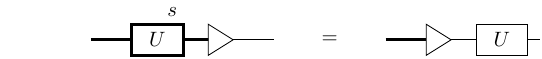}\label{eq: single-qubit gate B1},\\
    &\text{and when $s=0,1$}\nonumber\\&\quad\includegraphics{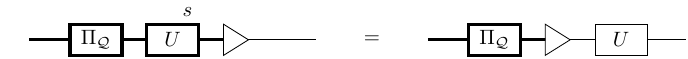}
    \label{eq: single-qubit gate B2}.
\end{align}
The CNOT-gate gadget is fault-tolerant if it satisfies, with the CNOT-gate denoted by $U$,

\begin{align}
    &\text{Gate A: when $s=0$}\nonumber\\&\quad\includegraphics{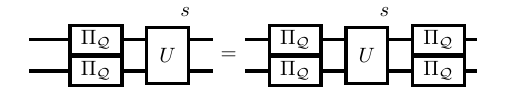},
    \label{eq: two-qubit gate 2A}\\
    &\text{Gate B: when $s=0$}\nonumber\\&\quad\includegraphics{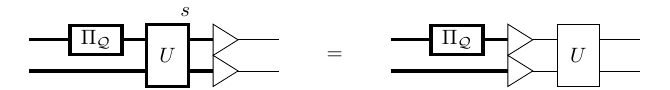}\label{eq: two-qubit gate 2B1},\\
    \nonumber\\&\quad\includegraphics{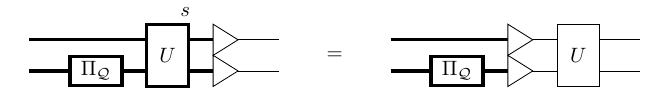}\label{eq: two-qubit gate 2B2},\\
    &\text{and when $s=0,1$}\nonumber\\&\quad\includegraphics{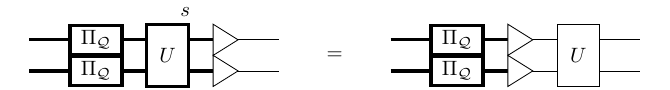}
    \label{eq: two-qubit gate 2B3}.
\end{align}

The $Z$-basis measurement gadget is fault-tolerant if it satisfies
\begin{align}
    &\text{Meas A: when $s=0$}\nonumber\\&\quad\includegraphics{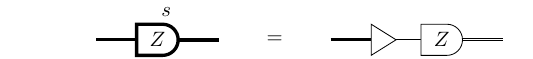}
    \label{eq: measA},\\
    &\text{Meas B: when $s=0,1$}\nonumber\\&\quad\includegraphics{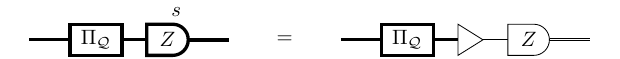}
    \label{eq: measB}.
\end{align}

In addition, we present the fault-tolerance conditions for EC gadgets.
In the diagrams, an operation surrounded by thick lines represents a faulty operation acting on seven level-$l$ qubits, while a thin line represents a single level-$l$ qubit.

The EC gadget is fault-tolerant if it satisfies 
\begin{align}
    &\text{EC A: when $s=0$}\nonumber\\&\quad\includegraphics{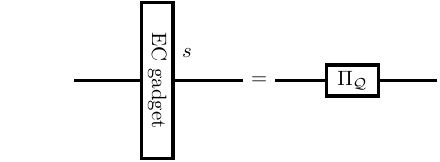},
    \label{eq: ec A}\\
    &\text{EC B: when $s=0$}\nonumber\\&\quad\includegraphics{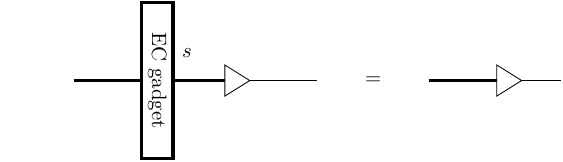},\label{eq: ecB1}\\
    &\text{and when $s=0,1$}\nonumber\\&\quad\includegraphics{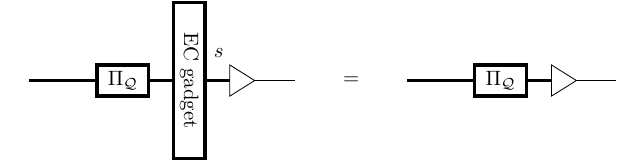}.\label{eq: ecB2}
\end{align}

Finally, we give the validity condition of decoding gadgets.
Again, in the diagrams, an operation surrounded by thick lines represents a faulty operation acting on seven level-$l$ qubits, while a thin line represents a single level-$l$ qubit.
We say that the decoding gadget is valid if it satisfies
\begin{align}
    &\text{Dec A: when $s=0$}\nonumber\\&\quad
    \includegraphics{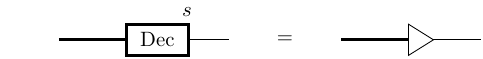}
    \label{eq: decA}.
\end{align}

\end{widetext}

\subsection{Constructions of abbreviations and gadgets \label{sec: Construction of gadgets and abbreviation}}
In this section, we present the explicit construction of an abbreviation and gadgets that satisfy the fault-tolerance condition presented from~\eqref{eq: Prep A} to~\eqref{eq: ecB2}.

\subsubsection{$Z$-basis measurement gadget\label{sec: Z-basis measurement gadget steane}}
The $Z$-basis measurement gadget is constructed as shown in Fig.~\ref{Fig: z-basis-measurement-steane.}.
The gadget is implemented by transversal $Z$-basis measurements, followed by the execution of the decoding algorithm by classical computation.

We present the explicit procedure for performing the $Z$-basis measurement.
Specifically, we give the procedure for calculating the measurement outcome $\bar{z}\in\{0,1\}$ of the Pauli operator $Z$ on a level-$l$ qubit from the noisy measurement outcomes $z_i\in \mathbb{F}_2$ of $Z_i$ on the $i$-th level-$(l-1)$ qubits for $i\in\{1,\ldots, 7\}$.
The procedure is as follows:
\begin{enumerate}
    \item We measure the $7$ level-$(l-1)$ qubits on the $Z$-basis. The outcomes are $z\in\mathbb{F}_2^{7}$, where each bit $z_i\in\mathbb{F}_2$ is the noisy measurement outcome on the $i$-th level-$(l-1)$ qubit.
    
    \item We calculate the syndrome bits $\sigma_X\in\mathbb{F}_2^{3}$ of the $Z$-type generators from the $7$-bit outcomes of the $Z$-basis measurement of the level-$(l-1)$ qubits by using the relation,

    \begin{equation}
    (\sigma_X)_i=\bigoplus_{j\in\mathrm{supp}(g_i^{Z})}z_{j},
    \end{equation}
    where $\mathrm{supp}(g_i^{Z})$ denotes the set of qubit indices where $g_i^{Z}$ has non-trivial support.
    
    \item We execute the decoding algorithm~\eqref{eq: decoding algorithm for Steane code} which takes the syndrome bits $\sigma_X \in \mathbb{F}_2^{3}$ as input and outputs the recovery operation of an $X$-type Pauli operator, denoted by $\hat{F} \in \mathcal{P}_{N}$.
    Then, the corrected bits $\tilde{z}\in\mathbb{F}_2^{7}$ are obtained for $i\in\{1,\ldots,7\}$ as
    \begin{equation}
        \tilde{z}_i\coloneqq z_i\oplus (\phi(\hat{F}))_{i},
    \end{equation}
    where $(\phi(\hat{F}))_{j}$ represents the $j$-th element of the row vector $\phi(\hat{F})$.
    \item  We calculate the noiseless measurement outcome $\bar{z}\in\mathbb{F}_2$ of $Z$ on the level-$l$ qubit as
    \begin{equation}
        \bar{z}\coloneqq \bigoplus_{i\in\mathrm{supp}(\Bar{Z})}\Tilde{z}_i,
    \end{equation}
    where $\Bar{Z}$ is the logical-$Z$ operator of the Steane code as given in~\eqref{eq: logical opeartor for Steane code}, and $\mathrm{supp}(\Bar{Z})$ denotes the set of indices of the physical qubits on which the logical-$Z$ operator acts nontrivially.
\end{enumerate}

\begin{figure*}[t]
    \centering
    \includegraphics[width=\textwidth]{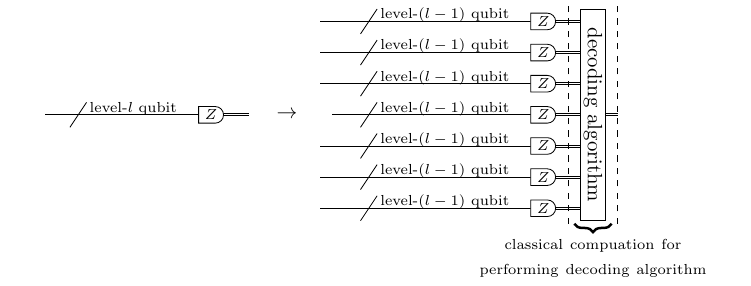}
    \caption{The level-$(l-1)$ circuit of the $Z$-basis measurement gadget. 
    The gadget is implemented by transversal $Z$-basis measurements, followed by the decoding algorithm in \eqref{eq: decoding algorithm for Steane code}.}
    \label{Fig: z-basis-measurement-steane.}
\end{figure*}
The fault tolerance of the $Z$-basis measurement gadget defined as in \eqref{eq: measA} and \eqref{eq: measB} is ensured by transversality.

\subsubsection{Gate gadgets}
The constructions of the Pauli-, $H$-, $S$-, $S^{\dagger}$ and CNOT-gate gadgets are shown in Fig.~\ref{Fig: gate gadgets.}.
The Pauli-, $H$-, $S$-, $S^{\dagger}$- and CNOT-gate gadgets are implemented by the transversal Pauli, $H$, $S^{\dagger}$, $S$ and CNOT gates acting on the level-$(l-1)$ qubits, respectively.
The fault tolerance of the gate gadgets defined as~\eqref{eq: single-qubit gate A},~\eqref{eq: single-qubit gate B1},~\eqref{eq: single-qubit gate B2},~\eqref{eq: two-qubit gate 2A},~\eqref{eq: two-qubit gate 2B1},~\eqref{eq: two-qubit gate 2B2} and~\eqref{eq: two-qubit gate 2B3} is ensured by transversality.

\begin{figure}
    \includegraphics[width=0.5\textwidth]{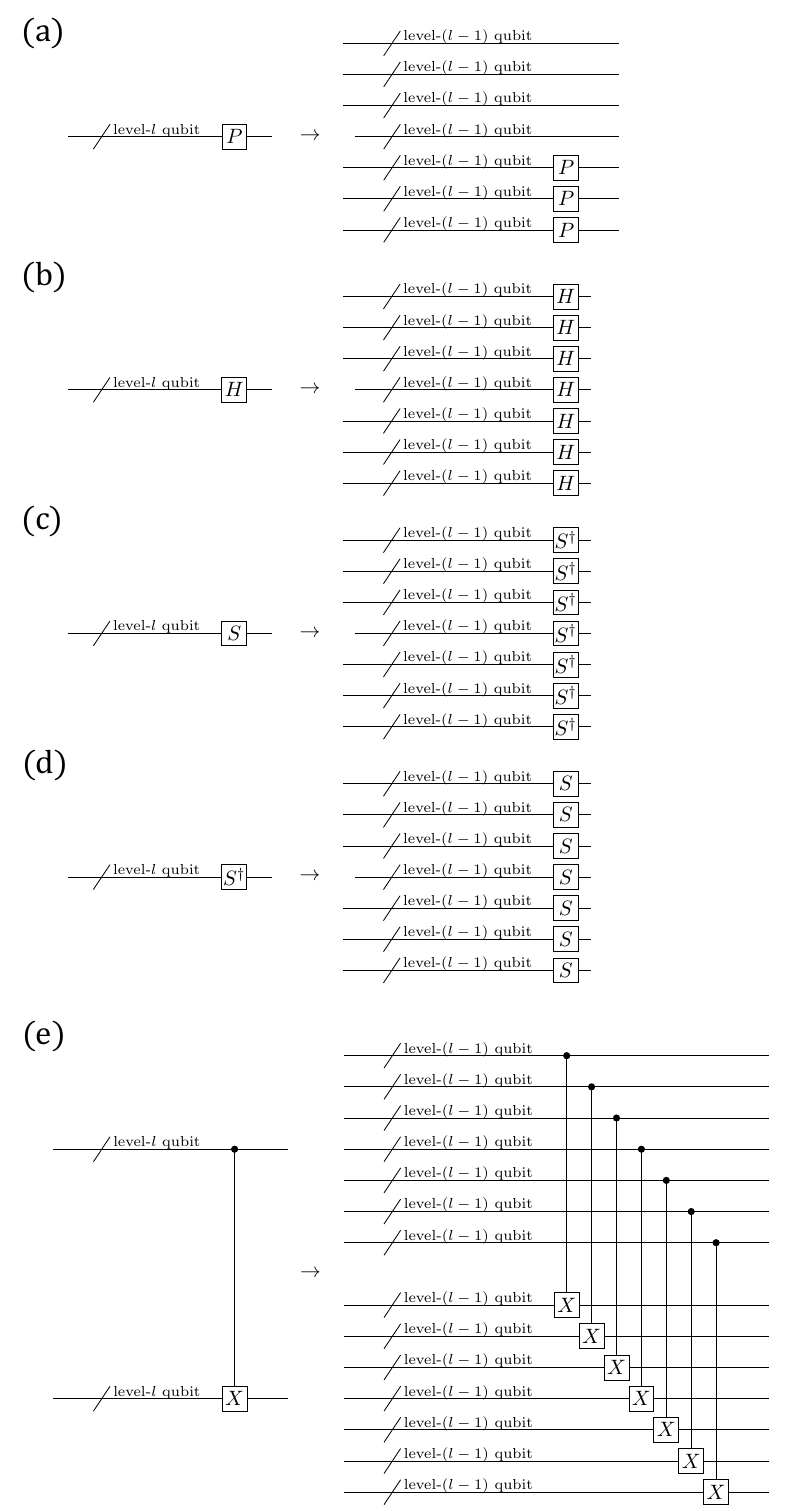}
    \caption{A level-$(l-1)$ circuit of the Pauli-gate gadget (a), the $H$-gate gadget (b), the $S$-gate gadget (c), the $S^{\dagger}$-gate gadget (d), and the CNOT-gate gadget (e).}
    \label{Fig: gate gadgets.}
\end{figure}

\subsubsection{$\ket{0}$-state preparation gadget}

The construction of the $\ket{0}$-state preparation gadget is shown in Fig.~\ref{Fig: 0-state preparation gadget.}~(a).
We first prepare $7$ qubits in the state $\ket{0}^{\otimes 7}$ using the $\ket{0}$-state preparations.
Then, we use an encoding circuit $U_{\mathrm{encode}}^{\ket{0}}$ to transform the state $\ket{0}^{\otimes 7}$ into the logical state of $\ket{0}$.
The encoding circuit $U_{\mathrm{encode}}^{\ket{0}}$ is shown in Fig.~\ref{Fig: 0-state preparation gadget.}~(b), presented in Ref.~\cite{Minimizing_Goto}.

The issue here is that the encoding circuit is non-fault-tolerant, meaning a single fault occurring during the encoding circuit may lead to an error with a weight greater than one by the end of the encoding circuit.
Thus, verification is essential to detect such errors, ensuring that the $\ket{0}$-state preparation gadget satisfies the fault-tolerance conditions specified in~\eqref{eq: Prep A} and~\eqref{eq: Prep B}.
In the verification process, the gadget executes the same encoding circuit $U_{\mathrm{encode}}^{\ket{0}}$ to prepare another set of $7$ qubits in the logical state of $\ket{0}$.
Subsequently, the gadget performs the transversal CNOT gates targeting the newly prepared qubits.
This is followed by the transversal $Z$-basis measurements on these qubits. 
From these $Z$-basis measurements, we obtain the syndrome bits for all $Z$-type stabilizer generators in~\eqref{eq: stabilizer generator for Steane code.} and the measurement outcome of the logical $Z$ operator on the level-$l$ qubit in~\eqref{eq: logical opeartor for Steane code} through classical computation described in the $Z$-basis measurement gadget.
These are crucial for detecting badly propagated $X$-type errors.
The verification passes if and only if all the syndrome bits are trivial and the measurement outcome $m$ of the logical $Z$ operator is $m=0$.
If verification fails, the gadget discards the resulting state and repeats the encoding circuit $U_{\mathrm{encode}}^{\ket{0}}$ without verification in the second run.
Note that the detection of the $Z$-type errors is unnecessary for verifying $\ket{0}$ since multiple $Z$-type errors at the end of the encoding circuit may lead to a logical $Z$ error but do not transform the state, i.e., $Z\ket{0}=\ket{0}$.

We check that the above gadget construction satisfies the fault-tolerance condition defined in~\eqref{eq: Prep A} and~\eqref{eq: Prep B}.
In the case where $s=0$, i.e., no fault occurs during the gadget, the gadget successfully outputs $7$ qubits in the logical state of $\ket{0}$, and thus the conditions~\eqref{eq: Prep A} and \eqref{eq: Prep B} are satisfied.
Next, we discuss the case where $s=1$, i.e., only a single fault occurs in the gadget.
If the verification succeeds, the gadget successfully outputs $7$ qubits in the logical state of $\ket{0}$ in the first run.
If the verification fails in the first run, i.e., an error is detected, then the gadget discards the resulting state.
Then, the gadget restarts and executes the encoding circuit to prepare $7$ qubits in the logical state of $\ket{0}$ in the second run without verification.
In this scenario, since a single fault occurred in the first run, no fault occurred during the part for preparing $7$ qubits in the logical state of $\ket{0}$ in the second run, given that we consider the condition where $s=1$.
Therefore, the fault-tolerance conditions~\eqref{eq: Prep A} and \eqref{eq: Prep B} of the gadget are satisfied.

\begin{figure*}[t]
    \centering
    \includegraphics[width=\textwidth]{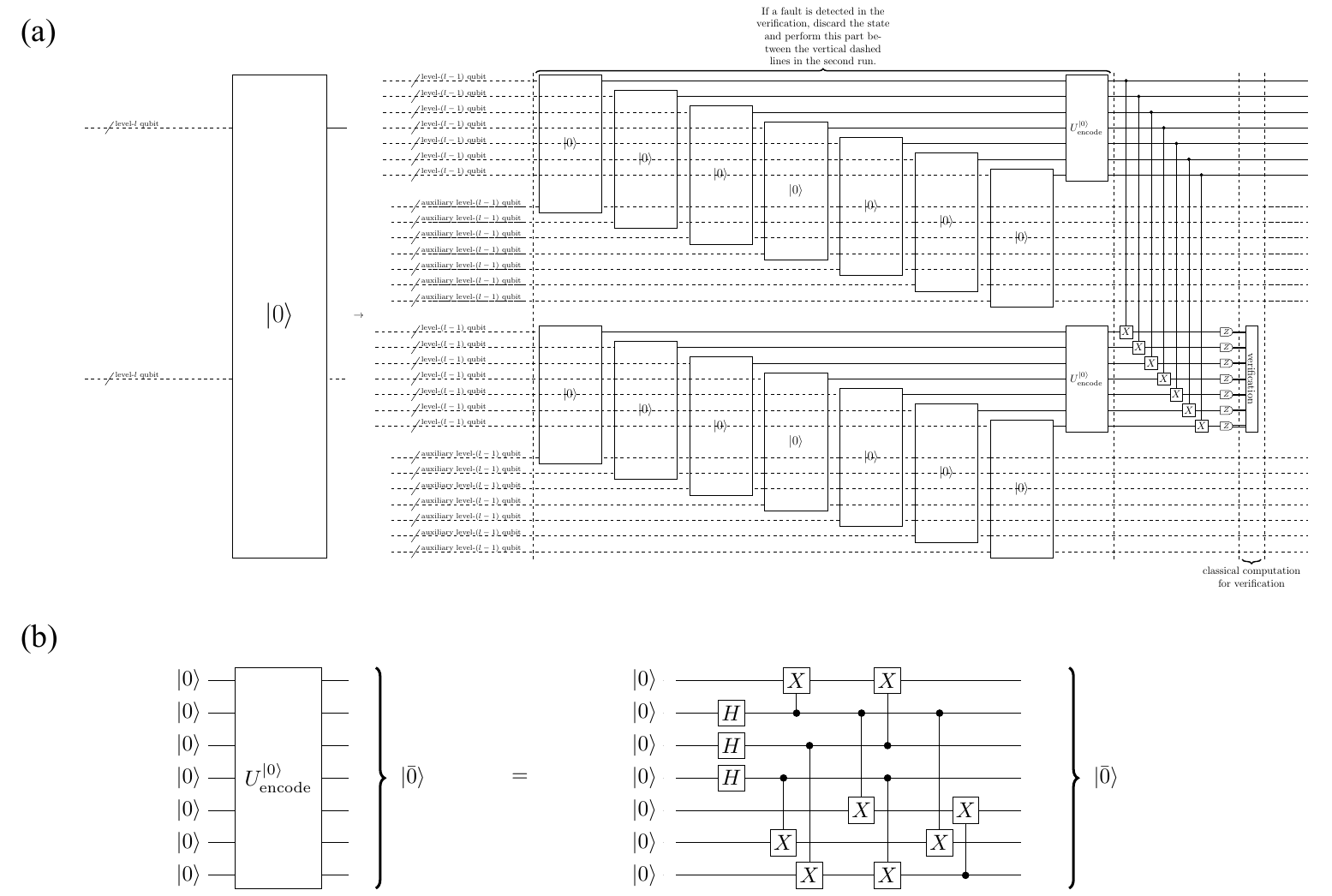}
    \caption{(a) The level-$(l-1)$ circuit of the $\ket{0}$-state preparation gadget.
The gadget starts with the preparation of $\ket{0}^{\otimes 7}$, followed by an encoding circuit $U_{\mathrm{encode}}^{\ket{0}}$, shown in (b), to prepare the logical state $\ket{\Bar{0}}$ of the Steane code from $\ket{0}^{\otimes 7}$.
The encoding circuit $U_{\mathrm{encode}}^{\ket{0}}$ shown in (b) is presented in Ref.~\cite{Minimizing_Goto}.
However, since the encoding circuit is not fault-tolerant, verification is required for the $\ket{0}$-state preparation gadget to satisfy the fault-tolerance conditions in \eqref{eq: Prep A} and \eqref{eq: Prep B}.
In the verification process, the same encoding circuit $U_{\mathrm{encode}}^{\ket{0}}$ is performed to prepare another logical state $\ket{\Bar{0}}$. Next, the gadget performs transversal CNOT gates between the two logical states, followed by transversal $Z$-basis measurements.
These measurements provide syndrome bits of all $Z$-type stabilizers and a measurement outcome of the logical $Z$ operator.
If no error is detected, the verification passes and the gadget outputs the resulting state.
If an error is detected, the gadget discards the state and re-executes the encoding circuit in the second run without verification.
}
    \label{Fig: 0-state preparation gadget.}
\end{figure*}

\subsubsection{$\ket{T}$-state preparation gadget}
\begin{figure*}[p]
    \includegraphics[angle=270,width=0.35\textheight]{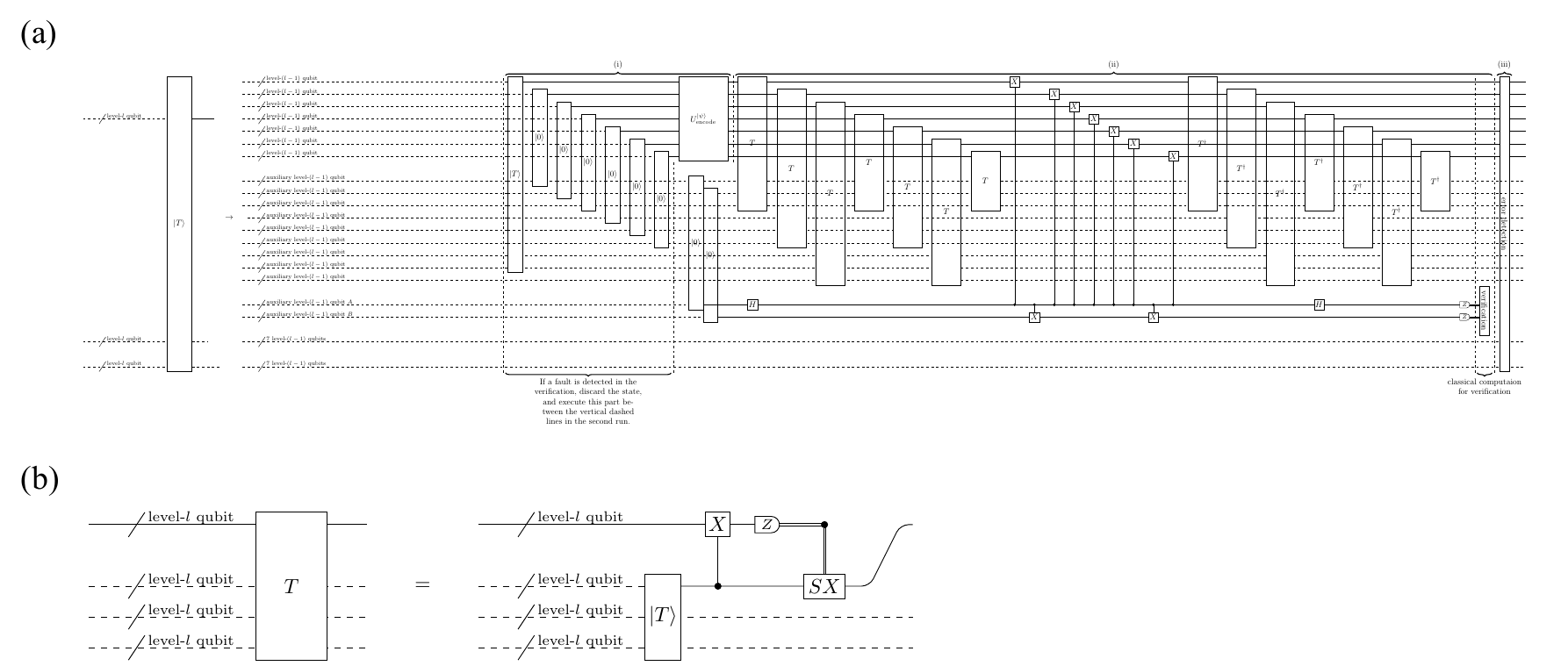}
    \caption{(a) The level-$(l-1)$ circuit of the $\ket{T}$-state preparation gadget. (b) The abbreviation for a sequence of elementary operations used to perform a $T$ gate on a level-$l$ qubit via gate teleportation.
The gadget starts with the preparation of the state $\ket{T}\otimes\ket{0}^{\otimes 6}$, followed by an encoding circuit $U_{\mathrm{encode}}^{\ket{\psi}}$, shown in Fig.~\ref{Fig: decoding gadget.} (b), to prepare the logical state $\ket{\Bar{T}}$ of the Steane code from $\ket{T}\otimes\ket{0}^{\otimes 6}$.
However, since the encoding circuit is not fault-tolerant, verification is required for the $\ket{T}$-state preparation gadget to satisfy the fault-tolerance conditions given in \eqref{eq: Prep A T} and \eqref{eq: Prep B T}.
In the verification process, to measure the logical operator $TXT^{\dagger}$, we perform measurements of the $T^{\dagger}XT$ operator, controlled by the auxiliary qubit $A$, and then measure the auxiliary qubit $A$.
To measure the logical operator $TXT^{\dagger}$ in a fault-tolerant manner, we additionally prepare the $\ket{0}$ state in the auxiliary qubit $B$ and measure the auxiliary qubit $B$.
If no error is detected through the measurement outcomes, the verification passes, and the gadget outputs the resulting state.
If an error is detected, the gadget discards the state and re-executes the encoding circuit in the second run without verification.}
    \label{Fig: t-state-preparation-gadget-knill-steane.}
\end{figure*}

A $\ket{T}$-state preparation gadget is constructed as shown in Fig.~\ref{Fig: t-state-preparation-gadget-knill-steane.}.
The gadget starts with the preparation of the logical state of $\ket{T}\coloneqq TH\ket{0}$ from the $7$ qubits in the state $\ket{T}\otimes\ket{0}^{\otimes 6}$.
We utilize the non-fault-tolerant encoding circuit $U_{\mathrm{encode}}^{\ket{\psi}}$ to transform an arbitrary state $\ket{\psi}$ into an logical state of $\ket{\psi}$, as shown in Fig.~\ref{Fig: decoding gadget.}~(b), presented in Ref.~\cite{Minimizing_Goto}.
A single fault that occurs during or before the encoding circuit $U_{\mathrm{encode}}^{\ket{\psi}}$ leads to an error with a weight greater than one, as well as for the $\ket{0}$-state preparation gadget.
Thus, verification is essential to detect such errors for the $\ket{T}$-state preparation gadget to satisfy the fault-tolerance conditions given by~\eqref{eq: Prep A T} and~\eqref{eq: Prep B T}.
For verification, we perform a measurement of the operator $TXT^{\dagger}$, followed by error detection.
The measurement of the operator $TXT^{\dagger}$ 
is needed for verification, since the $\ket{T}$ state we want to prepare is stabilized by the operator $TXT^{\dagger}$, i.e., 
\begin{equation}
    TXT^{\dagger}\ket{T}=\ket{T}
\end{equation}
Since $TXT^{\dagger}=SX$ is a Clifford operator, we can measure the logical operator $TXT^{\dagger}$ by performing transversal measurements of $T^{\dagger}XT=S^{\dagger}X$.
This measurement can be performed with controlled-$T^{\dagger}XT$ gates controlled by the auxiliary qubit $A$ as shown in Fig.~\ref{Fig: t-state-preparation-gadget-knill-steane.}.
The auxiliary qubit is measured to obtain the measurement outcome of the operator $T^{\dagger}XT$.
If the measurement outcome of the operator $TXT^{\dagger}$ is $m=1$, indicating that the state is projected onto the state orthogonal to $\ket{T}$, then we consider the verification to have failed, and the gadget discards the state.
To measure the logical operator $TXT^{\dagger}$ in a fault-tolerant way, we additionally use a qubit as the flag qubit~\cite{Chao_flag,Chamberland_2019_flag}, where the flag qubit corresponds to the auxiliary qubit $B$ in Fig.~\ref{Fig: t-state-preparation-gadget-knill-steane.}.
A single fault on the auxiliary qubit $A$ may result in an error with a weight greater than one on the code block, which consists of the $7$ qubits. 
However, in such a case, if the measurement outcome of the auxiliary qubit $B$, serving as the flag qubit, is $m=1$ in the $Z$ basis, the verification is deemed to have failed, and the gadget discards the state.
To verify that the state is in the code subspace, we perform error detection at the end of the $\ket{T}$-state preparation gadget based on the gate teleportation protocol as shown in Fig.~\ref{Fig: t-state-preparation-gadget-knill-steane.}.
If at least one non-trivial syndrome bit is detected during the execution of the gate teleport protocol, then we consider the verification to be failed, and the gadget discards the state.
Therefore, the verification is successful only if the measurement outcome $m$ of $TXT^{\dagger}$ is $m=0$, the measurement outcome $m$ of the flag qubit is $m=0$, and all the syndrome bits obtained during the gate teleportation protocol are trivial.
If the verification fails, the gadget discards the resulting state at the point of failure.
Afterwards, the gadget repeats the encoding circuit to prepare the logical state $\ket{T}$ without verification in the second run.

We show that the $\ket{T}$-state preparation gadget satisfies the fault-tolerance condition defined by~\eqref{eq: Prep A T} and~\eqref{eq: Prep B T}.
If $s=0$, indicating no fault in the gadget, the gadget successfully outputs the logical state $\ket{T}$, satisfying conditions~\eqref{eq: Prep A T} and~\eqref{eq: Prep B T} are satisfied.
Next, we consider the case where $s=1$, indicating a single fault in the gadget, and show that condition~\eqref{eq: Prep B T} is satisfied.
We divide the $\ket{T}$-state preparation gadget, which may cause a single fault, into three parts: (1) preparing the logical state of $\ket{T}$, (2) measuring the logical operator $TXT^{\dagger}$, and (3) error detection.
Here, we show that the fault-tolerance condition is satisfied if a single fault occurs in any of the three parts.
\begin{enumerate}
    \item A fault occurs during the preparation of the logical state $\ket{T}$.
    
    In this scenario, since we assume that only a single fault occurs during the gadget, the measurement of the operator $TXT^{\dagger}$ and the error detection are performed ideally.
    If the verification succeeds, the gadget successfully outputs the logical state $\ket{T}$.
    If the verification fails, the gadget discards the state obtained in the first run and then re-executes the encoding circuit in the second run. 
    In this second run, the gadget outputs the logical state $\ket{T}$ without further verification because a single fault occurred in the first run. 
    After completing the encoding circuit in the second run, the gadget successfully outputs the logical state $\ket{T}$. 
    With this gadget construction, the condition \eqref{eq: Prep B T} is satisfied.
    
    \item A fault occurs during the measurement of the logical operator $TXT^{\dagger}$.
    
    In this case, since there is no fault in the first part (i) of the gadget, the encoding circuit outputs the logical state $\ket{T}$.
    The problem arises when a single fault occurs in the auxiliary qubit $A$, as the fault-induced error may propagate to the code block of $7$ qubits, resulting in an error with a weight greater than one.
    However, even if such a fault occurs, the error resulting from the fault is also propagated to the auxiliary qubit $B$, causing a measurement outcome of $m=1$, which can be detected through verification.
    For example, if a fault occurs on the auxiliary qubit $A$, resulting in a Pauli-$X$ or -$Y$ error, this error affects the code block, leading to an error with multiple weights.
    Subsequently, this error propagates to the auxiliary qubit $B$ through the CNOT gates, resulting in a measurement outcome of $m=1$ on qubit $B$.
    Therefore, a Pauli error that produces an error with multiple weights on the code block can be effectively detected through the verification process.    
    If the verification process fails, the gadget discards the state obtained from the first run and performs the encoding circuit again in the second run without verification.
    This construction of the gadget ensures that the condition \eqref{eq: Prep B T} is satisfied.

    \item A fault occurs during error detection.
    
    Since there are no faults in (i) and (ii) of the gadget, the state before executing the error detection is a logical state of $\ket{T}$.
    If a single fault occurs that can be detected by the error detection procedure, the gadget discards the state. 
    In the second run, it executes the encoding circuit again, satisfying the condition \eqref{eq: Prep B T}.
    Even if a single fault occurs that cannot be detected through error detection, from~\eqref{eq: ecB2}, the condition \eqref{eq: Prep B T} is also satisfied.
\end{enumerate}
Therefore, the fault tolerance of the $\ket{T}$-state preparation gadget defined as \eqref{eq: Prep A T} and \eqref{eq: Prep B T} is satisfied.

\subsubsection{Error-correction gadget}

A level-$(l-1)$ circuit of an error-correction gadget is shown in Fig.~\ref{Fig: knill-error-correction-steane-simple.}.
This gadget is based on Knill's error correction~\cite{knill2005scalable,Knill_2005, gottesman2010introduction}.
The gadget uses two sets of $7$ auxiliary qubits initialized to a logical state of $\ket{0}$ using the $\ket{0}$-state preparation gadget to perform error correction on an input logical state of $7$ qubits.
Using these two sets of $7$ auxiliary qubits, we perform quantum teleportation of the input logical state.
The two $7$-bit outcomes obtained from transvarsal $Z$-basis measurements are fed to the decoding algorithm to calculate measurement outcomes of the logical $Z$ operator of the Steane code, which is performed in the same way explained in the $Z$-basis measurement gadget in Sec.~\ref{sec: Z-basis measurement gadget steane}. 
The logical measurement outcomes are used to determine the Pauli correction operation for quantum teleportation.
The fault tolerance of the EC gadget defined by conditions~\eqref{eq: ec A},~\eqref{eq: ecB1}, and~\eqref{eq: ecB2} follows from transversality.

\begin{figure*}[t]
    \centering
    \includegraphics[width=\textwidth]{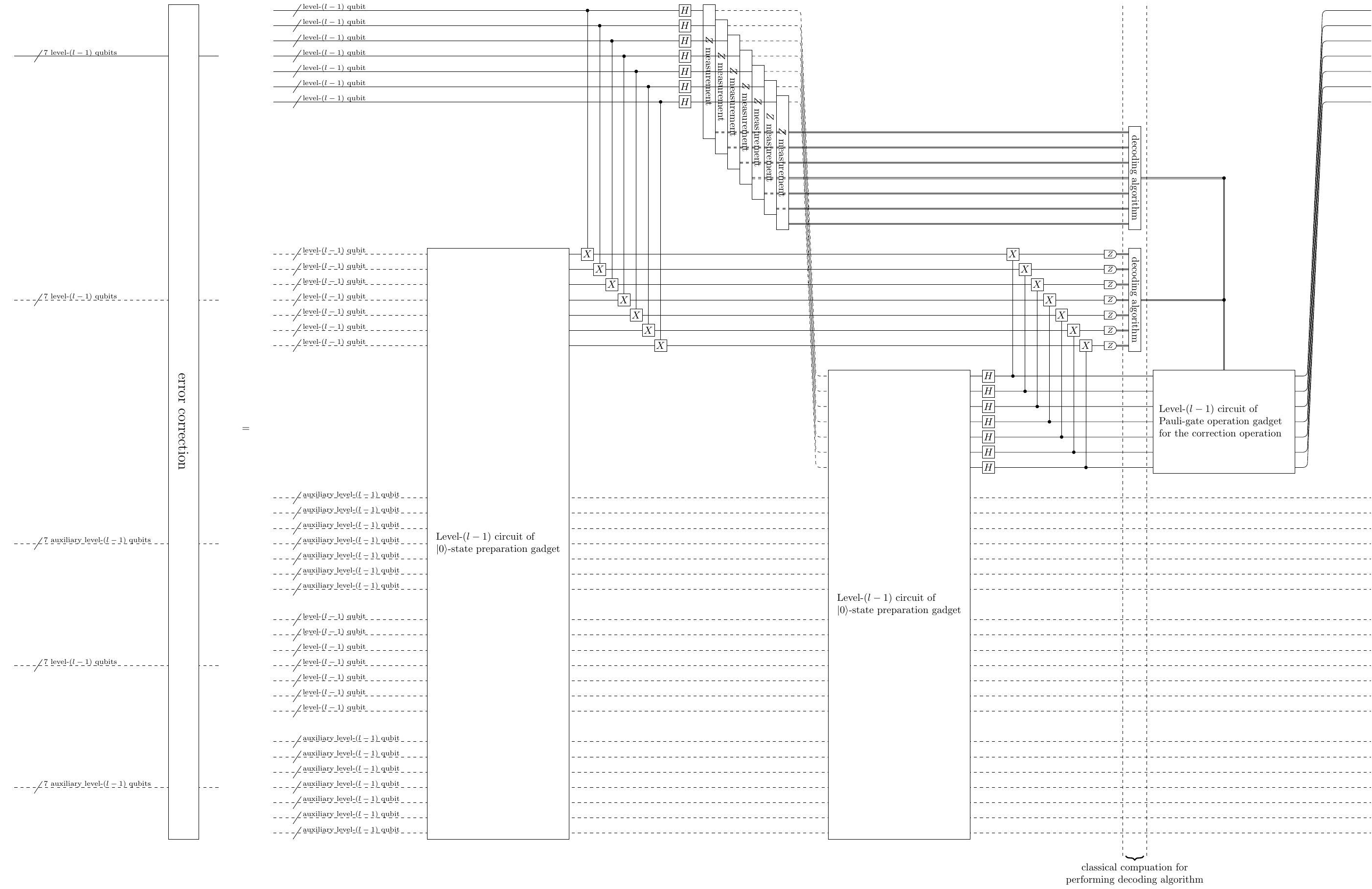}
    \caption{The level-$(l-1)$ circuit of the EC gadget based on Knill's error correction is shown. The gadget uses two sets of $7$ auxiliary qubits, each initialized to a logical state of $\ket{0}$ prepared by the $\ket{0}$-state preparation gadget in \eqref{Fig: 0-state-preparation}, to perform error correction on the input logical state of $7$ qubits.
    The transversal $Z$-basis measurements provide two $7$-bit outcomes.
    From these measurement outcomes, we calculate measurement outcomes of the logical $Z$ operator of the Steane code.
    This computation can be performed in the same way as explained in the $Z$-basis measurement gadget in \ref{sec: Z-basis measurement gadget steane}.
    Based on the logical measurement outcomes, a Pauli correction operation is applied for quantum teleportation.}
    \label{Fig: knill-error-correction-steane-simple.}
\end{figure*}

\subsubsection{Decoding gadget \label{subsubsec: decoding gadget}}

\begin{figure*}[t]
\centering
\includegraphics[width=\textwidth]{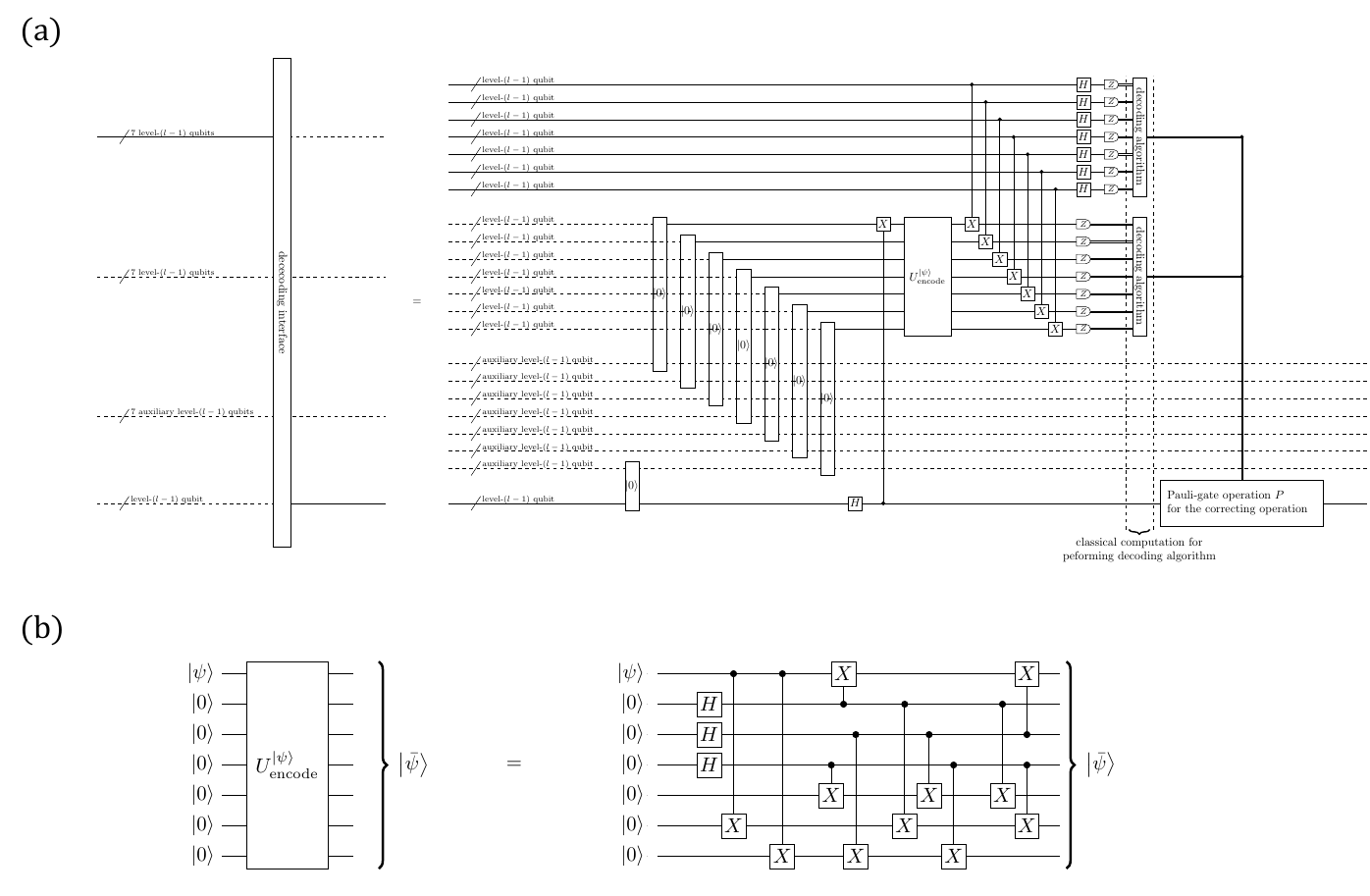}
\caption{(a) The level-$(l-1)$ circuit of the decoding gadget for transforming a logical state encoded in the Steane code to the same state of an unencoded physical qubit. The gadget uses an input logical state and auxiliary qubits prepared in a Bell state, with one qubit encoded in the logical state of the Steane code using the encoding circuit $U_{\mathrm{encode}}^{\ket{\psi}}$ shown in (b). The decoding process involves transversal CNOT gates, $H$ gates, and $Z$-basis measurements. The measurement outcomes are processed classically to determine the Pauli correction operation for successful teleportation of the desired state from the logical qubit to the physical qubit.
(b) The encoding circuit $U_{\mathrm{encode}}^{\ket{\psi}}$ for preparing a Bell state with one qubit encoded in the logical state of the Steane code, which is presented in Ref.~\cite{Minimizing_Goto}}
\label{Fig: decoding gadget.}
\end{figure*}

A decoding gadget is designed to transform a logical state encoded in the Steane code, that is a set of seven qubits, into the same state of an unencoded physical qubit. The construction of the decoding gadget based on quantum teleportation is shown in Fig.~\ref{Fig: decoding gadget.} (a).

In addition to the input logical state that we want to decode, the gadget uses auxiliary qubits prepared in a Bell state. One qubit of the Bell state is encoded in the logical state of the Steane code using an encoding circuit $U_{\mathrm{encode}}^{\ket{\psi}}$, as shown in Fig.~\ref{Fig: decoding gadget.} (b).
The decoding gadget is implemented by applying transversal CNOT gates, followed by transversal $H$ gates, and then transversal $Z$-basis measurements. The $Z$-basis measurements yield a pair of $7$-bit strings as measurement outcomes.
Based on these measurement outcomes, we perform the classical computation to calculate the measurement outcomes of the logical $\Bar{Z}$ operator of the Steane code. This classical computation is performed using the same procedure as explained in the $Z$-basis measurement gadget in Appendix \ref{sec: Z-basis measurement gadget steane}.
The calculated measurement outcomes are used to determine the Pauli correction operation required for quantum teleportation. Once the Pauli correction operation is applied, the gadget outputs the desired state of the qubit, which has been successfully teleported from the logical state to the physical qubit.
Therefore, the decoding gadget satisfies the validity condition in \eqref{eq: decA}.

\subsection{Threshold theorem for open circuits}
\label{sec: Threshold theorem for open circuits}
In this section, we prove the threshold theorem for simulating open circuits.
The threshold theorem for closed original circuits has been established in Refs.~\cite{aliferis2005quantum, gottesman2010introduction}.
However, for open circuits, a different treatment is required because the decoding gadgets are located at the end of level-$0$ circuits. 
The threshold theorem for open circuits described in the following is used without explicit proof in Refs.~\cite{Gottesman2014Constant, Fawzi_2018, grospellier:tel-03364419}.
Although Ref.~\cite{christandl2022faulttolerant} also presents the threshold theorem for open circuits, the noise model considered is the independent and ideally distributed (IID) Pauli error model.
Therefore, the theorem in Ref.~\cite{christandl2022faulttolerant} is not applicable to our more general setting.
For this reason, we provide a complete proof of the theorem based on our protocol and noise model, which is essential for the full proof of the threshold theorem in Theorem~\ref{theorem: threshold theorem for polylog-time- and constant-space-overhead fault-tolerant quantum computation}. 
While the following argument can apply to any open circuits, for consistency with the proof in Theorem~\ref{theorem: threshold theorem for polylog-time- and constant-space-overhead fault-tolerant quantum computation}, we will focus on open circuits used to prepare logical states such as logical $\ket{0}^{\otimes K}$ state, logical Clifford states, and logical magic states of $\mathcal{Q}$, which is a CSS LDPC code.

\begin{theorem}[Threshold theorem for simulating open circuits]\label{Theorem: level-reduction for the circuit that outputs a quantum state}

There exist positive constants 
$\gamma_1$, $\gamma_2$, $c_1$, $c_2$, $c_3$, $p_{\mathrm{loc}}^{\mathrm{th}}$, and $M$
such that 
the circuit  $\tilde{C}(C, L)$ satisfies the following properties for any open circuit $C$ consisting of the elementary operations of \eqref{Fig: 0-state-preparation}-\eqref{Fig: uc-gate} and for any positive integer $L$ representing the concatenation level:

\begin{enumerate}
        \item $W(\tilde{C}(C, L)) \le  c_1W(C) 2^{\gamma_1 L}$.

        \item $D(\tilde{C}(C, L)) \le  c_2 D(C) 2^{\gamma_2 L}$.
        
        \item For any number $p_\loc \in [0, p_{\mathrm{loc}}^{\mathrm{th}})$, the $\delta_3$-reducibility in Def.~\ref{def: delta-reducibility} given by the following diagram holds:
\begin{align}
    \includegraphics[width=0.5\textwidth]{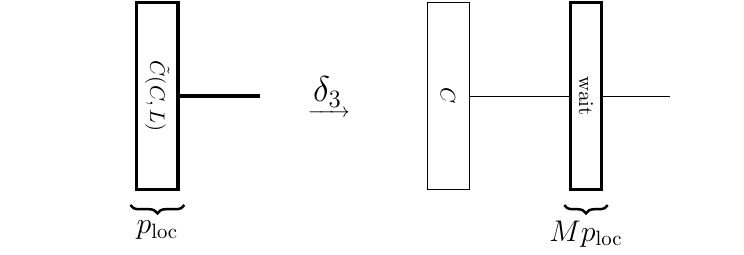}\label{Fig: delta-reducibility},
\end{align}
where 
        \begin{align} 
 \delta_3=c_3|C| {p}^{\mathrm{th}}_{\mathrm{loc}}\left(\frac{p_{\mathrm{loc}}}{{p}^{\mathrm{th}}_{\mathrm{loc}}}\right)^{2^L}.
        \end{align}
    \end{enumerate}
\end{theorem}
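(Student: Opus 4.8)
The plan is to prove the three claims by a level-by-level reduction, following the structure of the concatenated-code threshold theorem of Refs.~\cite{aliferis2005quantum,gottesman2010introduction} but recast entirely within the circuit-reducibility framework of Def.~\ref{def: delta-reducibility}, so that the correlated (local stochastic) error model is treated correctly rather than assuming independence. I would dispatch the width and depth bounds (parts 1 and 2) first, by a straightforward induction on the concatenation level. Each level-$l$ elementary operation is replaced by a gadget together with EC gadgets acting on seven level-$(l-1)$ qubits, and every gadget in Appendix~\ref{sec: Construction of gadgets and abbreviation}---including the verified $\ket{0}$- and $\ket{T}$-state preparations, whose repeat-until-success step terminates after at most two runs---has width bounded by a constant multiple of seven and depth bounded by a constant. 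Hence passing from level $l$ to level $l-1$ multiplies the width by a fixed factor $W_0$ and the depth by a fixed factor $D_0$; iterating $L$ times and absorbing the base case into $c_1,c_2$ yields the bounds with $\gamma_1=\log_2 W_0$ and $\gamma_2=\log_2 D_0$.

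The core of the argument is a single-level reduction step, which I would state as follows: the level-$(l-1)$ circuit obtained by expanding a level-$l$ circuit, experiencing a local stochastic model with parameter $p$, is $0$-reducible to the level-$l$ circuit experiencing a local stochastic model with a new parameter $p'\le B p^2$ for a universal constant $B$. To establish it, I would partition the level-$(l-1)$ circuit into extended rectangles (an operation gadget together with its leading EC gadget), call an ExRec \emph{good} if it contains at most one fault, and use the fault-tolerance conditions~\eqref{eq: Prep A}--\eqref{eq: ecB2} together with the ideal-decoder and $0$-filter formalism of Def.~\ref{def: ideal decoder} and Def.~\ref{def: r-filter} to show that a good ExRec commutes with the ideal decoder up to a weight-$\le 1$ residual error; equivalently, a good ExRec reduces to the corresponding ideal level-$l$ operation carrying a single level-$l$ fault. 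The map $\Gamma$ of Def.~\ref{def: delta-reducibility} then sends a level-$(l-1)$ fault set $F$ to the set of level-$l$ locations whose ExRecs are \emph{bad} (two or more faults), and the local stochastic bound $\mathbb{P}[F\supseteq A]\le\prod_{i\in A}p_i$ applied to the constantly many fault pairs inside each ExRec yields $p'\le Bp^2$. Certifying that these reduced faults again obey a genuine local stochastic model at level $l$ is exactly the role of Lemma~\ref{lemma: condition for reducibility}, whose forward/backward counting hypotheses I would verify from the bounded size and transversal structure of each ExRec.

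I would then compose the $L$ single-level steps with the transitive rule (Lemma~\ref{lemma: transitive rule}) and carry the untouched part of the circuit through each step with the extension rule (Lemma~\ref{lemma: extension rule}). Setting $p^{(0)}=p_\loc$ and $p^{(l+1)}\le B(p^{(l)})^2$, the choice $p_\loc^\thre=1/B$ gives the double-exponential suppression $p^{(L)}\le p_\loc^\thre(p_\loc/p_\loc^\thre)^{2^L}$. At the top level the circuit equals the ideal open circuit $C$ carrying a faulty parameter $p^{(L)}$ on its locations, so declaring the flag $m=1$ precisely when some level-$L$ location is faulty gives, by a union bound over the $O(|C|)$ top-level locations, the failure probability $\delta_3=c_3|C|\,p_\loc^\thre(p_\loc/p_\loc^\thre)^{2^L}$. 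The errors at the decoding-gadget boundary, which terminate the open circuit and are not followed by any further EC gadget, survive the reduction only as a single depth of wait locations with parameter $\le M p_\loc$, producing the right-hand side of~\eqref{Fig: delta-reducibility}.

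The main obstacle I anticipate is the single-level reduction under correlated noise combined with the open-circuit boundary. Unlike an independent-and-identically-distributed analysis, I must show that the set of bad ExRecs inherits a bona fide local stochastic model at the next level; this is what Lemma~\ref{lemma: condition for reducibility} is built to certify, but checking its hypotheses requires careful accounting of how faults inside an ExRec propagate forward through the transversal gadget and how the ideal decoder may be slid past a good ExRec. The boundary treatment is the genuinely new ingredient relative to the closed-circuit theorems of Refs.~\cite{aliferis2005quantum,gottesman2010introduction}: because the open circuit terminates in decoding gadgets rather than measurements, I must confirm via the validity condition Dec~A in~\eqref{eq: decA} that the final layer of decoding gadgets faithfully transfers the encoded logical state to the output up to a residual weight-one error, so that the reduced circuit reproduces the ideal output of $C$ together with the stated single residual wait layer.
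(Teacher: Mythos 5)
Your overall skeleton coincides with the paper's proof: the width/depth bounds by induction on the level with per-level constant factors; the key open-circuit ingredient of converting each (possibly faulty) decoding gadget into an ideal decoder followed by a single wait location, justified by the decoding gadget being a stabilizer circuit through which Pauli faults propagate forward (condition Dec~A, diagram~\eqref{Fig: transform the decoding gadget} in the paper); the accumulation of these boundary wait layers over $L$ levels into one layer with parameter $\le M p_{\loc}$ (the paper sums $M'\sum_{l} p_{\loc}^{(l)}$ as a geometric-type series, which is why it sets $p_{\loc}^{\thre} = p'^{\thre}_{\loc}/2$ rather than your $1/B$ --- a minor bookkeeping difference); and the final step of declaring the flag $m=1$ exactly when some level-$L$ location is faulty, with a union bound over $|C^{(L)}| \le c_3 W(C)D(C)$ locations giving $\delta_3$. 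Up to these details your route and the paper's are the same.

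There is, however, one concrete misstep: you propose to certify the single-level quadratic suppression $p' \le B p^2$ by verifying the hypotheses of Lemma~\ref{lemma: condition for reducibility}, and that lemma cannot do this job. Its condition~1 requires $\Gamma(A) = \bigcup_{i\in A}\Gamma(\{i\})$ with $\Gamma(\varnothing)=\varnothing$, i.e.\ a per-location forward-propagation map; but your map sending a fault set $F$ to the set of \emph{bad} ExRecs is inherently pair-based: $\Gamma(\{i\})=\varnothing$ for every single fault (one fault never makes an ExRec bad), while $\Gamma(\{i,j\})\neq\varnothing$ when $i,j$ lie in the same ExRec, so the singleton-union form fails. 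Moreover, even where Lemma~\ref{lemma: condition for reducibility} applies, its condition~4 yields output parameters of the form $2^{c_{\bw}}\,p^{1/c_{\fw}}$ --- a \emph{root} of $p$, strictly weaker than $p$ --- so it can never produce $p'\le Bp^2$; in the paper it is used only for constant-depth error propagation in the LDPC rectangles of Sec.~\ref{sec: Proof of existence of a threshold}, not for concatenation-level reduction. What the paper actually does at this point is import the level-reduction statement for local stochastic noise wholesale from Ref.~\cite{gottesman2010introduction} (the bound $p_{\loc}^{(l+1)}\le A\bigl(p_{\loc}^{(l)}\bigr)^2$ with $A$ the number of location pairs in the largest ExRec), and reserves the reducibility machinery of Def.~\ref{def: delta-reducibility} for the genuinely new boundary treatment. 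To repair your proof you should either cite that standard bad-ExRec counting argument as a black box, as the paper does, or prove a separate reduction lemma whose map $\Gamma$ is defined on pairs of locations within each ExRec, with the local stochastic bound $\mathbb{P}[F\supseteq A]\le\prod_{i\in A}p_i$ applied to the two faults witnessing each bad ExRec; Lemma~\ref{lemma: condition for reducibility} as stated is the wrong tool.
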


\begin{proof}
For a given original open circuit $C$ and the concatenation level $L$, we can obtain the level-$0$ circuit $\tilde{C}(C, L) = C^{(0)}$ that simulates $C$, as described in Sec.~\ref{sec: Compilation from original open circuit to fault-tolerant circuit}.

At each concatenation level $l$, a level-$l$ qubit is replaced with a set of seven level-$(l-1)$ qubits.
Additionally, each level-$l$ qubit is allocated with an additional constant number of auxiliary level-$(l-1)$ qubit per level-$l$ qubit.
Let $w_1$ denote the maximum number of level-$(l-1)$ qubits needed per level-$l$ qubit at each level, which is also a constant.
The width at level $l\in [L,\ldots,1]$ satisfies the recursive relation as
\begin{equation}
    W^{(l-1)}\leq w_1 W^{(l)}.
\end{equation}
Since there exists a positive constant $c_1$ such that $W^{(L)}\leq c_1 W(C)$, we obtain 
\begin{equation}
    W(\tilde{C}(C, L)) = W^{(0)}\leq W^{(L)}\qty(w_1)^L\leq c_1W(C)\qty(w_1)^L,
\end{equation}
which leads to
\begin{align}
    W(\tilde{C}(C, L))\leq c_1W(C)2^{\gamma L},
\end{align}
where $\gamma_1=\log_2 w_1$.

Similarly, the depth increases at each level due to the replacement of gadgets.
Let $d_1$ denote the maximum depth increase per elementary operation at each level-$l$, including the depth of EC gadgets.
The depth at level-$l$ satisfies
\begin{equation}
    D^{(l-1)}\leq d_1 D^{(l)}+d_2,
\end{equation}
where $d_2$ accounts for additional depth due to the decoding gadget.
For simplicity, we can bound $D^{(l-1)}\leq w_2 D^{(l)}$ with $w_2=d_1+d_2$.
Since there exists a positive constant $c_2$ such that $D^{(L)}\leq c_2D(C)$, we obtain
\begin{equation}
    D(\tilde{C}(C, L)) = D^{(0)}\leq D^{(L)}\qty(w_2)^L\leq c_2D(C)\qty(w_2)^L,
\end{equation}
which leads to 
\begin{align}
    D(\tilde{C}(C, L))\leq c_2D(C)2^{\gamma_2 L},
\end{align}
where $\gamma_2=\log_2 w_2$.

Next, we consider the level reduction for $C^{(0)}$ illustrated in Fig.~\ref{fig:level_reduction}.
The proof of the threshold theorem for closed circuits in Ref.~\cite{gottesman2010introduction,aliferis2005quantum,Yamasaki_2024} is accomplished by moving ideal decoders from $Z$-basis measurement gadgets located at the end of a level-$l$ circuit $C^{(l)}$ to the front, to obtain a level-$(l+1)$ circuit  $C^{(l+1)}$ implemented at the logical level of the level-$l$ circuit for $l\in\{0,1,\ldots,L-1\}$.
By repeatedly applying this procedure until we obtain a level-$l$ circuit $C^{(L)}$ on level-$L$ qubits, we effectively obtain the level-$L$ circuit $C^{(L)}$ implemented at the logical level with lower error parameters.

In contrast to the protocol for closed circuits, which provides a circuit ending with the $Z$-basis measurement gadgets, to implement the level-$(l+1)$ circuit, our protocol for open circuits compiles it into a level-$l$ circuit $C^{(l)}$ that ends with decoding gadgets.
Then, a single fault at any location in the decoding gadgets of $C^{(l)}$ could result in an error on the unencoded state.
However, since the decoding gadget is a stabilizer circuit (i.e., composed only of Clifford gates and $Z$-basis measurements)~\cite{Aaronson_2004}, each faulty decoding gadget can be transformed into a non-faulty decoding gadget by applying the propagated Pauli errors at the final time step.
To represent these propagated errors, we add a single-depth circuit composed only of wait locations after the final time step of the decoding gadgets in $C^{(l)}$.
After the error propagation, due to the union bound, the error parameter for each wait location placed after the final time step of $C^{(l)}$ is bounded by, due to the union bound,
\begin{equation}
M'p_{\mathrm{loc}}^{(l)}, 
\end{equation}
where $M'$ is a constant representing the number of locations in the decoding gadget.
Therefore, the decoding gadget, denoted by Dec, can be transformed as
\begin{widetext}
\begin{align}
    \includegraphics{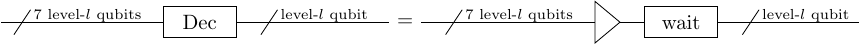}\label{Fig: transform the decoding gadget},
\end{align}
\end{widetext}
where the wait location has the error parameter bounded by $M'p_{\mathrm{loc}}^{(l)}$.

Using the ideal decoder derived from the decoding gadget by~\eqref{Fig: transform the decoding gadget}, the rest of the level reduction procedure for the level-$l$ circuit can be carried out in the same way as in the conventional procedure of level reduction for the protocols with the concatenated codes in Refs.~\cite{gottesman2010introduction,aliferis2005quantum,Yamasaki_2024}, where the ideal decoder is moved from the end to the beginning of the level-$l$ circuit to obtain the level-$(l+1)$ circuit implemented at the logical level.
Due to the argument in Ref.~\cite{gottesman2010introduction}, if $C^{(l)}$ experiences a local stochastic Pauli error with parameter $p_{\mathrm{loc}}^{(l)}$, $C^{(l+1)}$ also experiences a local stochastic Pauli error with parameter
\begin{equation}
\label{eq:p_loc_l}
    p_{\mathrm{loc}}^{(l+1)}\leq A\left(p_{\mathrm{loc}}^{(l)}\right)^2,
\end{equation}
where $A$ is a constant representing the number of pairs of locations in the largest ExRec\@.
Each additional wait location after the final time step of $C^{(l)}$ have the error parameters bounded by $M'p_{\mathrm{loc}}^{(l)}$ as shown in~\eqref{Fig: transform the decoding gadget}.
As a result, by repeating this level-reduction procedure $L$ times, the error parameters of the level-$L$ circuit are determined as shown in Fig.~\ref{fig:level_reduction}.

\begin{figure*}
    \centering
    \includegraphics[width=7.0in]{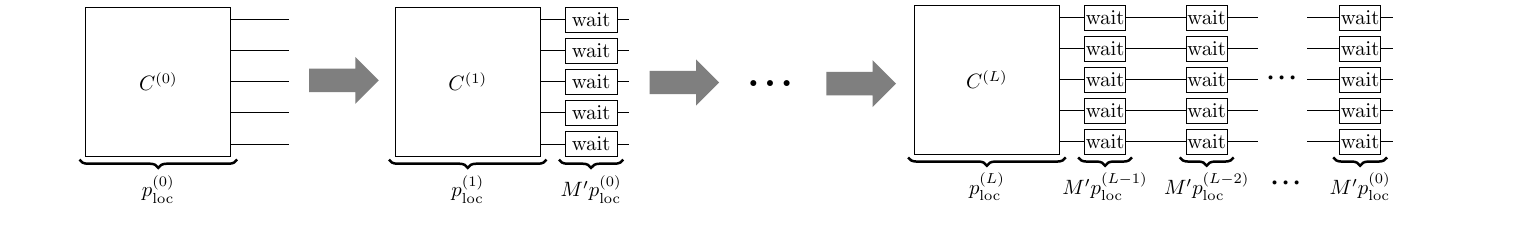}
    \caption{
    The procedure of level reduction. 
Under the compilation procedure presented in Sec.\ref{sec: Compilation from original open circuit to fault-tolerant circuit}, an original open circuit $C$ is compiled to a physical circuit $\tilde{C}(C,\delta_3) = C^{(0)}$. For each $l\in\{0,1,\ldots,L-1\}$, due to the validity condition in \eqref{eq: decA},  each decoding gadget at the end of $C^{(l)}$ can be transformed into an ideal decoder, followed by a wait location with an error parameter $M'p_{\mathrm{loc}}^{(l)}$, where $M'$ is the number of locations in the decoding gadget, as shown in~\eqref{Fig: transform the decoding gadget}.
Following Ref.~\cite{gottesman2010introduction}, we move the ideal decoders backwards in $C^{(l)}$ to provide a level-$(l+1)$ circuit $C^{(l+1)}$ implemented at the logical level of $C^{(l)}$ with each location having error parameters bounded by $p_{\mathrm{loc}}^{(l+1)}$, followed by the wait locations in~\eqref{Fig: transform the decoding gadget}. Repeating this procedure $L$ times, we finally obtain the level-$L$ circuit $C^{(L)}$ (i.e., a circuit equivalent to $C$) implemented at the logical level of $\tilde{C}(C,\delta_3)$, followed by a depth-$L$ circuit of wait locations.}

    \label{fig:level_reduction}
\end{figure*}
    
In this way, we obtain the level-$L$ circuit composed of $C^{(L)}$ implemented at the logical level, which is equivalent to $C$, followed by a depth-$L$ circuit of wait locations arising from obtaining the ideal decoder from the level-$l$ decoding gadgets by~\eqref{Fig: transform the decoding gadget} for $l\in\{1,2,\ldots,L\}$.
Due to~\eqref{eq:p_loc_l}, each location $j$ in $C^{(L)}$ has an error parameter 
\begin{equation}
    q_j=p_{\mathrm{loc}}^{(L)}\leq {p}^{\prime\mathrm{th}}_{\mathrm{loc}}\left(\frac{p^{(0)}_{\mathrm{loc}}}{{p}^{\prime\mathrm{th}}_{\mathrm{loc}}}\right)^{2^L},
\label{eq: doubly exponential supression}
\end{equation}
where ${p}^{\prime\mathrm{th}}_{\mathrm{loc}}\coloneqq 1/A>0$.
In addition, the depth-$L$ circuit of wait locations is equivalent to a single-depth circuit  $C^{\mathrm{wait}}$ composed only of wait locations, where each error parameter $q_j$ of wait location $j$ in $C^{\mathrm{wait}}$ is upper bounded, due to the union bound, by
\begin{align}
    q_j&\leq M'\sum_{l=0}^{L-1}p_{\mathrm{loc}}^{(l)}\\
    &\leq M'\sum_{l=0}^{\infty}p_{\mathrm{loc}}^{(l)}\\
    &\leq M'{p}^{\prime\mathrm{th}}_{\mathrm{loc}}\sum_{k=1}^{\infty}\left(\frac{p^{(0)}_{\mathrm{loc}}}{{p}^{\prime\mathrm{th}}_{\mathrm{loc}}}\right)^{k}\\
    &=\frac{M'p_{\mathrm{loc}}^{(0)}}{1-p_{\mathrm{loc}}^{(0)}/{p}_{\mathrm{loc}}^{\prime\mathrm{th}}}\\
    &< Mp_{\mathrm{loc}},
\end{align}
where we take $M\coloneqq 2M'$ and $p_{\mathrm{loc}}\coloneqq p^{(0)}_{\mathrm{loc}}$ satisfying
\begin{align}
0< p_{\mathrm{loc}}< {p}_{\mathrm{loc}}^{\prime\mathrm{th}}/2\eqqcolon p_{\mathrm{loc}}^{\mathrm{th}}.
\end{align}
Thus, we obtain the circuit
\begin{align}
    C^{(L)}\cup C^{\mathrm{wait}}.
\end{align}

At this stage, we see that a pair $(\tilde{C}(C,\delta_3), \{p_i\}_{i\in\tilde{C}(C,\delta_3)})$ where $p_i\leq p_{\mathrm{loc}}$ for all $i\in\tilde{C}(C,\delta_3)$ is reducible to another pair $\qty(C^{(L)}\cup C^{\mathrm{wait}}, \{q_i\}_{i\in C^{(L)}\cup C^{\mathrm{wait}}})$, where
\begin{align}
    q_j&= p_{\mathrm{loc}}^{(L)}&&\text{if $j \in C^{(L)}$},\\
    q_j&\leq Mp_{\mathrm{loc}}&&\text{if $j \in C^{\mathrm{wait}}$}.
\end{align}
On the other hand, the case of having errors in $C^{(L)}$ can be considered as the case where the pair $(\tilde{C}(C,\delta_3), \{p_i\}_{i\in\tilde{C}(C,\delta_3)})$ fails to reduce to another pair $\qty(C^{(L)}\cup C^{\mathrm{wait}}, \{q_i\}_{i\in C^{(L)}\cup C^{\mathrm{wait}}})$,
where
\begin{align}
    q_j&= 0&&\text{if $j \in C^{(L)}$},\label{eq: error parameter 1 in concatenated code}\\
    q_j&\leq Mp_{\mathrm{loc}}&&\text{if $j \in C^{\mathrm{wait}}$}.
\label{eq: error parameter 2 in concatenated code}
\end{align}
The failure probability of the reduction is bounded by the union bound.
Since there exists a positive constant $c_3$ such that $|C^{(L)}|\leq c_3 W(C) D(C)$, we have
\begin{align} 
    q_j \times |C^{(L)}| &\leq q_j \times c_3 W(C) D(C),\\
    &\leq c_3 |C| {p}^{\mathrm{th}}_{\mathrm{loc}} \left( \frac{p_{\mathrm{loc}}}{{p}^{\mathrm{th}}_{\mathrm{loc}}} \right)^{2^L},
\end{align}
where we used \eqref{eq: doubly exponential supression}.
Thus, a pair $(\tilde{C}(C,\delta_3), \{p_i\}_{i\in\tilde{C}(C,\delta_3)})$ where $p_i\leq p_{\mathrm{loc}}$ for all $i\in\tilde{C}(C,\delta_3)$ is $\delta_3$-reducible to another pair $\qty(C\cup C^{\mathrm{wait}}, \{q_j\}_{j\in C\cup C^{\mathrm{wait}}})$ satisfying \eqref{eq: error parameter 1 in concatenated code} and \eqref{eq: error parameter 2 in concatenated code}, where we use the equivalence of $C$ and $C^{(L)}$ in the sense in Def.~\ref{def: incomplete circuits}.

From the above discussion, we can conclude the proof.
\end{proof}

\bibliography{bibliography}

\end{document}